% !TeX spellcheck=en_US
\documentclass[10pt]{article}
\usepackage[letterpaper]{geometry}

%\usepackage[T1]{fontenc}
%\usepackage[utf8]{inputenc}
%%%% Para obter o tamanho de texto recomendado:
%\usepackage[margin=1in]{geometry}

% !TeX spellcheck=en_US

%\usepackage[T1]{fontenc} 
%\usepackage[english]{babel}

\usepackage{amsmath}
\usepackage{amsthm}
\usepackage{mathtools,amssymb,latexsym} % AMS fonts and LaTeX symbol names
\usepackage{amsxtra} % amsmath package and math accents
\usepackage[mathscr]{eucal}
\usepackage{turnstile}
\usepackage{upgreek}
\usepackage{pifont}
\usepackage[10pt]{moresize}

\usepackage{tikz}
\usetikzlibrary{shapes,shapes.multipart, calc,matrix,arrows,arrows,positioning,automata}
\tikzset{
    %Define standard arrow tip
    >=stealth',
    %Define style for boxes
    punkt/.style={
           circle,
           rounded corners,
           draw=black, thick, %very thick,
           text width=1.5em,
           minimum height=2em,
           text centered},
               punkts/.style={
                      circle,
                      rounded corners,
                      draw=black, thick, %very thick,
                      text width=1em,
                      minimum height=1em,
                      text centered},
    invisible/.style={
           draw=none,
           text width=1.5em,
           minimum height=0em,
           text centered},
    inv/.style={
           draw=none,
           text width=2.5em,
           minimum height=3em,
           text centered},
    % Define arrow style
    pil/.style={
           ->,
           thick,
           shorten <=2pt,
           shorten >=2pt,}
}

\sloppy

\usepackage[ruled,titlenumbered,linesnumbered,boxed,vlined,noend,algo2e]{algorithm2e}
\usepackage{algorithm}
\usepackage{changepage}
\usepackage{ragged2e}
%\usepackage{tikz-qtree}
%\usepackage[letterpaper]{geometry}
%\usepackage[latin1]{inputenc}
%\usepackage{mathcomp}

% !TeX spellcheck = pt_BR
% !TEX TS-program = XeLaTeX
% ===================== Latex  ====================

%
%

% ===================== Greeks ====================
\newcommand{\ysum}{$+$}
\newcommand{\yminus}{$-$}

\newcommand{\ysi}{\sigma}

\newcommand{\ySis}{\Sigma^\star}
\newcommand{\yal}{\alpha}
\newcommand{\ybe}{\beta}
\newcommand{\yga}{\gamma}
\newcommand{\yGa}{\Gamma}
\newcommand{\yGas}{\Gamma^\star}

\newcommand{\yde}{\delta}
\newcommand{\yDe}{\Delta}

\newcommand{\yte}{\theta}
\newcommand{\ytau}{\varsigma}

% ============================= Sets =====================  
\newcommand{\yemp}{\emptyset}
\newcommand{\ysse}{\subseteq}
\newcommand{\ypow}[1]{\mathcal{P}(#1)}
\newcommand{\yst}{\,|\,} % such that

% ============================= VPTS ======================
  % vpts alphabet name
  % vpts alphabet
  % vpts alphabet
  % vplts generic name 
  % name I

\newcommand{\yvpts}[5]{\langle #1, #2, #3, #4, #5 \rangle}  % vpts < #1,#2,#3,#4,#5> 
\newcommand{\yvptsS}{\yvpts{S}{S_{in}}{L}{\yGa}{T}}  %  vpts < S,S_in,\Gamma,T>
\newcommand{\yvptsQ}{\yvpts{Q}{Q_{in}}{L}{\yDe}{R}}  %  vpts < S,S_in,\Gamma,T>

  %  VPTS class over alphabet #1

\newcommand{\yltsconf}[1]{\yltsn{C}_{#1}}  %  configs of VPTS #1

% ============================= IOVPTS ======================
\newcommand{\yiovpts}[6]{\langle #1, #2, #3, #4,#5, #6\rangle}  % ioplts < #1,#2,#3,#4,#5,#6,#7> 
\newcommand{\yiovptsS}{\yiovpts{S}{S_{in}}{L_I}{L_U}{\yGa}{T}}  %  IOVPTS < S,S_in,L_I,L_U,\Gamma,T>
\newcommand{\yiovptsQ}{\yiovpts{Q}{Q_{in}}{L_I}{L_U}{\yDe}{R}}  %  IOVPTS < Q,Q_in,L_I,L_U,Delta,R>
\newcommand{\yiovptsU}{\yiovpts{Q}{Q_{in}}{L_U}{L_I}{\yDe}{R}}  %  IOVPTS < Q,Q_in,L_U,L_I,Delta,R>

\newcommand{\yiovp}[2]{\yltsn{IOVP}(#1,#2)}  %  IOVPTS class over inputs #1 and outputs #2

% ============================= VPA ======================
\newcommand{\yait}{\sharp}  %  symbol for simple transitions
\newcommand{\yvpa}[6]{\langle #1, #2, #3, #4, #5, #6\rangle}  % vpa < #1,#2,#3,#4, #5, #6>
\newcommand{\yvpaA}{\yvpa{S}{S_{in}}{A}{\yGa}{\rho}{F}}  %  VPA < S,S_in,A,Gamma,rho,F>
\newcommand{\yvpaB}{\yvpa{Q}{Q_{in}}{A}{\yDe}{\mu}{G}}  %  VPA < Q,Q_in,A,Delta,mu,G>

 % #1 -- #2  -> #3 for vpa
 % #1 -- #2/#3 -> #4 for vpas

% ============================= CFG ======================

  % cfg < #1,#2,#3,#4>
  %  CFG < V,T, P, S>

% ============================= PLTS ======================
\newcommand{\yltsn}[1]{\mathscr{#1}}  % vpts name #1
\newcommand{\yS}{\yltsn{S}}  % name S
\newcommand{\yA}{\yltsn{A}}  % name A
\newcommand{\yB}{\yltsn{B}}  % name B
\newcommand{\yI}{\yltsn{I}}  % name  I
\newcommand{\yQ}{\yltsn{Q}}  % name  Q
\newcommand{\yP}{\yltsn{P}}  % name P
\newcommand{\yV}{\yltsn{V}}  % name V
  % name G

  % lts < #1,#2,#3,#4,#5,#6> 
  %  PLTS < S,s_0,L,\Gamma,Z_0,T>
  %  PLTS < Q,q_0,L,\Delta,W_0,R>
\newcommand{\ytr}[3]{#1\overset{#2}{\rightarrow}#3} % #1 -- #2  -> #3
\newcommand{\ytrt}[3]{#1\overset{#2}{\Rightarrow}#3} % #1 -- #2  -> #3

 % #1 -- #2  -> #3

  %  PLTS class over alphabet #1

\newcommand{\ytrtf}[4]{#1\overset{#2}{\underset{#3}{\rightarrow}}#4} % #1 -- #2/#3 -> #4 for vpts

% =======================================================
% ============================= IOPLTS ======================
  %  IOPLTS < Q,q_0,L_U,L_I,Delta,W_0,R>
  %  IOPLTS < S,s_0,L_I,L_U,\Gamma,Z_0,T>

% ============================= PDA ======================
  % pda < #1,#2,#3,#4, #5,#6,#7>
  %  PDA < S,s_0,A,Z_0,Gamma,rho,F>
  %  PDA < Q,q_0,A,W_0,Delta,mu,G>
\newcommand{\ypdaconf}[1]{\yltsn{C}_{#1}}  %  configs of PDA #1

\newcommand{\ypdatrt}[3]{#1\underset{#2}{\mapsto}#3} % #1 -- #2  -> #3 for pda
\newcommand{\ypdatrtf}[4]{#1\overset{#2}{\underset{#3}{\mapsto}}#4} % #1 -- #2/#3 -> #4 for pdas
 % #1 -- #2  -> #3

\newcommand{\ycfgtrtf}[4]{#1\overset{#2}{\underset{#3}{\leadsto}}#4} % #1 -- #2/#3 -> #4 for cfgs

\newcommand{\ycfgtrtfl}[4]{#1\overset{#2}{\underset{#3}{\hookrightarrow}}#4} % #1 -- #2/#3 -> #4 for cfgs
% ============================= IOPLTS ======================
\newcommand{\yout}{\text{\,\bf out}} % out relation
 % inp relation
 % init relation

% !!! abaixo ainda a arrumar
%\newcommand{\yltsnE}{\yltsn{E}}  % lts name E
%\newcommand{\yltsnT}{\yltsn{T}}  % lts name T

%\newcommand{\yltsSl}{\ylts{S}{s_0}{L}{T'}}  %  LTS < S,s_0,L,T'>
%\newcommand{\yltsNS}{\yltsnS=\yltsS}  %  LTS S= < S,s_0,L,T>
%\newcommand{\yltsNE}{\yltsnE=\yltsS}  %  LTS E= < S,s_0,L,T>

  % pda name #1
  % pda name A sub #1
  % pda name A sub F

%\newcommand{\ypdaAs}{\ypdas{S}{s_0}{A}{\mapsto}}  %  PDA < S,s_0,A,T>
  %  move from #1 to #4 in PDA #3 reading #2 
  %  move from #1 to #4 in FSA #2 in #3 steps
  %  move from #1 to #4 in FSA #2 in * steps
  %  move from #1 to #2 in generic FSA 
  %  move from #1 to #3 in #2 steps 

% IOPLTS

\newcommand{\yioltsn}[1]{\mathscr{#1}} % IOLTS names
\newcommand{\yT}{\yioltsn{T}}  % name T
  % name C
\newcommand{\yD}{\yioltsn{D}}  % name D
  % name U
\newcommand{\yF}{\yioltsn{F}}  % name F
\newcommand{\yTS}{\yioltsn{TS}}  % name F

\newcommand{\ytru}[4]{#1\overset{#2}{\underset{#3}{\rightarrow}}#4} % #1 -- #2 #3 -> #4
\newcommand{\ytrut}[4]{#1\overset{#2}{\underset{#3}{\Rightarrow}}#4} % #1 -- #2 #3 -> #4
 % #1 -- #2  -> #3
 % #1 -- #2  -> #3
 % traces(#1)

 % #1 -- #2  double -> #3 for fsa
 % #1 -- #2/#3 -> #4

\newcommand{\ypass}{\text{\bfseries\sffamily pass}} % pass state
\newcommand{\yfail}{\text{\bfseries\sffamily fail}} % fail state

\newcommand{\yconf}[2]{\,\text{\bf vconf}_{#1,#2}\,\,} % conform relation
 % non-conform relation
 % ioco relation
\newcommand{\yiocolike}{\text{\,\,\bf ioco-like}\,\,} % ioco-like relation
\newcommand{\yafter}{\text{\,\,\bf after}\,\,} % after relation

 % obs(#1)

% ================ Math ========================

\newtheorem{theo}{Theorem}[section] 
\newtheorem{lemm}[theo]{Lemma} 
\newtheorem{coro}[theo]{Corollary} 

\newtheorem{prop}[theo]{Proposition}
\newtheorem{defi}[theo]{Definition} 

\newtheorem{rema}[theo]{Remark}

\theoremstyle{definition}
\newtheorem{exam}[theo]{Example}

\newcommand{\yfun}[3]{#1:#2\rightarrow #3} % function #1 from #2 to #3
\newcommand{\yoh}[1]{\mathcal{O}(#1)} % big Oh
\newcommand{\yeps}{\varepsilon} % empty string
 % A star
\newcommand{\yfim}{\hfill$\Box$} % fim de uma prova
 
% ============================= Sets =====================  
\newcommand{\ycomp}[1]{\overline{#1}}

%\newcommand{\ysyd}{\ominus} % symmetric set difference
%\newcommand{\ymset}[1]{\mathcal{T}(#1)} % set of all MEFs of type #1
%\newcommand{\ymsett}[2]{\mathcal{T}_{#2}(#1)} % set of all MEFs of type #1 and indexed by #2
%\newcommand{\ymsetl}{\mathcal{L}} % subset of MEFs

% ============================= Colors ===================== 

%\newcommand{\ypush}[1]{\overset{\;\,+}{#1}}
%\newcommand{\ypop}[1]{\overset{\;\,-}{#1}}

\newcommand{\ypush}[1]{#1_+}
\newcommand{\ypop}[1]{#1_-}

\usepackage{fullpage}

\usepackage{varioref}
\usepackage{color}

\newcommand{\avm}[1] {\textcolor{red}{#1}}

\begin{document}

% !TeX spellcheck=en_US

\title{Testing Pushdown Systems \protect\thanks{In collaboration with Computing Institute - University of Campinas.}}

%\title{Checking conformance and generating complete test suites for Asynchronous Systems\protect\thanks{In collaboration with Computing Institute - University of Campinas.}}

\author{Adilson Luiz Bonifacio\thanks{Computing Department, University of Londrina, Londrina, Brazil.} \and
	Arnaldo Vieira Moura\thanks{Computing Institute, University of Campinas, Campinas, Brazil}}

\date{} % An optional date to appear under the author(s)

\maketitle

% !TeX spellcheck=en_US
\begin{abstract}
Testing on reactive systems is a well-known laborious activity on software development due to their asynchronous interaction with the environment. 
In this setting model based testing has been employed when checking conformance and generating test suites of such systems using labeled transition system as a formalism as well as the classical {\bf ioco} conformance relation. 

In this work we turn to a more complex scenario where the target systems have an auxiliary memory, a stack. 
We then studied a more powerful model, the Visibly Pushdown Labeled Transition System (VPTS), its variant Input/Output VPTS (IOVPTS), its associated model Visibly Pushdown Automaton (VPA), and aspects of conformance testing and test suite generation. 
This scenario is much more challenge since the base model has a pushdown stack to capture more complex behaviors which commonly found on reactive systems.
We then defined a more general  conformance relation for pushdown reactive systems such that it prevents any observable implementation behavior that was not already present in the given specification. 
Further we gave an efficient algorithm to check conformance in this testing scenario and also showed that it runs in worst case asymptotic polynomial time in the size of both the given specification and the implementation that are put under test. 
\end{abstract}

% !TeX spellcheck=en_US
\section{Introduction}

Reactive systems have become increasingly common among computer systems,  whether  in a usual technological solution or in a critical application of the industry. 
We see everywhere real-world systems being ruled by reactive behaviors where the systems interact with an external environment by receiving input stimuli and producing outputs in response. 
Usually the development of such systems requires a precise and an automatic support during the process, specially in the testing activity because high costs in terms of resources and maintenance time can be generated when the test step is inappropriately performed.

Model-based testing is an important approach that has been employed to test critical and reactive systems because it offers guarantees to its correctness and also because properties, such as completeness of test suites, can be formally proven~\cite{Gargantini05CTES,sidhu89}. 
These aspects have been studied using appropriate formalisms that capture the behavior of reactive systems, where the exchange of input and output stimuli can occur asynchronously. 
Prominent among such formalism are Input/Output Labeled Transition Systems (IOLTSs)~\cite{tret-formal-1992,tre99a}.

Usually, the testing process aims to verify whether an implementation conforms to a given specification, both described using a common chosen formalism~\cite{tret-test-1996}. 
This conformance checking process depends on the specification formalism, on the kind of fault model used, and on a conformance relation that is to be verified~\cite{ananbcc-orchestrated-2013,tret-formal-1992,tre99a}.
For IOLTS models, the classical Input/Output Conformance ({\bf ioco}) is a well-studied relation~\cite{tret-model-2008,simap-generating-2014, 2019arXiv190210278B}.
A more recent work~\cite{bonifacio2020cleiej} proposed a more general approach, based on regular languages, to check {\bf ioco} conformance for IOLTS models. 

In this work we study aspects of conformance testing and test suite generation for a class of more powerful models, the so called Input/Output Visibly Pushdown Labeled Transition Systems (IOVPTSs), and their associated formalism, Visibly Pushdown Automata (VPAs). 
Using an auxiliary pushdown stack, this formalism can capture the behavior of more complex reactive systems.
Accordingly, we use a more general  conformance relation, but one still in the same spirit of the classical {\bf ioco} relation, in the sense that it prevents any observable implementation behavior that was not already present in the given specification. 
We also develop and prove the correctness of an efficient algorithm to verify conformance in this testing scenario. 
Further, we show that our algorithm runs in worst case asymptotic polynomial time in the size of both the given specification and the implementation that are put under test. 

We organize this paper as follows.
Section~\ref{sec:models} establishes some notations, definitions and preliminary results over VPAs.
Section~\ref{sec:conformance} defines (IO)VPTSs and relates these models to the associated VPAs.
Section~\ref{sec:cfl-suites} defines conformance relation based on IOVPTS models, and shows how to obtain complete test suites, of polynomial complexity, for this class of models. 
Section~\ref{sec:tretma-suites} gives an ioco-like conformance relation for IOVPTS models and the notion of test suite completeness, and then provides a  polynomial time algorithm to check ioco-like conformance for IOVPTS implementations. 
Section~\ref{sec:conclusion} offers some concluding remarks.

% !TeX spellcheck = en_US
\section{Notation and preliminary results}\label{sec:models}

In this section we establish some notation and define Visibly Pushdown Automata (VPA). 
Some preliminary results that will be useful later are also indicated.

\subsection{Basic Notation}\label{subsec:notation}
Let $X$ and $Y$ be sets. We indicate by $\ypow{X}=\{Z\yst Z\ysse X\}$ the power set of $X$, and $X-Y=\{z\yst \text{$z \in X$ and $z \not\in Y$}\}$ indicates set difference.
We will let $X_Y=X\cup Y$. When $Y=\{y\}$ is a singleton we may also write $X_y$ for $X_{\{y\}}$.
If $X$ is a finite set, the size of $X$ will be indicated by $|X|$.

An alphabet is any non-empty set of symbols.
Let $A$ be an alphabet.
A word over $A$ is any finite sequence $\ysi=x_1\ldots x_n$ of symbols in $A$, that is, $n\geq 0$ and $x_i\in A$, for all $i=1,2,\ldots, n$.
When $n=0$, $\ysi$ is the empty sequence, also indicated by $\yeps$.
The set of all finite sequences, or words, over $A$ is denoted by $A^{\star}$, and the set of all nonempty finite words over $A$ is indicated by $A^+$.
When we write $x_1x_2\ldots x_n\in A^{\star}$, it is implicitly assumed that $n\geq 0$ and that $x_i\in A$, $1\leq i\leq n$, unless explicitly noted otherwise.
The length of a word $\alpha$ over $A$ is indicated by $|\alpha|$.
Hence, $|\yeps|=0$.
Let $\ysi=\ysi_1\ldots \ysi_n$ and $\rho=\rho_1\ldots \rho_m $ be words over $A$. The concatenation of $\ysi$ and $\rho$, indicated by $\ysi\rho$, is the word $\ysi_1\ldots\ysi_n\rho_1\ldots\rho_m$.
Clearly, $|\ysi\rho|=|\ysi|+|\rho|$.
A language $G$ over $A$ is any set $G\ysse A^\star$ of words over $A$.
Let $G_1$, $G_2\ysse A^\star$ be languages over $A$. Their product, indicated by  $G_1G_2$, is the language 
$\{\ysi\rho\yst \ysi\in G_1, \rho\in G_2\}$. 
If $G\ysse A^\star$ is a language over $A$, then its complement is the language $\ycomp{G}=A^\star-G$.

We will also need the notion of a morphism between alphabets.
\begin{defi}\label{def:morph}
Let $A$, $B$ be alphabets.
A \emph{homomorphism}, or just a \emph{morphism}, from $A$ to $B$ is any function $h:A\rightarrow B^\star$.
\end{defi}

A morphism $h:A\rightarrow B^\star$ can be extended in a natural way to a function 
$\widehat{h}:A^\star\rightarrow B^\star$, thus
\begin{equation*}
\widehat{h}(\ysi) = 
\begin{cases}
\yeps \quad &\text{if $\ysi=\yeps$}\\
h(a)\widehat{h}(\rho) & \text{if $\ysi=a\rho$ with $a\in A$}.
\end{cases}
\end{equation*}
We can further lift $\widehat{h}$ to a function $\widetilde{h}: \ypow{A^\star}\rightarrow \ypow{B^\star}$ in a natural way, by letting $\widetilde{h}(G)=\bigcup\limits_{\ysi\in G}\widehat{h}(\ysi)$,
for all $G\ysse A^\star$. 
In order to avoid cluttering the notation, we may write $h$ instead of $\widehat{h}$, or of $\widetilde{h}$, when no confusion can arise.
When $a\in A$, we define the simple morphism $h_a:A\rightarrow A-\{a\}$ by letting $h_a(a)=\yeps$, and $h_a(x)=x$ when $x\neq a$.  
Hence, $h_a(\ysi)$ erases all occurrences of  $a$ in the word $\ysi$.
% !TeX spellcheck = en_US   
\subsection{Visibly Pushdown Automata}\label{subsec:vpa}

A Visibly Pushdown Automaton (VPA)~\cite{alurm-visibly-2004} is, basically, a Pushdown Automaton (PDA)~\cite{hopcu-introduction-1979}, with a transition relation over an alphabet and a pushdown stack (or just a stack, for short) associated to it. 
Thus a VPA can make use of a potentially infinite memory as with PDA models. 
Any alphabet $L$ is always partitioned into three disjoint subsets $L=L_c\cup L_r\cup L_i$.
Elements in the set $L_c$ are ``call symbols'', or ``push symbols'', and specify push actions on the stack.
Elements in $L_r$ are ``return symbols'', or ``pop symbols'', and indicate pop actions, and in $L_i$ we find ``simple symbols'',  that do not change the stack%
\footnote{In~\cite{alurm-visibly-2004} symbols in $L_i$ are called ``internal action symbols''.
	In this text we will reserve that denomination for another special symbol as will be apparent later.}.%

The next definition is a slight extension of the similar notion appearing in~\cite{alurm-visibly-2004}. 
Here, we also allow $\yeps$-moves, that is, the VPA can change states without reading any symbol from the input.
\begin{defi}\label{def:pda}
A \emph{Visibly Pushdown Automaton} (VPA)~\cite{alurm-visibly-2004} over a finite input alphabet $A$ is a tuple $\yA = \yvpaA$, where:
\begin{itemize}
\item[---] $A=A_c\cup A_r\cup A_i$ and $A_c$, $A_r$, $A_i$ are pairwise disjoint; 
\item[---] $S$ is a finite set of \emph{states};
\item[---] $S_{in}\ysse S$ is set of \emph{initial states};
\item[---] $\Gamma$ is a finite \emph{stack alphabet}, with $\bot\not \in \yGa$ the \emph{initial stack symbol};
\item[---] The \emph{transition relation} is $\rho=\rho_c\cup\rho_r\cup\rho_i$, where $\rho_c\ysse S\times A_c \times \yGa  \times S$, $\rho_r\ysse S\times A_r \times \yGa_\bot  \times S$, and 
$\rho_i\ysse S\times (A_i\cup\{\yeps\}) \times \{\yait\} \times S$,  where $\yait\not\in\yGa_\bot$ is a place-holder symbol;
\item[---] $F\ysse S$ is the set of \emph{final states}. 
\end{itemize}
A transition $(p,\yeps,\yait,q)\in \rho_i$ is called an \emph{$\yeps$-move} of $\yA$. 
A \emph{configuration} of $\yA$ is any triple $(p,\ysi,\yal)\in S\times A^\star\times(\yGas\{\bot\})$, and the set of all configurations of $\yA$ it is indicated by $\ypdaconf{\yA}$.
\end{defi}

A tuple $(p,a,Z,q)\in \rho_c$ specifies a \emph{push-transition}.
We have a \emph{pop-transition} if $(p,a,Z,q)\in \rho_r$, and a \emph{simple-transition} if $(p,a,Z,q)\in \rho_i$.
The intended meaning for a push-transition $(p,a,Z,q)\in \rho_c$ is that $\yS$, in  state $p$ and reading input $a$,  changes to state $q$ and pushes $Z$ in the stack. 
A pop-transition $(p,a,Z,q)\in \rho_r$ makes $\yS$, in state $p$ and reading $a$,  pop $Z$ from the stack and change to state $q$. 
A simple-transition $(p,a,\yait,q)\in \rho_i$  has the intended meaning of reading $a$ and changing  from state $p$ to state $q$, no matter what the topmost stack symbol is. 
An $\yeps$-transition  $(p,\yeps,\yait,q)\in \rho_i$ indicates that $\yS$ must move from state $p$ to state  $q$, while reading no symbol from the input. 

We can now define the relation $\ypdatrt{}{}{}\ysse\ypdaconf{\yA}\times\ypdaconf{\yA}$ which captures simple moves of a VPA $\yA$.
Note that the model can make pop moves when the stack contains only the initial stack symbol $\bot$. 
The semantics of a VPA is the language comprised by all input strings it accepts.

\begin{defi}\label{def:fsa-move}
Let $\yA=\yvpaA$ be a VPA. 
For all $\ysi\in A^\star$:
\begin{enumerate}
	\item If $(p,a,Z,q)\in\rho_c$  then $\ypdatrt{(p,a\ysi,\yal\bot)}{}{(q,\ysi,Z\yal\bot)}$;
	\item If $(p,a,Z,q)\in\rho_r$ then $\ypdatrt{(p,a\ysi, Z\yal\bot)}{}{(q,\ysi,\yal\bot)}$ when $Z\neq \bot$, or  $\ypdatrt{(p,a\ysi, \bot)}{}{(q,\ysi,\bot)}$ when $Z=\bot$; 
  \item If $(p,a,\yait,q)\in\rho_i$  then $\ypdatrt{(p,a\ysi, \yal\bot)}{}{(q,\ysi, \yal\bot)}$. 
\end{enumerate}
The set $L(\yA)=\big\{ \ysi\in A^\star\yst \ypdatrtf{(s_0,\ysi,\bot)}{\star}{\yA}{(p,\yeps,\ybe)},\,\,s_0\in S_{in},\,\,p\in F\big\}$ 
is the \emph{language accepted} by $\yA$.
Two VPAs $\yA$ and $\yB$ are said to be \emph{equivalent} when $L(\yA)=L(\yB)$. 
\end{defi}
We may also write $\ypdatrt{(p,a\ysi,\yal)}{\yA}{(q,\ysi,\yga)}$ if it is important to explicitly mention  the VPA $\yA$.
It is clear that when $\ypdatrt{(p,a\ysi,\yal)}{}{(q,\mu,\ybe)}$ then we get  $(q,\mu,\ybe)\in \ypdaconf{\yA}$, so that $\mapsto$ is a relation on the set $\ypdaconf{\yA}$. 
The $n$-th power of the relation $\ypdatrt{}{}{}$ will be indicated by $\ypdatrtf{}{n}{}{}$ for all $n\geq 0$, and its reflexive and transitive closure will be indicated by $\ypdatrtf{}{\star}{}{}$.

A language is said to be a Visibly Pushdown Language (VPL) when it is accepted by a VPA.
\begin{defi}\label{def:llc}
Let $A$ be an alphabet and let $G\ysse A^\star$ be a language over $A$.
Then $G$ is a \emph{visibly pushdown language} if there is a VPA $\yA$  such that $L(\yA)=G$.
\end{defi}

Given any VPA, we can always construct and equivalent VPA with no $\yeps$-moves and the same number of states as the original VPA. 
\begin{prop}\label{prop:no-eps}
For any  VPA  we can effectively construct an equivalent VPA with no $\yeps$-moves and with the same number of states.
\end{prop}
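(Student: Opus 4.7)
The plan is to adapt the standard $\yeps$-closure construction from NFAs; this works essentially unchanged here because every $\yeps$-move of a VPA lies in $\rho_i$ and carries the placeholder $\yait$, hence does not touch the stack at all. The entire construction can therefore reuse the state set of the original VPA. First I would compute, for each state $p$, the $\yeps$-closure $E(p)=\{q\in S : p \text{ reaches } q \text{ by a (possibly empty) sequence of }\yeps\text{-moves}\}$ (always with $p\in E(p)$), by a standard reachability computation on the graph whose edges are the $\yeps$-transitions of $\yA$.

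Next I would construct $\yA'=\langle S,S'_{in},A,\yGa,\rho',F\rangle$ on the same state set, with initial states expanded to $S'_{in}=\bigcup_{s\in S_{in}}E(s)$. For every non-$\yeps$ transition $(p,a,Z,q)\in \rho$ and every $q'\in E(q)$, I would add the transition $(p,a,Z,q')$ to $\rho'$, placing it in $\rho'_c$, $\rho'_r$, or $\rho'_i$ according to the partition class of $a$; this type matches that of the original transition, since $a\neq\yeps$ forces $a\in A_c\cup A_r\cup A_i$ and the stack symbol $Z$ is copied verbatim (in particular, the compatibility constraints $Z\in\yGa$ for pushes and $Z\in\yGa_\bot$ for pops are preserved). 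All $\yeps$-transitions are then dropped, while the stack alphabet, initial stack symbol, and set of final states are left unchanged. In particular $|S'|=|S|$.

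Finally I would verify $L(\yA)=L(\yA')$ by matching accepting runs in both directions. For $L(\yA')\subseteq L(\yA)$, any transition of $\yA'$ expands back into the original non-$\yeps$ transition followed by an $\yeps$-path in $\yA$, and each element of $S'_{in}$ is $\yeps$-reachable from some $s\in S_{in}$; the stack contents coincide at corresponding positions because $\yeps$-moves do not alter the stack. For the reverse inclusion, I would factor an accepting run of $\yA$ on input $\ysi$ as
\[
s_0 \xrightarrow{\yeps^{\star}} u_0 \xrightarrow{a_1} v_1 \xrightarrow{\yeps^{\star}} u_1 \xrightarrow{a_2} v_2 \cdots \xrightarrow{a_n} v_n \xrightarrow{\yeps^{\star}} u_n
\]
with $s_0\in S_{in}$ and $u_n\in F$; since $u_0\in E(s_0)\subseteq S'_{in}$ and each $u_i\in E(v_i)$, the added transitions yield a run $u_0\xrightarrow{a_1}u_1\cdots\xrightarrow{a_n}u_n$ of $\yA'$ ending in $F$ on the same input and with identical stack evolution at each $u_i$ as at the corresponding $v_i$ in $\yA$. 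No serious obstacle is anticipated; the whole argument rests on the single observation that $\yeps$-moves of a VPA are stack-neutral, so the classical NFA $\yeps$-elimination simply lifts to the visibly pushdown setting without introducing new states.
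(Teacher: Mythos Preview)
Your proposal is correct and follows essentially the same $\yeps$-closure approach as the paper: compute $E(\cdot)$, expand the initial states to $\bigcup_{s\in S_{in}}E(s)$, and replace each non-$\yeps$ transition by its closure, relying on the fact that $\yeps$-moves lie in $\rho_i$ and leave the stack untouched. The only difference is that the paper closes on \emph{both} sides of each transition --- it adds $(r,a,Z,p)$ whenever $(s,a,Z,q)\in\rho$ with $s\in E(r)$ and $p\in E(q)$ --- whereas you close only on the target side; your variant still suffices for language equivalence because the pre-closure is already absorbed by the enlarged initial-state set together with the post-closure at the previous step, exactly as your factorization $u_0\xrightarrow{a_1}u_1\cdots$ shows.
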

\begin{proof}
We first install in the resulting VPA $\yB$ all transitions of $\yA$ that are not $\yeps$ transitions.
Then, the main idea is to find sequences of $\yeps$-moves between pairs of states in $\yA$. If there is such a sequence of $\yeps$-moves from $s$ to $r$, and another such sequence from $p$ to $q$, then for every transition of $\yA$ having $r$ as the start state and $p$ as the target state,  we define  a similar transition from $s$ to $q$ in $\yB$.  
A complete proof can be seen in Appendix~\vref{app:prop:no-eps}. 
\end{proof}

Next we define deterministic VPAs.
Determinism captures the idea that there is at most one computation for a given input string.
Since our notion of a VPA extends the original notion in~\cite{alurm-visibly-2004}, we also have to deal with $\yeps$-moves.
\begin{defi}\label{def:vpa-determinism}
Let $\yA=\yvpaA$ be a VPA. 
We say that $\yA$ is \emph{deterministic} if $\vert S_{in}\vert\le 1$,  and the following conditions hold: 
\begin{enumerate}
\item $(p,x,Z_i,q_i)\in\rho_c$ with $i=1,2$, implies $Z_1=Z_2$ and $q_1=q_2$; 
\item $(p,x,Z,q_i)\in\rho_r\cup \rho_i$ with $i=1,2$,  implies $q_1=q_2$; 
\item  $(p,x,Z,q_1)\in\rho$ with $x\neq \yeps$, implies $(p,\yeps,\yait,q_2)\not\in\rho$ for all $q_2\in S$.
\end{enumerate}
A language $L$ is a \emph{deterministic VPL} if $L=L(\yA)$ for some deterministic VPA $\yA$.
\end{defi}

It is worth noticing that Definition~\ref{def:vpa-determinism} does not prohibit $\yeps$-moves in deterministic VPAs.
With deterministic VPAs, however,  computations are always  unique, as expected.
\begin{prop}\label{prop:vpa-determ}
Let $\yA=\yvpaA$ be a deterministic VPA, let $(p,\ysi,\yal)\in\ypdaconf{\yA}$, and take $n\geq 0$.
Then $\ypdatrtf{(p,\ysi,\yal)}{n}{}{(q_i,\mu_i,\yga_i)}$, $i=1,2$, implies 
$(q_1,\mu_1,\yga_1)=(q_2,\mu_2,\yga_2)$.
\end{prop}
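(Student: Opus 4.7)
The plan is to proceed by induction on $n\geq 0$, reducing the claim to a ``unique single-step successor'' lemma that follows directly from the three conditions of Definition~\ref{def:vpa-determinism}.

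\emph{Base case ($n=0$).} By definition of $\ypdatrtf{}{0}{}{}$, any configuration only reaches itself, so $(q_1,\mu_1,\yga_1)=(p,\ysi,\yal)=(q_2,\mu_2,\yga_2)$.

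\emph{Inductive step.} Assume the statement for some fixed $n$ and suppose
$\ypdatrtf{(p,\ysi,\yal)}{n+1}{}{(q_i,\mu_i,\yga_i)}$ for $i=1,2$.
Then there exist intermediate configurations $(r_i,\tau_i,\eta_i)$ such that
$\ypdatrtf{(p,\ysi,\yal)}{n}{}{(r_i,\tau_i,\eta_i)}$ and $\ypdatrt{(r_i,\tau_i,\eta_i)}{}{(q_i,\mu_i,\yga_i)}$, for $i=1,2$. The induction hypothesis gives $(r_1,\tau_1,\eta_1)=(r_2,\tau_2,\eta_2)$; call this common configuration $(r,\tau,\eta)$. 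It remains to show that a single move from $(r,\tau,\eta)$ is uniquely determined, i.e.\ the one-step successor is unique.

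For this single-step uniqueness, I would case-split on the kind of move and then appeal to the three clauses of Definition~\ref{def:vpa-determinism}. If both moves read the same non-$\yeps$ symbol $a$ (necessarily the leading symbol of $\tau$), then: for push-moves clause~(1) forces the next state and the pushed symbol to coincide; for pop-moves and simple-moves clause~(2) forces the next states to coincide (and the stack operation is determined by the topmost symbol of $\eta$, which is already fixed). If both moves are $\yeps$-moves, clause~(2) again pins down the target state, and neither the input nor the stack changes. The remaining possibility---one move consumes a symbol $a\neq\yeps$ while the other is an $\yeps$-move from the same state $r$---is exactly what clause~(3) forbids. In every allowed case the resulting configuration is the same, completing the induction.

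The only mildly delicate point is ensuring that the ``$\yeps$-move versus consuming move'' conflict is indeed ruled out by the hypotheses; this is the role of clause~(3) and is the one place where our extension of the original VPA definition to allow $\yeps$-moves could have introduced non-determinism if left unrestricted. Everything else is bookkeeping over the push/pop/simple cases.
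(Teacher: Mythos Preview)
Your proof is correct and follows essentially the same approach as the paper: induction on $n$, reducing to a single-step uniqueness claim, and then a case analysis (push/pop/simple/$\yeps$) driven by the three clauses of Definition~\ref{def:vpa-determinism}. The only cosmetic difference is that the paper decomposes the run as $(n-1)+1$ rather than $n+1$, and it organizes the case split by first disposing of the $\yeps$-move possibility and then handling the three symbol types; your organization is equivalent.
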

\begin{proof}
By induction on $n\geq 0$. When $n=0$ the result is immediate.
Take $n>0$. 
Using the induction hypothesis we can write $\ypdatrtf{(p,\ysi,\yal)}{n-1}{}{(q,\mu,\yga)}\ypdatrtf{}{1}{}{(q_i,\mu_i,\yga_i)}$, where the last single steps 
used the transitions $(q,x_i,Z_i,q_i)\in\rho$, $i=1,2$.

First assume that $x_1=\yeps$. Then Definition~\ref{def:pda} gives $(q,x_1,Z_1,q_1)=(q,\yeps,\yait,q_1)\in\rho_i$.
From $\ypdatrtf{(q,\mu,\yga)}{}{}{(q_1,\mu_1,\yga_1)}$ and Definition~\ref{def:fsa-move}, we conclude that $\mu_1=\mu$ and $\yga=\yga_1$.
Since $\yA$ is deterministic, Definition~\ref{def:vpa-determinism}(3) forces
$x_2=\yeps=x_1$, and then Definition~\ref{def:pda} gives $(q,x_2,Z_2,q_2)=(q,\yeps,\yait,q_2)\in \rho_i$.
So, from $\ypdatrtf{(q,\mu,\yga)}{}{}{(q_2,\mu_2,\yga_2)}$ we now get $\mu=\mu_2$ and $\yga=\yga_2$.
Since $\yA$ is deterministic, using Definition~\ref{def:vpa-determinism}(2) we get $q_1=q_2$, and then we have $(q_1,\mu_1,\yga_1)=(q_2,\mu_2,\yga_2)$.

Likewise when $x_2=\yeps$.
We can now assume  $x_1$, $x_2\in A$.
From $\ypdatrtf{(q,\mu,\yga)}{}{}{(q_i,\mu_i,\yga_i)}$ and Definition~\ref{def:fsa-move}  we get  $x_1\mu_1=\mu=x_2\mu_2$.
Hence, $x_1=x_2=x$, and $\mu_1=\mu_2=\yde$.
It suffices to verify that $q_1=q_2$ and $\yga_1=\yga_2$. 
We have $\ypdatrtf{(q,x\yde,\yga)}{}{}{(q_i,\yde,\yga_i)}$ using the transitions $(q,x,Z_i,q_i)\in\rho$. 
There are three cases:
\begin{description}
\item[{\sc Case 1:}] $x\in A_c$.
Since $\yA$ is deterministic and $(q,x,Z_i,q_i)\in\rho_c$,  Definition~\ref{def:vpa-determinism}(1) guarantees that $Z_1=Z_2$ and $q_1=q_2$.
From $\ypdatrtf{(q,x\yde,\yga)}{}{}{(q_i,\yde,\yga_i)}$  and Definition~\ref{def:fsa-move}(1) we get 
$\yga_1=Z_1\yga$ and $\yga_2=Z_2\yga$, so that $\yga_1=\yga_2$, and we are done.

\item[{\sc Case 2:}] $x\in A_i$.
From Definition~\ref{def:pda} we get $(q,x_i,Z_i,q_i)=(q,x,\yait,q_i)\in\rho$, $i=1,2$.
Then, Definition~\ref{def:vpa-determinism}(2) forces $q_1=q_2$.
Since  $\ypdatrtf{(q,x\yde,\yga)}{}{}{(q_i,\yde,\yga_i)}$ Definition~\ref{def:fsa-move}(3) implies
$\yga_1=\yga=\yga_2$, concluding this case.

\item[{\sc Case 3:}] $x\in A_r$. 
First, assume $\yga=\bot$.
Then $\ypdatrtf{(q,x\yde,\bot)}{}{}{(q_i,\yde,\yga_i)}$ using the transitions $(q,x,Z_i,q_i)\in\rho_r$.
Definition~\ref{def:fsa-move}(2) gives $\yga_1=\bot=\yga_2$ and $Z_1=\bot=Z_2$.
So, $(q,x,\bot,q_1)$, $(q,x,\bot,q_2)\in\rho_r$, and Definition~\ref{def:vpa-determinism}(2) implies $q_1=q_2$.
Now, assume $\yga\neq \bot$.
Definition~\ref{def:fsa-move}(2) now gives $\yga=Z_1\yga_1$ and $\yga=Z_2\yga_2$, so that $Z_1=Z_2$ and $\yga_1=\yga_2$.
Once again  Definition~\ref{def:vpa-determinism}(2) implies $q_1=q_2$, and we conclude this case. 
\end{description}
The proof is complete. %\yfim
\end{proof}

The next result shows that we can remove $\yeps$-moves, while still preserving determinism.
\begin{prop}\label{prop:no-eps-determ}
Given a deterministic VPA $\yA$, we can obtain an equivalent deterministic VPA $\yB$ with no $\yeps$-moves and with the same number of states. 
\end{prop}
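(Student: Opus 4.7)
The plan is to refine the construction from the proof of Proposition~\ref{prop:no-eps} so that it preserves determinism. The key observation is that Definition~\ref{def:vpa-determinism}(2) forces each state of $\yA$ to have at most one outgoing $\yeps$-transition, while Definition~\ref{def:vpa-determinism}(3) forbids any state from having both $\yeps$- and non-$\yeps$-outgoing transitions. Hence from any state $p\in S$ the $\yeps$-chain is a unique sequence $p=p_0\to^\yeps p_1\to^\yeps \cdots$ that either terminates at a state $\hat{p}$ with no outgoing $\yeps$-transition, or enters a cycle (in which case no non-$\yeps$ action is ever possible from $p$). Since $S$ is finite, both $\hat{p}$ and the set $C(p)$ of states visited along the chain can be computed by a simple traversal.

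Given $\yA=\yvpaA$, I would build a machine $\yB=\yvpa{S}{S_{in}}{A}{\yGa}{\mu}{G}$ with the same state set, initial set, and stack alphabet. For each non-$\yeps$ transition $(r,a,Z,q)\in\rho$ and each state $s$ for which $\hat{s}=r$ is defined, place the transition $(s,a,Z,q)$ into $\mu$; states whose $\yeps$-chain never terminates contribute no outgoing transitions. Finally, let $G=\{p\in S : C(p)\cap F\neq\emptyset\}$, so that $p$ is accepting in $\yB$ exactly when the $\yeps$-chain from $p$ visits some state of $F$.

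Verification of the easy properties is routine: $\yB$ has no $\yeps$-moves and exactly $|S|$ states by construction. For determinism, given any triple $(s,a,Z)$, a transition $(s,a,Z,q)\in\mu$ arises only from the unique non-$\yeps$ transition $(\hat{s},a,Z,q)\in\rho$, where uniqueness of $q$ (together with uniqueness of $Z$ in the push case) follows from clauses (1) and (2) of Definition~\ref{def:vpa-determinism} applied to $\yA$; clause (3) is vacuous since $\yB$ has no $\yeps$-moves.

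The main obstacle is the language equivalence $L(\yA)=L(\yB)$, which I would establish by induction on the length of the input. The invariant to maintain is that after processing a common prefix and then performing all of $\yA$'s intermediate $\yeps$-moves, $\yA$ reaches a configuration whose state coincides with the target of its last non-$\yeps$ step, and $\yB$ is at exactly that same state with the same remaining input and stack. This invariant is preserved by each non-$\yeps$ step, because the transition chosen by $\yB$ from state $s$ is the unique transition out of $\hat{s}$ in $\yA$, executed after $\yA$'s $\yeps$-chain. At the end of the input, $\yA$ accepts iff the terminal $\yeps$-chain from its current state $q$ visits some state of $F$, which is precisely the condition $q\in G$ used by $\yB$. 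Non-terminating $\yeps$-chains are harmless: they prevent further input consumption in $\yA$, and the absence of outgoing transitions in $\yB$ from such states mirrors this exactly.
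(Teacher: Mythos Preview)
Your construction is correct and takes a genuinely different route from the paper's. The paper proceeds incrementally in two phases: first it collapses each $\yeps$-cycle to a single representative state, verifying at each collapse that determinism and the accepted language are preserved; then, once the $\yeps$-graph is acyclic, it removes the remaining $\yeps$-transitions one at a time by short-circuiting the last edge of a chain and adjusting initial and final states locally. You instead do everything in one pass: exploiting that the $\yeps$-successor of any state is unique and that a state with an outgoing $\yeps$-edge has no other outgoing edges, you precompute for every $s$ the terminus $\hat s$ of its $\yeps$-chain (when it exists) and the visited set $C(s)$, and then assemble $\yB$ directly by letting each non-$\yeps$ transition $(\hat s,a,Z,q)\in\rho$ originate from every $s$ with that terminus, and declaring $s$ final whenever $C(s)$ meets $F$. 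Your approach is cleaner and literally keeps the state set $S$, matching the ``same number of states'' claim on the nose; the paper's cycle-collapsing phase actually deletes the non-representative cycle states, so its $\yB$ may end up with strictly fewer. The paper's phased argument, on the other hand, localises each determinism and equivalence check to a tiny modification of the automaton, which keeps the individual verifications short at the cost of more bookkeeping overall.
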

\begin{proof}
The argument resembles the proof of Proposition~\ref{prop:no-eps}, 
but now we have to maintain the determinism at every step.
The main strategy is, first, to remove $\yeps$-cycles in $\yA$, and then proceed to remove any acyclic sequence of $\yeps$ moves that can still be found in $\yA$.   
A complete proof is given in Appendix~\vref{app:prop:no-eps-determ}. 
\end{proof}

% ----retirado a determinizacao de VPA do Alur 
%
%We can also obtain an equivalent deterministic VPA from any non-deterministic VPA, although in this more general case, we do not have an efficient bound on the number of states of the deterministic VPA.
%\begin{prop}\label{prop:nondet-det}
%	Given a VPA $\yA$ we can obtain a deterministic VPA $\yB$ with no $\yeps$-moves, and such that $L(\yA)=L(\yB)$. 
%\end{prop}
%\begin{proof}
%First we remove all $\yeps$-moves from $\yA$ using Proposition~\ref{prop:no-eps}. 
%Then we recall Theorem~2 of Alur and Madhusudan~\cite{alurm-visibly-2004} to obtain $\yB$. 
%\end{proof}

\subsection{The Synchronous Product of VPAs}\label{subsec:vpa-product}

The product of two VPAs captures their synchronous behavior. 
It will be useful when testing conformance between pushdown memory models. 
\begin{defi}\label{def:productVPA}
Let $\yS=\yvpaA$  and $\yQ=\yvpa{Q}{Q_{in}}{A}{\yDe}{\mu}{G}$ be  VPAs with  a common alphabet. 
Their product is the VPA
$\yS\times \yQ=\yvpa{S\times Q}{S_{in}\times Q_{in}}{A}{\yGa\times\yDe}{\nu}{F\times G}$, where $((s_1,q_1),a,Z,(s_2,q_2))\in\nu$ if and only if either:
\begin{enumerate}
	\item  $a\neq\yeps$ and we have $(s_1,a,Z_1,s_2)\in \rho$ and  $(q_1,a,Z_2,q_2)\in \mu$ with $Z=(Z_1,Z_2)$ or $Z=Z_1=Z_2\in\{\bot,\yait\}$; or 
	\item $a=\yeps$, $Z=\yait$ and we have either $s_1=s_2$ and $(q_1,\yeps,\yait,q_2)\in\mu$; or $q_1=q_2$ and  $(s_1,\yeps,\yait,s_2)\in\rho$. 
\end{enumerate}
\end{defi}

We first note that the product construction preservers both determinism and the absence of $\yeps$-moves when the original VPAs satisfy such conditions.
\begin{prop}\label{prop:epsilon-deterministic}
Let $\yS=\yvpaA$  and $\yQ=\yvpaB$ be VPAs and let $\yP=\yS\times\yQ$ be their product as in Definition~\ref{def:productVPA}.
Then the following conditions hold:
\begin{enumerate}
\item If $\yS$ and $\yQ$ have no $\yeps$-moves, then $\yP$ has no $\yeps$-moves;
\item If $\yS$ and $\yQ$ are deterministic with no $\yeps$-moves, then $\yP$ is also deterministic with no $\yeps$-moves.
\end{enumerate}
\end{prop}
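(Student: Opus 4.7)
The plan is to verify both items by direct inspection of Definition~\ref{def:productVPA}, reducing item 2 to a coordinate-wise check of the three clauses of Definition~\ref{def:vpa-determinism} after item 1 disposes of $\yeps$-moves.

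Item 1 is essentially immediate: the only clause of Definition~\ref{def:productVPA} that can introduce an $\yeps$-labeled transition into $\yP$ is case 2, which explicitly requires either $(s_1,\yeps,\yait,s_2)\in\rho$ or $(q_1,\yeps,\yait,q_2)\in\mu$. Under the hypothesis that neither $\yS$ nor $\yQ$ admits $\yeps$-moves, case 2 never fires and $\yP$ inherits the absence of $\yeps$-moves.

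For item 2, item 1 already ensures $\yP$ has no $\yeps$-moves, so clause (3) of Definition~\ref{def:vpa-determinism} holds vacuously. The initial-state clause follows from $|S_{in}\times Q_{in}|=|S_{in}|\cdot|Q_{in}|\le 1$. For clauses (1) and (2), I would take two candidate transitions $((s_1,q_1),a,Z^{(i)},(s_2^{(i)},q_2^{(i)}))\in\nu$, $i=1,2$, and unfold them via case 1 of Definition~\ref{def:productVPA} into component transitions $(s_1,a,Z_1^{(i)},s_2^{(i)})\in\rho$ and $(q_1,a,Z_2^{(i)},q_2^{(i)})\in\mu$. Determinism of $\yS$ and $\yQ$ applied separately then yields $Z_1^{(1)}=Z_1^{(2)}$, $Z_2^{(1)}=Z_2^{(2)}$, $s_2^{(1)}=s_2^{(2)}$, and $q_2^{(1)}=q_2^{(2)}$, from which $Z^{(1)}=Z^{(2)}$ and the two targets coincide.

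The only point requiring care is that case 1 of Definition~\ref{def:productVPA} allows $Z$ to be either a pair $(Z_1,Z_2)$ or a single symbol in $\{\bot,\yait\}$. I would split by the kind of action: $a\in A_c$ forces $Z=(Z_1,Z_2)$ with $Z_j$ in the appropriate stack alphabet; $a\in A_i$ forces $Z_1=Z_2=\yait$, whence $Z=\yait$; and $a\in A_r$ splits further according to whether the popped symbol is $\bot$ or a genuine pair. In every branch the shape of $Z$ is fixed by the component transitions, so component determinism transfers uniquely to the product. There is no real obstacle here; the whole argument is a routine unpacking of the two definitions.
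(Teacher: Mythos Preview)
Your proposal is correct and mirrors the paper's own argument almost exactly: both dispose of item~1 by observing that only case~2 of Definition~\ref{def:productVPA} can generate $\yeps$-moves, and both handle item~2 by unfolding a pair of product transitions into component transitions, splitting on whether $a\in A_c$, $a\in A_i$, or $a\in A_r$ (and within $A_r$, on whether the stack symbol is $\bot$ or a genuine pair), then invoking the determinism of $\yS$ and $\yQ$ coordinate-wise. You additionally make explicit the bound $|S_{in}\times Q_{in}|\le 1$, which the paper leaves implicit.
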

\begin{proof}
Write $\yP=\yS\times \yQ=\yvpa{S\times Q}{S_{in}\times Q_{in}}{A}{\Gamma \times \yDe}{\nu}{F\times G}$.

When $\yS$ and $\yQ$ have no $\yeps$-moves, only item (1) of Definition~\ref{def:productVPA} applies, so that $\yP$ has no $\yeps$-moves too.
Hence, assertion (1) holds.

It remains to show that $\yP$ is deterministic when $\yS$ and $\yQ$ are also deterministic and have no $\yeps$-moves.
We just argued that $\yP$ has no $\yeps$-moves, so that $\yP$ can not violate condition (3) of Definition~\ref{def:vpa-determinism}.
For the sake of contradiction, assume $\yP$ has transitions 
$((s,q),a,Z_i,(p_i,r_i))\in\nu$, where $Z_i\in\yGa\times \yDe\cup\{\bot,\yait\}$, $i=1,2$, and $a\neq \yeps$.
Using Definition~\ref{def:productVPA} (1), from $((s,q),a,Z_1,(p_1,r_1))$ we get  
\begin{align}
&(s,a,X_1,p_1)\in\rho,\,\,\, (q,a,Y_1,r_1)\in\mu\label{prop2.26a}
\end{align}
where $X_1\in \yGa\cup{\{\bot,\yait\}}$, $Y_1\in \yDe\cup{\{\bot,\yait\}}$.
Likewise, from $((s,q),a,Z_2,(p_2,r_2))$ we get 
\begin{align}
&(s,a,X_2,p_2)\in\rho,\,\,\, (q,a,Y_2,r_2)\in\mu\label{prop2.26b}
\end{align}
where $X_2\in \yGa\cup{\{\bot,\yait\}}$, $Y_2\in \yDe\cup{\{\bot,\yait\}}$.

If $a\in A_c$, we must have $Z_1,Z_2\in \yGa\times \yDe$, and Definition~\ref{def:productVPA} (1) forces $Z_i=(X_i,Y_i)$, $i=1,2$.
Since $a\in A_c$, the determinism of $\yS$, together with Definition~\ref{def:vpa-determinism} (1) applied to Eqs. (\ref{prop2.26a}, \ref{prop2.26b}) gives $X_1=X_2$, $p_1=p_2$.
Likewise, the determinism of $\yQ$ gives $Y_1=Y_2$, $r_1=r_2$.
But then $Z_1=(X_1,Y_1)=(X_2,Y_2)=Z_2$ and $(p_1,r_1)=(p_2,r_2)$.
We conclude that when $a\in A_c$, $\yP$ cannot violate condition (1) of Definition~\ref{def:vpa-determinism}.

Now let $a\in A_r\cup A_i$ and $Z_1=Z_2$.
If $Z_1=(X,Y)\in \yGa\times \yDe$, Eq. (\ref{prop2.26a}) and Definition~\ref{def:productVPA} (1) imply $X_1=X$ and $Y_1=Y$.
Since $Z_2=Z_1$, Eq. (\ref{prop2.26b}) gives $X_2=X$ and $Y_2=Y$.
Hence, $X_1=X_2$, $Y_1=Y_2$.
Now, the determinism of $\yS$ applied to Eqs. (\ref{prop2.26a}, \ref{prop2.26b}), together with Definition~\ref{def:vpa-determinism} (2) implies $p_1=p_2$.
Likewise, $r_1=r_2$.
Thus, $(p_1,r_1)=(p_2,r_2)$ and we conclude that $\yP$ does not violate Definition~\ref{def:vpa-determinism}(2).
Finally let $Z_1\in \{\bot,\yait\}$.
Because $a\neq \yeps$, Definition~\ref{def:productVPA} (1) and Eq. (\ref{prop2.26a}) say that
$Z_1=X_1=Y_1$.
Likewise, now with Eq. (\ref{prop2.26b}) and knowing that $Z_1=Z_2$, we get and $Z_1=Z_2=X_2= Y_2$.
Hence, with $a\in A_r\cup A_i$, using Definition~\ref{def:vpa-determinism} (2) applied to Eqs. (\ref{prop2.26a}, \ref{prop2.26b}), the determinism of $\yS$ and $\yQ$  implies  $p_1=r_1$  and $p_2=r_2$.
Again we have $(p_1,r_1)=(p_2,r_2)$ and we conclude that $\yP$ can not violate Definition~\ref{def:vpa-determinism} in any circumstance.
That is, $\yP$ is deterministic.
\end{proof}
 
It is not hard to see that when $\yS$ and $\yQ$ are deterministic and $\yeps$-moves are allowed in both of them, then the product $\yP$ need not be deterministic.
%\end{rema}
The following result links moves in the product VPA to moves in the original constituent VPAs.
\begin{prop}\label{prop:product-behavior}
Let $\yS=\yvpaA$  and $\yQ=\yvpa{Q}{Q_{in}}{A}{\yDe}{\mu}{G}$ be VPAs over the same input alphabet. 
Then,  for all $\ysi\in A^\star$, and all $i$, $k\geq 0$:
\begin{quote}
$$\ypdatrtf{((s,q),\ysi,(X_1,Y_1)\ldots (X_i,Y_i)\bot)}{\star}{\yS\times \yQ}{((p,r),\yeps,(Z_1,W_1)\ldots (Z_k,W_k)\bot)}$$ 
\begin{center}if and only if\end{center} 
$$\ypdatrtf{(s,\ysi,X_1\ldots X_i\bot)}{\star}{\yS}{(p,\yeps,Z_1\ldots Z_k\bot)}\,\,\text{and}\,\, \ypdatrtf{(q,\ysi,Y_1\ldots Y_i\bot)}{\star}{\yQ}{(r,\yeps,W_1\ldots W_k\bot)}.$$
\end{quote}
\end{prop}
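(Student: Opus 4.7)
The natural approach is to prove the biconditional by induction, separately for each direction, with the measure being the length of the relevant computation. The underlying intuition is that clause (1) of Definition~\ref{def:productVPA} synchronizes genuine reading moves of $\yS$ and $\yQ$ (and synchronizes their stack actions correspondingly), while clause (2) lets one machine perform an $\yeps$-move in isolation without disturbing the other component's state or the stack. Because $\yeps$-moves use the placeholder $\yait$ and therefore do not alter the stack, they can be shuffled around freely.

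For the forward direction I would induct on the number $n$ of product moves in $\ypdatrtf{((s,q),\ysi,\ldots)}{n}{\yS\times\yQ}{((p,r),\yeps,\ldots)}$. The base case $n=0$ is trivial. For the step case, write the computation as $n-1$ moves to some intermediate configuration $((s',q'),\ysi',\yal')$ followed by a single final move through some $((s',q'),x,Z,(p,r))\in\nu$. Invoke the induction hypothesis on the first $n-1$ moves to produce matching partial computations in $\yS$ and $\yQ$. Then do a case split on the final transition according to Definition~\ref{def:productVPA}: if $x\neq\yeps$ with $Z=(Z_1,Z_2)$ or $Z\in\{\bot,\yait\}$, extend both component computations by one matching step, checking against Definition~\ref{def:fsa-move} that the resulting stack tops have the required form $(Z_1,W_1)\ldots(Z_k,W_k)\bot$ versus $Z_1\ldots Z_k\bot$ and $W_1\ldots W_k\bot$; if $x=\yeps$, extend only the component that took the $\yeps$-move and leave the other at $(q',\yeps,\ldots)$ or $(s',\yeps,\ldots)$ via a zero-length continuation.

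For the reverse direction, induct jointly on $n+m$, where $n$ and $m$ are the lengths of the given $\yS$- and $\yQ$-computations. If $n=m=0$ we are done. Otherwise, pick the leftmost move in the $\yS$-computation. If it is an $\yeps$-move $(s,\yeps,\yait,s')\in\rho_i$, use clause (2) of Definition~\ref{def:productVPA} to take a product $\yeps$-move to $((s',q),\ysi,\yal)$, which, by Definition~\ref{def:fsa-move}(3), preserves the stack; then apply the induction hypothesis to the remaining $n-1$ steps of $\yS$ against the full $m$ steps of $\yQ$. Symmetrically handle the case in which $\yQ$'s first move is an $\yeps$-move. The remaining situation is that both computations start with a non-$\yeps$ transition reading the same symbol $a$ (the first letter of $\ysi$); in that case use clause (1) to synchronize them in a single product move, verifying that the three sub-cases ($a\in A_c$, $a\in A_i$, $a\in A_r$ with top $\bot$ or otherwise) yield the expected product stack using Definition~\ref{def:fsa-move}, and then invoke the induction hypothesis on the tails.

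The main technical obstacle is the bookkeeping of stack contents in the reverse direction, specifically when both $\yS$ and $\yQ$ independently perform several $\yeps$-moves before their next non-$\yeps$ step. One must make sure that the interleaving does not attempt to pop when one machine's stack would be empty or mismatched. This is resolved by the key observation that $\yeps$-transitions use the place-holder $\yait$ and, by Definition~\ref{def:fsa-move}(3), leave the stack unchanged; thus any ordering of the pending $\yeps$-moves (say, all of $\yS$'s first, then all of $\yQ$'s, or strictly alternating as above) preserves the invariant that the product stack is exactly the componentwise pairing of the two component stacks, which is precisely what is needed to match the shape $(Z_1,W_1)\ldots(Z_k,W_k)\bot$ at every intermediate point.
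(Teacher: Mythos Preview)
Your proposal is correct and follows essentially the same approach as the paper: induction on the number of product steps for the forward direction (peeling off the last move and case-splitting via Definition~\ref{def:productVPA}), and induction on $n+m$ for the reverse direction (peeling off the first move, handling $\yeps$-moves via clause~(2) and synchronized non-$\yeps$ moves via clause~(1)). Your explicit remark that either component's leading $\yeps$-move must be dispatched before synchronizing is in fact slightly more careful than the paper's appendix write-up, which tacitly assumes $\yQ$'s first move is non-$\yeps$ once $\yS$'s is.
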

\begin{proof}
Since movements in VPAs are determined by the input string, except for possible $\yeps$-moves, this result is expected.
The argument, going from the product $\yP$ to the constituent VPAs $\yS$ and $\yQ$, uses a simple induction in the number of steps in the run over $\yP$.
In the other direction, we have to allow for any of $\yS$ or $\yQ$ to make independent $\yeps$-moves.
In this case, we induct on the total number of steps that occur in both the runs over $\yS$ and $\yQ$. 
Appendix~\vref{app:prop:product-behavior} has the details. 
\end{proof}

% ----------

\subsection{VPLs and closure properties}\label{subsec:properties}

We look at some closure properties involving VPLs.
Similar results appeared elsewhere~\cite{alurm-visibly-2004}, but here the VPA models are somewhat more general because they allow for $\yeps$-moves,
which can make a difference in the results.
Moreover, it will prove important to investigate if determinism, when present in the participant VPAs, can also be guaranteed for the resulting VPAs.
Further, since later on we will be analyzing the complexity of certain constructions, we also note how the sizes of the resulting models vary 
as a function of the size of the given models.

First we report on a simple result on the stack size during runs of VPAs. 
\begin{prop}\label{prop:stack-size}
	Let $\yA=\yvpaA$ and $\yB=\yvpaB$ be VPAs over a common alphabet $A$.
	Consider starting configurations $(s_1,\ysi,\yal_1\bot)$ and $(q_1,\ysi,\ybe_1\bot)$ of $\yA$ and $\yB$, respectively,
	and where $\vert \yal_1\vert=\vert\ybe_1\vert$.
	If $\ypdatrtf{(s_1,\ysi,\yal_1\bot)}{\star}{\yA}{(s_2,\yeps,\yal_2\bot)}$ and $\ypdatrtf{(q_1,\ysi,\ybe_1\bot)}{\star}{\yB}{(q_2,\yeps,\ybe_2\bot)}$, then we must have $\vert \yal_2\vert=\vert\ybe_2\vert$.
\end{prop}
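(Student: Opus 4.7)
\medskip
\noindent\textbf{Proof plan for Proposition~\ref{prop:stack-size}.}

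The key observation is that since both VPAs share the same partitioned alphabet $A=A_c\cup A_r\cup A_i$, the \emph{change} in stack height per consumed input symbol is dictated entirely by which class the symbol lies in: $+1$ for a call, $0$ for a simple symbol, and $-1$ for a return provided the non-$\bot$ part of the stack is nonempty (otherwise the height stays at $0$ by Definition~\ref{def:fsa-move}(2)). Moreover, $\yeps$-moves belong to $\rho_i$ and by Definition~\ref{def:fsa-move}(3) they do not touch the stack at all. So the plan is to induct on $|\ysi|$ and show that the common stack-height invariant is maintained after consuming each input symbol, independently in the two runs.

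For the base case $|\ysi|=0$, every move used in either run is an $\yeps$-move, hence does not alter the stack, so $\yal_2=\yal_1$ and $\ybe_2=\ybe_1$ and the equality of heights is immediate from the hypothesis. For the inductive step, write $\ysi=\tau a$ with $a\in A$, and split the run of $\yA$ at the unique move that consumes the final symbol $a$:
\[
\ypdatrtf{(s_1,\tau a,\yal_1\bot)}{\star}{\yA}{(p,a,\yga_1\bot)}\ypdatrtf{}{}{\yA}{(p',\yeps,\yga_1'\bot)}\ypdatrtf{}{\star}{\yA}{(s_2,\yeps,\yal_2\bot)},
\]
and similarly for $\yB$, obtaining intermediate stacks $\yde_1\bot$ and $\yde_1'\bot$. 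The tail $\yeps$-moves do not change the stack, so $\yal_2=\yga_1'$ and $\ybe_2=\yde_1'$. Stripping the trailing $a$ from the configurations $(s_1,\tau a,\yal_1\bot)$ and $(q_1,\tau a,\ybe_1\bot)$ and invoking the induction hypothesis on the prefixes that consume $\tau$ alone yields $|\yga_1|=|\yde_1|$.

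It then remains to perform a three-way case split on $a$. If $a\in A_c$ the consuming transitions are push-transitions, so by Definition~\ref{def:fsa-move}(1) both stacks grow by exactly one symbol, preserving the equality of heights. If $a\in A_i$ the consuming transitions are simple, so neither stack changes. The only delicate case is $a\in A_r$: by Definition~\ref{def:fsa-move}(2), a pop decreases the height by one \emph{unless} the non-$\bot$ part of the stack was empty, in which case the height stays at $0$. But this is precisely where the assumption $|\yga_1|=|\yde_1|$ pays off: either both $\yga_1$ and $\yde_1$ are $\yeps$ (so both stacks stay at height $0$) or both are non-empty (so both decrement by one). Either way $|\yga_1'|=|\yde_1'|$, giving $|\yal_2|=|\ybe_2|$.

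The only real obstacle I anticipate is bookkeeping: one must be a bit careful that the inductive hypothesis really applies to the prefixes, since the two runs may interleave $\yeps$-moves at different positions and may contain different numbers of them. However, since $\yeps$-moves never alter the stack, the position where one chooses to split does not affect the resulting stack heights, and the argument goes through without difficulty.
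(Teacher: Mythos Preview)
Your proposal is correct and follows essentially the same argument as the paper's own proof: induction on $|\ysi|$, with the base case handled by noting that $\yeps$-moves are simple-transitions that leave the stack untouched, and the inductive step handled by writing $\ysi=\tau a$, isolating the single step that consumes the trailing symbol $a$, applying the induction hypothesis to the prefix $\tau$, and then doing the three-way case split on whether $a\in A_c$, $A_i$, or $A_r$---with the $A_r$ case resolved precisely by the observation that equal intermediate heights force both stacks to be simultaneously empty or simultaneously nonempty. The paper's appendix proof is organized identically (with $\ysi=\yde a$ in place of your $\ysi=\tau a$), so there is nothing to add.
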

\begin{proof}
A simple induction on $n=\vert\ysi\vert$. 
For details see Appendix~\ref{app:prop:stack-size}, at page~\pageref{app:prop:stack-size}.
\end{proof}

Using the product construction, we can show that VPLs are also closed under intersection.
%\\\avm{Lembrete: Vamos precisar do num de transições no produto?}
\begin{prop}\label{prop:cap-vpa}\label{prop:determ-complement}
Let $\yS=\yvpaA$ and $\yQ=\yvpa{Q}{Q_{in}}{A}{\yDe}{\mu}{G}$ be VPAs with $n$ and $m$ states, respectively. 
Then $L(\yS)\cap L(\yQ)$ can be accepted by a VPA $\yP$ with $mn$ states.
Moreover, if $\yS$ and $\yQ$ are deterministic, then $\yP$ is also deterministic.
\end{prop}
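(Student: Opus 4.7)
The natural candidate for $\yP$ is the synchronous product $\yS\times \yQ$ of Definition~\ref{def:productVPA}, whose state set is $S\times Q$ and therefore has exactly $nm$ states, matching the bound in the statement. So the whole task reduces to verifying (i) that this product accepts the intersection language, and (ii) that determinism is inherited.

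First, I would normalize $\yS$ and $\yQ$ so that the preservation lemmas about the product construction apply. Concretely, apply Proposition~\ref{prop:no-eps} in the general case, or Proposition~\ref{prop:no-eps-determ} in the deterministic case, to obtain equivalent $\yeps$-free VPAs with the same state counts (so we stay within the $nm$ budget) and the same determinism status. Then set $\yP=\yS\times \yQ$ for these $\yeps$-free versions. Proposition~\ref{prop:epsilon-deterministic}(1) immediately gives that $\yP$ has no $\yeps$-moves, and part (2) gives that $\yP$ is deterministic whenever both $\yS$ and $\yQ$ are, settling the ``moreover'' clause.

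It remains to show $L(\yP)=L(\yS)\cap L(\yQ)$. This is exactly what Proposition~\ref{prop:product-behavior} delivers, used with $i=0$ so that all three machines start from the bare stack $\bot$. A string $\ysi\in A^\star$ lies in $L(\yP)$ iff there exist $(s_0,q_0)\in S_{in}\times Q_{in}$, $(p,r)\in F\times G$, and some stack content $(Z_1,W_1)\ldots(Z_k,W_k)\bot$ such that
\[
\ypdatrtf{((s_0,q_0),\ysi,\bot)}{\star}{\yP}{((p,r),\yeps,(Z_1,W_1)\ldots(Z_k,W_k)\bot)}.
\]
Proposition~\ref{prop:product-behavior} turns this single run in the product into a pair of runs $\ypdatrtf{(s_0,\ysi,\bot)}{\star}{\yS}{(p,\yeps,Z_1\ldots Z_k\bot)}$ and $\ypdatrtf{(q_0,\ysi,\bot)}{\star}{\yQ}{(r,\yeps,W_1\ldots W_k\bot)}$, and conversely. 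Since $p\in F$ and $r\in G$, this is precisely the statement that $\ysi\in L(\yS)$ and $\ysi\in L(\yQ)$. Both inclusions $L(\yP)\subseteq L(\yS)\cap L(\yQ)$ and $L(\yS)\cap L(\yQ)\subseteq L(\yP)$ therefore follow at once.

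There is no real obstacle: once Proposition~\ref{prop:product-behavior} is in hand, the argument is pure bookkeeping. The only minor point to take care of is the initial-stack condition of that proposition, which requires the two runs in $\yS$ and $\yQ$ to be synchronized in stack height; this is automatic at the start because both stacks equal $\bot$, and it is maintained throughout by the synchronous stack operations built into Definition~\ref{def:productVPA}.
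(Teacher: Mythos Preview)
Your argument is correct and matches the paper's: take $\yP=\yS\times\yQ$, use Proposition~\ref{prop:product-behavior} for $L(\yP)=L(\yS)\cap L(\yQ)$, and Proposition~\ref{prop:epsilon-deterministic} for determinism. Two small deltas worth noting. First, the paper does not remove $\yeps$-moves beforehand; it forms the product directly (your normalization is harmless and does make the hypothesis of Proposition~\ref{prop:epsilon-deterministic}(2) literally hold, but it is extra). Second, your justification for the stack-height match in the direction $L(\yS)\cap L(\yQ)\subseteq L(\yP)$ is slightly circular: at that point you only have two \emph{separate} runs in $\yS$ and $\yQ$, so you cannot yet appeal to ``the synchronous stack operations built into Definition~\ref{def:productVPA}''. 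The paper closes this by citing Proposition~\ref{prop:stack-size}, which guarantees that two VPAs over the same pushdown alphabet, started with equal-height stacks and reading the same string, finish with equal-height stacks---exactly the hypothesis needed to invoke the converse half of Proposition~\ref{prop:product-behavior}.
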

\begin{proof}
Let $\yP=\yS\times \yQ=\yvpa{S\times Q}{S_{in}\times Q_{in}}{A}{\Gamma \times \yDe}{\nu}{F\times G}$
be the product of $\yS$ and $\yQ$.
See Definition~\ref{def:productVPA}.
It is clear that $\yP$ has $nm$ states.

For the language equivalence, assume that $\ysi\in L(\yP)$, so that 
$\ypdatrtf{((s,q),\ysi,\bot)}{\star}{\yP}{((p,r),\yeps,\yga\bot)}$ with $(s,q)\in S_{in}\times Q_{in}$ and 
$(p,r)\in F\times G$, for some $\yga\in (\yGa\times\yDe)^\star$.
Using Proposition~\ref{prop:product-behavior} we get $\ypdatrtf{(s,\ysi,\bot)}{\star}{\yS}{(p,\yeps,\yal\bot)}$ 
for some $\yal\in\yGa^\star$. 
Since $s\in S_{in}$ and $p\in F$, we conclude that $\ysi\in\ L(\yS)$.
Likewise, $\ysi\in\ L(\yQ)$, so that $\ysi\in\ L(\yS)\cap L(\yQ)$.
For the converse, assume that $s\in S_{in}$, $p\in F$ and $\ypdatrtf{(s,\ysi,\bot)}{\star}{\yS}{(p,\yeps,\yal\bot)}$ 
for some $\yal\in\yGa^\star$. Likewise, let $q\in Q_{in}$, $r\in G$ and $\ypdatrtf{(q,\ysi,\bot)}{\star}{\yQ}{(r,\yeps,\ybe\bot)}$ 
for some $\ybe\in\yDe^\star$. 
From Proposition~\ref{prop:stack-size} we get $\vert \yal\vert=\vert \ybe\vert$.
We can now apply Proposition~\ref{prop:product-behavior} and write 
$\ypdatrtf{((s,q),\ysi,\bot)}{\star}{\yP}{((p,r),\yeps,\yga\bot)}$ for some $\yga\in (\yGa\times\yDe)^\star$. 
Since $(p,r)\in F\times G$ we get $\ysi\in L(\yP)$, and the equivalence holds.

Finally, Proposition~\ref{prop:epsilon-deterministic} guarantees that $\yP$ is deterministic when $\yS$ and $\yQ$ are deterministic. 
\end{proof}

% --------------

Now, we investigate the closure of VPLs under union. 
But first, in order to complete the argument for the union, we need to consider VPAs that can always read any string of input symbols when started at any state and with any stack configuration.
\begin{defi}\label{def:forward} 
Let $\yA=\yvpaA$ be a VPA.
We say that $\yA$ is a \emph{non-blocking} VPA if, for all $s\in S$, all $\ysi\in A^\star$ and all $\yal\in\yGa^\star$, there are $p\in S$ and $\ybe\in\yGa^\star$ such that $\ypdatrtf{(s,\ysi,\yal\bot)}{\star}{\yA}{(p,\yeps,\ybe\bot)}$.
\end{defi}

Any VPA can be easily turned into a non-blocking VPA, with almost no cost in the number of states.
%\\\avm{Lembrete: vamos usar msis tarde numero de transições?}
\begin{prop}\label{prop:non-blocking}
Let $\yS=\yvpaA$ be a VPA with $n$ states. 
Then we can construct an equivalent VPA $\yQ$ with at most $n+1$ states and which is also a non-blocking VPA.
Moreover, $\yQ$ is deterministic if $\yS$ is deterministic, and $\yQ$ has no $\yeps$-moves if so does $\yS$.
\end{prop}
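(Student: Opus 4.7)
The plan is the classical ``sink state'' (or trap state) construction. I introduce a single fresh non-final state $t$ to $\yS$, redirect every combination of state/input/top-of-stack that currently has no outgoing transition into $t$, and then equip $t$ with self-loops on every possible input action. The resulting automaton $\yQ$ will have $n+1$ states, and because $t$ is non-final and every transition out of $t$ returns to $t$, no new accepting run is introduced, so $L(\yQ) = L(\yS)$; non-blocking-ness is immediate from the fact that every $(s,a,Z)$ action slot is now filled.

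Before adding the sink, however, I would first apply Proposition~\ref{prop:no-eps} (or Proposition~\ref{prop:no-eps-determ} in the deterministic case) to get an equivalent VPA $\yS'$ with the same number of states and no $\yeps$-moves. This costs nothing towards the state bound, never adds $\yeps$-moves (so the ``preserves no $\yeps$-moves'' clause is trivially kept), and preserves determinism when it applies. Working from $\yS'$, I then form $\yQ$ by adding $t$ and, for every state $s$ and every missing action slot, adding exactly one new transition of the correct kind to $t$: a push transition $(s,a,Z_0,t)$ for any fixed $Z_0\in\yGa$ when $a\in A_c$ and no push from $(s,a)$ exists; a pop transition $(s,a,Z,t)$ for each $Z\in\yGa_\bot$ not already served by a pop when $a\in A_r$; and a simple transition $(s,a,\yait,t)$ when $a\in A_i$ and none exists. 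Then I add self-loops at $t$ on every $(a,Z)$ of every kind.

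The verification then breaks into four routine checks. Equivalence: any accepting run of $\yQ$ that uses a new transition enters $t$ and cannot leave, so it cannot accept; hence $L(\yQ)=L(\yS')=L(\yS)$. Non-blocking: a short induction on $|\ysi|$ using the fact that every action slot at every state is covered produces a run from any $(s,\ysi,\yal\bot)$ to some $(p,\yeps,\ybe\bot)$. Absence of $\yeps$-moves: the construction only introduces non-$\yeps$ transitions. Determinism: for each triple $(s,a,Z)$ at most one transition to $t$ is added, and only when none was present in $\yS'$, so conditions (1)--(2) of Definition~\ref{def:vpa-determinism} are preserved; condition (3) holds vacuously because $\yQ$ has no $\yeps$-moves.

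The one step I expect to be the subtle obstacle is precisely the reason for removing $\yeps$-moves up front: if a state $s$ of a deterministic VPA emits an $\yeps$-move, then by Definition~\ref{def:vpa-determinism}(3) it cannot emit any non-$\yeps$ transition, so adding a non-$\yeps$ transition from $s$ to $t$ would destroy determinism. Invoking Proposition~\ref{prop:no-eps-determ} at the start sidesteps this difficulty without inflating the state count, after which the sink construction proceeds uniformly and mechanically.
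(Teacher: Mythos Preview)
Your proof is correct, and while it shares the sink-state idea with the paper, it handles the one delicate point differently. The paper does \emph{not} remove $\yeps$-moves before adding the sink; instead it adjoins the new state $p$ directly to $\yS$ and inserts completion transitions only out of those states $s$ that have no outgoing $\yeps$-transition, leaving states that do emit an $\yeps$-move entirely untouched (the implicit idea being that from such a state one can take the $\yeps$-move and eventually reach a completed state). Your plan---first normalize to an $\yeps$-free automaton via Proposition~\ref{prop:no-eps} (or Proposition~\ref{prop:no-eps-determ} when determinism matters), then apply a uniform sink construction---trades a short appeal to those earlier results for a cleaner verification: every $(s,a,Z)$-slot is filled directly, condition~(3) of Definition~\ref{def:vpa-determinism} becomes vacuous, and your $\yQ$ in fact always has no $\yeps$-moves, which is slightly stronger than the stated conclusion. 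Your preprocessing also quietly sidesteps a corner case the paper's inline argument does not address: if $\yS$ contains an $\yeps$-cycle whose states carry \emph{only} $\yeps$-transitions, the paper's construction adds nothing at those states and no input symbol can ever be consumed from them, so non-blocking would fail there without exactly the $\yeps$-elimination you perform up front.
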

\begin{proof}
Let $\yQ=\yvpa{S\cup\{p\}}{S_{in}}{A}{\yGa}{\mu}{F}$ where $p\not\in S$ is a new state.
In order to construct the new transition set $\mu$ as an extension of $\rho$, first pick some stack symbol $Z\in\yGa$.
Then, for all $s\in S$ such that there is no $\yeps$-transition out of $s$, that is $(s,\yeps,\yait,q)$ is not in $\rho$ for any $q\in S$, we proceed as follows.
For any input symbol $a\in A$:
\begin{enumerate}
\item $a\in A_i$: if $(s,a,\yait,r)\not\in\rho$ for all $r\in S$, add the
transition $(s,a,\yait,p)$ to $\mu$;
\item $a\in A_c$: if $(s,a,W,r)\not\in\rho$ for all $W\in\yGa$ and all $r\in S$, add the
transition $(s,a,Z,p)$ to $\mu$;
\item $a\in A_r$: if $(s,a,W,r)\not\in\rho$ for some $W\in\yGa_\bot$ and all $r\in S$, add the
transition $(s,a,W,p)$ to $\mu$.
\end{enumerate}
Finally, add to $\mu$ the self-loops $(p,a,\yait,p)$ for all $a\in A_i$, $(p,a,Z,p)$ for all $a\in A_c$, and 
$(p,a,W,p)$ for all $a\in A_r$ and all $W\in\yGa_\bot$. 

It is clear now that $\yS$ has $n+1$ states.
Moreover,  for all $s\in S$ and all $a\in A$ the construction readily allows for a move $\ypdatrtf{(s,a,\yal\bot)}{}{}{(r,\yeps,\ybe\bot)}$ for all $\yal\in \yGa^\star$.
Hence, an easy induction on $\vert \ysi\vert\geq 0$ shows that for all $s\in S$ and all $\yal\in\yGa^\star$ there will always be a computation $\ypdatrtf{(s,\ysi,\yal\bot)}{\star}{\yS}{(r,\yeps,\ybe\bot)}$, for some $r\in S$ and some $\ybe\in\yGa^\star$.
That is, the modified version is a non-blocking VPA.

The construction adds no $\yeps$-moves, so it is  clear that $\yQ$ has no $\yeps$-moves when we start  with a VPA $\yS$ that already has  no $\yeps$-moves.

Assume that $\yS$ is deterministic.
Since the construction adds no $\yeps$-moves, the new VPA $\yQ$ does not violate condition (3) of Definition~\ref{def:vpa-determinism}.
Also, none of the self-loops added at the new state $p$ violate any of the conditions of Definition~\ref{def:vpa-determinism}.
When $a\in A_c$ a new transition $(s,a,Z,p)$ is only added to $\yQ$ when there are no transition $(s,a,X,r)$ already in $\yS$, for any $X\in \yGa$, $r\in S$. 
Hence, $\yQ$ does not violate condition (1) of Definition~\ref{def:vpa-determinism}.
Likewise, we only add $(s,a,\yait,p)$, or $(s,a,W,p)$, to $\mu$ when we find no $(s,a,\yait,r)$, respectively we find no $(s,a,W,r)$, in $\rho$ for any $r$ in $S$. 
Hence, $\yQ$ does not violate condition (2) of Definition~\ref{def:vpa-determinism}. 
We conclude that $\yQ$ is deterministic when $\yS$ is already deterministic.

If  $\ypdatrtf{(s_0,\ysi,\bot)}{}{\yS}{(f,\yeps,\yal\bot)}$, with $s_0\in S_{in}$, $f\in F$, $\yal\in\yGas$ then we also have $\ypdatrtf{(s_0,\ysi,\bot)}{}{\yQ}{(f,\yeps,\yal\bot)}$, because all transitions in $\rho$ are also in $\mu$.
Hence, $L(\yS)\ysse L(\yQ)$.
For the converse, assume $\ypdatrtf{(s_0,\ysi,\bot)}{}{\yQ}{(f,\yeps,\yal\bot)}$, with $s_0\in S_{in}$, $f\in F$, $\yal\in\yGas$.
Note that all new transitions added to $\mu$ have the new state $p$ as a target state.
Thus, since the new state $p$ is not in $S_{in}$ nor in $F$, we see that all transitions used in this run over $\yQ$ are also in $\rho$, and
we then get $\ypdatrtf{(s_0,\ysi,\bot)}{}{\yS}{(f,\yeps,\yal\bot)}$.
Thus we also have $L(\yQ)\ysse L(\yS)$, showing that $L(\yS)= L(\yQ)$.
\end{proof}
  
% ------------

Now the closure of VPLs under union is at hand.
%\\\avm{Lembrete: número de transições?}
\begin{prop}\label{prop:cup-vpa}
Let $\yS$ and $\yQ$ be two VPAs over an alphabet $A$, with $n$ and $m$ states, respectively. 
Then, we can construct a non-blocking VPA $\yP$ over $A$ with at most $(n+1)(m+1)$ states and such that $L(\yP)=L(\yS)\cup L(\yQ)$. 
Moreover, if $\yS$ and $\yQ$ are deterministic, then $\yP$ is also deterministic and has no $\yeps$-moves.
\end{prop}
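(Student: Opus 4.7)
The plan is to construct $\yP$ as a product of non-blocking versions of $\yS$ and $\yQ$, taking as final states those pairs in which at least one coordinate is accepting. First I would apply Proposition~\ref{prop:non-blocking} to $\yS$ and $\yQ$ to obtain equivalent non-blocking VPAs $\yS'=\yvpa{S'}{S'_{in}}{A}{\yGa}{\rho'}{F'}$ and $\yQ'=\yvpa{Q'}{Q'_{in}}{A}{\yDe}{\mu'}{G'}$ with at most $n+1$ and $m+1$ states, respectively. When $\yS$ and $\yQ$ are deterministic, I would first apply Proposition~\ref{prop:no-eps-determ} to strip $\yeps$-moves without changing the state count, and then Proposition~\ref{prop:non-blocking} (which preserves both determinism and $\yeps$-freeness) to obtain deterministic, $\yeps$-free, non-blocking $\yS'$ and $\yQ'$. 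I would then define $\yP$ to be $\yS'\times\yQ'$ as in Definition~\ref{def:productVPA}, but with the set of final states replaced by $F'\times Q'\cup S'\times G'$. The resulting $\yP$ has at most $(n+1)(m+1)$ states.

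For language equivalence, suppose $\ysi\in L(\yS)=L(\yS')$, witnessed by a run from $(s_0,\ysi,\bot)$ to $(p,\yeps,\yal\bot)$ in $\yS'$ with $s_0\in S'_{in}$ and $p\in F'$. Since $\yQ'$ is non-blocking, for any $q_0\in Q'_{in}$ there is a run from $(q_0,\ysi,\bot)$ to some $(r,\yeps,\ybe\bot)$ in $\yQ'$. Proposition~\ref{prop:stack-size} forces $|\yal|=|\ybe|$, so Proposition~\ref{prop:product-behavior} assembles the two runs into a single $\yP$-run ending at $((p,r),\yeps,\yga\bot)$ with $(p,r)\in F'\times Q'$; hence $\ysi\in L(\yP)$. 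The case $\ysi\in L(\yQ)$ is symmetric. Conversely, if $\ysi\in L(\yP)$ with an accepting run ending at $((p,r),\yeps,\yga\bot)$ where $(p,r)\in F'\times Q'\cup S'\times G'$, Proposition~\ref{prop:product-behavior} splits this run into runs of $\yS'$ and $\yQ'$ on $\ysi$; the accepting coordinate of $(p,r)$ witnesses $\ysi\in L(\yS')\cup L(\yQ')=L(\yS)\cup L(\yQ)$.

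Non-blocking of $\yP$ is proved in the same style: given any configuration $((s,q),\ysi,\yga\bot)$ of $\yP$, decompose $\yga\in(\yGa\times\yDe)^\star$ coordinate-wise into stacks $\yal\in\yGa^\star$ and $\ybe\in\yDe^\star$ of equal length. Since $\yS'$ and $\yQ'$ are non-blocking, each has a computation consuming $\ysi$ entirely; Proposition~\ref{prop:stack-size} aligns the resulting stack lengths and Proposition~\ref{prop:product-behavior} combines the two computations into a $\yP$-run that consumes $\ysi$. Finally, when $\yS$ and $\yQ$ are deterministic, the $\yS'$ and $\yQ'$ produced above are deterministic and $\yeps$-free, so Proposition~\ref{prop:epsilon-deterministic}(2) guarantees that $\yP$ is also deterministic and $\yeps$-free.

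The main obstacle is purely clerical: one must faithfully decompose and recompose pair-stacks of $\yP$ into component stacks of $\yS'$ and $\yQ'$, relying on the length-synchronisation given by Proposition~\ref{prop:stack-size} and the run-alignment of Proposition~\ref{prop:product-behavior}. The crucial role of the non-blocking hypothesis is precisely that it lets an accepting $\yS'$-run (or $\yQ'$-run) be matched by \emph{some} $\yQ'$-run (respectively $\yS'$-run) on the same input, even when the other component has no accepting computation on $\ysi$.
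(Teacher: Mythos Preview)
Your proposal is correct and follows essentially the same approach as the paper: make both VPAs non-blocking, take the product with final states $(F'\times Q')\cup(S'\times G')$, and use Propositions~\ref{prop:stack-size} and~\ref{prop:product-behavior} for language equivalence and non-blocking, then invoke Proposition~\ref{prop:epsilon-deterministic}(2) for the deterministic case. The one place where you are actually more careful than the paper is the deterministic clause: Proposition~\ref{prop:epsilon-deterministic}(2) requires the factors to be both deterministic \emph{and} $\yeps$-free, and you explicitly secure $\yeps$-freeness via Proposition~\ref{prop:no-eps-determ} before making the VPAs non-blocking, whereas the paper's proof invokes Proposition~\ref{prop:epsilon-deterministic}(2) without spelling out this preliminary step.
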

\begin{proof}
Let $\yS=\yvpaA$ and $\yQ=\yvpaB$.
Using Proposition~\ref{prop:non-blocking} we can assume that $\yS$ and $\yQ$ are non-blocking VPAs with $n+1$ and $m+1$ states, respectively.

Let $\yP$ be the product of $\yS$ and $\yQ$ as in Definition~\ref{def:productVPA}, except that we redefine the final states of $\yP$ as $(F\times Q)\cup (S\times G)$.
Clearly, $\yP$ has $(n+1)(m+1)$ states.

We now argue that $\yP$ is also a non-blocking VPA.
Let $\ysi\in A^\star$, $(s,q)\in S\times Q$, and let $\yga=(Z_1,W_1)\ldots(Z_k,W_k)\in (\yGa\times\yDe)^\star$.
Since $\yS$ is a non-blocking VPA, we get $n\geq 0$, $p\in S$, $\yal\in\yGa^\star$ such that  
$\ypdatrtf{(s,\ysi,Z_1\ldots Z_k\bot)}{n}{\yS}{(p,\yeps,\yal\bot)}$. 
Likewise, $\ypdatrtf{(q,\ysi,W_1\ldots W_k\bot)}{m}{\yQ}{(r,\yeps,\ybe\bot)}$, for some
$m\geq 0$, $r\in Q$, $\ybe\in\yDe^\star$.
Applying Proposition~\ref{prop:stack-size} we get $\vert \yal\vert=\vert\ybe\vert$, and then using  Proposition~\ref{prop:product-behavior} we have 
$\ypdatrtf{((s,q),\ysi,\yga\bot)}{\star}{\yP}{((p,r),\yeps,\yde\bot)}$ where $\yga=(Z_1,W_1)\ldots (Z_k,W_k)$ and  $\yde\in (\yGa\times\yDe)^\star$.
This shows that $\yP$ is a non-blocking VPA.

Now suppose that $\ysi\in L(\yP)$, that is $\ypdatrtf{((s_0,q_0),\ysi,\bot)}{\star}{\yP}{((p,r),\yeps,\yal\bot)}$, where $(s_0,q_0)\in S_{in}\times Q_{in}$,  $(p,r)\in (F\times Q)\cup(S\times G)$, and $\yal\in(\yGa\times\yDe)^\star$.
Take the case when $(p,r)\in (F\times Q)$.
We get $p\in F$ and $s_0\in S_{in}$. 
Using Proposition~\ref{prop:product-behavior} we can also write $\ypdatrtf{(s_0,\ysi,\bot)}{\star}{\yS}{(p,\yeps,\ybe\bot)}$, for some $\ybe\in \yGa^\star$.
This shows that $\ysi\in L(\yS)$.
By a similar reasoning, when $(p,r)\in (S\times G)$ we get $\ysi\in L(\yQ)$.
Thus, $L(\yP)\ysse L(\yS)\cup L(\yQ)$. 

Now let $\ysi\in  L(\yS)\cup L(\yQ)$.
Take the case $\ysi\in L(\yS)$, the case $\ysi\in L(\yQ)$ being similar.
Then we must  have $\ypdatrtf{(s_0,\ysi,\bot)}{n}{\yS}{(p,\yeps,\yal\bot)}$ for some $n\geq 0$, some $\yal\in \yGa^\star$, and some $p\in F$.
Pick any $q_0\in Q_{in}$.
Since $\yQ$ is a non-blocking VPA, Definition~\ref{def:forward} gives some $r\in Q$ and some $\ybe\in \yDe^\star$ such that $\ypdatrtf{(q_0,\ysi,\bot)}{m}{\yQ}{(r,\yeps,\ybe\bot)}$ for some $m\geq 0$, some $r\in Q$ and some $\ybe\in\yDe^\star$. 
Using Proposition~\ref{prop:stack-size} we conclude that $\vert \yal\vert=\vert\ybe\vert$.
The only-if part of Proposition~\ref{prop:product-behavior} now yields  
$\ypdatrtf{((s_0,q_0),\ysi,\bot)}{\star}{\yP}{((p,r),\yeps,\yga\bot)}$, where $\yga\in (\yGa\times\yDe)^\star$.
Clearly, $(s_0,q_0)\in S_{in}\times Q_{in}$ is an initial state of $\yP$ and $(p,r)\in  (F\times Q)\cup (S\times G)$ is a final state of $\yP$.
Hence $\ysi\in L(\yP)$. 
Thus, $ L(\yS)\cup L(\yQ)\ysse L(\yP)$, and we then have $ L(\yS)\cup L(\yQ)=L(\yP)$.

Applying Proposition~\ref{prop:epsilon-deterministic} (2) we see that when $\yS$ and $\yQ$ are deterministic, then $\yP$ is also deterministic and has no $\yeps$-moves.

The proof is now complete. 
\end{proof}

% ----------

We can also show that deterministic VPLs are closed under complementation.
A related result appeared in~\cite{alurm-visibly-2004}, but here we also allow for arbitrary $\yeps$-moves in any model.
%\\ \avm{Lembrete: número de transições?}
\begin{prop}\label{prop:compl-vpa}
Let $\yS=\yvpaA$ be a deterministic VPA  with $n$ states. % and no $\yeps$-moves. 
Then, we can construct a non-blocking and deterministic VPA $\yQ$ over $A$ with no $\yeps$-moves,  $n + 1$ states, and such that $L(\yQ)=\ycomp{L(\yS)} = \ySis - L(\yS)$.
\end{prop}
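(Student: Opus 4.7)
The plan is to reduce to a sufficiently well-behaved version of $\yS$ and then complement by flipping the final state set. Concretely, I would proceed in three stages and close the argument with a determinism-plus-uniqueness observation.

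First, apply Proposition~\ref{prop:no-eps-determ} to $\yS$ to obtain an equivalent deterministic VPA $\yS'$ with no $\yeps$-moves and still with $n$ states. Then apply Proposition~\ref{prop:non-blocking} to $\yS'$ to obtain an equivalent VPA $\yS''$ that is deterministic, has no $\yeps$-moves (since $\yS'$ has none), is non-blocking, and has at most $n+1$ states. So $L(\yS'')=L(\yS)$, and $\yS''=\yvpa{S''}{S''_{in}}{A}{\yGa''}{\rho''}{F''}$ with $\vert S''\vert \le n+1$.

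Next, define $\yQ=\yvpa{S''}{S''_{in}}{A}{\yGa''}{\rho''}{S''-F''}$, i.e., exactly $\yS''$ with the final state set complemented. All structural properties of $\yS''$ (determinism, absence of $\yeps$-moves, non-blockingness, number of states) depend only on the transition relation and on $S''_{in}$, not on the designated final states, so $\yQ$ inherits them all. It remains to verify $L(\yQ)=\ycomp{L(\yS)}$.

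For the language equality, let $\ysi\in A^\star$. Since $\yS''$ is non-blocking, there exist $p\in S''$ and $\yal\in (\yGa'')^\star$ with $\ypdatrtf{(s_0,\ysi,\bot)}{\star}{\yS''}{(p,\yeps,\yal\bot)}$, where $s_0$ is the unique initial state (determinism forces $\vert S''_{in}\vert\le 1$; if $S''_{in}=\yemp$ then $L(\yS)=\yemp$ and one can take $\yQ$ to accept all of $A^\star$ by an analogous construction). Because $\yS''$ is deterministic and has no $\yeps$-moves, Proposition~\ref{prop:vpa-determ} guarantees that this configuration $(p,\yeps,\yal\bot)$ is unique. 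Hence $\ysi\in L(\yS'')=L(\yS)$ iff $p\in F''$, and $\ysi\in L(\yQ)$ iff $p\in S''-F''$; these two conditions are exact complements, so $\ysi\in L(\yQ)$ iff $\ysi\notin L(\yS)$, i.e., $L(\yQ)=\ySis-L(\yS)$.

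The main subtle point, and the only place where any nontrivial work happens, is the uniqueness step in the third paragraph: without first eliminating $\yeps$-moves (Proposition~\ref{prop:no-eps-determ}) and then making the automaton non-blocking (Proposition~\ref{prop:non-blocking}), one could have inputs on which no complete computation exists, or on which some computations end in a final state and others in a non-final state after some residual $\yeps$-moves, and the naive ``flip $F$'' trick would fail. Once both preparatory steps are done, the argument is immediate.
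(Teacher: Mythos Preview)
Your proposal is correct and follows essentially the same approach as the paper: first normalize $\yS$ via Propositions~\ref{prop:no-eps-determ} and~\ref{prop:non-blocking} to a deterministic, non-blocking VPA with no $\yeps$-moves and $n+1$ states, then flip the final-state set and use determinism (Proposition~\ref{prop:vpa-determ}, together with the fact that no $\yeps$-moves forces every accepting run on $\ysi$ to have exactly $\vert\ysi\vert$ steps) plus non-blockingness to conclude $L(\yQ)=\ycomp{L(\yS)}$. The paper argues the language equality as two inclusions rather than via a single uniqueness observation, but the underlying content is identical; your parenthetical handling of $S''_{in}=\yemp$ is even slightly more careful than the paper's.
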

\begin{proof}
Applying Propositions~\ref{prop:no-eps-determ} and~\ref{prop:non-blocking}, we can assume that $\yS$ is a non-blocking and deterministic VPA with $n+1$ states and no $\yeps$-moves.

Let $\yQ=\yvpa{S}{S_{in}}{A}{\yGa}{\rho}{S-F}$, that is, we switch the final states of $\yS$.
Clearly, since $\yS$ and $\yQ$ have the same set of initial states and the same transition relation, 
we see that $\yQ$ is also a non-blocking and deterministic VPA with $n+1$ states and no $\yeps$-moves.

The proof can be concluded with a simple argument to show that $L(\yQ)=\ycomp{L(\yS)} = \ySis - L(\yS)$. 
See Appendix~\ref{app:prop:compl-vpa}, at page~\pageref{app:prop:compl-vpa} for details.
\end{proof}

The next closure under a simple concatenation will prove useful in the sequel.  
%\\\avm{Lembrete: número de transições?}
\begin{prop}\avm{}\label{prop:conct-vpa} 
	Let $\yS=\yvpaA$ be a VPA with $n$ states, and let $B\ysse A$. 
	Then we can construct a non-blocking VPA $\yQ$ with at most $2n+2$ states and no $\yeps$-moves, and such that $L(\yQ)= L(\yS)B$.
\end{prop}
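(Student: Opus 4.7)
The plan is to preprocess $\yS$ and then adjoin only a small constant number of fresh states that implement the concluding ``$B$-move''. First I would apply Proposition~\ref{prop:no-eps} to obtain an equivalent VPA with $n$ states and no $\yeps$-moves, and then apply Proposition~\ref{prop:non-blocking} to obtain an equivalent non-blocking VPA $\yS_1$ with at most $n+1$ states, still with no $\yeps$-moves, and with the same set $F$ of final states (the trap state introduced by Proposition~\ref{prop:non-blocking} is non-final). Hence $L(\yS_1) = L(\yS)$ and $\vert S_1 \vert \le n+1$.

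Next I would construct $\yQ$ by adding to $\yS_1$ two fresh states: a new accepting state $f^*$ and a new non-final absorbing trap $q^*$. The set of final states of $\yQ$ is $\{f^*\}$, so no state of $\yS_1$ is accepting in $\yQ$. Every transition of $\yS_1$ is retained, and for each $f \in F$ and each $b \in B$ I add a transition from $f$ to $f^*$ on $b$ with stack action determined by the class of $b$: push a fixed $Z_0 \in \yGa$ if $b \in A_c$; add one transition $(f, b, W, f^*)$ for every $W \in \yGa_\bot$ if $b \in A_r$; and use the placeholder $\yait$ if $b \in A_i$. To preserve non-blockingness without introducing $\yeps$-moves, I add transitions from $f^*$ to $q^*$, together with self-loops at $q^*$, on every $a \in A$, using the stack discipline appropriate to each symbol class (a fixed push symbol for $a \in A_c$, one rule per $W \in \yGa_\bot$ for $a \in A_r$, and $\yait$ for $a \in A_i$). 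The resulting $\yQ$ has $\vert S_1 \vert + 2 \le n + 3 \le 2n + 2$ states whenever $n \ge 1$; the trivial case $n = 0$ gives $L(\yS) = \emptyset$ and is handled by a one-state sink VPA with no final state.

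For the language equality, the inclusion $L(\yQ) \ysse L(\yS) B$ follows because any accepting run of $\yQ$ must end in $f^*$, and $f^*$ is reached only by one of the new $b$-transitions from some $f \in F$; thus the accepted word has the form $ub$ with $b \in B$ and the $u$-prefix driving $\yS_1$ from an initial state to $f$. Since the trap state added inside $\yS_1$ by Proposition~\ref{prop:non-blocking} is non-final and absorbing, the $u$-run does not enter it and so uses only transitions originally in $\yS$, giving $u \in L(\yS)$. The reverse inclusion is immediate: any $u \in L(\yS) = L(\yS_1)$ admits an accepting run in $\yS_1$ ending at some $f \in F$, which can then be extended by the new $b$-transition to $f^*$. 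I expect the only delicate step to be the stack-discipline bookkeeping around $f^*$ and $q^*$, needed to check that $\yQ$ remains non-blocking for every combination of input-symbol class and stack top; with the explicit case split above this is routine.
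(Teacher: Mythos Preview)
Your approach is correct and genuinely different from the paper's. The paper does not add a single fresh accepting sink; instead it \emph{doubles} the state set, introducing a hatted copy $\hat{s}$ for every state $s$ of the (preprocessed) $\yS$, makes $\hat{S}$ the set of final states, and reroutes transitions so that the automaton enters a hatted state exactly when the symbol just read was in $B$ and the source state was final. This yields $2(n+1)=2n+2$ states and is why the bound in the statement is what it is.

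The trade-off is this: your construction is simpler and gives a sharper bound ($n+3\le 2n+2$ for $n\ge 1$), but it deliberately introduces nondeterminism at every final state on every $b\in B$ (the old $\yS_1$-transition on $b$ coexists with your new edge to $f^*$). The paper's doubling construction \emph{replaces} rather than duplicates those transitions and mirrors each original transition by exactly one hatted transition, so it preserves determinism whenever $\yS$ was deterministic. The proposition as stated does not claim determinism, so your argument suffices; but the paper's construction buys a property that is used elsewhere when building deterministic testers. Your remark about the trap state inside $\yS_1$ is correct but unnecessary: once you know the $u$-prefix drives $\yS_1$ from an initial state to $f\in F$, you already have $u\in L(\yS_1)=L(\yS)$.
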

\begin{proof}
Using Proposition~\ref{prop:non-blocking}	we can assume that $\yS$ is non-blocking and has no $\yeps$-moves.
 
The main idea is to add new states $\hat{s}$ for each state $s$ of $\yS$, and make these new states the only final states in $\yQ$.
Next, we add transitions $(s,b,Z,\hat{r})$ to $\yQ$, where $s$ is a final state in $\yS$, and $b$ is a symbol in $B$. 
Since symbols of $B$ can also occur in strings accepted by $\yS$, we have to be careful about transitions out of the new states $\hat{s}$.
In particular, if $(t,a,Z,r)$ is a transition of $\yS$ with $a\in B$ and $t\in F$ we add $(\hat{t},a,Z,\hat{r})$ to $\yQ$, otherwise we add 
$(\hat{t},a,Z,r)$ to $\yQ$.

A detailed construction is given in Appendix~\ref{app:prop:conct-vpa}, at page~\pageref{app:prop:conct-vpa}. 
\end{proof}

% !TeX spellcheck=en_US
\section{Reactive Pushdown  Models}\label{sec:conformance}

In this section we start with the Visibly Pushdown Labeled Transition System (VPTS) and state the relationship with its associated VPA. 
Next, we discuss the notion of contracted VPTSs and then introduce the  variation of Input/Output VPTSs. 
%For the sake of completeness, we also include standard definitions and properties of context-free languages and pushdown automata (PDAs).

\subsection{Visibly Pushdown Labeled Transition Systems}\label{subsec:lts}\label{subsec:pda}

A Visibly Pushdown Labeled Transition System (VPTS) extends the classical notion of Labeled Transition System (LTS),  a formal model that is convenient to express asynchronous exchange of messages between participating entities, in the sense that outputs do not have to occur synchronously with inputs, but are generated as separated events. 
Any LTS has only a finite memory, represented by its set of states. 
A VPTS, on the other hand, has a stack memory associated to it, and thus can make use of a potentially infinite memory. 

The definition of a VPTS is inspired by the close notion of a Visibly Pushdown Automaton~\cite{alurm-visibly-2004}. 
As will be apparent,  any VPTS naturally induces a VPA. 
In particular, internal transitions in the VPTS will correspond to $\yeps$-moves in the associated VPA.  
\begin{defi}\label{def:plts}
	A \emph{Visibly Pushdown Labeled Transition System} (VPTS)  over an input alphabet $L$ is a tuple $\yS=\yvptsS$, where:
	\begin{itemize}
		\item[---] $S$ is a finite set of \emph{states} or \emph{locations};
		\item[---] $S_{in}\ysse S$ is the set of \emph{initial states}; 
		\item[---] %$L$ is a finite set of \emph{labels}, or \emph{actions}.
		There  is a special symbol $\ytau\notin L$, the \emph{internal action symbol}; 
		\item[---] $\Gamma$ is a  set of \emph{stack symbols}.
		There is a special symbol $\bot\not \in \Gamma$, the \emph{bottom-of-stack symbol}; 
		\item[---] 
		$T=T_c \cup T_r \cup T_i$, where $T_c\ysse S\times L_c \times \yGa  \times S$, $T_r\ysse S\times L_r \times \yGa_\bot  \times S$, and $T_i\ysse S\times (L_i\cup\{\ytau\}) \times \{\yait\} \times S$,  where $\yait\not\in\yGa_\bot$ is a place-holder symbol.
	\end{itemize}
%	The class of VPTSs over a set of inputs $L$ will be indicated by $\yvptsc{L}$.
\end{defi}
%Let $L$ be comprised by the partition $(L_c,L_r,L_i)$. 
Let $t=(p,x,Z,q)\in T$.
If $t\in T_c$ we say that it is a \emph{push-transition}.
Its intended meaning is that $\yS$, in  state $p\in S$ and reading input $x$,  changes to state $q$ and pushes $Z$ onto the stack. 
When $t\in T_r$ we have a \emph{pop-transition}, with  the intended meaning that $\yS$, in  state $p\in S$ and reading input $x\in L_r$,  pops $Z$ from the top of the stack and changes to state $q$.
Further, when the stack is reduced to the bottom of stack symbol, $\bot$, then a pop move can be taken, leaving the stack unchanged.  
We have a \emph{simple-transition} when $t\in T_i$ and $x\in L_i$, and we have an \emph{internal-transition} when $t\in T_i$ and $x=\ytau$. 
The meaning of a simple-transition $t$ is to change from state $p$ to state $q$, while reading $x$ from the input and leaving the stack unchanged. An internal-transition also changes from state $p$ to state $q$ leaving the stack unchanged, but does not read any symbols from the input.

In order to make these notions precise, we define the set of configurations and the elementary moves of a VPTS. 
\begin{defi}\label{def:simplemove}
	Let $\yS=\yvptsS$ be a VPTS.
	A \emph{configuration} of $\yS$ is a pair $(p,\yal)\in S\times (\yGas\{\bot\})$. %\avm{(sempre termina com $\bot$ e $\bot$ só aparece no fim da pilha)} 
	When $p\in S_{in}$ and $\yal=\bot$, we say that  $(p,\yal)$ is an \emph{initial configuration} of $\yS$.
	The set of all configurations of $\yS$ is indicated by $\yltsconf{\yS}$.
	Let $(q,\yal)\in\yltsconf{\yS}$, and let $\ell \in L_\ytau$. 
	Then we write $\ytr{(p,\yal)}{\ell}{(q,\ybe)}$ if there is a transition $(p,\ell,Z,q)\in T$, and either:
	\begin{enumerate}
		\item $\ell \in L_c$, and $\ybe=Z\yal$; 
		\item  $\ell \in L_r$, and  either (i) $Z\neq \bot$ and $\yal=Z\ybe$, or (ii) $Z=\yal=\ybe=\bot$; 
		\item $\ell \in L_i \cup \{\ytau\}$ and $\yal=\ybe$. 
	\end{enumerate}
	We call $\ytr{(p,\yal)}{\ell}{(q,\ybe)}$ an \emph{elementary move} of $\yS$, and we say that $(p,\ell,Z,q)\in T$ is the transition \emph{used} in the move  $\ytr{(p,\yal)}{\ell}{(q,\ybe)}$.
\end{defi} 
It is clear from the definition that after any elementary move $\ytr{(p,\yal)}{\ell}{(q,\ybe)}$ we have  $(q,\ybe)\in\yltsconf{\yS}$, that is, $(q,\ybe)$ is also a configuration of $\yS$.
Moreover, it also clear that we always have $\yal=\yal'\bot$ and $\ybe=\ybe'\bot$ with $\yal'$, $\ybe'\in \yGa^\star$,

\begin{rema}\label{rem:figure}
	In figures depicting VPTSs, a push-transition $(s,x,Z,q)$ will be graphically represented by {\rm $x/\ypush{Z}$} next to the corresponding arc from $s$ to $q$. 
	Similarly, the label {\rm $x/\ypop{Z}$} next to an arc from $s$ to $q$ will indicate a pop-transition $(s,x,Z,q)$.
A simple- or internal-transition over $(s,x,\yait,q)$ will be indicate by the label {\rm $x$} next to the corresponding arc. 	
\end{rema}

\begin{exam}
	Figure~\ref{vpts1} represents a VPTS $\yS$ where the set of states is $S=\{s_0,s_1\}$, $S_{in}=\{s_0\}$.
	Also, we have $L_c=\{b\}$, $L_r=\{c,t\}$, and $L_i=\{\}$, and $\yGa=\{Z\}$.
	\begin{figure}[tb]
\center

\begin{tikzpicture}[node distance=1cm, auto,scale=.6,inner sep=1pt]
  \node[ initial by arrow, initial text={}, punkt] (q0) {$s_0$};
  \node[punkt, inner sep=1pt,right=2.5cm of q0] (q1) {$s_1$};  
\path (q0)    edge [ pil, left=50]
                	node[pil,above]{$c/\ypop{Z}$} (q1);
\path (q0)    edge [ pil, right=50]
                	node[pil,below]{$t/\ypop{Z}$} (q1);
                	
\path (q0)    edge [loop above] node   {$b/\ypush{Z}$} (q0);
\path (q1)    edge [loop above] node   {$c/\ypop{Z},t/\ypop{Z}$} (q1);

\path (q1)    edge [ pil, bend left=50]
                	node[pil]{$\ytau$} (q0);

%\path (cof)    edge [pil,bend right=30]   	node[anchor=north,right]{$\ytau$} (s0);

\end{tikzpicture}
\caption{A VPTS $\yS_1$, with $L_c=\{b\}$, $L_r=\{c,t\}$, $L_i=\yemp$.}

\label{vpts1}
\end{figure}
	We have a push-transition $(s_0,b,Z,s_0)$, the pop-transitions  
	$(s_0,c,Z,s_1)$, $(s_0,t,Z,s_1)$, $(s_1,c,Z,s_1)$, $(s_1,t,Z,s_1)$, and the internal-transition $(s_1,\ytau,\yait,s_0)$. \yfim
\end{exam} 

The semantics of a VPTS is given by its traces, or behaviors. 
But first we need the notion of paths in VPTS models which are just chains of elementary moves.
\begin{defi}\label{def:path}
	Let $\yS=\yvptsS$ be a VPTS and let $(p,\yal),(q,\ybe)\in\yltsconf{\yS}$ . 
	\begin{enumerate}
		\item Let $\ysi=l_1,\ldots,l_n$ be a word in $L_\ytau^\star$. We say that $\ysi$ is a \emph{path} from  $(p,\yal)$ to $(q,\ybe)$ if there are configurations  $(r_i, \yal_i)\in\yltsconf{\yS}$ , $0\leq i\leq n$, such that $\ytr{(r_{i-1},\yal_{i-1})}{l_i}{(r_i,\yal_i)}$, $1\leq i\leq n$,  with $(r_0,\yal_0)=(p,\yal)$ and $(r_n,\yal_n)=(q,\ybe)$.
		\item Let $\ysi\in L^\star$. We say that $\ysi$ is an \emph{observable path}  from $(p,\yal)$ to $(q,\ybe)$ in $\yS$ if
		there is a path $\mu$ from $(p,\yal)$ to $(q,\ybe)$ in $\yS$ such that $\ysi=h_\ytau(\mu)$.
	\end{enumerate}
	In both cases we also say that the path starts at $(p,\yal)$ and ends at $(q,\ybe)$, and we say that the configuration $(q,\ybe)$ is \emph{reachable} from $(p,\yal)$.
	We also say that $(q,\ybe)$ is \emph{reachable in} $\yS$ if it is reachable from an initial configuration of $\yS$.
\end{defi}

%\begin{exam}
%Consider the VPTS depicted at Figure~\ref{fig1}.
%The following sequences are paths starting at $(s_0,\bot)$: $\yeps$, $b$, $bc\ytau$, $b,c\ytau t \ytau bbbc\ytau b$.
%We then get, among others, the following observable paths starting at $(s_0,\bot)$: $\yeps$, $b$, $bc$, $bctbbbcb$. 
%%There are observable paths $\ysi$ from $s_2$ to $s_1$ of length $\vert\ysi\vert>n$, for all $n\geq 0$.
%\yfim
%\end{exam}

Clearly, moves labeled by the internal symbol $\ytau$ can occur in a path.
An observable path is just a path from which $\ytau$-labels  were removed. 
If $\ysi$ is a path from $(p,\yal)$ to $(q,\ybe)$, this can also be indicated  by writing $\ytr{(p,\yal)}{\ysi}{(q,\ybe)}$.
When $\vert\ysi\vert=1$ this has exactly the same meaning as indicated in Definition~\ref{def:simplemove}, so that no confusion can arise with this notation.
We may also write $\ytr{(p,\yal)}{\ysi}{}$ to indicate that there is some $(q,\ybe)\in \yltsconf{\yS}$ such that  $\ytr{(p,\yal)}{\ysi}{(q,\ybe)}$; likewise, $\ytr{(p,\yal)}{}{(q,\ybe)}$ means that there is some $\ysi\in L_\ytau^*$ such that $\ytr{(p,\yal)}{\ysi}{(q,\ybe)}$. 
Also $\ytr{(p,\yal)}{}{}$ means $\ytr{(p,\yal)}{\ysi}{(q,\ybe)}$ for some $(q,\ybe)\in \yltsconf{\yS}$ and some $\ysi\in L_\ytau^*$. 
When $\ysi$ is an observable path from $(p,\yal)$ to $(q,\ybe)$ we may write $\ytrt{(p,\yal)}{\ysi}{(q,\ybe)}$, with similar
shorthand notation also carrying over to the $\ytrt{}{}{}$ relation.
When we want to emphasize that the underlying VPTS is $\yS$, we write $\ytru{(p,\yal)}{\ysi}{\yS}{(q,\ybe)}$, or $\ytrut{(p,\yal)}{\ysi}{\yS}{(q,\ybe)}$.

Paths starting at a given configuration $(p,\yal)$ are also called the traces of $(p,\yal)$, or the traces starting at $(p,\yal)$.
The semantics of a VPTS is related to traces starting at an initial configuration. 
\begin{defi}\avm{}\label{def:trace}
	Let $\yS=\yvptsS$ be a VPTS and let $(p,\yal)\in \yltsconf{\yS}$. 
	\begin{enumerate}
		\item The set of \emph{traces} of $(p,\yal)$ is   $tr(p,\yal)=\{\ysi\yst \ytr{(p,\yal)}{\ysi\,\,}{}\}$.
		The set of \emph{observable traces} of $(p,\yal)$ is $otr(p,\yal)= \{\ysi\yst \ytrt{(p,\yal)}{\ysi\,\,}{}\}$.
		\item The \emph{semantics} of $\yS$ is $\bigcup\limits_{q\in S_{in}}\!\!\!\!tr(q,\bot)$, and the \emph{observable semantics} of $\yS$ is 
$\bigcup\limits_{q\in S_{in}}\!\!\!\!otr(q,\bot)$.
	\end{enumerate}
\end{defi}
We will also indicate the semantics and, respectively, the observable semantics, of $\yS$ by
$tr(\yS)$ and $otr(\yS)$.

	In any VPTS $\yS=\yvptsS$, if $\ytrt{(s,\yal)}{}{(p,\ybe)}$ then we also have $\ytr{(s,\yal)}{}{(p,\ybe)}$ in $\yS$, for all $(s,\yal)$, $(p,\ybe)\in \yltsconf{\yS}$.
	Moreover, $otr(\yS)=h_\ytau(tr(\yS))$.
	When $\yS$ has no internal transitions we also have $otr(\yS)=tr(\yS)$.
%\end{rema}

We can also restrict the syntactic description of VPTS models somewhat,  without loosing any descriptive capability, by removing states that are not reachable from any initial state, since these states 
will play no role when considering any system behaviors.  
Moreover, we can also eliminate $\ytau$-labeled self-loops.
We formalize these observations in the following remark.
\begin{rema}\label{rema:lte-finite}
	Let $\yS=\yvptsS$ be a VPTS.
	For any $s\in S$ we postulate that  $\ytr{(s_0,\bot)}{\ysi}{(s,\yal\bot)}$, for some $\yal\in\yGas$,
	$\ysi\in L_\ytau^\star$,  and $s_0\in S_{in}$. 
	Also, if $(s,\ytau,\yait,q)\in T$ then $s\neq q$.
\end{rema}

The intended meaning for  $\ytau$-moves in VPTSs is similar to that given for LTS models~\cite{bonifacio2020cleiej}, that is,
a VPTS can autonomously move along $\ytau$-transitions, without consuming any input symbol.
However, in some situations such moves may not be desirable, or  simply we might want no observable behavior leading to two distinct states.
This motivates the notion of determinism in VPTS models.

\begin{defi}\label{def:vpts-determinism}
	Let $\yS=\yvptsS$ be a VPTS. 
We say that $\yS$ is \emph{deterministic} if, for all   $s$, $p\in S_{in}$, $s_1$, $s_2\in S$, $\ybe_1,\ybe_2 \in \yGas$, and $\ysi\in L^\star$,  
we have that 	$\ytrut{(s,\bot)}{\ysi}{}{(s_1,\ybe_1\bot)}$ and $\ytrut{(p,\bot)}{\ysi}{}{(s_2,\ybe_2\bot)}$ imply	$s_1=s_2$ and $\ybe_1=\ybe_2$.
\end{defi}
As a consequence, deterministic VPTSs do not have internal moves.
\begin{prop}\label{prop:vpts-deterministic}
	Let $\yS=\yvptsS$ be a deterministic VPTS.
	Then $\yS$ has no $\ytau$-labeled transitions.
\end{prop}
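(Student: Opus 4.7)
My plan is to argue by contradiction. Suppose, for the sake of contradiction, that $\yS$ contains some $\ytau$-labeled transition $(s,\ytau,\yait,q)\in T$. By Remark~\ref{rema:lte-finite}, any such transition must satisfy $s\neq q$, and moreover the state $s$ must be reachable from an initial configuration: there exist $s_0\in S_{in}$, $\yal\in\yGas$, and $\ysi\in L_\ytau^\star$ such that $\ytr{(s_0,\bot)}{\ysi}{(s,\yal\bot)}$.

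Next I would exhibit two distinct runs on the same observable word. Set $\mu=h_\ytau(\ysi)\in L^\star$. From Definition~\ref{def:path}(2) the first run gives $\ytrt{(s_0,\bot)}{\mu}{(s,\yal\bot)}$. For the second, append the postulated $\ytau$-move: from Definition~\ref{def:simplemove}(3) we have $\ytr{(s,\yal\bot)}{\ytau}{(q,\yal\bot)}$, so concatenating produces the path $\ytr{(s_0,\bot)}{\ysi\ytau}{(q,\yal\bot)}$. Since $h_\ytau(\ysi\ytau)=h_\ytau(\ysi)=\mu$, this is again an observable $\mu$-path, yielding $\ytrt{(s_0,\bot)}{\mu}{(q,\yal\bot)}$.

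Finally I would invoke Definition~\ref{def:vpts-determinism} taking both initial states to be the same $s_0$, the same observable word $\mu$, and the two end configurations $(s,\yal\bot)$ and $(q,\yal\bot)$. Determinism forces $s=q$, contradicting $s\neq q$ from Remark~\ref{rema:lte-finite}. Hence no such $\ytau$-transition can exist, completing the proof.

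I do not foresee any substantial obstacle: the argument is essentially an application of the definitions. The only mild subtlety is remembering to use Remark~\ref{rema:lte-finite} twice, once to reach $s$ from an initial state (providing the common observable prefix $\mu$) and once to guarantee $s\neq q$ (providing the conflict needed to break determinism).
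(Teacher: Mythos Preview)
Your proof is correct and follows essentially the same approach as the paper's own argument: assume a $\ytau$-transition $(s,\ytau,\yait,q)$, use Remark~\ref{rema:lte-finite} to obtain $s\neq q$ and a path from an initial configuration to $(s,\yal\bot)$, then exhibit two observable paths on the same word $\mu=h_\ytau(\ysi)$ ending in distinct states, contradicting determinism. The only cosmetic difference is that the paper cites Definition~\ref{def:trace} rather than Definition~\ref{def:path}(2) when passing from $\rightarrow$ to $\Rightarrow$.
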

\begin{proof}
	By contradiction, assume that $(s,\ytau,\yait,q)\in T$. 
	From Remark~\ref{rema:lte-finite} we get $s\neq q$ and we also get $\yal\in\yGas$, $\ysi\in\ L^\star$ such that $\ytr{(s_0,\bot)}{\ysi}{(s,\yal\bot)}$, with $s_0\in S_{in}$.
	Hence, $\ytr{(s_0,\bot)}{\ysi}{(s,\yal\bot)}\ytr{}{\ytau}{(q,\yal\bot)}$. 
	Using Definition~\ref{def:trace} we get $\ytrt{(s_0,\bot)}{\mu}{(s,\yal\bot)}$ and $\ytrt{(s_0,\bot)}{\mu}{(q,\yal\bot)}$, where $\mu=h_\ytau(\ysi)$. 
	Since $s\neq q$, this contradicts Definition~\ref{def:vpts-determinism}. 
\end{proof}

%%%%%%%%%%%%%%%%% VPTS contracted

\subsection{Contracted VPTSs}\label{subsec:contracted}

It will also be useful, later on, to count on a guarantee that every transition in a VPTS can be exercised by some trace of the model.
Since every transition, except possibly for pop transitions, can always be taken, we concentrate on the pop transitions.
\begin{defi}\label{def:vpts-reduced}
	We say that a VPTS $\yS=\yvpts{S}{S_{in}}{L}{\yGa}{T}$ is \emph{contracted} if for every transition $(p,b,Z,r) \in T$ with $b\in L_r$, there are some $s_0\in S_{in}$, $\yal\in\yGa^\star$ and $\ysi\in L ^\star$ such that $\ytrt{(s_0,\bot)}{\ysi}{(p,\yal\bot)}$, where either (i) $\yal=Z\ybe$ for some $\ybe\in\yGa^\star$, or (ii) $\yal=\yeps$ and $Z=\bot$.
\end{defi}

We  can obtain contracted VPTSs using the next Proposition~\ref{prop:contracted-vpts}. 
The idea is to construct a context free grammar (CFG) based on the given VPTS, in such a way that the CFG generates strings where terminals represent VPTS transitions.
The productions of the CFG will indicate the set of transitions that can be effectively used in a trace over the VPTS.
%\\\avm{Não vai precisar saber num de transições em $\yQ$?}
\begin{prop}\avm{}\label{prop:contracted-vpts} 
	Let $\yS=\yvpts{S}{S_{in}}{L}{\yGa}{T}$ be a VPTS.
	We can effectively construct a contracted VPTS $\yQ=\yvpts{Q}{Q_{in}}{L}{\yGa}{R}$ with $\vert Q\vert\leq \vert S\vert$, 
and such that $tr(\yS)=tr(\yQ)$.% and such that $otr(\yS)=otr(\yQ)$.
	Moreover, if $\yS$ is deterministic, then $\yQ$ is also deterministic.
\end{prop}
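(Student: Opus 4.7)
My plan is to retain in $\yQ$ only those transitions of $\yS$ that actually participate in some trace of $\yS$ from an initial configuration, and to then discard any states that become unreachable in the resulting sub-system. Since only transitions and states disappear, we immediately get $|Q|\leq|S|$. Moreover, if a pop transition $(p,b,Z,r)$ survives in $\yQ$, then at the step of a trace of $\yS$ where it is used, state $p$ is reached with $Z$ on top of the stack (or with the stack reduced to $\bot$ when $Z=\bot$), which is exactly the condition required by Definition~\ref{def:vpts-reduced}.

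To identify the ``used'' transitions I follow the hint and build a context-free grammar $G_\yS$ whose terminals are the transitions in $T$. I use two families of non-terminals. A reachability non-terminal $[p]$, for each $p\in S$, is intended to derive precisely the transition sequences of traces of $\yS$ from initial configurations that end at state $p$ (with arbitrary stack). A balanced non-terminal $\langle p, q\rangle$ is intended to derive transition sequences that take $(p,\yga\bot)$ to $(q,\yga\bot)$ with a net-zero stack change and without ever popping anything out of $\yga$. Productions for $\langle p, q\rangle$ are the standard PDA-to-CFG productions adapted to track transitions rather than input symbols: $\langle p, p\rangle \to \yeps$; $\langle p, q\rangle \to t\, \langle p', q\rangle$ for each simple or $\ytau$ transition $t=(p, x, \yait, p')\in T_i$; and $\langle p, q\rangle \to t\, \langle p', r\rangle\, t'\, \langle r', q\rangle$ whenever a push $t=(p, a, Y, p')\in T_c$ is matched by a pop $t'=(r, b, Y, r')\in T_r$ with $Y\in\yGa$. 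Productions for $[q]$ are analogous: $[s_0]\to\yeps$ for $s_0\in S_{in}$; $[q]\to[p]\,t$ for each simple/$\ytau$ transition $t=(p, x, \yait, q)\in T_i$; $[q]\to[p]\,t$ for each push $t=(p, a, Y, q)\in T_c$; $[q]\to[p]\,t$ for each bottom-of-stack pop $t=(p, b, \bot, q)\in T_r$ (such pops leave the stack unchanged); and $[q]\to[p]\,t\,\langle p'', r\rangle\,t'$ for each matched push--pop pair with $t=(p, a, Y, p'')\in T_c$ and $t'=(r, b, Y, q)\in T_r$, $Y\in\yGa$. A routine two-sided induction on derivation length versus trace length then shows that $[p]\Rightarrow^* w$ if and only if $w$ is the sequence of transitions used along some trace of $\yS$ reaching $p$. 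I extract $T'\subseteq T$ as the set of terminals of $G_\yS$ that occur in at least one derivation from some $[s_0]$ with $s_0\in S_{in}$; this is computable by standard CFG usefulness analysis.

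I then set $Q$ to be the set of states of $\yS$ reachable from $S_{in}$ using only transitions in $T'$, $Q_{in}=S_{in}\cap Q$, and $R=T'\cap(Q\times L_\ytau\times(\yGa_\bot\cup\{\yait\})\times Q)$, so that $\yQ=\yvpts{Q}{Q_{in}}{L}{\yGa}{R}$ has at most $|S|$ states. The inclusion $tr(\yQ)\subseteq tr(\yS)$ is immediate since every transition of $\yQ$ is a transition of $\yS$. Conversely, any trace of $\yS$ uses only transitions in $T'$ (by the characterization of $T'$) and visits only states in $Q$ (every prefix of a trace also reaches its endpoint via $T'$), so it is a trace of $\yQ$. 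Every pop transition surviving in $R$ participates in some trace of $\yS$, giving the reachable configuration required by Definition~\ref{def:vpts-reduced} and showing that $\yQ$ is contracted. Determinism in the sense of Definition~\ref{def:vpts-determinism} is a \emph{universal} property over pairs of observable traces from initial configurations; passing to a sub-system (with $Q_{in}\subseteq S_{in}$ and $R\subseteq T$) can only reduce the set of such pairs, so it is inherited from $\yS$ to $\yQ$.

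The main obstacle is the grammar correctness lemma $[p]\Rightarrow^* w$ iff $w$ is the transition sequence of some trace of $\yS$ ending at $p$. The subtlety lies in pop-transitions on $\bot$, which do not actually shrink the stack and so must be threaded through the reachability productions for $[q]$ rather than the balanced productions for $\langle p, q\rangle$, together with the matched-push-pop decomposition used to reduce a pop inside $[q]$ to an earlier $[p]$ together with a balanced middle segment $\langle p'', r\rangle$. Once these details are in place, the rest of the argument is routine bookkeeping.
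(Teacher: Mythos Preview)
Your overall strategy matches the paper's: encode traces of $\yS$ by a context-free grammar whose terminals are the transitions in $T$, and keep only the transitions that actually appear in some derivation. The paper's non-terminals are triples $[s,Z,p]$ --- current state $s$, current top-of-stack symbol $Z$, and the state $p$ entered when that $Z$ is eventually popped --- whereas you use reachability symbols $[p]$ together with balanced symbols $\langle p,q\rangle$. Either shape can be made to work, but your treatment of pop-on-$\bot$ transitions has a real gap.

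The production $[q]\to[p]\,t$ for a bottom pop $t=(p,b,\bot,q)\in T_r$ is unsound in the direction ``$[q]\Rightarrow^{*}w$ implies $w$ is the transition sequence of an actual trace.'' You declare $[p]$ to derive the transition sequences of all traces reaching $p$ \emph{with arbitrary stack}, but $t$ is enabled only when the stack above $\bot$ is empty. Concretely, take $S=\{s_0,p\}$, $S_{in}=\{s_0\}$, $\yGa=\{Y\}$, with a single push $t_1=(s_0,a,Y,p)\in T_c$ and a single bottom pop $t_2=(p,b,\bot,s_0)\in T_r$. The only reachable configuration at $p$ is $(p,Y\bot)$, so $t_2$ can never fire and any contracted sub-VPTS must discard it. Your grammar, however, gives $[s_0]\Rightarrow[p]\,t_2\Rightarrow[s_0]\,t_1\,t_2\Rightarrow t_1t_2$, so $t_2\in T'$ and hence $t_2\in R$; the resulting $\yQ$ violates Definition~\ref{def:vpts-reduced} at $t_2$ since no trace reaches $(p,\bot)$. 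The paper's grammar avoids exactly this because its non-terminals carry the top-of-stack symbol: the pop-on-$\bot$ production is attached only to symbols of the form $[s,\bot,-]$, and such a symbol can be leftmost only after a derivation that genuinely leaves $s$ with an empty stack. A minimal repair of your scheme is to refine the reachability symbols so that they record whether the stack is empty --- e.g., split $[p]$ into $[p]_\bot$ (reached with empty stack) and $[p]$ (arbitrary stack), and let only $[p]_\bot$ feed the pop-on-$\bot$ rule. With that fix the remainder of your argument (trace equality, contractedness from actual use, preservation of determinism under restriction to a sub-system) goes through as written.
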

\begin{proof}
	First we construct a context-free grammar $G$ whose terminals represent transitions of $\yS$. 
Non-terminals are of the form $[s,Z,p]$ where $s, p\in S$ are states of $\yS$ and $Z\in\yGa_\bot$ is a stack symbol. 
The main idea  can be grasped as follows.
Let $t_i=[s_i,a_i,Z_i,p_i]$, $1\leq i\leq n$ be transitions of $\yS$ and let $\ysi=a_1a_2\cdots a_n$ be an input string.
If  $G$ has a leftmost derivation
	$$\ycfgtrtfl{[s_0,\bot,-]}{\star}{}{t_1\cdots t_n[r_1,W_1,r_2][r_2,W_2,r_3]\cdots[r_m,W_m,r_{m+1}][r_{m+1},\bot,-]}$$
	it must be the case that $\yS$, starting at the initial configuration $(s_0,\bot)$, can move along the transitions $t_1, \ldots, t_n$., in that order, to reach the configuration $(r_1,W_1W_2\cdots W_m\bot)$.
	That is, $\ytr{(s_0,\bot)}{\ysi}{(r_1,W_1W_2\cdots W_m\bot)}$, where $\ysi=a_1a_2\cdots a_n$.
And vice-versa.
%	We indicate by $\hookrightarrow$ the leftmost derivation relation induced by $G$over $(V\cup  T)^\star$.
	We then show that leftmost derivations of $G$ faithfully simulate traces of $\yS$ and, conversely, that any trace of $\yS$ can be simulated by a leftmost derivation of $G$.
That done, we can easily extract from $G$ a contracted VPTS $\yQ$.
A simple argument then proves that $L(\yQ)=tr(\yS)$.
% as well as $otr(\yQ)=otr(\yS)$. 

The complete construction and detailed proofs can be found in Appendix~\vref{app:prop:contracted-vpts}. 	
\end{proof}

%%%%%%%%%%%%%%   VPA x VPTS %%%%%%%%%%

\subsection{Relating VPTS and VPA models}\label{subsec:vpts-vpa}

Now we show that any VPTS $\yS$  gives rise to an associated VPA $\yS_\yA$ in a natural way.
We convert any $\ytau$-transition of $\yS$  into a $\yeps$-transition of $\yS_\yA$.
The set of final states of $\yS_\yA$ is just the set of all locations of $\yS$.
Conversely, we can associate a VPA to any given VPTS, provided that all states in the given VPA are final states.
\begin{defi}\label{def:plts-pda}
	We have the following two associations:
	\begin{enumerate}
		\item Let $\yS=\yvptsS$ be a VPTS. 
		The VPA \emph{induced} by $\yS$ is  $\yA_\yS=\yvpa{S}{S_{in}}{L}{\yGa}{\rho}{S}$ where, for all $p$, $q\in S$,  $Z\in \yGa$,  $\ell\in L$, we have: 
		\begin{enumerate}
			\item $(p,\ell,Z,q)\in \rho$ if and only if $(p,\ell,Z,q)\in T$;
			\item $(p,\yeps,\yait,q)\in \rho$ if and only if $(p,\ytau,\yait,q)\in T$.
		\end{enumerate}
		\item Let $\yA=\yvpa{S}{S_{in}}{L}{\yGa}{\rho}{S}$ be a VPA.
		The VPTS \emph{induced} by $\yA$ is $\yS_\yA=\yvpts{S}{S_{in}}{L}{\yGa}{T}$ where:
		\begin{enumerate}
			\item $(p,\ell,Z,q)\in T$ if and only if $(p,\ell,Z,q)\in \rho$;
			\item $(p,\ytau,\yait,q)\in T$ if and only if $(p,\yeps,\yait,q)\in \rho$.
		\end{enumerate}
	\end{enumerate}
\end{defi}
The relationship between the associated models is given by the following result.
%Note that in both cases we need that all states in $\yA$ are final states.
\begin{prop}	\label{prop:lang-vpa-vlpts}
	$\yS=\yvptsS$ be a VPTS and  $\yA=\yvpa{S}{S_{in}}{L}{\yGa}{\rho}{S}$ a VPA.
	Assume that either $\yA$ is the VPA induced $\yS$, or $\yS$ is the VPTS induced by $\yA$.
	Then, the following are equivalent, where $\ysi\in L^\star$, $\mu\in (L_\ytau)^\star$,  $s,p\in S$, $\yal,\ybe \in\yGa^\star$, $n\geq 0$:
	\begin{enumerate}
		\item $\ytrut{(s,\yal\bot)}{\ysi}{\yS}{(p,\ybe\bot)}$
		\item  $\ytrtf{(s,\yal\bot)}{\mu}{\yS}{(p,\ybe\bot)}$, $h_\ytau(\mu)=\ysi$
		\item $\ypdatrtf{(s,\ysi,\yal\bot)}{n}{\yA}{(p,\yeps,\ybe\bot)}$,  $n=\vert\mu\vert$
	\end{enumerate}
\end{prop}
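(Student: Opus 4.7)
The plan is to establish the two easy halves separately. The implication $(1) \Leftrightarrow (2)$ is immediate from Definition~\ref{def:path}(2): the notation $\ytrut{(s,\yal\bot)}{\ysi}{\yS}{(p,\ybe\bot)}$ asserts the existence of an observable path labeled $\ysi$, which by definition is precisely a path $\mu$ over $L_\ytau^\star$ with $h_\ytau(\mu)=\ysi$, which is statement (2). So all the real content lies in $(2)\Leftrightarrow(3)$.

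For $(2)\Leftrightarrow(3)$, I would induct on $n=\vert\mu\vert$. The base $n=0$ forces $\mu=\yeps$, hence $\ysi=h_\ytau(\yeps)=\yeps$, and both sides collapse to the equality $(s,\yal\bot)=(p,\ybe\bot)$. For the inductive step, I would compare the two formalisms transition-by-transition, appealing directly to Definition~\ref{def:plts-pda}, which enforces a bijection between the transitions of $\yS$ and those of $\yA$: each $(p,\ell,Z,q)\in T$ with $\ell\in L$ matches $(p,\ell,Z,q)\in\rho$, and each $\ytau$-transition of $\yS$ matches an $\yeps$-move of $\yA$. Moreover, Definitions~\ref{def:fsa-move} and~\ref{def:simplemove} specify identical stack effects for push, pop, and simple (and $\bot$) transitions in both formalisms. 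Hence a single elementary VPTS move $\ytr{(r,\yga\bot)}{\ell}{(r',\yga'\bot)}$ with $\ell\in L$ corresponds exactly to the VPA step $\ypdatrt{(r,\ell\eta,\yga\bot)}{}{(r',\eta,\yga'\bot)}$ for any remaining input $\eta$, and a VPTS $\ytau$-move corresponds to a VPA $\yeps$-step, which consumes no input symbol.

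For the forward direction $(2)\Rightarrow(3)$, writing $\mu=\ell_1\ell_2\cdots\ell_n$ and decomposing the path through intermediate configurations, the inductive hypothesis applied to the prefix of length $n-1$ combined with the one-step correspondence yields an $n$-step run of $\yA$ on input $\ysi=h_\ytau(\mu)$ starting at $(s,\ysi,\yal\bot)$ and ending at $(p,\yeps,\ybe\bot)$. For $(3)\Rightarrow(2)$, I would similarly decompose the $n$-step $\yA$-computation step-by-step; each VPA step either reads a symbol of $\ysi$ using a non-$\yeps$ transition, or takes an $\yeps$-move. By the bijection above each step lifts to an elementary VPTS move, producing a word $\mu\in L_\ytau^\star$ which is precisely the sequence of labels (possibly including $\ytau$s where $\yeps$ was used) of the VPA run, so $h_\ytau(\mu)=\ysi$ and $\vert\mu\vert=n$.

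The only subtlety I anticipate is bookkeeping for pop transitions when the stack reduces to $\bot$, because both Definitions~\ref{def:fsa-move}(2) and~\ref{def:simplemove}(2) allow the stack-preserving pop in that corner case. But since the two definitions handle this identically, the step-level correspondence still goes through without modification. The rest is standard induction and definition unfolding.
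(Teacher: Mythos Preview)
Your proposal is correct and follows essentially the same approach as the paper: the equivalence $(1)\Leftrightarrow(2)$ is immediate from Definition~\ref{def:path}, and $(2)\Leftrightarrow(3)$ is handled by straightforward induction (on $\vert\mu\vert$ for one direction and on $n$ for the other) using the transition-level bijection of Definition~\ref{def:plts-pda} together with the matching semantics in Definitions~\ref{def:fsa-move} and~\ref{def:simplemove}. The paper's proof is in fact terser than yours, merely citing these definitions and asserting the inductions are easy; your added detail about the $\bot$-pop corner case and the step-by-step correspondence is accurate and helpful but not a departure in method.
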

\begin{proof}
	From Definition~\ref{def:path} we see that (1) and (2) are equivalent, for any VPTS $\yS$.
	
	If $\yA$ is the VPA induced by $\yS$ then, using Definition~\ref{def:plts-pda}(1), an easy induction on $\vert\mu\vert\geq 0$ shows that if (2) holds then (3) also holds with $n=\vert\mu\vert$.
	Likewise, an easy induction on $n\geq 0$ shows that if (3) holds, then we get some $\mu$ satisfying (2) and with $\vert\mu\vert=n$.	
	If $\yS$ is the VPTS induced by the VPA $\yA$ then the reasoning is very similar, now using  Definition~\ref{def:plts-pda}(2). 
\end{proof}

The observable semantics of $\mathcal{S}$ is just the language accepted by $A_\mathcal{S}$.
We note this as the next proposition.
\begin{prop}	\label{prop:plts-pda}
	Let $\yS$ be a VPTS and  $\yA_{\yS}$ the VPA induced by $\yS$.
	Then $otr(\yS)=L(\yA_{\yS})$ and, further, if $\yS$ is deterministic and contracted
	then $\yA_\yS$ is also deterministic.
	Conversely, let $\yA$ be a VPA and $\yS_\yA$ the VPTS induced by $\yA$.
	Then $L(\yA)=otr(\yS_\yA)$
	and, also, 
	if $\yA$ is deterministic and has no $\yeps$-moves, then $\yS_\yA$ is deterministic.
\end{prop}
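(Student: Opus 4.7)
The plan is to reduce both language equalities to Proposition~\ref{prop:lang-vpa-vlpts} and then to handle the two determinism claims separately. For $otr(\yS)=L(\yA_{\yS})$, one unfolds definitions: $\ysi\in otr(\yS)$ iff $\ytrut{(s,\bot)}{\ysi}{\yS}{(p,\ybe\bot)}$ for some $s\in S_{in}$, iff by Proposition~\ref{prop:lang-vpa-vlpts} we have $\ypdatrtf{(s,\ysi,\bot)}{\star}{\yA_{\yS}}{(p,\yeps,\ybe\bot)}$, iff $\ysi\in L(\yA_{\yS})$, where the last step uses that every state of $\yA_{\yS}$ is final by Definition~\ref{def:plts-pda}. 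The identity $L(\yA)=otr(\yS_{\yA})$ follows by the symmetric argument via item~(2) of Definition~\ref{def:plts-pda}, the hypothesis $F=S$ being built into the way $\yA$ is presented.

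For the first determinism claim, assume $\yS$ is deterministic and contracted. Proposition~\ref{prop:vpts-deterministic} rules out $\ytau$-transitions in $\yS$, so by Definition~\ref{def:plts-pda} the induced VPA $\yA_{\yS}$ has no $\yeps$-moves, making condition~(3) of Definition~\ref{def:vpa-determinism} vacuous. Instantiating Definition~\ref{def:vpts-determinism} with $\ysi=\yeps$ also forces $|S_{in}|\leq 1$. To verify conditions~(1) and~(2), take two transitions $(p,x,Z_i,q_i)\in\rho$; by Definition~\ref{def:plts-pda} they are transitions of $\yS$. If $x$ is a call or simple symbol, mere reachability of $p$ (Remark~\ref{rema:lte-finite}) lets us extend a reaching computation by $x$ and fire either transition; if $x$ is a return symbol, we need the right top-of-stack symbol at $p$, which is exactly the guarantee supplied by contractedness (Definition~\ref{def:vpts-reduced}). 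In every case one obtains observable traces $\ytrut{(s_0,\bot)}{\ysi x}{\yS}{(q_i,\ybe_i\bot)}$, and determinism of $\yS$ forces $q_1=q_2$ together with $\ybe_1=\ybe_2$. A brief case split via Definition~\ref{def:simplemove} then extracts $Z_1=Z_2$ in the call case, closing condition~(1); condition~(2) needs only the state equality.

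For the reverse direction, assume $\yA$ is deterministic with no $\yeps$-moves; determinism of $\yA$ already gives $|S_{in}|\leq 1$. Given $\ytrut{(s,\bot)}{\ysi}{\yS_{\yA}}{(s_i,\ybe_i\bot)}$ for $i=1,2$ with $s\in S_{in}$, Proposition~\ref{prop:lang-vpa-vlpts} produces two runs of $\yA$ from $(s,\ysi,\bot)$ ending at $(s_i,\yeps,\ybe_i\bot)$; since $\yA$ has no $\yeps$-moves, both runs have length exactly $\vert\ysi\vert$, and Proposition~\ref{prop:vpa-determ} collapses them to a single configuration, establishing Definition~\ref{def:vpts-determinism}.

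The main obstacle lies in the forward determinism step: one must recognize that contractedness is precisely the hypothesis needed to witness each pop transition out of $p$ by a reachable computation ending at $(p,\yal\bot)$ with the correct top-of-stack symbol. Without it, two distinct pop transitions could sit in $T$ side by side yet never be jointly exercisable from an initial configuration, so determinism of $\yS$ would place no constraint on them and $\yA_{\yS}$ could fail Definition~\ref{def:vpa-determinism}.
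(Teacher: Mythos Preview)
Your proposal is correct and follows essentially the same route as the paper's proof: both reduce the language equalities to Proposition~\ref{prop:lang-vpa-vlpts}, establish $|S_{in}|\le 1$ via $\ysi=\yeps$, invoke Proposition~\ref{prop:vpts-deterministic} to dispose of condition~(3), use Remark~\ref{rema:lte-finite} for reachability in the call/simple cases and contractedness in the return case to verify conditions~(1)--(2), and for the converse combine $|S_{in}|\le 1$, the absence of $\yeps$-moves (hence run length $=|\ysi|$), and Proposition~\ref{prop:vpa-determ}. Your closing paragraph explaining \emph{why} contractedness is indispensable is a nice addition the paper does not make explicit.
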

\begin{proof}
	Let $\yS=\yvptsS$ and  $\yA_\yS=\yvpa{S}{S_{in}}{L}{\yGa}{\rho}{S}$.
	Then, $otr(\yS)=L(\yA_\yS)$ follows directly from Proposition~\ref{prop:lang-vpa-vlpts}.
		
	Now assume that $\yS$ is deterministic and contracted. 
	We show that $\yA_\yS$  satisfies Definition~\ref{def:vpa-determinism}.
	First, take $p$, $q\in S_{in}$. 
	Then, we have $\ytrt{(p,\bot)}{\yeps}{(p,\bot)}$ and $\ytrt{(q,\bot)}{\yeps}{(q,\bot)}$ in $\yS$.
	Since $\yS$ is deterministic, Definition~\ref{def:vpts-determinism} gives $p=q$.
	This shows that $\vert S_{in}\vert\leq 1$, as required by Definition~\ref{def:vpa-determinism}.
	
	Next, we look at the three assertions at Definition~\ref{def:vpa-determinism}.
	First, let $(p,\ell,Z_i,q_i)\in\rho_c$,  $i=1,2$.
	Since $\ell\in L_c$, Definition~\ref{def:plts-pda} gives $(p,\ell,Z_i,q_i)\in T$ for $i=1,2$. From Remark~\ref{rema:lte-finite} we obtain some $\ysi\in L^\star$ and $s_0\in S_{in}$ such that $\ytrut{(s_0,\bot)}{\ysi}{\yS}{(p,\yal\bot)}$ for some $\yal\in\yGa^\star$. 
	Hence,
	$\ytrut{(s_0,\bot)}{\ysi\ell}{\yS}{(q_i,Z_i\yal\bot)}$ for $i=1,2$. 
	Since $\yS$ is deterministic, Definition~\ref{def:vpts-determinism} gives $q_1=q_2$ and $Z_1=Z_2$.
	We conclude that $A_\yS$ satisfies condition 1 at Definition~\ref{def:vpa-determinism}. 	
	Now suppose we have $(p,\ell,Z,q_i)\in\rho_r\cup \rho_i$, $i=1,2$.
	Since $\ell\in L_r\cup L_i$, Definition~\ref{def:plts-pda} gives $(p,\ell,Z,q_i)\in T$, $i=1,2$. 
	When $\ell\in L_i$ we get $Z=\yait$, and proceed exactly as in the first case.
	%From Remark~\ref{rema:lte-finite} we obtain some $\ysi\in L^\star$ and $s_0\in S_{in}$ such that $\ytrut{(s_0,\bot)}{\ysi}{\yS}{(p,\yal\bot)}$ for %some $\yal\in\yGa^\star$.
 %Hence, $\ytrut{(s_0,\bot)}{\ysi\ell}{\yS}{(q_i,\yal\bot)}$, $i=1,2$. 
	%Since $\yS$ is deterministic, Definition~\ref{def:vpts-determinism} gives $q_1=q_2$.
	When $\ell\in L_r$, since $\yS$ is contracted, Proposition~\ref{prop:contracted-vpts}  gives  $s_0\in S_{in}$, $\ysi\in L^\star$ and $\yal\in\yGa^\star$ such that  $\ytrut{(s_0,\bot)}{\ysi}{\yS}{(p,\yal\bot)}$.
	Further, $\yal= Z\ybe$ when $Z\neq \bot$, or $\yal=\yeps$ when $Z=\bot$.
	In the first case, $\ytrut{(s_0,\bot)}{\ysi\ell}{\yS}{(q_i,\ybe\bot)}$ and, in the second case,  $\ytrut{(s_0,\bot)}{\ysi\ell}{\yS}{(q_i,\bot)}$, $i=1,2$.
	Thus, because $\yS$ is deterministic, Definition~\ref{def:vpts-determinism} forces
	gives $q_1=q_2$.
	We conclude that $A_\yS$ satisfies condition 2 at Definition~\ref{def:vpa-determinism}. 
	Finally suppose we have $(p,x,Z,q_1)\in\rho$ and $(p,\yeps,Z,q_2)\in\rho$ with $x\neq \yeps$.
	Then, Definition~\ref{def:plts-pda} gives $(p,\ytau,\yait,q_2)\in T$
	and $(p,x,Z,q_1)\in T$. 
	Since $\yS$ is deterministic, Proposition~\ref{prop:vpts-deterministic} says that it has no $\ytau$-transitions, and we reached a contradiction.
	Hence,  $A_\yS$ satisfies condition 3 at Definition~\ref{def:vpa-determinism}, and the proof is complete. %\yfim
	
	For the converse, let $\yA=\yvpa{S}{S_{in}}{L}{\yGa}{\rho}{S}$ and $\yS_\yA=\yvptsS$. 
	%That $otr(\yA)=otr(\yS)$ follows by similar inductions, as was done for the first assertion.
	Again, $otr(\yA)=otr(\yS_\yA)$ is immediate from Proposition~\ref{prop:lang-vpa-vlpts}.
	Now assume that $\yA$ is deterministic and has no $\yeps$-moves.
	Clearly, from the construction, we know that $\yS_\yA$ has no $\ytau$-moves. 
	Let $\ytrut{(s,\bot)}{\ysi}{\yS_\yA}{(s_1,\ybe_1\bot)}$ and 
	$\ytrut{(p,\bot)}{\ysi}{\yS_\yA}{(s_2,\ybe_2\bot)}$ with $p,s\in S_{in}$ and $\ysi\in L^\star$.
	Since $\yA$ is deterministic, we get $\vert S_{in}\vert\leq 1$, so that $p=s$.
	Using Definition~\ref{def:path}, and since $\yS_\yA$ has no $\ytau$-moves, we get 
	$\ytrtf{(s,\bot)}{\ysi}{\yS_\yA}{(s_1,\ybe_1\bot)}$ and 
	$\ytrtf{(s,\bot)}{\ysi}{\yS_\yA}{(s_2,\ybe_2\bot)}$.
	From Proposition~\ref{prop:lang-vpa-vlpts} we get 
	$\ypdatrtf{(s,\ysi,\bot)}{n}{\yA}{(s_1,\yeps,\ybe_1\bot)}$ and 
	$\ypdatrtf{(s,\ysi,\bot)}{n}{\yA}{(s_2,\yeps,\ybe_2\bot)}$, where $n=\vert \ysi\vert$.
	Now, since $\yA$ is deterministic, we can use Proposition~\ref{prop:vpa-determ} and conclude that $s_1=s_2$ and $\ybe_1=\ybe_2$.
	This shows that $\yS_\yA$ satisfies  Definition~\ref{def:vpts-determinism}, completing the proof.
\end{proof}
Lemma~\ref{prop:plts-pda} also says that $otr(\yS)$ is a visibly pushdown language, and that
for any given VPTS $\yS$, we can easily construct a VPA $\yA$ with $L(\yA)=otr(\yS)$.

\subsection{Input Output Pushdown Transition Systems}\label{subsec:ioplts}

The VPTS formalism can be used to model systems with a potentially infinite memory and with a capacity to interact asynchronously with an external environment.
In such situations, we may want to treat some action labels as symbols that the VPTS ``receives'' from the environment, and some other action labels as symbols that the VPTS ``sends back'' to the environment. 
The next VPTS variation differentiates between input action symbols and output action symbols.
\begin{defi}\label{def:iolts}
	An Input/Output Visibly Pushdown Transition System (IOVPTS) over an alphabet $L$ is a tuple $\yI=\yiovptsS$, where
	\begin{itemize}
		\item $L_I$ is a finite set of \emph{input actions}, or \emph{input labels}; 
		\item $L_U$ is a finite set of \emph{output actions}, or \emph{output labels}; 
		\item $L_I\cap L_U=\emptyset$, and $L=L_I\cup L_U $ is the set of \emph{actions} or \emph{labels}; and
		\item $\yvptsS$ is an  \emph{underlying VPTS} over $L$, which is associated to $\yI$.
	\end{itemize}
\end{defi}

We 
%indicate the underlying VPTS of an IOVPTS $\yI$ by $\yV_{\yI}$, and we 
denote the class of all IOVPTSs with input alphabet $L_I$ and output alphabet $L_U$ by $\yiovp{L_I}{L_U}$. 

\begin{rema}\label{rem:iovpts-vpts}
	In order to keep the number of definitions under control, we agree that in any reference to a notion based on IOVPTSs, and  that has not been explicitly defined at some point, we substitute the IOVPTS model by its underlying VPTS.
	As a case in point, if $\yS$ is any IOVPTS with $\yV$ as its underlying VPTS, then the VPA induced by $\yS$ is simply the VPA induced by $\yV$, according to Definition~\ref{def:plts-pda}. 
	Likewise for any formal assertion involving IOVPTS models.
	For example, using Proposition~\ref{prop:plts-pda} we can just say that $otr(\yS)=L(\yA_\yS)$ without any explicit mention to the underlying VPTS $\yV$.
\end{rema}

The semantics of an IOVPTS is just the set of its observable traces, that is, observable traces of its underlying VPTS.  
\begin{defi}\label{def:iolts-semantics}
	Let $\yI=\yiovptsS$ be an IOVPTS.
	The \emph{semantics} of $\yI$ is the set $otr(\yI)=otr(\yS_\yI)$,
	where $\yS_\yI$ is the underlying VPTS associated to $\yI$.
\end{defi}
Also, when referring an IOVPTS $\yI$, the notation $\underset{\yI}{\rightarrow}$ and $\underset{\yI}{\Rightarrow}$ are to be understood as $\underset{\yS}{\rightarrow}$ and $\underset{\yS}{\Rightarrow}$, respectively, where $\yS$ is the underlying VPTS associated to $\yI$. 

%\begin{exam}
%Figure~\ref{fig:iovpts} represents an IOVPTS. 
%\input{figs/iovpts.tex}
%The model represents the lightening of a bulb.
%It has the two input actions $s$ and $d$ representing a single or a double click on the lamp switch, respectively.
%It has three output labels, $dim$, $bri$, and \emph{off}, informing the environment that the illumination turned to dim, bright or off, respectively. 
%The initial state is $s_0$.
%Following the rightmost circuit, starting at $s_0$, when  the user --- that is, the environment --- hits a single click on the switch the system moves from state $s_0$ to state $sd$ on the input action $s?$. This is represented by the transition $\ytr{s_0}{s?}{sd}$. Then, following the transition $\ytr{sd}{dim!}{dm}$, the system reaches the state $dm$ and outputs the label $dim!$, informing the user that the illumination is now dimmed. At state $dm$ if the user double clicks at the switch, the system responds with $off!$ and moves back to state $s_0$. This corresponds to the transitions $\ytr{dm}{d?}{dd}$ and $\ytr{dd}{\text{\emph{off!}}}{s_0}$.
%But, if at state $dm$ the user clicks only once then the transitions $\ytr{dm}{s?}{ds}$ and $\ytr{ds}{bri!}{br}$ are traversed, and the system moves to state $br$ issuing the output $bri!$ to signal that  the lamp is now in the bright mode.\yfim
%\end{exam} 
\begin{exam}
	Figure~\ref{iovpts1} represents an IOVPTS that describes a machine that dispenses drinks.
	Here we have $L_I=\{b\}$ and $L_U=\{c,t\}$. From the context we can see that $L_c=\{b\}$, $L_r=\{c,t\}$ and $L_i=\yemp$.
	% !TeX spellcheck = en_US
\begin{figure}[tb]
\center

\begin{tikzpicture}[node distance=1cm, auto,scale=.6,inner sep=1pt]
  \node[ initial by arrow, initial text={}, punkt] (q0) {$s_0$};
  \node[ initial by arrow, initial text={}, punkt,  inner sep=1pt,right=2.5cm of q0] (q1) {$s_1$};  
\path (q0)    edge [ pil, left=50]
                	node[pil,above]{$c/\ypop{Z}$} (q1);
\path (q0)    edge [ pil, right=50]
                	node[pil,below]{$t/\ypop{Z}$} (q1);
                	
\path (q0)    edge [loop above] node   {$b/\ypush{Z}$} (q0);
\path (q1)    edge [loop above] node   {$c/\ypop{Z},t/\ypop{Z}$} (q1);
  
\path (q1)    edge [ pil, bend left=50]
                	node[pil,below]{$\ytau$} (q0);
%\path (cof)    edge [pil,bend right=30]   	node[anchor=north,right]{$\tau$} (s0);

\end{tikzpicture}
\caption{An IOVPTS with $L_I=\{b\}$ e $L_U=\{c,t\}$.}

\label{iovpts1}
\end{figure}
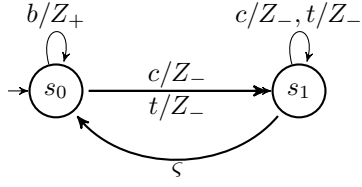
	The start state is $s_0$.
Symbol $b$ stands for button an user can press when asking for a drink, namely a cup of coffee or a cup of tea, with corresponding buttons represented by the labels $c$ and $t$, respectively.
	The user can hit the $b$ button while the machine stays at state $s_0$.
	Each time the $b$ button is activated, the machine pushes the symbol $Z$ on the stack, so that the stack is used to count how many times the $b$ button was hit by the user.
	
	At any instant, after the user has activated the $b$ button at least once, the machine moves to state $s_1$ and starts dispensing either coffee or tea, indicated by the $c$ and $t$ buttons.
	It decrements the stack each time a drink is dispensed, so that it will never deliver more drinks than the user asked for.
	
	A move back to state $s_0$, over the internal label $\ytau$, may interrupt the delivery of drinks, so that the user can, possibly, receive less drinks than originally asked for.
	In this case, when the next user will operate  the machine with  residual number of $Z$ symbols in the stack he could, eventually collect more drinks than asked for.
	But the machine will never dispense more drinks than the total number of solicitations. 
	An alternative could be to use one more state $s_2$ to interrupt the transition from $s_1$ to $s_0$ and install a self-loop at $s_2$ that empties the stack.  
	A more complex and complete version of a drink dispensing machine is illustrated in Subsection~\ref{subsec:drink}. \yfim
\end{exam}
%\navm{Re-escrito para explicar como funciona com a contagem de drinks.}\\
We register one more example which will be used later.
\begin{exam}
Figure~\ref{iovpts2} depicts another IOVPTS, where $L_I=\{a,b\}$, $L_U=\{x\}$, $L_c=\{a\}$, $L_r=\{b,x\}$ and $L_i=\emptyset$.
	Also, $S_{in}=\{s_0\}$ and $\yGa=\{A\}$. \yfim
	% !TeX spellcheck = en_US
\begin{figure}[tb]
\center

\begin{tikzpicture}[node distance=1cm, auto,scale=.6,inner sep=1pt]
  \node[ initial by arrow, initial text={}, punkt] (q0) {$s_0$};
  \node[punkt, inner sep=1pt,right=2.5cm of q0] (q1) {$s_1$};  
  \node[punkt, inner sep=1pt,below right=1.5cm of q0] (q2) {$s_2$};  

\path (q0)    edge [ pil, left=50]
                	node[pil,above]{$b/\ypop{A}$} (q1);

\path (q2)    edge [ pil, left=50]
                 	node[pil,right]{$b/\ypop{A}$} (q1);               
\path (q1)    edge [ pil, bend left=50]
                	node[pil]{$b/\ypop{A}$} (q2);
                	                	
\path (q0)    edge [loop above] node   {$a/\ypush{A}$} (q0);
\path (q1)    edge [loop above] node   {$a/\ypush{A}$} (q1);

\path (q0)    edge [ pil, bend right=50]
                	node[below left]{$x/\ypop{A}$} (q2);
\path (q2)    edge [ pil,  right=90]
                	node[pil,above right]{$a/\ypush{A}$} (q0);

%\path (q1)    edge [ pil, bend left=50]
%                	node[pil,below]{$\tau$} (q0);
%\path (cof)    edge [pil,bend right=30]   	node[anchor=north,right]{$\tau$} (s0);

\end{tikzpicture}
\caption{An  IOVPTS specification $\yS$ with $L_I=\{a,b\}$ and $L_U=\{x\}$.}

\label{iovpts2}
\end{figure}
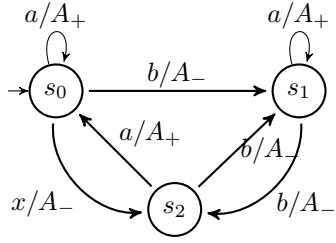
\end{exam}
\section{Conformance Checking and Visibly Pushdown Languages}\label{sec:cfl-suites}

In this section we define a more general conformance relation based on  Visibly Pushdown Languages~\cite{alurm-visibly-2004}, a proper subset of the more general class of context-free languages~\cite{hopcu-introduction-1979}, but a proper superset of the regular languages.
Next we study the notion of test suite completeness 
%with respect to VPLs 
and give a polynomial time complexity method to check conformance between an IUT and its specification, both based on VPTS models, and using the more general conformance relation over VPLs.

% !TeX spellcheck=en_US
\subsection{A General Conformance Relation for VPTS models }\label{subsec:conf}

The more general conformance relation is defined on subset of words specified by a tester. 
Informally, consider a language $D$, the set of ``desirable'' behaviors, and a language $F$, the set of ``forbidden'' behaviors, of a system.
If we have a specification VPTS $\yS$ and an implementation VPTS $\yI$ we want to say that $\yI$  \emph{conforms} to $\yS$ according to $(D,F)$ if no undesired behavior in $F$ that is observable in $\yI$  is specified in $\yS$, and all desired behaviors in $D$ that are observable in $\yI$ are specified in $\yS$.
This leads to the following definition.
\begin{defi}\label{def:conf}
Let $L$ be an alphabet, and let $D, F\ysse L^\star$.
Let $\yS$ and $\yI$ be VPTSs over $L$.
We say that \emph{$\yI$ $(D,F)$-visibly conforms to $\yS$}, written $\yI\yconf{D}{F} \yS$, if and only if
\begin{enumerate}
\item $\ysi\in otr(\yI)\cap F$, then $\ysi\not\in otr(\yS)$;
\item $\ysi\in otr(\yI)\cap D$, then $\ysi\in otr(\yS)$.
\end{enumerate}
\end{defi}
We note an equivalent way of expressing these conditions that may also be useful.
Recall that the complement of $otr(\yS)$ is $\ycomp{otr}(\yS)=L^\star-otr(\yS)$.
\begin{prop}\label{prop:equiv-conf}
Let $\yS$ and $\yI$ be VPTSs over $L$ and let $D, F\ysse L^\star$.
Then $\yI\yconf{D}{F} \yS$ if and only if
$otr(\yI)\cap \big[(D\cap\ycomp{otr}(\yS))\cup(F\cap otr(\yS))\big]=\yemp$.
\end{prop}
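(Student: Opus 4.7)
The plan is to unfold Definition~\ref{def:conf} into two set-theoretic emptiness conditions and then combine them using distributivity of intersection over union.

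First I would rewrite each of the two clauses of Definition~\ref{def:conf} as an emptiness statement. Clause (1) says that if $\ysi\in otr(\yI)\cap F$ then $\ysi\notin otr(\yS)$, which is logically equivalent to
\[
otr(\yI)\cap F\cap otr(\yS)=\yemp.
\]
Similarly, clause (2) says that if $\ysi\in otr(\yI)\cap D$ then $\ysi\in otr(\yS)$, i.e., no $\ysi\in otr(\yI)\cap D$ lies in $\ycomp{otr}(\yS)=L^\star-otr(\yS)$, which is equivalent to
\[
otr(\yI)\cap D\cap \ycomp{otr}(\yS)=\yemp.
\]
So $\yI\yconf{D}{F}\yS$ holds if and only if both of these intersections are empty.

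Next I would combine the two emptiness conditions into a single one. Since a union of two sets is empty iff both sets are empty, the conjunction of the two displayed equations is equivalent to
\[
\bigl[otr(\yI)\cap F\cap otr(\yS)\bigr]\cup \bigl[otr(\yI)\cap D\cap \ycomp{otr}(\yS)\bigr]=\yemp.
\]
Applying distributivity of intersection over union to pull out the common factor $otr(\yI)$, this is exactly
\[
otr(\yI)\cap\bigl[(F\cap otr(\yS))\cup(D\cap\ycomp{otr}(\yS))\bigr]=\yemp,
\]
which is the condition stated in the proposition.

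This is essentially a one-line proof once the logical equivalences are spelled out. There is no real obstacle; the only thing worth being careful about is getting the direction right in clause (2), namely that ``$\ysi\in otr(\yS)$'' becomes an intersection with the complement $\ycomp{otr}(\yS)$ being empty, rather than an intersection with $otr(\yS)$ itself. Both directions of the biconditional in the proposition follow immediately from the equivalences above, so I would state the argument as a chain of ``if and only if'' transformations rather than splitting into two separate implications.
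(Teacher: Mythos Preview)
Your proposal is correct and matches the paper's own proof essentially line for line: the paper also rewrites the two clauses of Definition~\ref{def:conf} as the emptiness conditions $otr(\yI)\cap F\cap otr(\yS)=\yemp$ and $otr(\yI)\cap D\cap \ycomp{otr}(\yS)=\yemp$, takes their union, and factors out $otr(\yI)$ by distributivity. Your write-up is simply a slightly more verbose version of the same chain of equivalences.
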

\begin{proof}
From Definition~\ref{def:conf} we readily get $\yI\yconf{D}{F} \yS$ if and only if  $otr(\yI)\cap F\cap otr(\yS)=\yemp$ and 
$otr(\yI)\cap D\cap \ycomp{otr}(\yS)=\yemp$.
And this holds if and only if 
$$\yemp=\big[otr(\yI)\cap F\cap otr(\yS)\big]\cup\big[otr(\yI)\cap D\cap \ycomp{otr}(\yS)\big]=otr(\yI)\cap \big[(D\cap\ycomp{otr}(\yS))\cup(F\cap otr(\yS))\big],$$
as desired.
\end{proof}

\begin{exam}
\label{exemplo3}
Let $\yS$ be a specification depicted in Figure~\ref{iovpts2}. 
Take the languages $D=\{a^nb^nx: n\geq 0\}$ and $F=\{a^nb^{n+1}:n\geq 0\}$. 
This says that any behavior consisting of a block of $a$s followed by an equal length block of $b$s and terminating by an $x$, is a desirable behavior. 
Any block of $a$s followed by a lengthier block of $b$s is undesirable.
We  want to check whether the implementation $\yI$ conforms to the specification $\yS$ with respect 
to the  sets of behaviors described by $D$ and $F$.
That is, we want to check whether $\yI\yconf{D}{F}\yS$.  
% !TeX spellcheck = en_US
\begin{figure}[tb]
\center

\begin{tikzpicture}[node distance=1cm, auto,scale=.6,inner sep=1pt]
  \node[ initial by arrow, initial text={}, punkt] (q0) {$q_0$};
  \node[punkt, inner sep=1pt,right=2.5cm of q0] (q1) {$q_1$};  
  \node[punkt, inner sep=1pt,below right=1.5cm of q0] (q2) {$q_2$};  

\path (q0)    edge [ pil, left=50]
                	node[pil,above]{$b/\ypop{A}$} (q1);

\path (q2)    edge [ pil, bend right=50]
                 	node[pil,right]{$b/\ypop{A},x/\ypop{\bot}$} (q1);               
\path (q1)    edge [ pil, right=50]
                	node[pil]{$b/\ypop{A}$} (q2);
                	                	
\path (q0)    edge [loop above] node   {$a/\ypush{A}$} (q0);
\path (q1)    edge [loop above] node   {$a/\ypush{A}$} (q1);

\path (q0)    edge [ pil, bend right=50]
                	node[pil,left]{$x/\ypop{A}$} (q2);
\path (q2)    edge [ pil,  right=50]
                	node[pil,right]{$a/\ypush{A}$} (q0);

%\path (q1)    edge [ pil, bend left=50]
%                	node[pil,below]{$\tau$} (q0);
%\path (cof)    edge [pil,bend right=30]   	node[anchor=north,right]{$\tau$} (s0);

\end{tikzpicture}
\caption{An implementation IOVPTS $\yI$ with $L_I=\{a,b\}$ and $L_U=\{x\}$.}

\label{impl}
\end{figure}
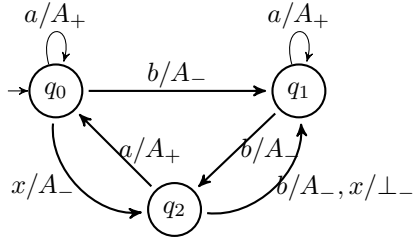

First, we obtain the  VPA $\ycomp{\yS}$ depicted in Figure~\ref{compiovpts2}. 
Since $\yS$ is  deterministic and all its states are final, we just add a new state $err$ to $\ycomp{\yS}$, and for any missing transitions in $\yS$ we add corresponding transitions ending at $err$ in  $\ycomp{\yS}$. 
It is not hard to see that the language accepted by  $\ycomp{\yS}$ is $\ycomp{otr}(\yS)$. 
\begin{figure}[tb]
\center

\begin{tikzpicture}[node distance=1cm, auto,scale=.6,inner sep=1pt]
  \node[ initial by arrow, initial text={}, punkt] (q0) {$\bar{s_0}$};
  \node[punkt, inner sep=1pt,right=2.5cm of q0] (q1) {$\bar{s_1}$};  
  \node[punkt, inner sep=1pt,below right=1.5cm of q0] (q2) {$\bar{s_2}$};  
  \node[punkt, accepting, inner sep=1pt,below=3cm of q1] (q3) {$err$};  
  
\path (q0)    edge [ pil, bend left=30]
                	node[pil,above]{$b/\ypop{A}$} (q1);

\path (q2)    edge [ pil, bend left=30]
                 	node[pil,right]{$b/\ypop{A}$} (q1);               
\path (q1)    edge [ pil, bend left=30]
                	node[pil,right]{$b/\ypop{A}$} (q2);
                	                	
\path (q0)    edge [loop above] node   {$a/\ypush{A}$} (q0);
\path (q1)    edge [loop above] node   {$a/\ypush{A}$} (q1);

\path (q0)    edge [ pil, bend right=40]
                	node[pil,above right]{$x/\ypop{A}$} (q2);
\path (q2)    edge [ pil,  bend right=40]
                	node[pil,above right]{$a/\ypush{A}$} (q0);

\path (q2)    edge [ pil, right=30]
                 	node[pil,pos=.35,above right]{$x/\ypop{A}$} node[pil,below left]{$b/\ypop{\bot}$} (q3);  
\path (q2)    edge [ pil, right=30]
					node[pil,pos=.65,above right]{$x/\ypop{\bot}$} node[pil,below left]{$b/\ypop{\bot}$} (q3);

\path (q1)    edge [ pil, bend left=50]
                 	node[pil,pos=.3,right]{$x/\ypop{A}$} (q3);         
\path (q1)    edge [ pil, pos=.5,bend left=50]
					node[pil,right]{$x/\ypop{\bot}$} (q3);         
\path (q1)    edge [ pil, pos=.7,bend left=50]
					node[pil,right]{$b/\ypop{\bot}$} (q3);

\path (q0)    edge [ pil, bend right=50]
                 	node[pil,pos=.45, below  left]{$b/\ypop{\bot}$} (q3);         
\path (q0)    edge [ pil, bend right=50]
					node[pil,pos=.55,below left]{$x/\ypop{\bot}$} (q3);

  \path (q3)    edge [ loop below]
                   	node[pil]{$a/\ypush{A},b/\ypop{A},x/\ypop{A},b/\ypop{\bot},x/\ypop{\bot}$}  (q3);                        	
                	  
%\path (cof)    edge [pil,bend right=30]   	node[anchor=north,right]{$\tau$} (s0);

\end{tikzpicture}
\caption{The VPA accepting $\ycomp{otr}(\yS)$ for the IOVPTS $\yS$ of Figure~\ref{iovpts2}.}

\label{compiovpts2}
\end{figure}
Again, it is easy to see from Figure~\ref{iovpts2} that  $a^nb^{n+1}\not\in otr(\yS)$, for all $n\geq 0$.
So, $F\cap otr(\yS) = \emptyset$. 
Also we see that the VPA $\yD$, depicted at Figure~\ref{vpaD}, accepts the language $D$ and that $D \subseteq \ycomp{otr}(\yS)$. 
Then the VPA $\yD$ accepts the language $T=D\cap \ycomp{otr}(\yS)=(D\cap\ycomp{otr}(\yS))\cup(F\cap otr(\yS))$.
\begin{figure}[tb]
\center

\begin{tikzpicture}[node distance=1cm, auto,scale=.6,inner sep=1pt]
  \node[ initial by arrow, initial text={}, punkt] (q0) {$d_0$};
  \node[punkt, inner sep=1pt,right=2cm of q0] (q1) {$d_1$};  
    \node[punkt, accepting, inner sep=1pt,right=2cm of q1] (q2) {$d_2$};  
    
\path (q0)    edge [ pil, left=50]
                	node[pil,above]{$b/\ypop{A}$} (q1);
\path (q1)    edge [ pil, left=50]
                	node[pil,above]{$x/\ypop{\bot}$} (q2);
                	               	
\path (q0)    edge [loop above] node   {$a/\ypush{A}$} (q0);
\path (q1)    edge [loop above] node   {$b/\ypop{A}$} (q1);

%\path (cof)    edge [pil,bend right=30]   	node[anchor=north,right]{$\tau$} (s0);

\end{tikzpicture}
\caption{The VPA $\yD$ accepting $D=a^nb^nx$.}

\label{vpaD}
\end{figure}

Now  let  $\yI$ be the implementation  depicted in Figure~\ref{impl}. 
A simple inspection also shows that $aabbx$ is accepted by $\yI$, and we also have $aabbx\in D$.
Hence,  $otr(\yI) \cap D\cap\ycomp{otr}(\yS) =ort(\yI)\cap T\neq \yemp$, and Proposition~\ref{prop:equiv-conf}  implies that $\yI\yconf{D}{F} \yS$ does not hold.

On the other hand, if we assume an implementation $\yI$  that is isomorphic to $\yS$, $\yI$ would not have the transition $\ytr{q_2}{x/\bot}{q_1}$ and then $aabbx$ would not be an observable behavior of $\yI$. 
Actually, in this case,  $otr(\yI) \cap D\cap\ycomp{otr}(\yS) = \yemp$. 
So that now 
$otr(\yI)\cap \big[(D\cap\ycomp{otr}(\yS))\cup(F\cap otr(\yS))\big]=\yemp$, 
and therefore $\yI\yconf{D}{F} \yS$, as expected.  
\yfim
\end{exam} 

Depending on languages $D$ and $F$, we can test  conformance involving several distinct classes of behaviors:
\begin{enumerate}
\item  All observable behaviors of $\yI$ are of interest, and we are not concerned with any undesirable behaviors of $\yI$.
Then, let $D\subseteq L^\star$ and $F=\yemp$.
We get $\yI\yconf{D}{F} \yS$ if and only if $otr(\yI)\ysse otr(\yS)$.

\item 
Let $\yI=\yvptsS$ and let $C\ysse S$ be a subset of the locations of $\yI$.
Behaviors of interest are all observable traces $\ysi$ of $\yI$ that take the initial configuration $(s_0,\bot)$ of $\yI$ to a configuration $(q,\yal)$
and where $q\in C$.
Also, let $E$ be another set of locations of $\yI$ with $E\cap C=\yemp$, and the undesired behaviors of $\yI$ are observable traces  $\ysi$ that lead to a configuration whose location is in $E$.
Then, $\yI\yconf{C}{E} \yS$ if and only if undesired observable traces of $\yI$ are not observable in $\yS$ and all desirable observable traces of $\yI$ are also observable in $\yS$.

\item Let $\yS=\yvptsS$ be a specification and let $\yI=\yvptsQ$ be an implementation. Let $H\ysse L$ be a subset of $L$.
The desirable behaviors of $\yI$ are those observable traces that end in a label in  $H$, and we are not interested in undesirable traces of $\yI$.
In this case, choose $F=\yemp$ and $D=L^\star H$.
\end{enumerate}
Note that $D$  and $F$ are VPLs in all cases listed above.

%%%% vconf com linguagens finitas
%
%If an implementation $\yI$ visibly conforms to a specification $\yS$ according to a pair of languages $(D,F)$, then they can be replaced by a pair of finite languages.
%\begin{coro}\label{coro:finite}
%Let $\yS$ and $\yI$ be a specification and an implementation VPTS over an alphabet $L$, respectively, and
%let $D$ and $F$ be languages over  $L$.
%Then, there are finite languages $D'\ysse D$ and $F'\ysse F$ such that $\yI\yconf{D}{F} \yS$ if and only if $\yI\yconf{D'}{F'} \yS$.
%\end{coro}
%\begin{proof}
%By Proposition~\ref{prop:equiv-conf}, $\yI\yconf{D}{F} \yS$ if and only if  
%$otr(\yI)\cap F\ysse \ycomp{otr}(\yS)$ and $otr(\yI)\cap D\ysse otr(\yS)$.
%Lemma~\ref{lemm:finite} gives a
%finite language $D'\ysse D$  such that $otr(\yI)\cap D'\ysse otr(\yS)$ if and only if $otr(\yI)\cap D\ysse otr(\yS)$.
%Likewise, we can get a finite language $F'\ysse F$  such that $otr(\yI)\cap F'\ysse \ycomp{otr}(\yS)$ if and only if $otr(\yI)\cap F\ysse \ycomp{otr}(\yS)$. 
%By Proposition~\ref{prop:equiv-conf} again, this holds if and only if 
%$\yI\yconf{D'}{F'} \yS$, as desired.
%\end{proof}

% !TeX spellcheck=en_US
\subsection{Test Suite Completeness over VPLs}\label{sec:test-suites}

We first state the general definition of a test suite.
\begin{defi}\label{def:testsuite}
Let $L$ be a set of symbols.
A test suite $T$ over $L$ is a language over $L$, \emph{i.e.} $T\ysse L^\star$. 
Each $\ysi\in T$ is called a \emph{test case}.
\end{defi}
A test suite $T$ should be engineered  to detect faulty observable behavior of any given IUT, when compared to what has been determined by a specification.
In this case, $T$ can be seen as specifying a fault model, in the sense that test cases in $T$ represent faulty observable behaviors. 
In particular, if $T$ is a VPL, then it can be specified by a VPA $\yA$.
Alternatively, we could specify $T$ by a finite set of VPAs, so that
the union of all the undesirable behaviors specified by these VPAs comprise the fault model.

%\alb{O termo adheres foi usado para LTS. Nao fica ruim deixar o mesmo? Poderia ser algo como ``\emph{visibly adheres}".}\\
%\avm{Acho que pode ficar esse mesmo sem o visibly.
%Isso só aparece naquele artigo de LTS, não é coisa estabelecida como {\bf ioco}.}
%\\\alb{Ok, podemos deixar, mas o fato de nao ser estabelecido nao me parece o melhor argumento, pois esse "adheres", claramente nao pode ser o mesmo "adheres" no contexto de LTS. }\\
Next we say that an implementation $\yI$ satisfies, or adheres, to a test suite $T$ when 
no observable behavior of $\yI$ is a harmful behavior present in $T$.
\begin{defi}\label{def:adherence}
Let $T$ be a test suite over an alphabet $L$.
A VPTS $\yQ$ over $L$ \emph{adheres} to $T$  if  $\ysi\not\in T$ for all $\ysi\in otr(\yQ)$.
Further, an IOVPTS $\yI$ over $L$ adheres to $T$ if its underlying VPTS adheres to $T$.
\end{defi}
In general, we want test suites to be sound, in the sense that adherence always implies visual conformance.
Moreover, the converse is also desirable, that is, when we have visual conformance, then we also have adherence.
\begin{defi}\label{def:complete}
Let $L$ be an alphabet and let $T$ be a test suite over $L$.
Let $\yS$ be a VPTS over $L$, and let $D, F\ysse L ^\star$ be languages over $L$.
% \avm{Precisa $L^+$, ou pode ser $L^\star$?}
We say that:
\begin{enumerate}
\item  $T$ is \emph{sound} for $\yS$ and $(D,F)$  if 
$\yI$ adheres to $T$ implies $\yI \yconf{D}{F} \yS$, for all VPTS $\yI$ over $L$.
\item $T$ is \emph{exhaustive} for $\yS$ and $(D,F)$ if $\yI \yconf{D}{F} \yS$ implies that $\yI$ adheres to $T$, for all VPTS $\yI$ over $L$.
\item $T$ is \emph{complete} for $\yS$ and $(D,F)$ if it is both sound and exhaustive 
for $\yS$ and $(D,F)$.
\end{enumerate}
\end{defi}
Note that, for convenience, adherence is defined in the negative, that is, $\yI$ adheres to $T$ if we have $otr(\yI)\cap T=\yemp$. 
Of course we could reverse the notions of soundness and exhaustiveness by substituting $L^\star-T$ for $T$ throughout.

It comes as no surprise that the test suite we can extract from Proposition~\ref{prop:equiv-conf} is always complete.
But, furthermore, we will also show that it is unique.

\begin{lemm}\label{lemm:always-complete}
Let $\yS$ be a specification VPTS over $L$, and let $D$, $F\ysse L^\star$ be a pair of languages over $L$.
Then, the set 
$\big[(D\cap\ycomp{otr}(\yS))\cup(F\cap otr(\yS))\big]$ is the only complete test suite for $\yS$ and $(D,F)$. 
\end{lemm}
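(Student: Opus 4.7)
The plan is to split the proof into two parts: the completeness of $T^{\star}:=(D\cap\ycomp{otr}(\yS))\cup(F\cap otr(\yS))$, and its uniqueness. Completeness falls out directly from Proposition~\ref{prop:equiv-conf}. That proposition asserts that $\yI\yconf{D}{F}\yS$ if and only if $otr(\yI)\cap T^\star=\yemp$, and by Definition~\ref{def:adherence} the right-hand side is precisely the statement that $\yI$ adheres to $T^\star$. Hence adherence to $T^\star$ and $(D,F)$-visible conformance to $\yS$ coincide on every VPTS over $L$, which delivers both soundness and exhaustiveness in one stroke, as required by Definition~\ref{def:complete}.

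For uniqueness, let $T\ysse L^{\star}$ be any complete test suite for $\yS$ and $(D,F)$. The completeness of both $T$ and $T^{\star}$, together with the characterization of conformance used above, yields, for every VPTS $\yI$,
$$otr(\yI)\cap T=\yemp\,\Longleftrightarrow\,\yI\yconf{D}{F}\yS\,\Longleftrightarrow\,otr(\yI)\cap T^{\star}=\yemp.$$
The strategy is then to instantiate this biconditional on a family of probe VPTSs expressive enough to pin down $T$ and $T^{\star}$ pointwise. For each $\ysi\in L^{\star}$ I would build a simple chain VPTS $\yI_\ysi$ of length $|\ysi|$ whose observable semantics is exactly the prefix set $\mathrm{pref}(\ysi)$, choosing the stack action at each transition to match the partition $L=L_c\cup L_r\cup L_i$ and the current simulated stack contents (pops against $\bot$ being always available by Definition~\ref{def:simplemove}). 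The biconditional, applied to each $\yI_\ysi$, forces $T\cap\mathrm{pref}(\ysi)=\yemp$ iff $T^{\star}\cap\mathrm{pref}(\ysi)=\yemp$, and from this the equality $T=T^{\star}$ is to be extracted.

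The main obstacle is precisely this last step. Because the observable semantics of every VPTS is automatically prefix-closed, no VPTS can isolate a single word as its only trace, and so $T$ and $T^{\star}$ cannot be discriminated word-by-word by a one-trace IUT. The cleanest way I would finish is by picking a shortest word $\ysi_0$ in the symmetric difference $T\mathbin{\triangle}T^{\star}$, applying the biconditional to the probe $\yI_{\ysi_0}$, and exploiting the minimality of $|\ysi_0|$ against any proper-prefix witness returned by the biconditional to obtain a contradiction, concluding $T=T^{\star}$.
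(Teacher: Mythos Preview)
Your completeness argument is exactly the paper's: Proposition~\ref{prop:equiv-conf} together with Definition~\ref{def:adherence} makes adherence to $T^{\star}$ and $(D,F)$-visual conformance coincide on every VPTS, so soundness and exhaustiveness both follow at once.

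For uniqueness you correctly spot an obstacle the paper glosses over: every $otr(\yI)$ is prefix-closed, so the biconditional only yields $\mathrm{pref}(\ysi)\cap T=\yemp\Leftrightarrow\mathrm{pref}(\ysi)\cap T^{\star}=\yemp$ for each $\ysi$. Your shortest-witness patch, however, does not close this gap. If $\ysi_0\in T\setminus T^{\star}$ is shortest in the symmetric difference, the biconditional produces some prefix $\tau$ of $\ysi_0$ with $\tau\in T^{\star}$; minimality of $|\ysi_0|$ then only forces $\tau\notin T\mathbin{\triangle}T^{\star}$, i.e.\ $\tau\in T\cap T^{\star}$, not $\tau=\ysi_0$, and no contradiction appears. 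In fact the uniqueness claim is false as stated. Take $L=L_i=\{a\}$, let $\yS$ have one initial state and no transitions (so $otr(\yS)=\{\yeps\}$), and put $D=\{a\}$, $F=\yemp$. Then $T^{\star}=\{a\}$, yet $T=\{a,aa\}$ is also complete: soundness is immediate since $T^{\star}\ysse T$, and exhaustiveness holds because prefix-closure of $otr(\yI)$ makes $a\notin otr(\yI)$ force $aa\notin otr(\yI)$. The paper's own argument has the very gap you flagged: it builds a chain VPTS $\yQ$ with $\ysi\in otr(\yQ)$ and declares a contradiction from $\ysi\in otr(\yQ)\cap T^{\star}$ together with $\ysi\notin otr(\yQ)\cap T$, but this does not violate the biconditional $otr(\yQ)\cap T=\yemp\Leftrightarrow otr(\yQ)\cap T^{\star}=\yemp$, since a proper prefix of $\ysi$ in $otr(\yQ)$ may already lie in $T$.
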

\begin{proof}
Write $T=\big[(D\cap\ycomp{otr}(\yS))\cup(F\cap otr(\yS))\big]$,
%Let $W\ysse L^\star$ be any test suite over $L$, 
and let $\yI$ be any implementation VPTS over $L$.
From Definition~\ref{def:adherence}, we know that $\yI$ adheres to $T$ if and only if $otr(\yI)\cap T=\yemp$.
From Proposition~\ref{prop:equiv-conf} we get that $\yI\yconf{D}{F} \yS$ if and only if $otr(\yI)\cap T=\yemp$.
%Hence, if $W=T$ 
Hence, $\yI$ adheres to $T$ if and only if $\yI\yconf{D}{F} \yS$.
Since $\yI$ was arbitrary, from Definition~\ref{def:complete} we conclude that $T$ is a complete test suite for $\yS$ and $(D,F)$.

Now, take another test suite $Z\ysse L^\star$, with $Z\neq T$.
For the sake of contradiction, assume that $Z$ is also complete for $\yS$ and $(D,F)$.
Fix any implementation $\yI$.
Since $Z$ is complete, Definition~\ref{def:complete} says that 
$\yI$ adheres to $Z$ if and only if $\yI\yconf{D}{F} \yS$.
Using Proposition~\ref{prop:equiv-conf} we know that $\yI\yconf{D}{F} \yS$ if and only if $otr(\yI)\cap T=\yemp$.
Hence, $\yI$ adheres to $Z$ if and only if $otr(\yI)\cap T=\yemp$.
From Definition~\ref{def:adherence} we know that $\yI$ adheres to $Z$  if and only if  $otr(\yI)\cap Z=\yemp$.
We conclude that $otr(\yI)\cap Z=\yemp$ if and only if $otr(\yI)\cap T=\yemp$.
But $Z\neq T$ gives some $\ysi\in L^\star$ such that $\ysi\in T$ and $\ysi\not\in Z$.
The case $\ysi\not\in T$ and $\ysi\in Z$ is entirely analogous. 
We now have $\ysi\in T\cap \ycomp{Z}$.
If we can construct an implementation VPTS $\yQ$ over $L$  with $\ysi\in otr(\yQ)$, then we have reached a contradiction 
because we would have $\ysi\in otr(\yQ)\cap T$ and $\ysi\not\in otr(\yQ)\cap Z$.
But that is simple. 
Let $\ysi=x_1x_2\ldots x_k$, with $k\geq 0$ and $x_i\in L$ ($1\leq i\leq k$). 
Define $L_c=L_r=\yemp$ and $L_i=L$, and let $\yQ=\yvpts{Q}{\{q_0\}}{L}{\yemp}{R}$, where $Q=\{q_i\yst 0\leq i\leq k\}$, and $R=\{(q_{i-1},x_i,\yait,q_i)\yst 1\leq i\leq k\}$.
Clearly, $\ysi\in otr(\yQ)$, concluding the proof. 
\end{proof}
Lemma~\ref{lemm:always-complete} says that the test suite $T=\big[(D\cap\ycomp{otr}(\yS))\cup(F\cap otr(\yS))\big]$ is complete for the specification $\yS$ and the pair of languages $(D,F)$.
So, given an implementation $\yI$, checking if it $(D,F)$-visibly conforms to $\yS$ is equivalent  to checking if $\yI$ adheres to $T$ and, by Definition~\ref{def:adherence},  the latter is equivalent to checking that we have $otr(\yI)\cap T=\yemp$.

We also note that, in Lemma~\ref{lemm:always-complete}, in order to construct $\yQ$ with $\ysi\in otr(\yQ)$ it was crucial that we had no restrictions on the size of $\yQ$, since we have no control over the size of the witness $\ysi$.

%%In the next subsection we look more closely into the size of complete test suites.
%\alb{Agora esse exemplo com ioco-like ficou deslocado. Depois ver se retira ou se realoca apos definir ioco-like.}\\
%\avm{Se quiser realocar, é na sec de ioco-like. Comentado aqui.}\\
%\begin{exam}\avm{\ycm} \alb{\ycm}
%Let the IOVPTS $\yS$ of Figure~\ref{iovpts2} be the specification model, with $L_I=\{a,b\}$, and $L_U=\{x\}$. 
%It is not hard to see that the VPA $\yTS$, depicted in Figure~\ref{capSD}, is such that $L(\yTS)=D\cap \ycomp{otr}(\yS)$, where $D=otr(\yS)L_U$.
%\input{figs/capSD}
%According to  Lemma~\ref{lemm:always-complete} and Corollary~\ref{coro:ioco-charac}, the language accepted by  $\yTS$, $L(\yTS)$, is a complete test suite for $\yS$ and $(D,\yemp)$. 
%%For example, take the IOVPTS $\yI$ of Figure~\ref{impl} as  an IUT.  
%%We see that $\ysi=aabbx \in L(\yTS)$.
%%Since $\ysi \in otr(\yI)$ we conclude that $\yI \yiocolike \yS$ does not hold.  \yfim
%\end{exam}

% !TeX spellcheck=en_US
\subsection{Checking Visual Conformance for VPTS models}\label{sec:suites-complexity}
%\subsection{On the Complexity of Test Suites}\label{sec:suites-complexity}

When testing conformance one important issue is the size of test suites, relatively to the size of the given specification.
Let $\yS=\yvptsS$  be a VPTS.
A reasonable measure of the size of $\yS$ would be the number of symbols required to write down a complete syntactic description of $\yS$.
Assume that $\yS$ has $m=\vert T\vert$ transitions, $n=\vert S\vert$ states, $\ell=\vert L\vert$ action symbols, and $p=\vert\yGa\vert$ stack symbols. 
Since any transition can be written using $\yoh{\ln (n\ell p)}$ symbols, 
the size of $\yS$ is  $\yoh{m\ln (n\ell p)}$.
From Remark~\ref{rema:lte-finite}, we see that $n$ is $\yoh{m}$ and, clearly, so are $\ell$ and $p$.
Thus, the size of $\yS$ is bounded by $\yoh{m\ln m}$.
%Clearly, $\yoh{n^2\ell p}$ is an upper bound on $t$, so that the size of $\yS$ is bounded by $\yoh{n^2\ell p\ln (n\ell p)}$.
If we fix the stack and action alphabets, then the size of the VPTS will be bounded by 
$\yoh{m}$.
In what follows, and with almost no prejudice, we will ignore the small logarithmic factor.%
\footnote{It is also customary to write $\yoh{m\ln m}$ as $\widetilde{\mathcal{O}}(m)$.
In the sequel, we can always replace $\yoh{\cdot}$ by $\widetilde{\mathcal{O}}(\cdot)$.} 

Given visibly pushdown languages $D$ and $F$ over $L$, and given a specification $\yS$ over $L$, Lemma~\ref{lemm:always-complete} 
 says that the fault model $T$ is complete for $\yS$ and $(D,F)$, where
$T=\big[(D\cap\ycomp{otr}(\yS))\cup(F\cap otr(\yS))\big]$.
Assume that $L(\yA_D)=D$ and $L(\yA_F)=F$ where $\yA_D$ and $\yA_F$ are deterministic VPAs with  $n_D$ and $n_F$ states, respectively. 
Also, assume that $\yS$ is deterministic with $n_S$ states.
Proposition~\ref{prop:contracted-vpts} says that we may as well take $\yS$ as a contracted and deterministic VPTS.
Now, Proposition~\ref{prop:plts-pda} 
gives a deterministic VPA $\yA_1$  with $n_S$ states and such that $L(\yA_1)=otr(\yS)$. 
Using Proposition~\ref{prop:compl-vpa},
we can construct a deterministic VPA $\yB_1$ with $n_S +1$ states and such that $L(\yB_1)=\ycomp{L(\yA_1)}=\ycomp{otr}(\yS)$.
Using Proposition~\ref{prop:determ-complement} we can obtain a deterministic VPA $\yA_2$ with at most $n_S n_F$ states 
and such that $L(\yA_2)=L(\yA_F)\cap L(\yA_1)=F\cap otr(\yS)$, and also a 
deterministic VPA $\yB_2$ with $(n_S +1) n_D$ states 
such that $L(\yB_2)=L(\yA_D)\cap L(\yB_1)=D\cap \ycomp{otr}(\yS)$.
Proposition~\ref{prop:cup-vpa}  gives a deterministic VPA $\yT$ with $(n_Sn_F+1)(n_Sn_D+n_D+1)$  states 
and such that $L(\yT)=L(\yA_2)\cup L(\yB_2)=T$.
Proposition~\ref{prop:cup-vpa} also says that $\yT$ is non-blocking and has no $\yeps$-moves. 
Lemma~\ref{lemm:always-complete} says that $L(\yT)$ is a complete test suite for $\yS$ and $(D,F)$.
\begin{prop}\label{prop:conf-poli}
Let $L$ be an alphabet with $|L|=n_L$.
Let $\yS$ and $\yI$ be deterministic IOVPTSs over $L$ with $n_S$ and $n_I$ states, respectively.
Let $\yA_D$ and $\yA_F$ be  deterministic VPAs over $L$ with $n_D$ and $n_F$ states, respectively,
with $L(\yA_D)=D$, $L(\yA_F)=F$.
Then, we can construct a deterministic, non-blocking VPA $\yT$ with at most $(n_Sn_F+1)(n_Sn_D+n_D+1)$ states and no $\yeps$-moves, 
and such that $L(\yT)$ is a complete test suite for $\yS$ and $(D,F)$.
Moreover, there is an algorithm with polynomial time complexity $\yoh{(n_In_S^2n_Fn_D)^3}$ and that checks if $\yI \yconf{D}{F} \yS$. 
\end{prop}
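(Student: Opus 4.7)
The plan is to follow the construction already sketched in the paragraph preceding the proposition, verifying both the state count of the intermediate automata and the overall time complexity of the conformance check.

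First I would establish the state-count bound for $\yT$ by composing the closure constructions in the indicated order. Starting from the deterministic IOVPTS $\yS$, I would apply Proposition~\ref{prop:contracted-vpts} to replace $\yS$ by a contracted and deterministic VPTS with at most $n_S$ states, and then invoke Proposition~\ref{prop:plts-pda} to obtain a deterministic VPA $\yA_1$ with $n_S$ states and $L(\yA_1)=otr(\yS)$. Proposition~\ref{prop:compl-vpa} then produces a deterministic, non-blocking VPA $\yB_1$ with $n_S+1$ states and no $\yeps$-moves accepting $\ycomp{otr}(\yS)$. Two applications of Proposition~\ref{prop:cap-vpa} yield deterministic VPAs $\yA_2$ and $\yB_2$ with at most $n_Sn_F$ and $(n_S+1)n_D$ states, accepting $F\cap otr(\yS)$ and $D\cap\ycomp{otr}(\yS)$ respectively. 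Finally, applying Proposition~\ref{prop:cup-vpa} to $\yA_2$ and $\yB_2$ gives a deterministic, non-blocking VPA $\yT$ with no $\yeps$-moves, with at most $(n_Sn_F+1)((n_S+1)n_D+1)=(n_Sn_F+1)(n_Sn_D+n_D+1)$ states, and accepting
$$L(\yT)=(F\cap otr(\yS))\cup(D\cap\ycomp{otr}(\yS)).$$
By Lemma~\ref{lemm:always-complete}, this is the (unique) complete test suite for $\yS$ and $(D,F)$.

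For the algorithm, I would exploit Proposition~\ref{prop:equiv-conf}: $\yI\yconf{D}{F}\yS$ holds if and only if $otr(\yI)\cap L(\yT)=\yemp$. I would first use Proposition~\ref{prop:plts-pda} applied to $\yI$ to obtain a VPA $\yA_\yI$ with $n_I$ states and $L(\yA_\yI)=otr(\yI)$, and then Proposition~\ref{prop:cap-vpa} to build the product VPA $\yP=\yA_\yI\times\yT$ accepting $otr(\yI)\cap L(\yT)$. The number of states of $\yP$ is at most $n_I(n_Sn_F+1)(n_Sn_D+n_D+1)$, i.e. $\yoh{n_In_S^2n_Fn_D}$. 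The conformance check reduces to testing emptiness of $\yP$, which is a VPA (hence a pushdown automaton) and whose emptiness can be decided in cubic time in the size of the description; this yields the claimed $\yoh{(n_In_S^2n_Fn_D)^3}$ bound.

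The main obstacle I anticipate is twofold: first, carefully checking that the state counts multiply exactly as advertised when passing through the complementation, intersection and union closures (e.g.\ the $+1$ arising from Proposition~\ref{prop:compl-vpa} propagates into the $n_Sn_D+n_D+1$ factor), and second, justifying the cubic emptiness test for the product VPA. For the latter I would rely on the standard pushdown emptiness algorithm --- compute, by a fixed-point iteration, the set of pairs $(p,q)$ of states such that some configuration at $p$ can reach some configuration at $q$ while emptying the stack increment, and then propagate reachability from initial to final states. Since the transition relation of a VPA has size polynomial in the number of states and stack symbols, and the fixed-point stabilizes after $\yoh{N^3}$ updates where $N$ is the number of states of $\yP$, the overall cost stays within $\yoh{(n_In_S^2n_Fn_D)^3}$. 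With both parts in place, the complete statement follows.
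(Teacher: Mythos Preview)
Your proposal is correct and follows essentially the same approach as the paper: the paper's proof simply points to the preceding paragraph for the construction of $\yT$ (which you have spelled out accurately, including the $(n_Sn_F+1)(n_Sn_D+n_D+1)$ state count), then takes the product with a VPA for $otr(\yI)$ and reduces to VPA emptiness. The only cosmetic difference is that the paper cites the literature for the cubic-time emptiness test for VPAs rather than sketching the fixed-point computation, and it explicitly invokes Proposition~\ref{prop:contracted-vpts} on $\yI$ as well (to get a deterministic product VPA), though determinism of the product is not actually needed for the emptiness check.
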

\begin{proof}
The preceding discussion gives a  deterministic and non-blocking VPA $\yT$ with at most $n_T=(n_Sn_F+1)(n_Sn_D+n_D+1)$ states and no $\yeps$-moves, and  
such that $L(\yT)=T=\big[(D\cap\ycomp{otr}(\yS))\cup(F\cap otr(\yS))\big].$
Since $\yI$ is deterministic, using Propositions~\ref{prop:contracted-vpts} and~\ref{prop:plts-pda}, we can  get a  deterministic VPA $\yA$ with at most $n_I$ states, 
and  such that $otr(\yI)=L(\yA)$. 
From Proposition~\ref{prop:determ-complement}
we can construct a deterministic VPA $\yB$ with at most $n_In_T$ states, 
and such that $L(\yB)=L(\yA)\cap L(\yT)=otr(\yI)\cap T$.

The emptiness problem for a VPA is decidable in time $\yoh{n^3}$, where $n$ is the number of states in the VPA~\cite{alurm-visibly-2004,brevccr-multi-1996,lang-p-2011,caromp-2-2007}.
Hence, we can check whether  $\yI \yconf{D}{F} \yS$ in asymptotic time $\yoh{(n_In_S^2n_Fn_D)^3)}$. 
\end{proof}
Note that, if the specification $\yS$ and the model languages $D$ and $F$ are fixed, so that $n_S$, $n_D$ and $n_F$ are constants, the algorithm runs in polynomial time $\yoh{n^3}$, where $n$ is the number of states in the implementation being tested for visual conformance.

\section{An \emph{ioco-like} Conformance Checking for IOVPTS Models}\label{sec:tretma-suites}

In this section we investigate whether IUTs, also described as IOVPTS models, conform to a given specification IOVPTS model.
Here we use a new notion of an {\bf ioco-like} conformance relation for IOVPTSs. 
That notion captures the same idea as the standard notion of conformance as studied by Tretmans~\cite{tret-model-2008}, but using Labeled Transition Systems (LTSs) which do not have access to a pushdown stack memory.  
The idea is that, given a specification $\yS$ and an IUT $\yI$, we say whether $\yI$ {\bf ioco-like} conforms to $\yS$ when, for any valid observable behavior $\ysi$ of $\yS$, any output symbol that $\yI$ can emit after running over $\ysi$ is, necessarily, among the output symbols that $\yS$  can also emit after it runs over the same $\ysi$. 

% !TeX spellcheck=en_US
\subsection{An ioco-like Conformance  Relation  for IOVPTS models}\label{subsec:plts-ioco} 

Here we define an ioco-like conformance relation~\cite{tret-model-2008}  for IOVPTS models.
Let $\yS$ be a specification IOVPTS and  let $\yI$ be an implementation IOVPTS.
The ioco-like relation essentially requires that any observable trace $\ysi$ of $\yI$ is also an observable trace of $\yS$ and, further, if $\ysi$ leads $\yI$ to a configuration from which $\yI$ can emit the output label $\ell$, then $\yS$ must also end up in a configuration from which the label $\ell$ can also be output. 
Next we formalize these ideas.
\begin{defi}\label{def:out-after}
Let $\yS=\yiovptsS$ and $\yI=\yiovptsQ$ be  IOVPTSs, with $L=L_I\cup L_U$.
\begin{enumerate}
\item Define the function $\yafter\!\!\!: \yltsconf{\yS}\times  L^\star\rightarrow \ypow{\yltsconf{\yS}}$ by letting $$(s,\yal) \yafter \ysi = \big\{(q,\ybe) \yst \ytrt{(s,\yal)}{\ysi}{(q,\ybe)}\big\},\quad \text{for all $(s,\yal)\in \yltsconf{\yS}$, $\ysi\in L^\star$}.$$
\item We say that $\yI \yiocolike \yS$ if for all $\ysi\in otr(\yS)$,  $q_0\in Q_{in}$, $\ell\in\yout((q_0,\bot) \yafter \ysi)$ there is some $s_0\in S_{in}$ such that  
$\ell\in\yout((s_0,\bot) \yafter \ysi)$.
\end{enumerate}   
\end{defi}

Now we show that the $\!\!\yiocolike \!\!$ conformance relation can be seen as an instance of the conformance relation based on languages given in Definition~\ref{def:conf}. 
\begin{lemm}\label{lemm:ioco-free}
Let $\yS=\yiovptsS$ be a specification IOVPTS and let  $\yI=\yiovptsQ$ be an implementation IOVPTS.
Then  $D=otr(\yS) L_U$ is a VPL, and we have that $\yI \yiocolike \yS$ if and only if
$\yI \yconf{D}{\yemp} \yS$.
\end{lemm}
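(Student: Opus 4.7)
The plan is to establish the two parts of the lemma separately. For the VPL assertion, Proposition~\ref{prop:plts-pda} gives that $otr(\yS) = L(\yA_\yS)$ is a VPL, and then Proposition~\ref{prop:conct-vpa} applied with $B = L_U \subseteq L$ immediately yields a (non-blocking, $\yeps$-free) VPA accepting $otr(\yS)L_U = D$, so $D$ is a VPL.

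For the equivalence, my first step is to translate $\yI \yconf{D}{\yemp} \yS$ into a clean language inclusion. Setting $F = \yemp$ in Definition~\ref{def:conf} makes clause (1) vacuous, so the relation reduces to: for every $\ysi' \in otr(\yS)$ and every $\ell \in L_U$, if $\ysi'\ell \in otr(\yI)$ then $\ysi'\ell \in otr(\yS)$. On the ioco-like side, I would unfold Definition~\ref{def:out-after}: for $\ell \in L_U$, having $\ell \in \yout((q_0,\bot) \yafter \ysi')$ is equivalent to the existence of an observable $\ysi'\ell$-path from $(q_0,\bot)$ in $\yI$, and symmetrically on the specification side. Hence $\yI \yiocolike \yS$ reads: whenever $\ysi' \in otr(\yS)$ and some $q_0 \in Q_{in}$ admits an observable $\ysi'\ell$-path, some $s_0 \in S_{in}$ also admits such a path. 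Since $\ysi'\ell \in otr(\yI)$ iff such a $q_0$ exists (by Definition~\ref{def:trace}), and $\ysi'\ell \in otr(\yS)$ iff such an $s_0$ exists, this is verbatim the inclusion extracted from $(D,\yemp)$-conformance, and both directions of the equivalence follow at once.

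I expect the main obstacle to be the bookkeeping step of matching the $\yout \circ \yafter$ formulation against membership in $otr(\cdot)$. Concretely, the key sub-fact to verify is: $\ell \in \yout((q_0,\bot) \yafter \ysi')$ iff there is some $(r,\yga)$ with $\ytrut{(q_0,\bot)}{\ysi'\ell}{\yI}{(r,\yga)}$. This hinges on the fact that $\ell \in L_U$ is distinct from the internal symbol $\ytau$, so an observable $\ysi'\ell$-path can always be split at the visible $\ell$-step into an observable $\ysi'$-path reaching some $(q,\ybe) \in (q_0,\bot) \yafter \ysi'$ followed by a single observable $\ell$-step out of $(q,\ybe)$, and conversely the concatenation of these two pieces is again an observable $\ysi'\ell$-path. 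Once this equivalence is secured, the remainder is pure quantifier shuffling over $q_0 \in Q_{in}$ and $s_0 \in S_{in}$.
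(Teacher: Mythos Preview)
Your proposal is correct and follows essentially the same route as the paper: the VPL claim via Propositions~\ref{prop:plts-pda} and~\ref{prop:conct-vpa}, and the equivalence by reducing $\yI \yconf{D}{\yemp} \yS$ to the inclusion $otr(\yI)\cap D\ysse otr(\yS)$ and matching it against the $\yout\circ\yafter$ formulation of $\yiocolike$. The paper carries out the two implications separately rather than observing once that both sides unwind to the same inclusion, but the underlying manipulations (and the key sub-fact you isolate about splitting an observable $\ysi'\ell$-path at the visible $\ell$) are identical.
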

\begin{proof}
From Definition~\ref{def:iolts-semantics} we know that the semantics of an IOVPTS is given by the semantics of its underlying VPTS.
So, for the remainder of this proof, when we write $\yS$ and $\yI$ we will be referring to the underlying VPTSs 
of the given IOVPTSs $\yS$ and $\yI$, respectively.
By Proposition~\ref{prop:plts-pda} we see that $L(\yA)=otr(\yS)$, where $\yA$ is obtained using  Definition~\ref{def:plts-pda}, is a VPL. 
Using Proposition~\ref{prop:conct-vpa} we conclude that $D$ is also a VPL.

Next, we argue that $\yI\yiocolike \yS$ if and only if $\yI \yconf{D}{\yemp} \yS$.
First assume that we have $\yI \yconf{D}{\yemp} \yS$. 
Because $otr(\yI) \cap \yemp \cap otr(\yS)=\yemp$, it is clear from Definition~\ref{def:conf} that $\yI \yconf{D}{\yemp} \yS$ is equivalent  
to $otr(\yI)\cap D\ysse otr(\yS)$.
Let $\ysi\in otr(\yS)$ and let $\ell\in \yout((q_0,\bot) \yafter \ysi)$ for some $q_0\in Q_{in}$.
We must show that $\ell\in \yout((s_0,\bot)\yafter \ysi)$ for some $s_0\in S_{in}$.
Because  $\ell\in \yout((q_0,\bot) \yafter \ysi)$ we get 
$\ysi, \ysi\ell\in otr(\yI)$.
Since $\ell\in L_U$, we get $\ysi\ell\in otr(\yS) L_U$ and so $\ysi\ell\in D$.
We conclude that $\ysi\ell\in otr(\yI)\cap D$.
Since we already know that $otr(\yI)\cap D\ysse otr(\yS)$, we now have 
$\ysi\ell \in otr(\yS)$.
So, $\ell\in \yout((s_0,\bot) \yafter \ysi)$ for some $s_0\in S_{in}$, as desired.

Next, assume that $\yI \yiocolike \yS$ and we want to show that $\yI \yconf{D}{\yemp} \yS$ holds.
Since $otr(\yI)\cap \yemp\cap otr(\yS)=\yemp$, 
the first condition of Definition~\ref{def:conf} is immediately verified.
We now turn to the second condition of Definition~\ref{def:conf}.
In order to show that $otr(\yI)\cap D\ysse otr(\yS)$,
let $\ysi\in otr(\yI)\cap D$.
Then, $\ysi\in D$ and so $\ysi=\yal\ell$ with $\ell\in L_U$ and $\yal\in otr(\yS)$, because $D=otr(\yS) L_U$.
Also, $\ysi\in otr(\yI)$ gives $\yal\ell\in otr(\yI)$, and so $\yal\in otr(\yI)$.
Then, because $\ell\in L_U$, we get $\ell\in \yout((q_0,\bot) \yafter \yal)$ for some $q_0\in Q_{in}$.
Because we assumed $\yI \yiocolike \yS$ and we have $\yal\in otr(\yS)$, we also get $\ell\in \yout((s_0,\bot) \yafter \yal)$, for some $s_0\in S_{in}$.
So $\yal\ell\in otr(\yS)$. Because $\ysi=\yal\ell$, we have $\ysi\in otr(\yS)$.
We have, thus, showed that $otr(\yI)\cap D\ysse otr(\yS)$, as desired.
\end{proof}

\begin{exam}
We illustrate the relationship  between $\yI \yiocolike \yS$ and $\yI \yconf{D}{\yemp} \yS$, 
using the specification IOVPTS $\yS$  depicted in Figure~\ref{iovpts2} and the implementation $\yI$  depicted in Figure~\ref{impl}.

We  want to check whether $\yI\yconf{D}{F} \yS$ holds. 
Let $\ysi=aabb$.
From Figure~\ref{iovpts2} it is apparent that  $\ytrt{(s_0,\bot)}{\ysi}{(s_2,\bot)}$ and that 
$(s_0,\bot) \yafter \ysi=\{(s_2,\bot)\}$.
From Figure~\ref{impl} we get   $(q_0,\bot) \yafter \ysi=\{(q_2,\bot)\}$.
Also, $x\in \yout((q_2,\bot))$, but $x\not\in \yout((s_2,\bot))$.
So, by Definition~\ref{def:out-after}, $\yI\yiocolike \yS$ does not hold.

Now take $\ysi=aabbx$.
Since $aabb\in otr(\yS)$, we get $aabbx\in otr(\yS)L_U=D$.
Also, $aabbx\in otr(\yI)$ and $aabbx\not\in otr(\yS)$, so that
$aabbx\in otr(\yI)\cap D\cap \ycomp{otr}(\yS)$.
Using Proposition~\ref{prop:equiv-conf}, we conclude that $\yI\yconf{D}{\yemp} \yS$ does not hold also, as expected. \yfim
\end{exam} 

%%%% ioco-like com linguagens finitas
%
%In fact, we can take $D$ to be a finite language.
%\begin{coro}\avm{\ycm}\alb{\ycm}\label{coro:ioco-finite}
%Let $\yS=\yiovptsS$ be a specification IOVPTS and let  $\yI=\yiovptsQ$ be an implementation IOVPTS.
%Then there is a finite language $D\ysse (L_I\cup L_U)^\star$ such that $\yI \yiocolike \yS$ if and only if
%$\yI \yconf{D}{\yemp} \yS$.
%\end{coro}
%\begin{proof}
%Follows immediately from Lemma~\ref{lemm:ioco-free} and Corollary~\ref{coro:finite}.
%\end{proof}

We can also characterize the {\bf ioco-like} relation as follows.
\begin{coro}\label{coro:ioco-charac}
Let $\yS$ be a specification IOVPTS and let  $\yI$ be an implementation IOVPTS.
Then $\yI \yiocolike \yS$ if and only if $otr(\yI)\cap T= \yemp$, where $T=\ycomp{otr}(\yS)\cap \big[otr(\yS) L_U\big]$.
\end{coro}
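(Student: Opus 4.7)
The plan is to derive this directly by chaining together Lemma~\ref{lemm:ioco-free} and Proposition~\ref{prop:equiv-conf}, since the target identity is essentially what one obtains by unfolding the definitions with $F = \yemp$.

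First I would invoke Lemma~\ref{lemm:ioco-free}, which provides the equivalence $\yI \yiocolike \yS$ iff $\yI \yconf{D}{\yemp} \yS$, where $D = otr(\yS)L_U$. This reduces the corollary to an equivalent statement about the $(D,F)$-visible conformance relation with the forbidden language being empty. Next, I would apply Proposition~\ref{prop:equiv-conf} to $\yI \yconf{D}{\yemp} \yS$, which gives
\[
\yI \yconf{D}{\yemp} \yS \,\,\text{iff}\,\, otr(\yI)\cap \big[(D\cap\ycomp{otr}(\yS))\cup(\yemp\cap otr(\yS))\big]=\yemp.
\]
Since $\yemp\cap otr(\yS)=\yemp$, the right-hand union collapses to $D\cap\ycomp{otr}(\yS)$, and substituting $D=otr(\yS)L_U$ yields $otr(\yI)\cap\big[otr(\yS)L_U\cap\ycomp{otr}(\yS)\big]=\yemp$. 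Observing that set intersection is commutative, this is precisely $otr(\yI)\cap T=\yemp$ with $T=\ycomp{otr}(\yS)\cap\big[otr(\yS)L_U\big]$, completing the equivalence.

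There is no real obstacle here: the corollary is an immediate bookkeeping consequence of the two prior results, and the only care required is to correctly instantiate $F=\yemp$ in Proposition~\ref{prop:equiv-conf} and to confirm that the order of the two factors in $T$ matches the corollary's statement. The whole proof can be presented in a couple of lines.
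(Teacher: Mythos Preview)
Your proposal is correct and follows essentially the same approach as the paper's own proof: invoke Lemma~\ref{lemm:ioco-free} to reduce to $\yI \yconf{D}{\yemp} \yS$ with $D=otr(\yS)L_U$, then apply Proposition~\ref{prop:equiv-conf} and simplify using $F=\yemp$. The paper is slightly terser, absorbing the $\yemp\cap otr(\yS)=\yemp$ simplification into a single line, but the logical content is identical.
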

\begin{proof}
From Lemma~\ref{lemm:ioco-free} we have that $\yI \yiocolike \yS$ if and only if $\yI \yconf{D}{\yemp} \yS$, where 
$D=otr(\yS)L_U$.
From Proposition~\ref{prop:equiv-conf} we know that the latter holds if and only if 
$otr(\yI)\cap (D\cap\ycomp{otr}(\yS))=\yemp$.
\end{proof}

\begin{exam}
	Let the IOVPTS $\yS$ of Figure~\ref{iovpts2} be the specification model, with $L_I=\{a,b\}$, and $L_U=\{x\}$. 
	It is not hard to see that the VPA $\yTS$, depicted in Figure~\ref{capSD}, is such that $L(\yTS)=D\cap \ycomp{otr}(\yS)$, where $D=otr(\yS)L_U$.
	% !TeX spellcheck=en_US
\begin{figure}[tb]
\center

\begin{tikzpicture}[node distance=1cm, auto,scale=.6,inner sep=1pt]
  \node[ initial by arrow, initial text={}, punkt] (q0) {$d_0$};
  \node[punkt, inner sep=1pt,right=2.5cm of q0] (q1) {$d_1$};  
  \node[punkt, inner sep=1pt,below right=1.5cm of q0] (q2) {$d_2$};  
  \node[punkt, accepting, inner sep=1pt,below=3cm of q1] (q3) {$D$};  
  
\path (q0)    edge [ pil, bend left=30]
                	node[pil,above]{$b/\ypop{A}$} (q1);

\path (q2)    edge [ pil, bend left=30]
                 	node[pil, pos=.4,right]{$b/\ypop{A}$} (q1);               
\path (q1)    edge [ pil, bend left=30]
                	node[pil, right]{$b/\ypop{A}$} (q2);
                	                	
\path (q0)    edge [loop above] node   {$a/\ypush{A}$} (q0);
\path (q1)    edge [loop above] node   {$a/\ypush{A}$} (q1);

\path (q0)    edge [ pil, bend right=50]
                	node[pil, above right]{$x/\ypop{A}$} (q2);
\path (q2)    edge [ pil,  bend right=30]
                	node[pil,pos=0.4,above right]{$a/\ypop{A}$} (q0);

\path (q2)    edge [ pil, right=30]
                 	node[pil, pos=0.65, above right]{$x/\ypop{A}$} (q3);         
\path (q2)    edge [ pil, right=30]
					node[pil, pos=0.35,above right]{$x/\ypop{\bot}$} (q3);         

\path (q1)    edge [ pil, bend left=50]
                 	node[pil,pos=.4,right]{$x/\ypop{A}$} (q3);      
\path (q1)    edge [ pil, bend left=50]
					node[pil,pos=.6,right]{$x/\ypop{\bot}$} (q3);

\path (q0)    edge [ pil, bend right=60]
                 	node[pil,below left]{$x/\ypop{\bot}$} (q3);         
                                 	  
%\path (cof)    edge [pil,bend right=30]   	node[anchor=north,right]{$\tau$} (s0);

\end{tikzpicture}
\caption{The VPA that accepts the language  $\ycomp{otr}(\yS)\cap\big[otr(\yS)L_U\big]$ for the IOVPTS $\yS$ of Figure~\ref{iovpts2}.}
\label{capSD}
\end{figure}
	According to  Lemma~\ref{lemm:always-complete} and Corollary~\ref{coro:ioco-charac}, the language accepted by  $\yTS$, $L(\yTS)$, is a complete test suite for $\yS$ and $(D,\yemp)$.  \yfim
\end{exam}
% !TeX spellcheck=en_US
\subsection{IOVPTS Fault Models and Test Suite Completeness}\label{subsec:fault-models}

Here we introduce the notion of an external tester environment that can also be formalized using an Input/Output Visibly Pushdown Transition System with a special \yfail\ state. 
We call this model a \emph{Visibly Pushdown Fault Model}. 
Such a model, $\yT$,  operates in conjunction with an IUT, $\yI$.
Their joint behavior can be interpreted as $\yT$ sending symbols to $\yI$, and receiving symbols from it. 
Therefore, the sets of input and output symbols in $\yT$ and $\yI$ are reversed.

\begin{defi}\label{def:test-purpose}
	Let $L_I$ and $L_U$ be sets of input and output symbols, respectively, with $L=L_I\cup L_U$.
	A \emph{visibly pushdown fault model} (\emph{VPFM}) over $L$ is any  IOVPTS $\yT\in\yiovp{L_U}{L_I}$ with a  distinguished $\yfail$ state.
\end{defi}

Given a VPFM $\yT$ and an IUT $\yI$, in order to formally describe the exchange of action symbols between $\yT$ and $\yI$, we define their cross-product  VPTS $\yT\times \yI$.
The cross-product is easily constructed by taking the product of  the corresponding VPAs.
Recall Definitions~\ref{def:productVPA} and~\ref{def:plts-pda}.
\begin{defi}\label{def:cross}
Let $\yS$ and $\yI$ be two IOVPTSs over an alphabet $L$.
Their cross-product is the VPTS $\yS\times \yI$ induced by the product VPA $\yA_\yS\times \yA_\yI$,
where $\yA_\yS$ and $\yA_\yI$ are the VPAs induced by $\yS$ and $\yI$, respectively.
\end{defi}

Having an implementation $\yI$ and a VPFM $\yT$ as a tester, we need to say  when a test run is successful with respect to a given specification $\yS$.
Recalling that  $\yT$ signals an unsuccessful run when it reaches a {\bf fail} state, the  following definition formalizes the verdict of a test run. 
Recall Definition~\ref{def:out-after}.
Importantly, given a specification $\yS$, when the test run is successful  we need a guarantee that $\yI$ does  {\bf ioco-like} conform to  $\yS$, and conversely, that the test run surely fails when $\yI$ does not {\bf ioco-like} conform to $\yS$, for any implementation $\yI$.
That is, we need a property of completeness.
It is also customary to specify that  IUTs of interest are only taken from particular subclasses of IOVPTS models, and completeness is then defined when taking implementations from these subclasses only.
\begin{defi}\label{def:passes}
	Let $\yT=\yiovptsU$ be a VPFM and let $\yI=\yiovptsS$ be an implementation over the same alphabet
	$L=L_I\cup L_U$.
	We say that $\yI$ \emph{passes} $\yT$ if, for all  $\ysi\in L^\star$ and all initial configurations $((t_0,q_0),\bot)$ of their cross-product $\yT\times \yI$ we do not have 
	$\ytrut{((t_0,q_0),\bot)}{\ysi}{\yT\times \yI}{((\yfail,q),\yal\bot)}$ 
	for any configuration $((\yfail,q),\yal\bot)$ of $\yT\times \yI$.
%	Let $\yS$ be a specification IOVPTS, and let  $\yltsn{IMP}\ysse\yiovp{L_I}{L_U}$ be a family of IOVPTSs.
%	We say that $\yT$ is \emph{{\bf ioco-like} complete} for $\yS$ relatively to $\yltsn{IMP}$ if, for all $\yI\in \yltsn{IMP}$, we have $\yI \yiocolike \yS$  if and only if $\yI$ passes $\yT$.
\end{defi}
%When $\yltsn{IMP}=\yiovp{L_I}{L_U}$, the class of all IOVPTSs over $L_I$ and $L_U$, we also say that $\yT$ is  {\bf ioco}-complete for  $\yS$, instead of $\yT$ is  {\bf ioco-like} complete for  $\yS$ relatively to  $\yiovp{L_I}{L_U}$.

Now we construct a VPFM which is complete relatively to a given specification $\yS$.
%\\\avm{Vai precisar saber o num de transições também?}
%\begin{lemm}\label{lemm:ioco-complete} 
%	Let $\yB\in \yiovp{L_I}{L_U}$ be a deterministic specification with $n$ states, $g_s$ stack symbols and  $\ell = \vert L\vert$ alphabet symbols with $L=L_I\cup L_U$.
%	We can effectively construct a  fault model $\yT$  which is {\bf ioco-like} complete for $\yB$.
%	Moreover, $\yT$ is deterministic  and  has $n+1$ states and $m+n g_s \ell$ transitions, where $m$ is the number of transitions in $\yB$.
%\end{lemm}
\begin{lemm}\label{lemm:ioco-complete} 
	Let $\yB\in \yiovp{L_I}{L_U}$ be a deterministic specification with $n$ states. 
	We can effectively construct a  fault model $\yT$  which is {\bf ioco-like} complete for $\yB$.
	Moreover, $\yT$ is deterministic  and  has $n+1$ states.
\end{lemm}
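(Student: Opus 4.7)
The plan is to reduce \textbf{ioco-like} completeness to the language-emptiness criterion of Corollary~\ref{coro:ioco-charac}: we have $\yI\yiocolike\yB$ if and only if $otr(\yI)\cap N=\yemp$, where $N=\ycomp{otr}(\yB)\cap[otr(\yB)L_U]$. I will build $\yT$ so that, when run synchronously with any IUT $\yI$, $\yT$ mirrors $\yB$ step by step on traces in $otr(\yB)$ and diverts to a distinguished sink $\yfail$ precisely when the next output symbol $\ell\in L_U$ would push the extended trace outside $otr(\yB)$. The cross-product $\yT\times\yI$ then enters a $\yfail$-configuration on exactly the traces in $N\cap otr(\yI)$.

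By Proposition~\ref{prop:contracted-vpts} I may assume without loss of generality that $\yB$ is contracted (still deterministic, still with at most $n$ states). Let $S$ be its set of states, $s_0\in S$ its unique initial state, and $\yGa$ its stack alphabet. Define $\yT$ with state set $S\cup\{\yfail\}$, initial state $s_0$, input/output roles swapped as Definition~\ref{def:test-purpose} requires, and with every transition of $\yB$ retained. Then for every $s\in S$ and every $\ell\in L_U$ such that $\yB$ has no outgoing transition at $s$ on label $\ell$, install the missing fail-transition(s): a transition $(s,\ell,\yait,\yfail)$ if $\ell\in L_i$; a transition $(s,\ell,Z,\yfail)$ with a fixed $Z\in\yGa$ if $\ell\in L_c$; and a transition $(s,\ell,W,\yfail)$ for every $W\in\yGa_\bot$ for which $\yB$ lacks a transition $(s,\ell,W,\cdot)$, when $\ell\in L_r$. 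Finally, close $\yfail$ under self-loops on every admissible label/stack combination. Clearly $\yT$ has $n+1$ states. Determinism is preserved because each fail-transition is installed only at a source-label-stack combination that was previously unused, so neither clause (1) nor clause (2) of Definition~\ref{def:vpa-determinism} can be violated; and no $\ytau$-moves are introduced, so clause (3) is safe as well.

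For completeness I would prove, by induction on the length of a run of $\yT\times\yI$ and using Proposition~\ref{prop:product-behavior} to decompose joint runs into separate runs of $\yT$ and $\yI$ (together with Proposition~\ref{prop:stack-size} to keep their stack heights aligned), the invariant that a run $\ytrut{((s_0,q_0),\bot)}{\ysi}{\yT\times\yI}{((p,q),\yga\bot)}$ exists if and only if $\ysi\in otr(\yI)$ and either (a) $p\in S$ with $\ytrut{(s_0,\bot)}{\ysi}{\yB}{(p,\yal\bot)}$ for some $\yal$ of length $\vert\yga\vert$, or (b) $p=\yfail$ and $\ysi=\mu\ell$ with $\mu\in otr(\yB)$, $\ell\in L_U$, and $\mu\ell\notin otr(\yB)$. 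Case (b) is exactly membership in $N\cap otr(\yI)$, so $\yI$ passes $\yT$ if and only if no such witness exists, if and only if $otr(\yI)\cap N=\yemp$, if and only if $\yI\yiocolike\yB$.

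The main obstacle is the bookkeeping for pop-transitions: the fail-branches out of a state $s$ must be indexed by every stack symbol $W\in\yGa_\bot$ that can appear on top during a joint run, and one must show that $\yT$'s stack always mirrors $\yB$'s, so that a synchronized pop either dispatches to a legitimate $\yB$-transition or, when $\yB$ has none, to exactly the fail-transition installed for that particular $W$. This is where contracted-ness of $\yB$ is essential: by Proposition~\ref{prop:contracted-vpts} the reachable stack tops at each state of $\yB$ are precisely those already covered by $\yB$'s pop-transitions, so the enumeration over $W\in\yGa_\bot$ in the construction is both necessary and sufficient to capture every missing output of $\yB$ without introducing spurious fail-paths.
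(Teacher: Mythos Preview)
Your construction and overall strategy match the paper's: add a single $\yfail$ state, install fail-transitions on output labels exactly where $\yB$ has none, then use Corollary~\ref{coro:ioco-charac} together with Propositions~\ref{prop:product-behavior} and~\ref{prop:stack-size} to show that the cross-product enters a $\yfail$-configuration precisely on $otr(\yI)\cap\ycomp{otr}(\yB)\cap[otr(\yB)L_U]$.

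Two points where you diverge unnecessarily. First, the paper leaves $\yfail$ as a true sink with no outgoing transitions; your self-loops are harmless for the pass/fail verdict but break your stated invariant~(b), since with self-loops a run ending at $\yfail$ may consume $\mu\ell\rho$ for arbitrary $\rho$, not only $\mu\ell$. Either drop the self-loops or weaken~(b) to ``$\ysi$ has a prefix $\mu\ell$ with \ldots''.

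Second, contracted-ness is not needed here, and your reading of it is inverted. Definition~\ref{def:vpts-reduced} guarantees that every \emph{existing} pop-transition is exercisable; it does \emph{not} say that every reachable stack top at a state is covered by some pop-transition. Your construction already installs a fail pop-transition for \emph{every} $W\in\yGa_\bot$ with no $(s,\ell,W,\cdot)$ in $\yB$, so no reachability argument is required: if $\yT$ actually fires $(s,\ell,W,\yfail)$ in a run then the current stack top is $W$ and $\yB$ has no pop on $(\ell,W)$ at $s$, hence $\yB$ genuinely cannot continue and the fail-path is never spurious; conversely, whenever $\yB$ blocks on an output $\ell$ the matching fail-transition is present by construction. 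The paper's proof (its Claims~2 and~3 in Appendix~\ref{app:lemm:ioco-complete}) establishes exactly this directly, without invoking Proposition~\ref{prop:contracted-vpts}.
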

\begin{proof}
From Corollary~\ref{coro:ioco-charac} we know that, for any IUT $\yI$, $\yI \yiocolike \yB$ if and only if 
$otr(\yI)\cap T=\yemp$, where $T=\ycomp{otr}(\yB)\cap [otr(\yB) L_U]$.
The desired  fault model $\yT$ can be constructed as follows.
If $\ell$ is a push symbol in $L_U\cap L_c$ add to $\yS$ a transition $(s,\ell,Z,\yfail)$ if we do not have in $\yS$ any other transitions like $(s,\ell,W,p)$ in $\yS$ for any  $p$ and any $W$.
Likewise, when $\ell$ is a pop symbol in $L_U$ add a transition  $(s,\ell,W,\yfail)$
given that we do not have any other transitions $(s,\ell,W,p)$, for any $p$.
Finally, for any simple symbol $\ell$ in $L_U$, add $(s,\ell,\yait,\yfail)$, if we do not in $\yS$ have a transition  $(s,\ell,\yait,p)$, for any $p$.
Then, an argument can show that $\yI$ passes $\yT$ if and only if $\yI \yiocolike \yB$, for any IUT $\yI$.

Details can be found in the complete proof in Appendix~\ref{app:lemm:ioco-complete}, at page~\pageref{app:lemm:ioco-complete}. 
\end{proof}

% !TeX spellcheck=en_US
\subsection{Testing IUT models for {\bf ioco-like} conformance}\label{subsec:testing-iuts}

Once we have a VPFM which is complete for a given specification, we can test whether IUTs {\bf ioco-like} conform to that specification. 
In order to do so, we introduce the notion of \emph{balanced run}. 
Let $\yV$ be any VPTS over $L$.
We say that a string $\ysi\in L^\star$ induces a \emph{balanced run}  from $p$ to $q$ in $\yV$ if we have $\ytrtf{(p,\bot)}{\ysi}{\yV}{(q,\bot)}$.
The next theorem gives a decision procedure for testing {\bf ioco-like} conformance.
\begin{theo}\label{lemm:vpts-test}
Let $\yS=\yiovpts{S_\yS}{\{s_0\}}{L_I}{L_U}{\yDe_\yS}{T_\yS}\in\yiovp{L_I}{L_U}$ be a deterministic specification, and let $\yI=\yiovpts{S_\yI}{I_{in}}{L_I}{L_U}{\yDe_\yI}{T_\yI}\in\yiovp{L_I}{L_U}$ be an IUT.
Then we can effectively decide whether $\yI \yiocolike S$ holds.
Further,  if  $\yI \yiocolike S$ does not hold, we can find $\ysi\in otr(\yS)$, $\ell\in L_U$ that verify this condition, \emph{i.e.}, $\ell\in\yout((q_0,\bot) \yafter \ysi)$  for some $q_0\in I_{in}$, and $\ell\not\in \yout((s_0,\bot) \yafter \ysi)$. 
\end{theo}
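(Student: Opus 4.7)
The plan is to reduce the decision problem to VPA emptiness. First, I would invoke Lemma~\ref{lemm:ioco-complete} to obtain a deterministic VPFM $\yT$ with $\vert S_\yS\vert +1$ states that is {\bf ioco-like} complete for $\yS$, so that $\yI \yiocolike \yS$ if and only if $\yI$ passes $\yT$. By Definition~\ref{def:passes}, $\yI$ fails $\yT$ precisely when some observable trace drives the cross-product $\yT\times \yI$ (Definition~\ref{def:cross}) from an initial configuration $((t_0,q_0),\bot)$ to a configuration whose first component is $(\yfail,q)$ for some $q\in S_\yI$ and with arbitrary stack $\yal\bot$.

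Second, I would instantiate the induced VPA $\yA_\yT \times \yA_\yI$ (see Definition~\ref{def:productVPA}) and designate as its set of final states exactly $\{\yfail\}\times S_\yI$. By Proposition~\ref{prop:product-behavior} together with Proposition~\ref{prop:lang-vpa-vlpts}, this VPA accepts precisely the set of observable traces $\ysi$ that witness a fail-reaching run in $\yT\times \yI$. Hence $\yI$ passes $\yT$ if and only if the language of this VPA is empty. VPA emptiness is decidable in worst-case polynomial time (in fact $\yoh{n^3}$, as already used in Proposition~\ref{prop:conf-poli}), which yields the desired decision procedure.

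Third, for witness extraction, if the emptiness procedure returns a string $\ysi_0$, I would analyse its structure using the specific construction of $\yT$ from Lemma~\ref{lemm:ioco-complete}. Every transition of $\yT$ that targets $\yfail$ is added on some output symbol $\ell\in L_U$ from a state that originally belonged to $\yS$ and on which $\yS$ had no outgoing $\ell$-transition. Therefore $\ysi_0=\ysi\ell$, where the prefix $\ysi$ uses only transitions inherited from $\yS$ in the $\yT$-component of the cross-product. Applying Proposition~\ref{prop:product-behavior} twice (once to split $\ysi_0$ between the two components, and once to extract the $\yS$-run from the $\yT$-run) gives $\ysi\in otr(\yS)$, $\ell\in\yout((q_0,\bot)\yafter \ysi)$ for some $q_0\in I_{in}$, and $\ell\notin \yout((s_0,\bot)\yafter \ysi)$, which is exactly the required witness.

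The main obstacle, I expect, is the bookkeeping in the witness-extraction step: one must argue that the final symbol of $\ysi_0$ is the unique $\yfail$-triggering output (so that it lies in $L_U$) and that the prefix $\ysi$ really is an observable trace of $\yS$ rather than only of $\yT$. Both facts rest on the explicit form of the $\yfail$-transitions added in the proof of Lemma~\ref{lemm:ioco-complete}, which guarantee that $\yfail$ has no outgoing transitions and that all other transitions of $\yT$ coincide with those of $\yS$; combined with Proposition~\ref{prop:product-behavior} relating runs in the product to runs in the factors, the extraction becomes straightforward.
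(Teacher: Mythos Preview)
Your proposal is correct and follows the same opening moves as the paper: invoke Lemma~\ref{lemm:ioco-complete} to get the fault model $\yT$, form the cross-product with $\yI$, and reduce $\yiocolike$ to fail-reachability in $\yT\times\yI$. The divergence is in how that reachability problem is discharged.

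You treat it as a black-box VPA emptiness query: declare $\{\yfail\}\times S_\yI$ as the final states of the product VPA $\yA_\yT\times\yA_\yI$ and invoke the cubic emptiness test already cited in Proposition~\ref{prop:conf-poli}. This is valid, and your witness-extraction sketch is sound: since $\yfail$ is a sink in $\yT$, $\yT$ has no $\ytau$-moves (Proposition~\ref{prop:vpts-deterministic}), and every transition into $\yfail$ carries a label in $L_U$ (Eqs.~(\ref{eq:testpurpose1:Lc})--(\ref{eq:testpurpose1:Li})), the last input symbol of any accepted string must be the $L_U$-symbol $\ell$ that triggers $\yfail$, with the prefix $\ysi$ using only transitions inherited from $\yS$.

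The paper instead keeps the decision step self-contained. It massages the underlying VPTS of the cross-product by (i) appending an ``empty-the-stack'' gadget after the $\yfail$ states and (ii) prepending a push-loop gadget at a fresh initial state so as to eliminate all pop-on-$\bot$ transitions. Fail-reachability in the original product is then shown equivalent to the existence of a \emph{balanced run} $(s_0,\bot)\to(f_2,\bot)$ in the modified VPTS $\yA_2$. The paper then supplies an explicit saturation-style procedure (Algorithm~\ref{alg1}) for the balanced-run problem and proves it correct in Theorem~\ref{teo:alg1}.

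What the paper's route buys: the explicit algorithm feeds the subsequent complexity analysis in Theorem~\ref{te:ioco-complexity}, where the state and transition counts are tracked through each construction to obtain the bound $\yoh{n_s^3n_i^3+n_s^2m_s^4m_i^2}$. Your black-box route establishes decidability and witnesses cleanly, but would delegate the fine-grained complexity to the unnamed emptiness procedure. For the statement of Theorem~\ref{lemm:vpts-test} as written, your approach is sufficient and more economical; the paper's extra machinery is there to support the next theorem.
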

\begin{proof}
Let $L=L_I\cup L_U$.
From Corollary~\ref{coro:ioco-charac} we know that $\yI \yiocolike \yS$ does not hold if and only if  there is some $\ysi\in otr(\yS)$, $\ell\in L_U$ such that $\ysi\ell\in otr(\yI)\cap T$, where $T=\ycomp{otr(\yS)}\cap\big[ otr(\yS)L_U\big]$.
The proof of Lemma~\ref{lemm:ioco-complete}, and Proposition~\ref{prop:vpts-deterministic}, indicate how to obtain a deterministic fault model $\yT=\yiovpts{S_\yT}{\{t_0\}}{L_U}{L_I}{\yDe_\yT}{T_\yT}$, with no $\ytau$-moves, and such that $\yI \yiocolike \yS$ if and only if $\yI$ does not pass $\yT$, that is, 
$\ytrut{((t_0,q_0),\bot)}{\ysi}{\yP}{((\yfail,q),\yal\bot)}$ for some $\ysi\in L^\star$,
where $\yP=\yT\times \yI$ is the product IOVPTS, and  $(t_0,q_0)$ is an initial state of $\yP$.

Write  $\yA=\yvpts{A}{I}{L}{\yGa}{\rho}$ for the underlying VPTS associated to $\yP$.
In order to check for $\yiocolike$ conformance, and using Proposition~\ref{prop:lang-vpa-vlpts}, it suffices to check whether a configuration $((\yfail,q),\yal\bot)$ is reachable from some initial configuration of $\yA$, and where $q$ can be any state of $\yI$. 
We will modify $\yA$ in a simple way in order to make this reachability problem more amenable.
\begin{description}
\item[\sf Emptying the stack.]
First we move to a pop state after reaching $\yfail$ in $\yT$.
For all states  $(\yfail,q)$ in $\yA$, add the internal transition $((\yfail,q),\ytau,\yait,f_1)$ to $\rho$, where $f_1$ is a new state added to $A$.
Then, for all $W\in \yGa$ add the self-loops $(f_1,b_1,W,f_1)$ to $\rho$, where $b_1$ is a new pop symbol added to $L$.
Next, add the transition $(f_1,b_1,\bot,f_2)$ to $\rho$, where $f_2$ is another new state added to $A$.
Let $\yA_1=\yvpts{A_1}{I}{L_1}{\yGa}{\rho_1}$ be the resulting VPTS obtained after these modifications to $\yA$.
Since $(\yfail,q)$ is a sink state in $\yA$, it is easy to see that we have $\ytrtf{((t_0,q_0),\bot)}{\ysi}{}{((\yfail,q),\yal\bot)}$ in $\yA$ if and only if  $\ytrtf{((t_0,q_0),\bot)}{\mu}{}{(f_2,\bot)}$ in $\yA_1$, where $\mu=\ysi \ytau b_1^k$ with $k=\vert\yal\vert+1$.

\item[\sf Eliminating moves on an empty stack.]
Let $s_0$ be a new state added to $A_1$, $a_2$ a new push symbol added to $L_1$, and $Z_2$ a new stack symbol added to $\yGa$.
We make $s_0$ the new (unique) initial state and add to $\rho_1$ the self-loop $(s_0,a_2,Z_2,s_0)$.
Next we connect $s_0$ to all initial states of $A_1$, by adding internal transitions $(s_0,\ytau,\yait,s)$ for all $s\in I$. 
Finally, we replace any pop transition on an empty stack $(p,c,\bot,q)$ by the new pop transition $(p,c,Z_2,q)$.
Let $\yA_2=\yvpts{A_2}{\{s_0\}}{L_2}{\yGa_2}{\rho_2}$ be the new VPTS after these modifications to $\yA_1$.

Suppose we have $\ytrtf{(s,\bot)}{\ysi}{\yA_1}{(q,\bot)}$, where $s\in I$ is an initial state, and assume that we have $0\leq k\leq \vert \ysi\vert$ pop moves on the empty stack on this run of $\yA_1$. 
Then, a simple induction on $k$ shows that in $\yA_2$ we have 
$\ytrtf{(s_0,\bot)}{a_2^k}{}{(s_0,Z_2^k\bot)}\ytrtf{}{\ytau}{}{(s,Z_2^k\bot)}\ytrtf{}{\ysi}{}{(q,\bot)}$.
Conversely, if we have a run  $\ytrtf{(s_0,\bot)}{\ysi}{\yA_2}{(q,\bot)}$, then $\ysi=a_2^k\ytau\mu$ for some $k\geq 0$, and a simple induction on $k$ shows that in $\yA_1$ we have  $\ytrtf{(s,\bot)}{\mu}{\yA_1}{(q,\bot)}$ where $s$ is an initial state and in this run over $\yA_1$ we made $k$ pop moves on an empty stack.  
\end{description}
After these transformations, we see that $\ytrtf{((t_0,q_0),\bot)}{\ysi}{\yA}{((\yfail,q),\yal\bot)}$, with $(t_0,q_0)$ initial in $\yA$, if and only if 
we have $\ytrtf{(s_0,\bot)}{\mu}{\yA_2}{(f_2,\bot)}$, where $\mu=a_2^k\ytau\ysi\ytau b_1^n$ for some $k\geq 1$ and $m\geq 0$.
Now, from the definition, we have $\ytrut{((t_0,q_0),\bot)}{\eta}{\yP}{((\yfail,q),\yal\bot)}$ if and only if $\ytrtf{((t_0,q_0),\bot)}{\ysi}{\yA}{((\yfail,q),\yal\bot)}$ where $\eta=h_\ytau(\ysi)$.
Thus, $\ytrut{((t_0,q_0),\bot)}{\eta}{\yP}{((\yfail,q),\yal\bot)}$ if and only if  we have a balanced run $\mu$ from $s_0$ to $f_2$ in $\yA_2$ and 
$\eta=h_{\{a_2,b_1,\ytau\}}(\mu)$, that is, $\eta$ is obtained from $\mu$ by erasing from $\mu$ all occurrences of $a_2$, $b_1$ and $\ytau$.
Putting it together, we have: $\yI \yiocolike \yS$ does not hold if and only if $\ytrtf{(s_0,\bot)}{\mu}{\yA_2}{(f_2,\bot)}$.

We have reduced the $\yiocolike$ conformance test to the following problem: given two states $p$ and $q$ of a VPTS, find a string $\ysi$ that induces a balanced run from $p$ to $q$, or indicate that such a string does not exist.
Next, we describe how to solve this problem. 

The following construction was also inspired from ideas in~\cite{FinkelWW97,sipser}.
Let $\yP=\yvpts{Q}{Q_{in}}{L}{\yGa}{\rho}$ be a VPTS
given by an incidence vector $P$, indexed by $Q$, where $P[p]$ points to a list of all transitions $(p,x,Z,q)\in \rho$ where $p$ is the source state. 
We assume that $\yP$ has no pop transitions on the empty stack, that is, of the form $(p,x,\bot,q)$ where $x$ is a pop symbol.
Algorithm~\ref{alg1} shows the pseudo-code.

We will use two vectors of pointers, $In$ and $Out$, both indexed by $Q$, and a queue $V$.
The entry $In[p]$ will point to a list of triples $(s,a,Z)$ corresponding to a transitions $(s,a,Z,p)$ where $a\in L_c$, that is, $p$ as the target state of a push transition.
Likewise, and entry in $Out[p]$ will point to a list of triples $(a,Z,s)$ corresponding to transitions $(p,a,Z,s)$ where  $a\in L_r$, that is, $p$ is the source state of a pop transition.
We will also need a square matrix $R$, indexed by $Q\times Q$, where $R[p,q]$ will contain: (i) $[a,p,q,b]$, or (ii) $[p,c,q]$, or (iii) $[p,s,q]$, or (iv) $0$, where $a\in L_c$, $b\in L_r$, $c\in L_i\cup\{\ytau\}$, and $p$, $q$, $s$ are states.
The general idea is that, when $R[p,q]\neq 0$ then it will code for a string $\ysi$ that induces a balanced run from $p$ to $q$.

We now examine at Algorithm~\ref{alg1}.
Lines 1--10 initialize $V$, $In$, $Out$ and $R$.
At line 6, note that a transition $(p,a,\yait,q)$ immediately induces a balanced run from $p$ to $q$.
At line 10, we collect a simple balanced run from $p$ to $r$ that is induced by a push transition $(p,a,Z,q)$ and a pop transition $(q,b,Z,r)$.
In the main loop, lines 11--19, removing $(p,q)$ from $V$ indicates that we already have a string, say $\ysi$, that induces a balanced run from $p$ to $q$.
At lines 13--14, a string $\mu$ that induces balanced run from $s$ to $p$ is encoded in $R[s,p]$.
Hence, $\mu\ysi$ induces a balanced run from $s$ to $q$.
If we still do not have 
a balanced run from $s$ to $q$,  we can now encode the string $\mu\ysi$ in $R[s,q]$ and move the pair$(s,q)$ to $V$ so that it can be examined later.
Lines 15--16 do the same, but now we encode in $R[p,t]$ a string that induces a balanced run from $p$ to $t$.
At lines 17--19, we search for a push transition $(s,a,Z,p)$ and a matching pop transition $(q,b,Z,t)$ and, when successful, we encode $a\ysi b$ as a string that induces a balanced run from $s$ to $t$. 
The cycle repeats until saturation when $V=null$, or until we find a balanced run from $s_i$ to $s_e$, as requested.
Lines 22--25 list a simple recursive procedure that extracts the string encoded in $R[p,q]\neq 0$. 
\end{proof}

{
	\begin{algorithm}[!ht]
		\fontsize{9}{9}\selectfont
		\SetAlgoLined
		\DontPrintSemicolon
		\SetNoFillComment
		\KwData{\emph{Given}: a vector $P$, where $P[p]$ is a list of all $(p,a,Z,q)\in \rho$;  states $s_i$, $s_e$.}
		
		\KwData{\emph{Assumptions}: $\yP$ has no transition on the empty stack and $s_i\neq s_e$.}

  \KwData{\emph{Uses}: vectors $In$, $Out$ indexed by $Q$; matrix $R$ indexed by $Q\times Q$; queue $V$.}
		\KwResult{Check for a balanced run (BR) from $s_i$ to $s_e$; if there is one find a string that induces it.}
		\SetKwFunction{FAux}{getstring}
		\SetKwProg{Fn}{Function}{:}{}
		
		\tcp*[l]{Initialize $V$, $R$, $In$ and $Out$}
  $V=null$

  \lForAll{$p,q$ in $Q$}{\,\{$R[p,q]=0$\,\}}
		\lForAll{$p$ in $Q$}{  \{\,\,$In[p]=null$; $Out[p]=null$\,\,\}}

		\ForAll{$p$ in $Q$}{ 
				\ForAll{$(p,a,Z,q)$ in $P[p]$}{ 
						\lIf(\tcp*[f]{simp \& inter trans.}){$a\in L_i\cup\{\ytau\}$ and $p\neq q$ and $R[p,q]=0$} {\{$R[p,q]=[p,a,q]$; add $(p,q)$ to $V$\}}
						\lElseIf(\tcp*[f]{pop transition}) {$a\in L_r$ } {add $(a,Z,q)$ to $Out[p]$}
						\uElse(add \mbox{$(p,a,Z)$ to $In[q]$}\tcp*[f]{push transition}){
         \ForAll(\tcp*[f]{BR from push \& pop transitions}){$(q,b,W,r)\in P[q]$}{\lIf{$W=Z$ and $b\in L_r$ and $p\neq r$ and $R[p,r]=0$}{\,\{$R[p,r]=[a,q,q,b]$; add $(p,r)$ to $V$\,\}} }
      	}
   	}
}

  \tcp*[l]{\mbox{ }}		
		\tcp*[l]{Main loop}
		\While{$V\neq null$ and $R[s_i,s_e]=0$}{
			Remove $(p,q)$ from $V$ \tcp*[f]{We have a BR from $p$ to $q$}
		
			\ForAll(\tcp*[f]{new BR from s to q}){$s$ in $Q$}{\lIf{$R[s,p]\neq 0$ and $s\neq q$ and $R[s,q]=0$}{\{\,	$R[s,q]=[s,p,q]$; add $(s,q)$ to $V$\,\} } }
   \ForAll(\tcp*[f]{new BR from p to t}){$t$ in $Q$}{\lIf{$R[q,t]\neq 0$ and $p\neq t$ and $R[p,t]=0$}{\{\,	$R[p,t]=[p,q,t]$; add $(p,t)$ to $V$\,\}} }
			\ForAll{$(s,a,Z)$ in $In[p]$}{
  			\ForAll(\tcp*[f]{push $Z$ from $s$ to $p$, pop $Z$ from $q$ to $t$}){$(b,W,t)$ in $Out[q]$}{
     \lIf{$W=Z$ and $s\neq t$ and $R[s,t]=0$}{\,\{ $R[s,t]=[a,p,q,b]$; add $(s,t)$ to $V$ \}}
     }
			}
		} 

  \tcp*[l]{\mbox{ }}				
		\tcp*[l]{Issue the verdict}
		\lIf{$R[s_i,s_e]=0$} { Print \textsc{There are no balanced runs from $s_i$ to $s_e$}}
		\lElse{ Print \textsc{A string that induces a balanced run from $s_i$ to $s_e$ (between $\vert\,\,\vert)$: $\vert\,$} \FAux($s_i,s_e$)\,\textsc{$\,\vert$}}

  \tcp*[l]{\mbox{ }}				
		\Fn(\tcp*[f]{Print the string}){\FAux{$p,q$}}{ 
    \Switch{$R[p,q]$}{
     \lCase{$[p,a,q]$}{   Print ``$a$'' }
     \lCase{$[p,s,q]$}{  \,\{ 	\FAux($p,s$);  	\FAux($s,q$)\,\} }
     \lCase{$[a,p,q,b]$}{\leIf{$p\neq q$}{ \,\{  	Print ``$a$'';  \FAux($s,r$);	Print ``$b$''\,\}}{ \,\{  	Print ``$a$''; Print ``$b$''\,\}} }
		 }
		}
		\caption{Checking for balanced runs in a VPTS $\yP=\yvpts{Q}{Q_{in}}{L}{\yGa}{\rho}$}\label{alg1}
	\end{algorithm}
}

Now we argue for the correctness of Algorithm~\ref{alg1}. 
\begin{theo}\label{teo:alg1}
Let $\yP=\yvpts{Q}{Q_{in}}{L}{\yGa}{\rho}$ be a VPTS with no transitions  of the form $(p,a,\bot,q)$ in $\rho$.
Also let $s_i$, $s_e\in Q$, with $s_i\neq s_e$.
Suppose $\yP$, $s_i$, $s_e$ are input to Algorithm~\ref{alg1}.
Then it stops and  returns a string $\ysi\in L^\star$ such that  $\ytrtf{(s_i,\bot)}{\ysi}{\yP}{(s_e,\bot)}$, or
it indicates that such a string does not exist.
\end{theo}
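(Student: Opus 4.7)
The plan is to establish three properties of Algorithm~\ref{alg1}: termination, soundness, and completeness. Throughout I call a pair $(p,q) \in Q \times Q$ \emph{solved} once $R[p,q] \neq 0$, and I aim to maintain the invariant that each solved pair has $R[p,q]$ encoding a string $\sigma$ with $\ytrtf{(p,\bot)}{\sigma}{\yP}{(q,\bot)}$.

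Termination is immediate: each entry $R[p,q]$ is written at most once (every assignment is guarded by $R[p,q]=0$), and a pair is pushed onto $V$ in exactly those places where its $R$-entry undergoes the $0 \to \text{non-zero}$ transition. Hence at most $|Q|^2$ pairs ever enter $V$, and since each iteration of the main loop removes one pair, the loop terminates. The auxiliary procedure \texttt{getstring} terminates because each recursive call on $R[p,q]$ recurses on pairs whose $R$-entries were set strictly earlier.

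Soundness is proved by induction on the order in which pairs become solved. During initialization, the simple/internal rule records a one-step balanced run, and the push--pop rule records a two-step balanced run built from a push $(p,a,Z,p')$ paired with a matching pop $(p',b,Z,q)$. The hypothesis that $\yP$ has no pop transitions on $\bot$ is essential here: it guarantees that the pop really acts on the stack $Z\bot$ produced by the push and returns it to $\bot$. In the main loop, the three update rules combine already-solved balanced runs either by concatenation (producing a balanced run from $s$ to $q$ via $p$, or from $p$ to $t$ via $q$) or by bracketing with a matching push and pop (producing a balanced run from $s$ to $t$ via $p \to q$); all three preserve balancedness, and a routine induction on recursion depth confirms that \texttt{getstring} correctly reconstructs the encoded witness.

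Completeness is proved by strong induction on the length $n$ of a \emph{shortest} balanced run from $p$ to $q$. The key structural lemma states that any such shortest run falls into one of three cases: (i) a single simple or internal move; (ii) a push $(p,a,Z,p')$, a balanced run from $p'$ to some $q'$ of length $n-2$, and the matching pop $(q',b,Z,q)$; or (iii) a non-trivial concatenation of two balanced runs through an intermediate state $r$ distinct from both $p$ and $q$. The case analysis proceeds by inspecting the stack-height profile of the run: if the first move is simple or internal we split off that step and fall into case (iii); otherwise the first move is a push whose matching pop occurs at some well-defined position $j$, giving case (ii) when $j=n$ and case (iii) when $j<n$, with $r$ the state reached after that pop. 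Restricting to a shortest witness is what forces $r \neq p$ and $r \neq q$ in case (iii), since otherwise either the prefix or the suffix would be a strictly shorter balanced witness for $(p,q)$, contradicting minimality; this is precisely what justifies the guards $s \neq q$ and $p \neq t$ in the concatenation rules and ensures no reachable pair is missed. Each case matches one algorithmic rule: case (i) and the $n=2$ subcase of (ii) are caught at initialization, the general case (ii) is triggered by the push-wrap rule once $(p',q')$ is solved, and case (iii) is triggered by the concatenation rules once both subruns are solved. Since the algorithm saturates $V$ (or stops early when $R[s_i,s_e]$ becomes non-zero) and propagates every applicable rule to each newly solved pair, every pair admitting a balanced run is eventually solved; the main obstacle in the whole argument is precisely this structural decomposition lemma and its reduction to shortest witnesses.
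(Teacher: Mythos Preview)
Your proposal is correct and follows essentially the same strategy as the paper: termination via the at-most-once queue invariant, soundness by induction on the order in which $R$-entries are set (using the stack-lifting property afforded by the no-pop-on-$\bot$ hypothesis), and completeness via a shortest-witness decomposition into either a bracketed push/pop pair or a concatenation through an intermediate state where the stack is $\bot$. The paper phrases completeness as a minimal-counterexample contradiction rather than forward strong induction, and splits cases by whether some intermediate configuration has stack $\bot$ rather than by the type of the first move, but these are equivalent presentations of the same argument.
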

\begin{proof}
From the structure of the code, it is simple to show that if $R[s,t]\neq 0$ during an execution of Algorithm~\ref{alg1}, then there it codes for a string that induces a balanced run from $s$ to $t$.
When the algorithm terminates, the simple recursive call \texttt{getstring($s_i,s_e$)} extracts a witness string.

For the other direction, assume that the main loop terminates with $R[s,t]=0$ and we have a string $\ysi$ that induces a balanced run from $s$ to $t$, that is, $\ytrtf{(s,\bot,)}{\ysi}{}{(t,\bot)}$.
A simple argument will show that we can always reach a contradiction by considering two cases, namely, an intermediate configuration $(r,\bot)$ occurs, or does not occur, 
in the run $\ytrtf{(s,\bot,)}{\ysi}{}{(t,\bot)}$.

For details see the complete proof in Appendix~\ref{app:teo:alg1} at  page~\pageref{app:teo:alg1}.
\end{proof}

% !TeX spellcheck=en_US
Now we examine the complexity of our testing approach.
\begin{theo}\label{te:ioco-complexity}
Let $\yS\in\yiovp{L_I}{L_U}$ be a deterministic specification  with $n_s$ states and $m_s$ transitions, and let $\yI\in\yiovp{L_I}{L_U}$ be an IUT with $n_i$ states and $m_i$ transitions. 
Then there is a procedure, with  worst case asymptotic polynomial time complexity bounded by $\yoh{n_s^3n_i^3+n_s^2m_s^4m_i^2}$, that verifies whether $\yI \yiocolike \yS$. Moreover,  if  $\yI \yiocolike S$ does not hold, the procedure finds an input string that proves this condition.
\end{theo}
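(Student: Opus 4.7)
The plan is to reduce the $\yiocolike$ conformance check to the balanced-run problem handled by Algorithm~\ref{alg1}, and then carefully track the sizes of all intermediate artifacts to obtain the stated polynomial bound. Given $\yS$, first apply Lemma~\ref{lemm:ioco-complete} to build a deterministic fault model $\yT$ with $n_s+1$ states. I would then bound the number of transitions of $\yT$ by $\yoh{m_s + n_s\cdot(|L_U|\cdot |\yGa_\yS|)}$: the original $m_s$ transitions of $\yS$ plus, for each state and each output label, at most one new \yfail-transition per relevant stack symbol. Since $|L_U|$ and $|\yGa_\yS|$ are both $\yoh{m_s}$, this yields $\yoh{n_s\, m_s^2}$ transitions for $\yT$.

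Next, form the product VPTS $\yP=\yT\times\yI$ following Definition~\ref{def:productVPA}/Definition~\ref{def:cross}. Its state set has size $\yoh{n_s n_i}$ and, by the matching rule for product transitions, its transition count is bounded by $\yoh{(n_s m_s^2)\cdot m_i}$ (synchronised moves) plus $\yoh{n_s\, m_i}$ (the $\ytau$-moves inherited from $\yI$). Then apply the two transformations described in the proof of Theorem~\ref{lemm:vpts-test} (emptying the stack via $f_1,f_2$ and eliminating moves on an empty stack via $s_0$). These add only $\yoh{n_s n_i + |\yGa_\yP|)$ states and transitions and therefore do not change the asymptotic size. Call the resulting VPTS $\yA_2$; by Theorem~\ref{lemm:vpts-test}, $\yI\yiocolike\yS$ fails iff $\yA_2$ admits a balanced run from $s_0$ to $f_2$. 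Feed $\yA_2$ to Algorithm~\ref{alg1}; if it succeeds, recover $\mu$ via \texttt{getstring} and set $\eta=h_{\{a_2,b_1,\ytau\}}(\mu)=\ysi\ell$, yielding the desired witness with $\ell\in L_U$.

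The complexity accounting is the crux. Let $|Q|$ and $|\rho|$ denote the state and transition counts of $\yA_2$. The main loop of Algorithm~\ref{alg1} inserts each pair $(p,q)$ into $V$ at most once (it requires $R[p,q]=0$ and then immediately sets it), so it runs at most $|Q|^2$ times. Per iteration, the scans at lines 13--14 and 15--16 cost $\yoh{|Q|}$, contributing $\yoh{|Q|^3}$ overall. For the nested push/pop matching at lines 17--19 I would use an amortised bound: the total cost over the whole run is
\[
\sum_{(p,q)\in V} |In[p]|\cdot|Out[q]| \;\leq\; \Bigl(\sum_p |In[p]|\Bigr)\Bigl(\sum_q |Out[q]|\Bigr) \;\leq\; |\rho|^2,
\]
so Algorithm~\ref{alg1} runs in $\yoh{|Q|^3+|\rho|^2}$. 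Substituting $|Q|=\yoh{n_s n_i}$ and $|\rho|=\yoh{n_s m_s^2 m_i}$ gives $\yoh{n_s^3 n_i^3 + n_s^2 m_s^4 m_i^2}$ as required. All preceding constructions (fault model, product, two transformations) run within these same bounds.

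The main obstacle I anticipate is the amortised counting of the work in lines 17--19: a naive per-iteration bound of $|In[p]|\cdot|Out[q]|\le|\rho|^2$ would yield $|Q|^2|\rho|^2$ and miss the claimed polynomial. The key observation is that since each $(p,q)$ is enqueued at most once, the nested sum factors as a product of total in-degrees and out-degrees of push/pop transitions, giving $|\rho|^2$ rather than $|Q|^2|\rho|^2$. A secondary subtlety is tracking where the factor $m_s^2$ in the fault model comes from; this must be justified by remembering that each missing pop-output at a given state may require one \yfail-transition per stack symbol, so $|\yGa_\yS|$ legitimately appears as a multiplicative factor when counting $\yT$'s transitions.
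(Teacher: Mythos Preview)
Your proposal is correct and follows essentially the same approach as the paper: build the fault model $\yT$ via Lemma~\ref{lemm:ioco-complete}, bound its transitions by $\yoh{n_s m_s^2}$, take the product with $\yI$, apply the two reductions from Theorem~\ref{lemm:vpts-test}, and run Algorithm~\ref{alg1} with the $\yoh{|Q|^3+|\rho|^2}$ analysis using exactly the amortised factoring $\sum_{p,q}|In[p]|\cdot|Out[q]|\le|\rho|^2$ that you highlight. The size bookkeeping and the identification of the two potential pitfalls (the amortised bound for lines 17--19 and the $|\yGa_\yS|$ factor in $m_t$) match the paper's reasoning.
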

\begin{proof}
Write $L=L_I\cup L_U$, $\vert L\vert=\ell$ and let $g_s$, $g_i$ be the number of stack symbols in $\yS$ and $\yI$, respectively.
We now follow the argument in the proof of Theorem~\ref{lemm:vpts-test}.

First, the fault model $\yT$ is constructed in Lemma~\ref{lemm:ioco-complete}.
From Eqs. (\ref{eq:testpurpose1:Lc})--(\ref{eq:testpurpose1:Li}) we see that $n_t=n_s+1$ and $m_t\leq m_s+n_s g_s \ell$, where $n_t$ and $m_t$ are the number of states and transitions in $\yT$, respectively. 
It is easy to see that $\yT$ can be effectively constructed from $\yS$ by an algorithm with worst case time complexity bounded by $\yoh{m_s+n_s g_s \ell}$. 
We can safely assume $m_s\geq g_s$ and $m_s\geq \ell$.
Hence, $m_t$ and $n_t$ can be bounded by $\yoh{n_sm_s^2}$ and $\yoh{n_s}$, respectively, and the worst case time complexity to obtain $\yT$ can also be bounded by  $\yoh{n_sm_s^2}$.

Next, we construct the product $\yP=\yT\times\yI$.
Let $n_p$, $m_p$ and $g_p$ be the number of states, transitions and stack symbols in $\yP$, respectively.
Using Definition~\ref{def:productVPA}, and since $\yT$ has no $\ytau$-moves, we  see that $n_p=n_t n_i$, $m_p\leq m_t m_i+n_tm_i$, and $g_p=g_sg_i$.
As before, a simple algorithm with worst case time complexity bounded by  $\yoh{m_t m_i+n_tm_i}$ can construct $\yP$ given $\yT$ and $\yI$.
Hence, $n_p$, $m_p$ and $g_p$  can be bounded by $\yoh{n_sn_i}$, $\yoh{n_sm_s^2 m_i}$ and $\yoh{m_sm_i}$, respectively, and the worst case time complexity to obtain $\yP$ can  be bounded by  $\yoh{n_sm_s^2m_i}$.

Finally, Theorem~\ref{lemm:vpts-test} requires  that we construct a VPTS $\yA=\yvpts{S_a}{\{s_0\}}{L_a}{\yGa_a}{\rho_a}$ from $\yP$, and guarantees
that $\yI \yiocolike S$ does not hold if and only if we have $\ytrtf{(s_0,\bot)}{\mu}{\yA}{(f,\bot)}$ for some $\mu\in (L_a\cup\{\ytau\})^\star$, where $f$ is a specific state in $S_a$ with $s_0\neq f$.
Moreover, if such is the case, the final steps in the proof of Theorem~\ref{lemm:vpts-test} indicate how to obtain the desired string $\ysi$ that proves that $\yI \yiocolike S$ fails.
Let $n_a=\vert S_a\vert$ and $m_a=\vert \rho_a\vert$. 
From the proof of Theorem~\ref{lemm:vpts-test}, it is easy to get $n_a=n_p+3$ and $m_a=m_p+2n_i+g_p+2$.
Also, a simple procedure, running in worst case time complexity $\yoh{m_a+n_a}$, can be used construct the VPTS $\yA$ given the product $\yP$.
Then, $n_a$ can be bounded by $\yoh{n_s n_i}$, $m_a$ can be bounded by $\yoh{n_sm_s^2m_i}$, and 
the worst case time complexity to construct $\yA$ can be  bounded by $\yoh{n_sm_s^2m_i}$.  

The final step is to submit $\yA$ and the two states $s_0$, $f$ to Algorithm~\ref{alg1}.
Theorem~\ref{teo:alg1} guarantees that  Algorithm~\ref{alg1} correctly produces the desired string $\mu$ or indicates that no such string exists. 

We now derive an asymptotic upper bound on the number of steps required for Algorithm~\ref{alg1} in the worst case.
Clearly, the number of steps for lines 1--3 can be bounded by $\yoh{n_a^2}$.

For each state $p\in S_a$, let $s_p$, $t_q$ be the number of transitions in $\yA$ that have $p$ as a source and $q$ as a target state, respectively.
Thus, $\sum_{p\in Q} s_p\leq m_a$ and $\sum_{p\in Q}t_p\leq m_a$ for all $p\in S_a$.
The number of steps relative to lines 4--10 can be bound by 
$$\sum_{p\in S_a} \big(s_p\sum_{q\in S_a} t_q\big)\leq \sum_{p\in S_a} (s_pm_a)=m_a\sum_{p\in S_a} s_p \leq m_a^2.$$

From the proof of Theorem~\ref{teo:alg1}, we have that each pair $(p,q)$ can enter the queue $V$ at most once.
Hence, a state $p$ will appear in a pair $(p,q)$ removed from $V$ at most $n_a$ times. 
Since the number of steps at each execution of lines 13--14 can be bounded by $\yoh{n_a^2}$, the total effort spent for lines 13--14 is bound by  $\yoh{n_a^3}$.
Likewise for lines 15--16.
Now we bound the total number of steps for lines 17--19.
For each pair $(p,q)$ removed from $V$, the cost relative to lines 17--19 is $\yoh{t_ps_q}$.
Since each pair of  $(p,q)$ can enter $V$ at most once, the total cost is bound by 
$$\sum_{p,q\in Q}t_ps_q=\sum_{p\in Q}t_p\big(\sum_{q\in Q} s_q\big)\leq \sum_{p\in Q}(t_pm_a)=m_a\sum_{p\in Q}t_p\leq m_a^2.$$
Hence, the total number of steps to execute the main loop at lines 12--19 is bound by
$\yoh{n_a^3+m_a^2}$.      
We can now conclude that the total number of steps to execute  Algorithm~\ref{alg1} is bound by $\yoh{n_a^3+m_a^2}$.
Using the  previously computed values, we see that a worst case asymptotic time complexity  for Algorithm~\ref{alg1} is bounded by  $\yoh{n_s^3n_i^3+n_s^2m_s^4m_i^2}$.
Since this bound dominates all the preprocessing steps needed to construct $\yT$, $\yP$ and $\yA$, the overall worst case time complexity of the {\bf ioco-like} checking procedure is $\yoh{n_s^3n_i^3+n_s^2m_s^4m_i^2}$.
\end{proof}
In some practical situations we may assume that the number of stack and alphabet symbols can be assumed to be a constant, for specifications and IUTs models that will be considered. 
In these situations, the number of transitions of the fault model $\yT$ can be bounded by $\yoh{m_s}$, in the proof of Theorem~\ref{te:ioco-complexity}.
As a consequence, the worst case time complexity for Algorithm~\ref{alg1} can be seen to be bounded by $\yoh{n_s^3n_i^3+m_s^2m_i^2}$. 

% !TeX spellcheck=en_US
\subsection{A Drink Dispensing Machine}\label{subsec:drink}

Now we want to apply the results from previous sections in a real-world system, a drink dispensing machine. 
because the overall testing procedure is not yet fully implemented, the example has to be somewhat contrived.
In the following subsections we describe the drink machine and its specification IOVPTS model. 
Next we construct the fault model for the given specification and, in the sequel, we test some possible IUT models for {\bf ioco-like} conformance. 

\subsubsection{A Drink Dispensing Machine}

In a typical dispensing machine, a customer puts in  some money and then order the desire beverage. 
After choosing the beverage, the right amount of money will be charged and the machine will dispense the chosen drink. 
If the amount of money was in excess, the customer can ask for the balance.
If the amount of money already in the machine is not enough, the customer has to add more money or the customer can decide to get a full refund.
Usually, real machines accept several payment methods such as coins, cash and credit card. 
In order to ease our modeling we specify only unit coins as payment method in our models. 

The complete IOVPTS specification model $\yS\yS=\yiovpts{S_\yS}{S_{in}}{L_I}{L_U}{\yDe_\yS}{T_\yS}$ is depicted in Figure~\ref{drink}, where
\begin{figure}[htb]
\center

\begin{tikzpicture}[node distance=1cm, auto,scale=.6,inner sep=1pt]
%  \node[ initial by arrow, initial text={}, punkt] (s0) {$s_0$};

  \node[ initial by arrow, initial text={}, punkt] (s1) {$s_1$};
%  \node[punkt, inner sep=1pt,right=1.5cm of s0] (s1) {$s_1$};
  \node[punkt, dashed, inner sep=1pt,right=2.0cm of s1] (s1b) {$s_1$};
  \node[punkt, inner sep=1pt, above=2cm of s1b] (s2) {$s_2$};
%  \node[punkt, inner sep=1pt,right=2cm of s2] (s3) {$s_3$};  

  \node[punkt, inner sep=1pt,right=2.0cm of s1b] (s4) {$s_4$};  
   \node[punkt, inner sep=1pt,right=1.5cm of s4] (s5) {$s_5$};  

  \node[punkt, inner sep=1pt,below=2cm of s1b] (s6) {$s_6$};  
  \node[punkt, inner sep=1pt,right=1.5cm of s6] (s7) {$s_7$};  
  \node[punkt, inner sep=1pt,right=1.5cm of s7] (s8) {$s_8$};  

  \node[punkt, inner sep=1pt,below=2cm of s1] (s9) {$s_3$};  
      
%\path (s0)    edge [ pil, left=50] node[pil,above]{$ord/\ypush{F}$} (s1);

\path (s1)    edge [loop above=60] node   {$coi/\ypush{C}$} (s1);
\path (s1b)    edge [ pil, bend right=30]
                	node[pil,right]{$wtr$} (s2);
\path (s2)    edge [ pil, bend right=30]
                	node[pil,left]{$dwt/\ypop{C}$} (s1b);

\path (s2)    edge [loop right=60] node   {$coi/\ypush{C}$} (s2);

\path (s1b)    edge [ pil, left=50]
                	node[pil,below]{$tea$} (s4);

\path (s4)    edge [ pil, left=50]
                	node[pil,below]{$deb/\ypop{C}$} (s5);
\path (s4)    edge [loop below=60] node[pil, below right =-0.25]   {$coi/\ypush{C}$} (s4);

\path (s5)    edge [ pil, bend right=30]
                	node[pil,above]{$dte/\ypop{C}$} (s1b);
\path (s5)    edge [loop above=60] node   {$coi/\ypush{C}$} (s5);

\path (s1b)    edge [ pil, left=50]
                	node[pil,left]{$cof$} (s6);
\path (s6)    edge [ pil, left=50]
                	node[pil,below]{$deb/\ypop{C}$}(s7);
\path (s7)    edge [ pil, left=50]
                	node[pil,below]{$deb/\ypop{C}$} (s8);
\path (s8)    edge [ pil, left=50]
                	node[pil,below left]{$dco/\ypop{C}$} (s1b);

\path (s6)    edge [loop below=60] node {$coi/\ypush{C}$} (s6);
\path (s7)    edge [loop below=60] node {$coi/\ypush{C}$} (s7);
\path (s8)    edge [loop below=60] node {$coi/\ypush{C}$} (s8);

\path (s1)    edge [ pil, bend right=30]
node[pil, left]{$rch$} (s9);
\path (s9)    edge [loop below] node   {$crd/\ypop{C}$} (s9);
\path (s9)    edge [ pil, bend right=30]
node[pil,right]{$chg/\bot$} (s1);

\end{tikzpicture}
\caption{A drink dispensing machine $\yS$.}
\label{drink}
\end{figure}
$L_I=\{coi, rch, crd, wtr, tea, cof, deb\}$ is the set of input events and $L_U =\{chg, dwt, dte, dco\}$ is the set of output events. 
The alphabet $L=L_I \cup L_U $ is partitioned into the push events $L_c =\{coi\}$, pop events
$L_r=\{crd, chg, deb, dwt, dte, dco\}$ and simple events $L_i =\{rch, wtr, tea, cof\}$.
We have split state $s_1$ to make the figure clearer.
Recall Remark~\ref{rem:figure} for the notation.
The underlying VPTS is $\yA_\yS=\yvpts{S_\yS}{S_{in}}{L}{\yDe_\yS}{T_\yS}$. 

The system starts at state $s_1$ where the customer can either insert coins  into the machine --- event labeled $coi$ ---, request his change --- event labeled $rch$ ---, or ask for a drink, namely, label $wtr$ for water, label $tea$ for tea, or label $cof$ for coffee. 
Inserting coins is represented by the self-loop labeled  $coi/\ypush{C}$ at state $s_1$.
Note that the pushdown stack keeps track of the number of coins inserted into the machine. 
At state $s_1$ the customer can also request a refund, or the remaining  change, by activating the $rch$ event. 
The machine will then return the correct balance via the pop self-loop  $crd/\ypop{C}$ at state $s_3$. 

The customer orders a drink by pushing the bottom for water, tea, or coffee, moving  the machine to states $s_2$, $s_4$ and $s_6$, respectively. 
The price associated to water is one coin, for tea it is two coins and for coffee it is three coins. 
When the customer asks for water, the pop transition $dwt/\ypop{C}$ is taken, returning to state $s_1$, and the correct charge is applied subtracting one coin from the total amount.
The event $dwt$ indicates that water have been dispensed.
However, if not enough coins have been inserted, the transition from state $s_2$ back to state $s_1$ is blocked.
The customer can proceed by inserting more coins using the self-loop  $coi/\ypush{C}$ at state $s_2$.
For simplicity, once the customer has made a commit to order some of the beverages, the machine will wait until enough coins have been inserted to pay for the chosen drink.

The behavior when ordering tea or coffee follows similar paths. 

Figure~\ref{tp-drinkc} depicts the fault model $\yT$  that is constructed for the specification $\yS$ using Lemma~\ref{lemm:ioco-complete}. 
\begin{figure}[htb]
\centering

\begin{tikzpicture}[node distance=1cm, auto,scale=.6,inner sep=1pt]
%  \node[ initial by arrow, initial text={}, punkt] (s0) {$s_0$};

  \node[ initial by arrow, initial text={}, punkt] (s1) {$s_1$};
%  \node[punkt, inner sep=1pt,right=1.5cm of s0] (s1) {$s_1$};
  \node[punkt, dashed, inner sep=1pt,right=2.0cm of s1] (s1b) {$s_1$};
  \node[punkt, inner sep=1pt, above=2cm of s1b] (s2) {$s_2$};
%  \node[punkt, inner sep=1pt,right=2cm of s2] (s3) {$s_3$};  

  \node[punkt, inner sep=1pt,right=2.0cm of s1b] (s4) {$s_4$};  
   \node[punkt, inner sep=1pt,right=1.5cm of s4] (s5) {$s_5$};  

  \node[punkt, inner sep=1pt,below=2cm of s1b] (s6) {$s_6$};  
  \node[punkt, inner sep=1pt,right=1.5cm of s6] (s7) {$s_7$};  
  \node[punkt, inner sep=1pt,right=1.5cm of s7] (s8) {$s_8$};  

  \node[punkt, inner sep=1pt,below=2cm of s1] (s9) {$s_3$};  
      
%%%%%%%% estados de falha

  \node[punkt, densely dotted, inner sep=1pt,below left=3cm of s1] (fail1) {$\yfail$};
  \node[punkt, densely dotted, inner sep=1pt,above=2cm of s4] (fail2) {$\yfail$};
  \node[punkt, densely dotted, inner sep=1pt,below right =1.5cm of s5] (fail3) {$\yfail$};
  \node[punkt, densely dotted, inner sep=1pt,below=2cm of s6] (fail4) {$\yfail$};  
  
%%%%%%%% estados de falha
      
%\path (s0)    edge [ pil, left=50] node[pil,above]{$ord/\ypush{F}$} (s1);

\path (s1)    edge [loop above=60] node   {$coi/\ypush{C}$} (s1);
\path (s1b)    edge [ pil, bend right=30]
                	node[pil,right]{$wtr$} (s2);
\path (s2)    edge [ pil, bend right=30]
                	node[pil,left]{$dwt/\ypop{C}$} (s1b);

\path (s2)    edge [loop left=60] node   {$coi/\ypush{C}$} (s2);

\path (s1b)    edge [ pil, left=50]
                	node[pil,below]{$tea$} (s4);

\path (s4)    edge [ pil, left=50]
                	node[pil,below]{$deb/\ypop{C}$} (s5);
\path (s4)    edge [loop below=60] node[pil, below right =-0.25]   {$coi/\ypush{C}$} (s4);

\path (s5)    edge [ pil, bend right=30]
                	node[pil,above]{$dte/\ypop{C}$} (s1b);
\path (s5)    edge [loop above=60] node   {$coi/\ypush{C}$} (s5);

\path (s1b)    edge [ pil, left=50]
                	node[pil,left]{$cof$} (s6);
\path (s6)    edge [ pil, left=50]
                	node[pil,below]{$deb/\ypop{C}$}(s7);
\path (s7)    edge [ pil, left=50]
                	node[pil,below]{$deb/\ypop{C}$} (s8);
\path (s8)    edge [ pil, left=50]
                	node[pil,below left =-0.15]{$dco/\ypop{C}$} (s1b);

\path (s6)    edge [loop left=60] node {$coi/\ypush{C}$} (s6);
\path (s7)    edge [loop below=60] node[pil, below=-0.15] {$coi/\ypush{C}$} (s7);
\path (s8)    edge [loop below=60] node[pil, below=-0.15] {$coi/\ypush{C}$} (s8);

\path (s1)    edge [ pil, bend right=30]
node[pil, left]{$rch$} (s9);
\path (s9)    edge [loop below] node[pil, below=-0.15]   {$crd/\ypop{C}$} (s9);
\path (s9)    edge [ pil, bend right=30]
node[pil,right]{$chg/\bot$} (s1);

%%%%%%%% transicoes de falha

\path (s1)    edge [ pil, bend right=30]
node[pil,red,above left]{$D_a$} (fail1);
\path (s9)    edge [ pil, bend left=30]
node[pil,red,below]{$D_r$} (fail1);

\path (s2)    edge [ pil, right=30]
node[pil,red,above]{$D_w$} (fail2);
\path (s4)    edge [ pil, bend right=30]
node[pil,red,above right]{$D_a$} (fail2);

\path (s5)    edge [ pil, bend left=30]
node[pil,red,right]{$D_t$} (fail3);
\path (s8)    edge [ pil, bend right=30]
node[pil,red,below]{$D_c$} (fail3);

\path (s6)    edge [ pil, right=30]
node[pil,red,left]{$D_a$} (fail4);
\path (s7)    edge [ pil, right=30]
node[pil,red,below right]{$D_a$} (fail4);

%\path (fail1)    edge [loop below=60] node[pil,red] {$D_a$} (fail1);
%\path (fail2)    edge [loop right=60] node[pil,red] {$D_a$} (fail2);
%\path (fail3)    edge [loop left=60] node[pil,red] {$D_a$} (fail3);
%\path (fail4)    edge [loop left=60] node[pil,red] {$D_a$} (fail4);

\end{tikzpicture}

\vspace*{2ex}
{\fontsize{9}{9}
{\raggedright
\hspace*{15ex}
$D_a=\{dwt/\ypop{C}, dte/\ypop{C}, dco/\ypop{C}, chg/\ypop{C}, dwt/\bot, dte/\bot, dco/\bot, chg/\bot \}$\\
\hspace*{15ex}
$D_r=\{dwt/\ypop{C}, dte/\ypop{C}, dco/\ypop{C}, chg/\ypop{C}, dwt/\bot, dte/\bot, dco/\bot \}$\\
\hspace*{15ex}
$D_w=\{dte/\ypop{C}, dco/\ypop{C}, chg/\ypop{C}, dwt/\bot, dte/\bot, dco/\bot, chg/\bot \}$\\
\hspace*{15ex}
$D_t=\{dwt/\ypop{C}, dco/\ypop{C}, chg/\ypop{C}, dwt/\bot, dte/\bot, dco/\bot, chg/\bot \}$\\
\hspace*{15ex}
$D_c=\{dwt/\ypop{C}, dte/\ypop{C}, chg/\ypop{C}, dwt/\bot, dte/\bot, dco/\bot, chg/\bot \}$\\
}}

\caption{The fault model $\yT$ for $\yS$.}
\label{tp-drinkc}
\end{figure}
In the figure, We have split the $\yfail$ in order keep the figure unclutered. 
For the same reason, the sets $D_\ell$, for $\ell\in\{a,r,w,t,c\}$, collect the label of several transitions, as indicated in the figure caption.
Note that $L_U\cap L_c=\yemp=L_U\cap L_i$, so that the only symbols we need to check for transitions going into the $\yfail$ state are over the symbols of $L_U\cap L_r=\{chg,dwt,dte,dco\}$ together with all stack symbols $\{C,\bot\}$.
%, as specified in Eq. (\ref{eq:testpurpose1:Lr}).
For example, the set $D_a$ denotes transitions to the $\yfail$ state 
with pairs $(x,y)$ for all $x\in L_U\cap L_r$ and all $y\in \{C,\bot\}$.

\subsubsection{Testing Some Implementations to the Drink Dispensing Machine}

In this subsection, we examine some example implementations, and test
them for {\bf ioco-like} conformance against the specification $\yS$ depicted in Figure~\ref{drink}, and whose fault model $\yT$ is shown in Figure~\ref{tp-drinkc}.

\subsubsection*{Missing change}

Our first example is the IUT $\yI_a$, depicted in Figure~\ref{IUTa}. 
Notice that in this IUT we have $(s_3,chg/\ypop{C},s_1)$ instead of $(s_3,chg/\bot,s_1)$ as in the specification model. 
This error may cause the customer to receive less than the correct balance for the money inserted in machine.   
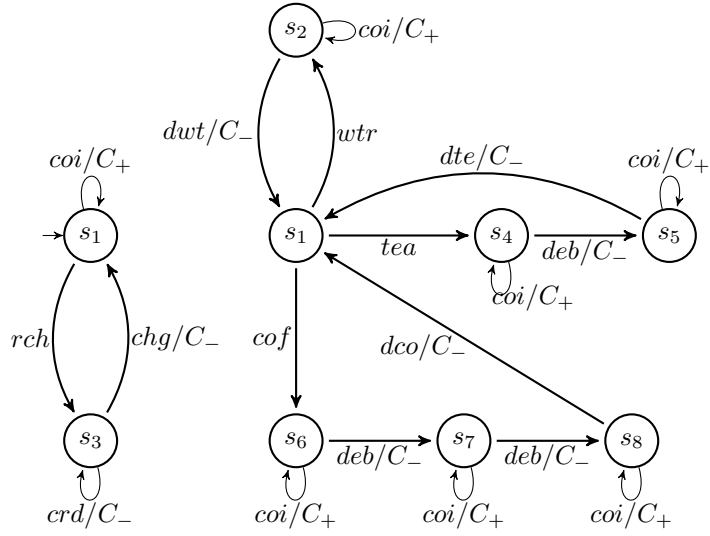
\begin{figure}[htb]
\center

\begin{tikzpicture}[node distance=1cm, auto,scale=.6,inner sep=1pt]
%  \node[ initial by arrow, initial text={}, punkt] (s0) {$s_0$};

  \node[ initial by arrow, initial text={}, punkt] (s1) {$s_1$};
%  \node[punkt, inner sep=1pt,right=1.5cm of s0] (s1) {$s_1$};
\node[punkt, inner sep=1pt,right=2.0cm of s1] (s1b) {$s_1$};
\node[punkt, inner sep=1pt, above=2cm of s1b] (s2) {$s_2$};
%  \node[punkt, inner sep=1pt,right=2cm of s2] (s3) {$s_3$};  

  \node[punkt, inner sep=1pt,right=2.0cm of s1b] (s4) {$s_4$};  
   \node[punkt, inner sep=1pt,right=1.5cm of s4] (s5) {$s_5$};  

  \node[punkt, inner sep=1pt,below=2cm of s1b] (s6) {$s_6$};  
  \node[punkt, inner sep=1pt,right=1.5cm of s6] (s7) {$s_7$};  
  \node[punkt, inner sep=1pt,right=1.5cm of s7] (s8) {$s_8$};  

  \node[punkt, inner sep=1pt,below=2cm of s1] (s9) {$s_3$};  
      
%\path (s0)    edge [ pil, left=50] node[pil,above]{$ord/\ypush{F}$} (s1);

\path (s1)    edge [loop above=60] node   {$coi/\ypush{C}$} (s1);
\path (s1b)    edge [ pil, bend right=30]
                	node[pil,right]{$wtr$} (s2);
\path (s2)    edge [ pil, bend right=30]
                	node[pil,left]{$dwt/\ypop{C}$} (s1b);

\path (s2)    edge [loop right=60] node   {$coi/\ypush{C}$} (s2);

\path (s1b)    edge [ pil, left=50]
                	node[pil,below]{$tea$} (s4);

\path (s4)    edge [ pil, left=50]
                	node[pil,below]{$deb/\ypop{C}$} (s5);
\path (s4)    edge [loop below=60] node[pil, below right =-0.25]   {$coi/\ypush{C}$} (s4);

\path (s5)    edge [ pil, bend right=30]
                	node[pil,above]{$dte/\ypop{C}$} (s1b);
\path (s5)    edge [loop above=60] node   {$coi/\ypush{C}$} (s5);

\path (s1b)    edge [ pil, left=50]
                	node[pil,left]{$cof$} (s6);
\path (s6)    edge [ pil, left=50]
                	node[pil,below]{$deb/\ypop{C}$}(s7);
\path (s7)    edge [ pil, left=50]
                	node[pil,below]{$deb/\ypop{C}$} (s8);
\path (s8)    edge [ pil, left=50]
                	node[pil,below left=-0.15]{$dco/\ypop{C}$} (s1b);

\path (s6)    edge [loop below=60] node {$coi/\ypush{C}$} (s6);
\path (s7)    edge [loop below=60] node {$coi/\ypush{C}$} (s7);
\path (s8)    edge [loop below=60] node {$coi/\ypush{C}$} (s8);

\path (s1)    edge [ pil, bend right=30]
node[pil, left]{$rch$} (s9);
\path (s9)    edge [loop below] node   {$crd/\ypop{C}$} (s9);
\path (s9)    edge [ pil, bend right=30]
node[pil,right]{$chg/\ypop{C}$} (s1);

\end{tikzpicture}
\caption{An IUT $\yI_a$ giving wrong change.}
\label{IUTa}
\end{figure}

Consider the sequence of events $\ysi=coi\, coi\, coi\, rch\, crd\, crd\, chg$.
Note that in the IUT $\yI_a$, according to this sequence of events, the customer has inserted three coins and then requested the remaining change, but only two coins get credited.
However, on the specification $\yS$, all three coins would be credited.
It is easy to see that $\ysi$ leads $\yT$ to the $\yfail$ state, while $\yI_a$ reaches state $s_1$.
According to Lemma~\ref{lemm:ioco-complete}, $\yI_a \yiocolike \yS$ does not hold.

More formally, write $\ysi=\mu\, chg$, where $\mu=coi\, coi\, coi\, rch\, crd\, crd$.
Then, from the figures, it follows easily that $\ytrt{(s_1,\bot)}{\mu}{}{(s_3,C\bot)}\ytrt{}{chg}{}{}$ in $\yI_a$. Since $chg\in L_U$, we have $chg\in \yout((s_1,\bot) \yafter \ysi)$ in $\yI_a$.
However, in $\yS$ we also have $\ytrt{(s_1,\bot)}{\mu}{}{(s_3,C\bot)}$, but then it is not the case that $\ytrt{(s_3,C\bot)}{chg}{}{}$ in $\yS$, so that 
 $chg\not\in \yout((s_1,\bot) \yafter \ysi)$ in $\yS$.
From Definition~\ref{def:out-after}, it follows that $\yI_a \yiocolike \yS$ does not hold.
Observe that, in the product $\yT\times \yI_a$, we get the corresponding run
$$ \ytrt{((s_1,s_1),\bot)}{\mu}{}{((s_3,s_3),C\bot)}\ytrt{}{chg}{}{((\yfail,s_1),\bot)},$$
as precognized by Lemma~\ref{lemm:ioco-complete}.

\subsubsection*{Problems with the prices for drinks}

Now we test the IUT $\yI_b$, depicted in Figure~\ref{IUTb}, where coffee is wrongly charged at two coins only. 
Here we notice that the state $s_8$ is missing in the IUT, and so 
we have a self-loop transition $(s_7,deb/\ypop{C},s_7)$ instead of $(s_7,deb/\ypop{C},s_8)$ as in the specification model. 
In this case a fault can occur and the machine may charge less for a cup of coffee.  
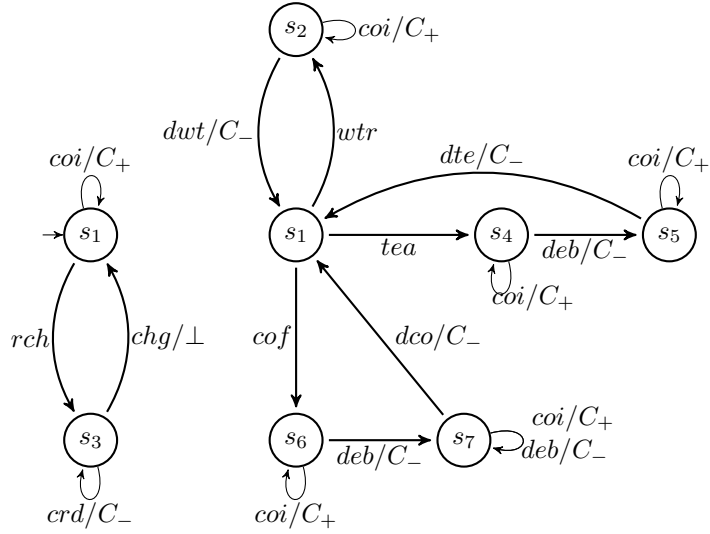
\begin{figure}[htb]
\center

\begin{tikzpicture}[node distance=1cm, auto,scale=.6,inner sep=1pt]
%  \node[ initial by arrow, initial text={}, punkt] (s0) {$s_0$};

  \node[ initial by arrow, initial text={}, punkt] (s1) {$s_1$};
%  \node[punkt, inner sep=1pt,right=1.5cm of s0] (s1) {$s_1$};
  \node[punkt, inner sep=1pt,right=2.0cm of s1] (s1b) {$s_1$};
  \node[punkt, inner sep=1pt, above=2cm of s1b] (s2) {$s_2$};
%  \node[punkt, inner sep=1pt,right=2cm of s2] (s3) {$s_3$};  

  \node[punkt, inner sep=1pt,right=2.0cm of s1b] (s4) {$s_4$};  
   \node[punkt, inner sep=1pt,right=1.5cm of s4] (s5) {$s_5$};  

  \node[punkt, inner sep=1pt,below=2cm of s1b] (s6) {$s_6$};  
  \node[punkt, inner sep=1pt,right=1.5cm of s6] (s7) {$s_7$};  
%  \node[punkt, inner sep=1pt,right=1.5cm of s7] (s8) {$s_8$};  

  \node[punkt, inner sep=1pt,below=2cm of s1] (s9) {$s_3$};  
      
%\path (s0)    edge [ pil, left=50] node[pil,above]{$ord/\ypush{F}$} (s1);

\path (s1)    edge [loop above=60] node   {$coi/\ypush{C}$} (s1);
\path (s1b)    edge [ pil, bend right=30]
                	node[pil,right]{$wtr$} (s2);
\path (s2)    edge [ pil, bend right=30]
                	node[pil,left]{$dwt/\ypop{C}$} (s1b);

\path (s2)    edge [loop right=60] node   {$coi/\ypush{C}$} (s2);

\path (s1b)    edge [ pil, left=50]
                	node[pil,below]{$tea$} (s4);

\path (s4)    edge [ pil, left=50]
                	node[pil,below]{$deb/\ypop{C}$} (s5);
\path (s4)    edge [loop below=60] node[pil, below right =-0.25]   {$coi/\ypush{C}$} (s4);

\path (s5)    edge [ pil, bend right=30]
                	node[pil,above]{$dte/\ypop{C}$} (s1b);
\path (s5)    edge [loop above=60] node   {$coi/\ypush{C}$} (s5);

\path (s1b)    edge [ pil, left=50]
                	node[pil,left]{$cof$} (s6);
\path (s6)    edge [ pil, left=50]
                	node[pil,below]{$deb/\ypop{C}$}(s7);
%\path (s7)    edge [ pil, left=50]
%                	node[pil,below]{$deb/\ypop{C}$} (s8);
%\path (s8)    edge [ pil, left=50]
%                	node[pil,below left]{$dco/\ypop{C}$} (s1b);
\path (s7)    edge [ pil, right=20]
node[pil,  right = 0.15]{$dco/\ypop{C}$} (s1b);

\path (s6)    edge [loop below=60] node {$coi/\ypush{C}$} (s6);
\path (s7)    edge [loop right=60] node[pil, above right = 0.1] 
{$coi/\ypush{C}$} (s7);
\path (s7)    edge [loop right=60] node[pil, below right= -0.1] 
{$deb/\ypop{C}$} (s7);
%\path (s8)    edge [loop below=60] node {$coi/\ypush{C}$} (s8);

\path (s1)    edge [ pil, bend right=30]
node[pil, left]{$rch$} (s9);
\path (s9)    edge [loop below] node   {$crd/\ypop{C}$} (s9);
\path (s9)    edge [ pil, bend right=30]
node[pil,right]{$chg/\bot$} (s1);

\end{tikzpicture}
\caption{An IUT $\yI_b$ charges wrong. }
\label{IUTb}
\end{figure}\\

%Lets consider the sequence  $\ysi=coi\, coi\, coi\, cof\, deb\, deb\, dco$.
%In $\yI_b$, the customer has inserted the correct amount of three coins  and the machine  correctly delivers a cup of coffee. 
%In a second possibility 
Consider the sequence of events $\eta=\mu\,  dco$, where $\mu=coi\, coi\, cof\, deb$. 
The customer has inserted only two coins and ordered coffee, and still the machine may deliver a cup of coffee. 
It is easy to see that $\eta$ leads $\yT$ to the $\yfail$ state, while $\yI_b$ reaches state $s_1$.
Again, from Lemma~\ref{lemm:ioco-complete} we obtain that $\yI_b \yiocolike \yS$ does not hold.

As in the previous example, we have $dco\in L_U$ and $dco\in \yout((s_1,\bot) \yafter \mu)$ in $\yI_b$, while $dco \not\in \yout((s_1,\bot) \yafter \mu)$ in $\yS$.
From Definition~\ref{def:out-after}, we can declare that $\yI_b \yiocolike \yS$ does not hold.
In the product $\yT\times \yI_b$, we get the corresponding run
$$ \ytrt{((s_1,s_1),\bot)}{\eta}{}{}{((\yfail,s_1),\bot)}.$$

Notice that, in this same IUT, coffee could be charged more than three coins, \emph{i.e.}, the machine may subtract more than  three coins before dispensing a cup of coffee, when the user has inserted more than three coins before asking for the cup of coffee. 

\subsubsection*{Dispensing an unwanted drink}

In the IUT $\yI_c$ depicted in Figure~\ref{IUTc}, the machine can dispense coffee even when the customer has ordered tea and, moreover, the customer can get a cup of coffee for the price of only two coins. 
In  $\yI_c$ we have $(s_5,coi/\ypush{C},s_8)$ instead of $(s_5,coi/\ypush{C},s_5)$ as in the specification model. 
\begin{figure}[htb]
\center

\begin{tikzpicture}[node distance=1cm, auto,scale=.6,inner sep=1pt]
%  \node[ initial by arrow, initial text={}, punkt] (s0) {$s_0$};

  \node[ initial by arrow, initial text={}, punkt] (s1) {$s_1$};
%  \node[punkt, inner sep=1pt,right=1.5cm of s0] (s1) {$s_1$};
  \node[punkt, inner sep=1pt,right=2.0cm of s1] (s1b) {$s_1$};
  \node[punkt, inner sep=1pt, above=2cm of s1b] (s2) {$s_2$};
%  \node[punkt, inner sep=1pt,right=2cm of s2] (s3) {$s_3$};  

  \node[punkt, inner sep=1pt,right=2.0cm of s1b] (s4) {$s_4$};  
   \node[punkt, inner sep=1pt,right=1.5cm of s4] (s5) {$s_5$};  

  \node[punkt, inner sep=1pt,below=2cm of s1b] (s6) {$s_6$};  
  \node[punkt, inner sep=1pt,right=1.5cm of s6] (s7) {$s_7$};  
  \node[punkt, inner sep=1pt,right=1.5cm of s7] (s8) {$s_8$};  

  \node[punkt, inner sep=1pt,below=2cm of s1] (s9) {$s_3$};  
      
%\path (s0)    edge [ pil, left=50] node[pil,above]{$ord/\ypush{F}$} (s1);

\path (s1)    edge [loop above=60] node   {$coi/\ypush{C}$} (s1);
\path (s1b)    edge [ pil, bend right=30]
                	node[pil,right]{$wtr$} (s2);
\path (s2)    edge [ pil, bend right=30]
                	node[pil,left]{$dwt/\ypop{C}$} (s1b);

\path (s2)    edge [loop right=60] node   {$coi/\ypush{C}$} (s2);

\path (s1b)    edge [ pil, left=50]
                	node[pil,below]{$tea$} (s4);

\path (s4)    edge [ pil, left=50]
                	node[pil,below]{$deb/\ypop{C}$} (s5);
\path (s4)    edge [loop below=60] node[pil, below right =-0.25]   {$coi/\ypush{C}$} (s4);

\path (s5)    edge [ pil, bend right=30]
                	node[pil,above]{$dte/\ypop{C}$} (s1b);
%\path (s5)    edge [loop above=60] node   {$coi/\ypush{C}$} (s5);
\path (s5)    edge [ pil, bend left=30]
node[pil, above left]{$coi/\ypush{C}$} (s8);

\path (s1b)    edge [ pil, left=50]
                	node[pil,left]{$cof$} (s6);
\path (s6)    edge [ pil, left=50]
                	node[pil,below]{$deb/\ypop{C}$}(s7);
\path (s7)    edge [ pil, left=50]
                	node[pil,below]{$deb/\ypop{C}$} (s8);
\path (s8)    edge [ pil, left=50]
                	node[pil,below left =-0.15]{$dco/\ypop{C}$} (s1b);

\path (s6)    edge [loop below=60] node {$coi/\ypush{C}$} (s6);
\path (s7)    edge [loop below=60] node {$coi/\ypush{C}$} (s7);
\path (s8)    edge [loop below=60] node {$coi/\ypush{C}$} (s8);

\path (s1)    edge [ pil, bend right=30]
node[pil, left]{$rch$} (s9);
\path (s9)    edge [loop below] node   {$crd/\ypop{C}$} (s9);
\path (s9)    edge [ pil, bend right=30]
node[pil,right]{$chg/\bot$} (s1);

\end{tikzpicture}
\caption{An IUT $\yI_c$ that delivers the wrong drink.}
\label{IUTc}
\end{figure}
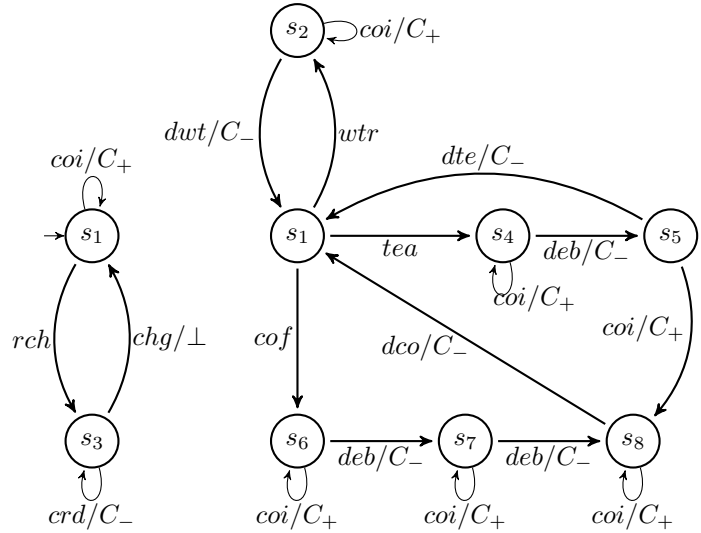
Consider the sequence $\ysi=\mu\, dco$, where $\mu=coi\, coi\, tea\, deb\, coi$.
The customer has inserted two coins and ordered tea. 
After the machine charges one coin, the customer decides to insert one more coin.
Then machine dispenses  a cup of coffee instead, and charges only two coins for it. 
We can see in the figure that $\ysi$ leads $\yT$ to the $\yfail$ state, while $\yI_c$ reaches state $s_1$.
According to Lemma~\ref{lemm:ioco-complete} then $\yI_c \yiocolike \yS$ does not hold.

Again 
we have $dco\in \yout((s_1,\bot) \yafter \mu)$ in $\yI_c$, and $dco\in L_U$. 
On the other hand, 
 we do not have  $\ytrt{(s_5,C\bot)}{dco}{}{}$ in $\yS$, and so $dco \not\in \yout((s_1,\bot) \yafter \mu)$ in $\yS$.
Using Definition~\ref{def:out-after} we see that $\yI_c \yiocolike \yS$ does not hold.
In the product $\yT\times \yI_b$ we observe the corresponding run
$$ \ytrt{((s_1,s_1),\bot)}{\ysi}{}{}{((\yfail,s_1),\bot)}.$$

\subsubsection*{An faulty implementation that conforms to the specification}

Lastly, we turn to IUT $\yI_d$, as depicted in Figure~\ref{IUTd}.
Note that $\yI_d$ is not isomorphic to the specification model, since it has the extra self-loop $(s_5,deb/\ypop{C},s_5)$.
This allows the machine to subtract any number of extra coins after the customer has orderd a drink, given that more than enough coins have been inserted. 
\begin{figure}[htb]
\center

\begin{tikzpicture}[node distance=1cm, auto,scale=.6,inner sep=1pt]
%  \node[ initial by arrow, initial text={}, punkt] (s0) {$s_0$};

  \node[ initial by arrow, initial text={}, punkt] (s1) {$s_1$};
%  \node[punkt, inner sep=1pt,right=1.5cm of s0] (s1) {$s_1$};
  \node[punkt, inner sep=1pt,right=2.0cm of s1] (s1b) {$s_1$};
  \node[punkt, inner sep=1pt, above=2cm of s1b] (s2) {$s_2$};
%  \node[punkt, inner sep=1pt,right=2cm of s2] (s3) {$s_3$};  

  \node[punkt, inner sep=1pt,right=2.0cm of s1b] (s4) {$s_4$};  
   \node[punkt, inner sep=1pt,right=1.5cm of s4] (s5) {$s_5$};  

  \node[punkt, inner sep=1pt,below=2cm of s1b] (s6) {$s_6$};  
  \node[punkt, inner sep=1pt,right=1.5cm of s6] (s7) {$s_7$};  
  \node[punkt, inner sep=1pt,right=1.5cm of s7] (s8) {$s_8$};  

  \node[punkt, inner sep=1pt,below=2cm of s1] (s9) {$s_3$};  
      
%\path (s0)    edge [ pil, left=50] node[pil,above]{$ord/\ypush{F}$} (s1);

\path (s1)    edge [loop above=60] node   {$coi/\ypush{C}$} (s1);
\path (s1b)    edge [ pil, bend right=30]
                	node[pil,right]{$wtr$} (s2);
\path (s2)    edge [ pil, bend right=30]
                	node[pil,left]{$dwt/\ypop{C}$} (s1b);
%\path (s2)    edge [loop left=60] node   {$dwt/\ypop{C}$} (s2);

\path (s2)    edge [loop right=60] node   {$coi/\ypush{C}$} (s2);

\path (s1b)    edge [ pil, left=50]
                	node[pil,below]{$tea$} (s4);

\path (s4)    edge [ pil, left=50]
                	node[pil,below]{$deb/\ypop{C}$} (s5);
\path (s4)    edge [loop below=60] node[pil, below right =-0.25]   {$coi/\ypush{C}$} (s4);

\path (s5)    edge [ pil, bend right=30]
                	node[pil,above]{$dte/\ypop{C}$} (s1b);
\path (s5)    edge [loop above=60] node   {$coi/\ypush{C}$} (s5);
\path (s5)    edge [loop above=60] node[pil, above =0.6]     {$deb/\ypop{C}$} (s5);

\path (s1b)    edge [ pil, left=50]
                	node[pil,left]{$cof$} (s6);
\path (s6)    edge [ pil, left=50]
                	node[pil,below]{$deb/\ypop{C}$}(s7);
\path (s7)    edge [ pil, left=50]
                	node[pil,below]{$deb/\ypop{C}$} (s8);
\path (s8)    edge [ pil, left=50]
                	node[pil,below left]{$dco/\ypop{C}$} (s1b);

\path (s6)    edge [loop below=60] node {$coi/\ypush{C}$} (s6);
\path (s7)    edge [loop below=60] node {$coi/\ypush{C}$} (s7);
\path (s8)    edge [loop below=60] node {$coi/\ypush{C}$} (s8);

\path (s1)    edge [ pil, bend right=30]
node[pil, left]{$rch$} (s9);
\path (s9)    edge [loop below] node   {$crd/\ypop{C}$} (s9);
\path (s9)    edge [ pil, bend right=30]
node[pil,right]{$chg/\bot$} (s1);

\end{tikzpicture}
\caption{An IUT $\yI_d \yiocolike \yS$.}
\label{IUTd}
\end{figure}
Consider the sequence of events $coi\, coi\, coi\, tea\, deb\, deb\,dte\,rch\,chg$, signaling  that
the customer initially has inserted three coins, then decided to order tea.
According to the IUT $\yI_d$, however, when requesting the remaining change the customer get no coins back, and the net effect was that the customer was charged three coins for a cup of tea.  
Even in face of that, we show below that $\yI_d$  does conform to the specification $\yS$.

Recall the specification $\yS$ shown in Figure~\ref{drink}.
We note that $\yI_d$ differs from $\yS$ only by the extra transition at state $s_5$.
Further, for each symbol $x \in L_U$ and state $s_i$, there is at most one transition out of $s_i$ on $x$, both  on $\yI_d$ and on $\yS$.
Reasoning  more formally, 
it is easy to see that for any sequence of events $\ysi$ a simple induction on $\vert\ysi\vert\geq 0$ shows that if we have  $(s_i,\yal\bot) \in (s_1,\bot) \yafter \ysi$ in $\yS$  and $(s_j,\ybe\bot) \in (s_1,\bot) \yafter \ysi$ in $\yI_d$, then $i=j$ and $\yal=\ybe$. 
According to Definition~\ref{def:out-after}, if $\yI_d \yiocolike \yS$ did not hold, we would need $\ell\in L_U$ and a sequence $\ysi$ such that $\ell \in \yout((s_i,\yal\bot))$ where $(s_i,\yal\bot) \in (s_1,\bot) \yafter \ysi$ in $\yI_d$ and $\ell \not \in \yout( (s_j,\ybe\bot))$ where $(s_j,\ybe\bot) \in (s_1,\bot) \yafter \ysi$ in $\yS$.
Since the transitions of $\yS$ and $\yI_d$ are identical, except at state $s_5$, we conclude that $s_i=s_j=s_5 $ and $\ell=deb$. 
Because $deb\not\in L_U$ we reached a contradiction, and must conclude that $\yI_d \yiocolike \yS$ does hold.

We note that $deb\not\in L_U$ was crucial to the preceding argument. 
In fact, if we move $deb$ from $L_I$ to $L_U$, then we clearly would get that $\yI_d \yiocolike \yS$ fails.
This is because the nature of the {\bf ioco-like} relation checks only that the IUT  \emph{may not} emit any \emph{output symbol} that  was not enabled in the specification, after they both experience any sequence of events that runs on the specification.
On the other hand, the definition of the {\bf ioco-like} relation says nothing about \emph{input symbols} that may be emitted by the IUT and the specification after a common run in both models.

% !TeX spellcheck=en_US
\section{Concluding Remarks}\label{sec:conclusion}

Testing conformance of reactive systems implementations is usually a hard task, due to the intrinsic nature of reactive systems, which allows for the asynchronous interactions of messages with an external environment. 
In such situations, rigorous approaches capable of generating test suites for these models, are indicated.
Several methods have been developed in order to generate test suites for reactive systems that can be modeled by formal systems which  have access only to a finite memory, or a finite number of states.
%Conformance checking between an IUT an its corresponding specification is also deemed important to formally establish the correctness of IUTs. 
%Some conformance relations, such as {\bf ioco} framework and based-language conformance, have been proposed to check conformance of regular reactive systems, where only a finite memory is available. 
%That is, this standard class of reactive systems are usually modeled by LTS and some its variant, \emph{e.g.} IOLTS models.  

Here we study a more powerful class of reactive systems, those can make use of a potentially infinite memory, the form of a pushdown stack. 
%In order to capture reactive behaviors with an auxiliary memory, a stack, we defined the variant IOVPTS. 
%So we could treat the problem of checking conformance for asynchronous systems with stack memory using the formalism of IOVPTS models. 
We defined a corresponding more general conformance relation for these systems.
We also showed the existence, and how to generate, test suites that can be used to verify whether this more general conformance relation holds between a given specification and a proposed implementation.
Further, we argued that these test suites can be generated in polynomial time, and that they are also complete, that is, 
a given implementation does conform to a given specification if and only if the test suite verdict indicates that this is the case.  

For such reactive systems with a pushdown stack, we also defined a conformance relation, called {\bf ioco-like}, that embodies the same idea as the classic {\bf ioco} conformance relation that is used for testing simpler systems, namely, those cannot make use of a potentially infinite memory.
Essentially, these conformance relations say that the implementation can only emit an output signal that is already present in the  specification, after a common exchange of symbols has taken place in both models.
We showed that  {\bf ioco-like} conformance is a special case of the more general conformance relation we treat in this work.
We also developed, and proved correct, a polynomial time algorithm that can be used to test for {\bf ioco-like} conformance between an implementation and a given specification. 
In common practical situations the algorithm exhibits an asymptotic worst case time complexity that can be bounded by $\yoh{n^3+m^3}$ where $n$ and $m$ 
are proportional to the cube of the product of the number of states and transitions, respectively, present in the implementation and in the specification.  
%Although we already had an algorithm to test conformance for IOVPTS models the process run over VPLs, \emph{i.e.}, infinite sets of test suites. 
%Hence we also derived a method to test IOVPTS implementations in a white-box architecture using  {\bf ioco-like}-conformance. 
%In this scenario, given an IUT we can check conformance against a given specification IOVPTS model in practical situations. 
%That is, again we could design a polynomial time algorithm to check 
%{\bf ioco-like} conformance  in a white-box testing scenario. 
%We detailed all complexity issues related to our method when verifying conformance for reactive systems with stack memory. 
%
%In a future direction we aim to extend this framework to test reactive systems with auxiliary memory modeled by VPTS models in a black-box setting. 

\bibliographystyle{plain}
\bibliography{main-arxiv2021-iovpts}

% !TeX spellcheck=en_US

\appendix 

\section{Constructions and Proofs}\label{sec:apendice}

In this appendix we present complete constructions and detailed proofs. 

% !TeX spellcheck=en_US

\subsection{Proof of Proposition~\ref{prop:no-eps}}\label{app:prop:no-eps}
\begin{proof}
	Let $\yA=\yvpaA$.
	Define the mapping $\yfun{E}{S}{\ypow{S}}$ by
	$$E(s)=\{p\yst \ypdatrtf{(s,\yeps,\bot)}{n}{\yA}{(p,\yeps,\bot)}, n\geq 0\},$$ 
	that is, $E(s)$ is the set of all states that can be reached from $s$ through $\yeps$-moves.
	A simple inductive algorithm guarantees that we can effectively compute $E$.
	
	Consider the VPA $\yB=\yvpa{S}{Q_{in}}{A}{\yGa}{\mu}{F}$ where 
	$Q_{in}=\bigcup\limits_{s\in S_{in}} E(s)$, and 
	the set of transitions $\mu$ is obtained from $\rho$ by removing all $\yeps$-transitions from $\rho$, and then adding to $\mu$ all transitions $(r,a,Z,p)$ where $a\neq \yeps$, $(s,a,Z,q)$ is a transition in $\rho$ and, by means of $\yeps$-transitions alone, we can reach $s$ from $r$ and  $p$ from $q$, that is,
	$$\mu=\left[\strut\rho-\{(s,\yeps,\yait,p)\yst s,p\in S\}\right]\bigcup \left\{\strut(r,a,Z,p)\yst (s,a,Z,q)\in \rho, a\neq \yeps, s\in E(r), p\in E(q) \right\}.$$
	Clearly, $\yB$ has no $\yeps$-moves, and has the same number of states as $\yA$.
	
	The next two claims will be used to show that $\yA$ and $\yB$ are equivalent VPAs. 
	\begin{description}
		\item[\sc Claim 1:] If $\ysi\neq \yeps$ and $\ypdatrtf{(s,\ysi,\yal)}{\star}{\yA}{(t,\yeps,\ybe)}$, then 
		$\ypdatrtf{(s,\ysi,\yal)}{\star}{\yB}{(t,\yeps,\ybe)}$.\\
		{\sc Proof:} Let $\ypdatrtf{(s,\ysi,\yal)}{n}{\yA}{(t,\yeps,\ybe)}$, $n\geq 1$ and $\ysi=a\yde$ where $a\in A$. 
		When $n=1$, we have $\yde=\yeps$ and there is a transition $(s,a,Z,t)$ in $\rho$. 
		Since $s\in E(s)$ and $t\in E(t)$, the construction gives $(s,a,Z,t)$ in $\mu$ and the result follows immediately. 
		When $n>1$ we may write
		$$(s,\ysi,\yal)=\ypdatrtf{(s,a\yde,\yal)}{k}{\yA}{(r,a\yde,\yal)}\ypdatrtf{}{}{\yA}{(q,\yde,\yga)}\ypdatrtf{}{m}{\yA}{(t,\yeps,\ybe)},$$
		where $k\geq 0$ and $k+1+m=n$, that is, the first $k$ moves of $\yA$ are $\yeps$-moves, while the next one is a move over $a$, through a transition $(r,a,Z,q)\in\rho$, for some $Z\in\yGa_\bot$. 
		Then, $r\in E(s)$. 
		If $\yde=\yeps$, the last $m$ moves of $\yA$ are also $\yeps$-moves, and we get $\yga=\ybe$ and $t\in E(q)$.
		In this case, the construction gives $(s,a,Z,t)\in\mu$.
		Since $\ysi=a\yde=a$ we have  
		$(s,\ysi,\yal)=\ypdatrtf{(s,a,\yal)}{}{\yB}{(t,\yeps,\yga)}=(t,\yeps,\ybe)$, as desired.
		Now let $\yde\neq \yeps$. Since $q\in E(q)$ and we already have $r\in E(s)$, the construction gives $(s,a,Z,q)\in \mu$, so that $(s,\ysi,\yal)=\ypdatrtf{(s,a\yde,\yal)}{}{\yB}{(q,\yde,\yga)}$.
		But $m<n$, and since $\yde\neq \yeps$,  the induction gives $\ypdatrtf{(q,\yde,\yga)}{\star}{\yB}{(t,\yeps,\ybe)}$.
		Hence, $\ypdatrtf{(s,\ysi,\yal)}{}{\yB}{(t,\yeps,\yga)}$ and the claim holds.
	\end{description}
	Now let $\ysi\in L(\yA)$.
	Then, $\ypdatrtf{(s_0,\ysi,\bot)}{\star}{\yA}{(f,\yeps,\yga)}$ where $s_0\in S_{in}$ and $f\in F$.
	%Since $f\in E(f)$ and $G=F$, we get $f\in G$.
	When $\ysi=\yeps$ we get $\yga=\bot$ and $f\in E(s_0)$.
	Since $s_0\in S_{in}$ we get $f\in Q_{in}$.
	Clearly, $\ypdatrtf{(f,\yeps,\bot)}{0}{\yB}{(f,\yeps,\bot)}$ and so $\ysi=\yeps\in L(\yB)$.
	When $\ysi\neq \yeps$,  Claim 1 gives $\ypdatrtf{(s_0,\ysi,\bot)}{\star}{\yB}{(f,\yeps,\yga)}$.
	Since $s_0\in E(s_0)$, the construction gives $s_0\in Q_{in}$.
	Hence, $\ysi\in L(\yB)$.
	We may now conclude that $L(\yA)\ysse L(\yB)$.
	
	For the converse, we state
	\begin{description}
		\item[\sc Claim 2:] If $\ypdatrtf{(s,\ysi,\yal)}{n}{\yB}{(t,\yeps,\ybe)}$ with $n\geq 0$, then 
		$\ypdatrtf{(s,\ysi,\yal)}{\star}{\yA}{(t,\yeps,\ybe)}$.\\
		{\sc Proof:} When $n=0$ the result follows immediately. Let $n>0$.
		Since $\yB$ has no $\yeps$-moves, we must have $\ysi=a\yde$ with $a\in A$, and 
		$$(s,\ysi,\yal)=\ypdatrtf{(s,a\yde,\yal)}{}{\yB}{(r,\yde,\yga)}\ypdatrtf{}{n-1}{\yB}{(t,\yeps,\ybe)},$$
		where the first move was through a transition $(s,a,Z,r)\in \mu$. 
		By the construction, we must have $(p,a,Z,q)\in \rho$ with $p\in E(s)$ and $r\in E(q)$.
		This gives $\ypdatrtf{(s,a\yde,\yal)}{\star}{\yA}{(p,a\yde,\yal)}$ and $\ypdatrtf{(q,\yde,\yga)}{\star}{\yA}{(r,\yde,\yga)}$.
		Composing we get 
		$$(s,\ysi,\yal)=\ypdatrtf{(s,a\yde,\yal)}{\star}{\yA}{(p,a\yde,\yal)}\ypdatrtf{}{}{\yA}{(q,\yde,\yga)}\ypdatrtf{}{\star}{\yA}{(r,\yde,\yga)}.$$
		Using the induction hypothesis we get $\ypdatrtf{(r,\yde,\yga)}{\star}{\yA}{(t,\yeps,\ybe)}$.
		So, $\ypdatrtf{(s,\ysi,\yal)}{\star}{\yA}{(t,\yeps,\ybe)}$ and the claim holds.
	\end{description}
	Now let $\ysi\in L(\yB)$, so that $\ypdatrtf{(q_0,\ysi,\bot)}{\star}{\yB}{(t,\yeps,\yga)}$, with $q_0\in Q_{in}$ and $t\in F$.
	Using Claim 2, we may write $\ypdatrtf{(q_0,\ysi,\bot)}{\star}{\yA}{(t,\yeps,\yga)}$.
	Also, the construction gives some $s_0\in S_{in}$ with $q_0\in E(s_0)$. 
	Hence,  $\ypdatrtf{(s_0,\ysi,\bot)}{\star}{\yA}{(q_0,\ysi,\bot)}$. 
	Composing we get $\ypdatrtf{(s_0,\ysi,\bot)}{\star}{\yA}{(t,\yeps,\yga)}$, that is $\ysi\in L(\yA)$.
	This shows that $L(\yB)\ysse L(\yA)$.
	
	We now have $L(\yA)=L(\yB)$, and the proof is complete. %\yfim
\end{proof}

\subsection{Proof of Proposition~\ref{prop:no-eps-determ}}\label{app:prop:no-eps-determ}
\begin{proof}
	Let $\yA=\yvpaA$.
	First we eliminate cycles of $\yeps$-transitions, and in a second step we eliminate the remaining $\yeps$-transitions.
	Let 
	\begin{align}\label{eq:2-23a}
		(s_1,\yeps,\yait,s_{2}),   (s_2,\yeps,\yait,s_{3}), \ldots, (s_{k-1},\yeps,\yait,s_{k}),  (s_k,\yeps,\yait,s_{k+1}),
	\end{align}
	with $s_{k+1}=s_1$ and $k\geq 1$, be a $\yeps$-cycle in $\yA$.
	We construct the new VPA $\yB=\yvpa{Q}{Q_{in}}{A}{\yGa}{\mu}{E}$, mapping the cycle into a state that occurs in it, say into $s_1$.
	Let $T=\{ (s_j,\yeps,\yait,s_{j+1}): j=1,\ldots, k\}$, $J=\{s_j: j=1,\ldots, k\}$.
	Start with $\yB=\yA$ and transform $\yB$ as follows:
	\begin{enumerate}
		\item[(a)] Transitions: (i) remove $T$ from $\mu$; (ii) for all $(p,x,Z,q)\in\rho$ with $p\not\in J$ and $q\in J$
		remove  $(p,x,Z,q)$ from $\mu$ and add $(p,x,Z,s_1)$ to $\mu$.
		\item[(b)] States: (i) remove $s_j$ from $Q$ and $E$, $j=2,\ldots, k$; (ii) if $F\cap J\neq \yemp$, then add $s_1$ to $E$; (iii) 
		if $S_{in}\cap J\neq \yemp$, then let $Q_{in}=\{s_1\}$.
	\end{enumerate}
	We have to show that $\yB$ is equivalent to $\yA$, and that $\yB$ is deterministic.
\begin{description}
\item[\sf Claim 1:] $\yB$ is deterministic.

{\sc Proof.}
	Since $\yA$ is deterministic, $\vert S_{in}\vert\leq 1$.
	Hence, it is clear that $\vert Q_{in}\vert\leq 1$.
	We now  look at each conditions in Definition~\ref{def:vpa-determinism}.
	
	Suppose that $t_i=(p,x,Z_i,q_i)\in \mu_c$, $i=1,2$.
	If $t_1, t_2\in\rho$, the determinism of $\yA$ immediately gives $Z_1=Z_2$, $q_1=q_2$ and so condition (1) in Definition~\ref{def:vpa-determinism} holds.
	With no loss of generality assume $t_1\not\in \rho$.
	Then item (a) of the construction forces $p\not\in J$, $q_1=s_1$, and $t'_1=(p,x,Z_1,s_j)\in \rho$ for some $s_j\in J$.  
	If $t_2\in\rho$ then, because $\yA$ is deterministic, condition (1) in Definition~\ref{def:vpa-determinism} implies $Z_1=Z_2$ and $q_2=s_j$, so that $t_2=(p,x,Z_2,s_j)\in \mu_c$.
	If $j\geq 2$, we see that $(p,x,Z_1,s_j)\in \mu_c$ contradicts item (a) of the construction.
	Hence, $j=1$ and then $q_2=s_j=s_1=q_1$.
	Together with $Z_1=Z_2$ we see that  condition (1) in Definition~\ref{def:vpa-determinism} holds.
	Assume now $t_2\not\in \rho$.
	We obtain again $q_2=s_1$ and $t'_2=(p,x,Z_2,s_\ell)\in\rho$ with $s_\ell\in J$.
	Now $t'_1,t'_2\in\rho$ and the determinism of $\yA$ forces $Z_1=Z_2$.
	We now have $q_1=s_1=q_2$, $Z_1=Z_2$ and condition (1) in Definition~\ref{def:vpa-determinism} holds again. 
	
	Suppose that $t_i=(p,x,Z,q_i)\in \mu_r\cup\mu_i$, $i=1,2$.
	An argument entirely similar to the one in the preceding paragraph shows that  condition (2) in Definition~\ref{def:vpa-determinism} holds. 
	
	Now let $t_1=(p,x,Z,q_1)\in \mu$ and $t_2=(p,\yeps,\yait,q_2)\in \mu$, with $x\neq \yeps$.
	If $t_1,t_2\in\rho$ we get an immediate contradiction to the determinism of $\yA$.
	Assume $t_1\not\in\rho$. 
	As before the construction gives $t'_1=(p,x,Z,s_j)\in \rho$ for some $s_j\in J$.
	Since  $x\neq \yeps$, if $t_2\in\rho$ we get a contradiction to the determinism of $\yA$.
	Hence, $t_2\not\in\rho$ implies, by the construction, that $t'_2=(p,\yeps,\yait,s_\ell)\in \rho$ for some $s_\ell\in J$.
	Again, $x\neq \yeps$ leads to a contradiction to the determinism of $\yA$.
	Finally, when $t_1\in\rho$ the determinism of $\yA$ will force $t_2\not\in\rho$, and the construction will give $t'_2=(p,\yeps,\yait,s_\ell)\in \rho$
	for some $s_\ell\in J$.
	Then, $t_1, t'_2\in\rho$ contradicts the determinism of $\yA$ again because $x\neq \yeps$.
	We conclude that  $t_1=(p,x,Z,q_1)\in \mu$ and $t_2=(p,\yeps,\yait,q_2)\in \mu$, with $x\neq \yeps$, cannot happen and condition (3) in Definition~\ref{def:vpa-determinism} also holds. 
\end{description}

Next, we want to argue for language equivalence.	
First we show that $\yA$ can imitate runs of $\yB$.
\begin{description}
\item[\sf Claim 2:] If $\ypdatrtf{(q,\ysi,\yal_1\bot)}{\star}{\yB}{(p,\yeps,\yal_2\bot)}$ then we also have $\ypdatrtf{(q,\ysi,\yal_1\bot)}{\star}{\yA}{(p,\yeps,\yal_2\bot)}$.

{\sc Proof:}	Write $\ypdatrtf{(q,\ysi,\yal_1\bot)}{n}{\yB}{(p,\yeps,\yal_2\bot)}$, where $n\geq 0$.
		If all transitions used in the run over $\yB$ are in $\rho$, there is nothing to prove.
	Let $t=(u,x,Z,v)$, with $x \in A$ and $Z\in\yGa\cup\{\yeps\}$, be the first transition not in $\rho$  used in the run over $\yB$.
	By item (a) of the construction we get $v=s_1$ and $t'=(u,x,Z,s_j)\in\rho$, where $s_j\in J$.
	Then, $\ysi=\ysi_1x\ysi_2$ and we may write
	$\ypdatrtf{(q,\ysi_1x\ysi_2,\yal_1\bot)}{\star}{\yA}{}{(u,x\ysi_2,\ybe\bot)}\ypdatrtf{}{1}{\yB}{(s_1,\ysi_2,\yga\bot)}\ypdatrtf{}{k}{\yB}{(p,\yeps,\yal_2\bot)}$, with $0\leq k<n$.
	Using the transitions in the $\yeps$-cycle at Eq.(\ref{eq:2-23a}) we get $\ypdatrtf{(s_j,\ysi_2,\yga\bot)}{\star}{\yA}{}{(s_1,\ysi_2,\yga\bot)}$.
	Thus, using $t'$ we have 
	$$\ypdatrtf{(q,\ysi_1x\ysi_2,\yal_1\bot)}{\star}{\yA}{}{(u,x\ysi_2,\ybe\bot)}\ypdatrtf{}{1}{\yA}{(s_j,\ysi_2,\yga\bot)}\ypdatrtf{}{\star}{\yA}{(s_1,\ysi_2,\yga\bot)}\ypdatrtf{}{k}{\yB}{(p,\yeps,\yal_2\bot)}.$$
	Inductively, we have $\ypdatrtf{(s_1,\ysi_2,\yga\bot)}{\star}{\yA}{(p,\yeps,\yal_2\bot)}$.
	Putting it together, we now have $\ypdatrtf{(q,\ysi_1x\ysi_2,\yal_1\bot)}{\star}{\yA}{(p,\yeps,\yal_2\bot)}$.
\end{description}
Next, we want to show that if we have a run of $\yA$ starting at $S_{in}$ and ending in $F$, then we can extend the run to start at $Q_{in}$ and end in $E$.
\begin{description}
\item[\sf Claim 3:] If $\ypdatrtf{(r,\ysi,\bot)}{\star}{\yA}{(f,\yeps,\yal\bot)}$ with $r\in S_{in}$, $f\in F$, then  $\ypdatrtf{(u,\ysi,\bot)}{\star}{\yA}{(v,\yeps,\yal\bot)}$ with $u\in Q_{in}$, $v\in E$.

{\sc Proof.}
Assume $\ysi \in L(\yA)$ so that  $\ypdatrtf{(r,\ysi,\bot)}{\star}{\yA}{(f,\yeps,\yal\bot)}$ with $r\in S_{in}$, $f\in F$.
	First, we want to show that we also have $\ypdatrtf{(u,\ysi,\bot)}{\star}{\yA}{(v,\yeps,\yal\bot)}$ with $u\in Q_{in}$ and $v\in E$.
	If $r\in Q_{in}$ let $u=r$.
	If $r\not\in Q_{in}$, because  $\vert S_{in}\vert\leq 1$ item (b) of the construction implies that $Q_{in}=\{s_1\}$ and $r=s_\ell\in S_{in}$ for some $s_\ell\in J$.
	Since $\yA$ is deterministic, condition (3) in Definition~\ref{def:vpa-determinism} says that $(s_i,\yeps,\yait,s_{i+1})$ is the unique transition out of $s_i$, for all $s_i\in J$.
Let $u=s_1$.
	Hence, we must have $\ypdatrtf{(r,\ysi,\bot)}{\star}{\yA}{(u,\ysi,\bot)}\ypdatrtf{}{\star}{\yA}{(f,\yeps,\yal\bot)}$ with $u\in Q_{in}$.
	If $f\in E$, let $v=f$.
	If $f\not\in E$, item (b) of the construction says that $f=s_\ell$, for some $s_\ell\in J$, and $s_1\in E$.
Let $v=s_1$.
	We now have $\ypdatrtf{(u,\ysi,\bot)}{\star}{\yA}{(f,\ysi,\yal\bot)}\ypdatrtf{}{\star}{\yA}{(v,\ysi,\yal\bot)}$ with $v\in E$.
We can now assume $\ypdatrtf{(r,\ysi,\bot)}{\star}{\yA}{(f,\yeps,\yal\bot)}$ with $r\in Q_{in}$, $f\in E$. 
\end{description}
Now we are ready for language equivalence.
\begin{description}
\item[\sf Claim 4:] $L(\yA)=L(\yB)$.

{\sc Proof.} 
Now let $\ysi\in L(\yB)$, so that we have $\ypdatrtf{(q,\ysi,\bot)}{n}{\yB}{(p,\yeps,\yal\bot)}$ with $q\in Q_{in}$, $p\in E$, $n\geq 0$.
Using Claim 2, we can write $\ypdatrtf{(q,\ysi,\bot)}{\star}{\yA}{(p,\yeps,\yal\bot)}$. 
	If $q\not\in S_{in}$ then by item (b) of the construction we must have $s_\ell\in S_{in}$ for some $s_\ell\in J$ and $q=s_1$.
	From the $\yeps$-cycle at Eq. (\ref{eq:2-23a}) we have $\ypdatrtf{(s_\ell,\ysi,\bot)}{\star}{\yA}{(s_1,\ysi,\bot)}=(q,\ysi,\bot)$, so that $\ypdatrtf{(s_\ell,\ysi,\bot)}{\star}{\yA}{(p,\yeps,\yal\bot)}$.
	Likewise, if $p\not\in F$ then by item (b) again we must have $p=s_1$ and $s_i\in F$ for some $s_i\in J$.
	Again,  the $\yeps$-cycle  gives $(p,\yeps,\yal\bot)=\ypdatrtf{(s_1,\yeps,\yal\bot)}{\star}{\yA}{(s_i,\yeps,\yal\bot)}$.
	Composing, we obtain $\ypdatrtf{(s_\ell,\ysi,\bot)}{\star}{\yA}{(s_i,\yeps,\yal\bot)}$ with $s_\ell\in S_{in}$ and $s_i\in F$.
Thus we always have $\ypdatrtf{(u,\ysi,\bot)}{\star}{\yA}{(v,\yeps,\yal\bot)}$, with $u\in S_{in}$ and $v\in F$.
Hence, $\ysi\in L(\yA)$.

Now let $\ysi\in L(\yA)$.
Using Claim 3, we can write $\ypdatrtf{(r,\ysi,\bot)}{\star}{\yA}{(f,\yeps,\yal\bot)}$ with $r\in Q_{in}$, $f\in E$. 
	If no state $s_j\in J$ occurs in this run, then all transitions are also in $\mu$, and we  get $\ypdatrtf{(r,\ysi,\bot)}{\star}{\yB}{(f,\yeps,\yal\bot)}$.
	Next, let  $\ysi=\ysi_1\ysi_2$ with $\ypdatrtf{(r,\ysi_1\ysi_2,\bot)}{\star}{\yA}{(s_j,\ysi_2,\ybe\bot)}\ypdatrtf{}{\star}{\yA}{(f,\yeps,\yal\bot)}$, where this is the first occurrence of a state of $J$ in the run  over $\yA$.
	Invoking condition (3) in Definition~\ref{def:vpa-determinism} again, we must have $f=s_i\in J$, $\ysi_2=\yeps$ and $\ybe=\yal$.
	Since $f\in E$, we must have $f=s_1\in E$.
	We now have $\ysi=\ysi_1$ and $\ypdatrtf{(r,\ysi,\bot)}{\star}{\yA}{(s_1,\yeps,\yal\bot)}$, $r\in Q_{in}$, $s_1\in E$.
	When $n=0$ we get  $\ypdatrtf{(r,\ysi,\bot)}{0}{\yB}{(s_1,\yeps,\yal\bot)}$ and we are done.
	Else, we must have $\ysi =\ysi'x$ with $x\in\yGa\cup\{\yeps\}$ and $\ypdatrtf{(r,\ysi'x,\bot)}{\star}{\yA}{(t,x,\yga\bot)}\ypdatrtf{}{1}{\yA}{(s_1,\yeps,\yal\bot)}$ with a transition $(t,x,Z,s_1)\in\rho$ used in the last move.
	Because $s_j$ is the first occurrence of a state in $J$ we get $t\not\in J$, and all transitions in $\ypdatrtf{(r,\ysi'x,\bot)}{\star}{\yA}{(t,x,\yga\bot)}$ are in $\mu$.
	Hence, $\ypdatrtf{(r,\ysi'x,\bot)}{\star}{\yB}{(t,x,\yga\bot)}$.
	Item (a) of the construction readily gives $(t,x,Z,s_1)\in\mu$.
	Thus, $\ypdatrtf{(r,\ysi,\bot)}{\star}{\yB}{(t,x,\yga\bot)}\ypdatrtf{}{1}{\yB}{(s_1,\yeps,\yal\bot)}$, $r\in Q_{in}$, $s_1\in E$. 
We conclude that $L(\yA)\ysse L(\yB)$.
\end{description}	
At this point we know how to remove an $\yeps$-cycle from $\yA$, while maintain language equivalence and determinism. 
Thus, we can repeat the construction for all $\yeps$-cycles in $\yA$, so that we can now assume that there is no $\yeps$-cycle in $\yA$.
As a final step, we show how to remove all remaining $\yeps$-transitions from $\yA$.
	
Let $\yA=\yvpaA$ be deterministic and with no $\yeps$-cycles.
Let $t=(p,\yeps,\yait,q)\in\rho$. 
Since $\yA$ has no $\yeps$-cycles, we can assume that for all transitions $(q,x,Z,r)\in \rho$ we have $x\neq \yeps$.
	Start with $\yB=\yA$ and transform $\yB$ as follows.
	\begin{itemize}
		\item[(c)] Transitions: (i) $\mu=\rho-\{t\}$; (ii) for all $(q,a,Z,r)\in\rho$, add $(p,a,Z,r)$ to $\mu$.
		\item[(d)] States: (i) if $p\in S_{in}$, $p\not\in F$, let $Q_{in}=\{q\}$; (ii) if $q\in F$, let $E=F\cup\{p\}$.
	\end{itemize}
We still have determinism.
\begin{description}
\item[\sf Claim 5:] $\yB$ is deterministic.\\
{\sc Proof.} 
	We want to show that $\yB$ is also deterministic.
	It is clear that $\vert Q_{in}\vert\leq 1$ because $\yA$ is deterministic.
	Assume that $\yB$ is not deterministic.
	Since $\yA$ is already deterministic and the only transitions added to $\mu$ are of the form $(p,a,Z,r)$, we see that the only possibility for $\yB$ not deterministic is that we have two transitions $t_i=(p,a_i,Z_i,r_i)\in\mu$, $i=1,2$ in violation  Definition~\ref{def:vpa-determinism}. 
	Since $t\in\rho$ and $\yA$ is deterministic, according to condition (3) of Definition~\ref{def:vpa-determinism}, this is the \emph{only} transition out of $p$ in $\rho$. 
	Moreover $t$ was removed from $\rho$ according to item (c) of the construction.
	Therefore, $t_1$, $t_2$ are new transitions added to $\mu$ by the construction.
	Hence, we must have transitions $t'_i=(q,a_i,Z_i,r_i)\in\rho$, $i=1,2$.
	But these two transitions in $\rho$ also violate Definition~\ref{def:vpa-determinism}, and so they contradict the determinism of $\yA$.
We conclude that $\yB$ must be deterministic too.
\end{description}	
We complete the proof arguing for language equivalence.
Again, we must first relate runs in $\yA$ to runs in $\yB$, and vice-versa.
\begin{description}
\item[\sf Claim 6:] If $\ypdatrtf{(s,\ysi,\ybe\bot)}{\star}{\yB}{(f,\yeps,\yal\bot)}$, then we also have $\ypdatrtf{(s,\ysi,\ybe\bot)}{\star}{\yA}{(f,\yeps,\yal\bot)}$.\\
{\sc Proof.} 
	First, we show that for any run in $\yB$ there exists a similar run in $\yA$.
	Let $\ypdatrtf{(s,\ysi,\ybe\bot)}{n}{\yB}{(f,\yeps,\yal\bot)}$.
	If $n=0$ we can immediately write $\ypdatrtf{(s,\ysi,\ybe\bot)}{0}{\yA}{(f,\yeps,\yal\bot)}$.
	Now let $x\in A_\yeps$, $\ysi=x\ysi'$ and write
	\begin{align}\label{eq:vpadet1}
		\ypdatrtf{(s,x\ysi',\ybe\bot)}{1}{\yB}{(r,\ysi',\yga\bot)}\ypdatrtf{}{n-1}{\yB}{(f,\yeps,\yal\bot)},
	\end{align}
	with $t'=(s,x,Z,r)$ the first transition used in this run.
	Inductively,  $\ypdatrtf{(r,\ysi',\yga\bot)}{\star}{\yA}{(f,\yeps,\yal\bot)}$.
	If $t'\in\rho$, we have  $\ypdatrtf{(s,x\ysi',\ybe\bot)}{1}{\yA}{(r,\ysi',\yga\bot)}$, so that 
	$\ypdatrtf{(s,\ysi,\ybe\bot)}{\star}{\yA}{(f,\yeps,\yal\bot)}$.
	If $t'\not\in\rho$, the construction gives $s=p$, $t'=(p,x,Z,r)$ and $(q,x,Z,r)\in\rho$.
	Then, using $t=(p,\yeps,\yait,q)$ we have
	$$(s,x\ysi',\ybe\bot)=\ypdatrtf{(p,x\ysi',\ybe\bot)}{1}{\yA}{(q,x\ysi',\ybe\bot)}\ypdatrtf{}{1}{\yA}{(r,\ysi',\yga\bot)}\ypdatrtf{}{\star}{\yA}{(f,\yeps,\yal\bot)}.$$
\end{description}	
Now we examine how $\yB$ can imitate runs in $\yA$.
\begin{description}
\item[\sf Claim 7:] 
Let $\ypdatrtf{(s_1,\ysi,\yal_1\bot)}{n}{\yA}{(s_2,\yeps,\yal_2\bot)}$ with $n\geq 0$.
	We show that $\ypdatrtf{(s_1,\ysi,\yal_1\bot)}{\star}{\yB}{(s'_2,\yeps,\yal_2\bot)}$ where $s'_2=p$ if the last transition in the run from $\yA$ was $t$, else $s_2'=s_2$.\\
{\sc Proof.}
	If $n=0$ the result is immediate, with $s'_2=s_2$. 
	Next, let $n\geq 1$, $x\in A_\yeps$, $\ysi=\ysi'x$ and
	\begin{align}\label{eq:vpadet3}
		\ypdatrtf{(s_1,\ysi' x,\yal_1\bot)}{n-1}{\yA}{(s_3,x,\yal_3\bot)}\ypdatrtf{}{1}{\yA}{(s_2,\yeps,\yal_2\bot)},
	\end{align}
	where $t'=(s_3,x,Z,s_2)$ is the transition used in the last move in Eq. (\ref{eq:vpadet3}).

	The first simple case is when $t'=t$.
	Then, $s_3=p$, $x=\yeps$, $s_2=q$ and $\yal_3=\yal_2$.
	Since $s_3=p$, the last transition in $\ypdatrtf{(s_1,\ysi x,\yal_1\bot)}{n-1}{\yA}{(s_3,x,\yal_3\bot)}$ cannot be $t$, otherwise we would have $s_3=p=q$ and then $t$ would be a simple $\yeps$-cycle in $\yA$, a contradiction.
	Inductively from Eq. (\ref{eq:vpadet3}) we 
	can now write $\ypdatrtf{(s_1,\ysi' x,\yal_1\bot)}{\star}{\yB}{(s_3,x,\yal_3\bot)}=(p,\yeps,\yal_2\bot)$.
	Picking $s'_2=p$, we are done with this case.

Let now that $t'\neq t$.
This gives $t'=(s_3,x,Z,s_2)\in \yB$.
If $n-1=0$ in Eq. (\ref{eq:vpadet3}) we get $s_1=s_3$, $\ysi'=\yeps$, $\yal_1=\yal_3$.
The last move in Eq. (\ref{eq:vpadet3}) gives $(s_3,x,\yal_3\bot)=\ypdatrtf{(s_1,x,\yal_1\bot)}{1}{\yA}{(s_2,\yeps,\yal_2\bot)}$.
Since $t'\in \yB$ we also have $\ypdatrtf{(s_1,x,\yal_1\bot)}{1}{\yB}{(s_2,\yeps,\yal_2\bot)}$, and the result follows because $t'\neq t$.
We now proceed under the hypothesis that $n-1>0$.
Inductively, from Eq. (\ref{eq:vpadet3}) we get
	$\ypdatrtf{(s_1,\ysi' x,\yal_1\bot)}{\star}{\yB}{(s_3',x,\yal_3\bot)}$, where $s'_3=p$ if the last transition, say $t''$, in $\ypdatrtf{(s_1,\ysi' x,\yal_1\bot)}{n-1}{\yA}{(s_3,x,\yal_3\bot)}$ was $t$, otherwise we must have $s'_3=s_3$.
If $t''\neq t$, then $t'\in \yB$ and $s'_3=s_3$ give  $(s'_3,x,\yal_3\bot)=\ypdatrtf{(s_3,x,\yal_3\bot)}{1}{\yB}{(s_2,\yeps,\yal_2\bot)}$.
Thus, $\ypdatrtf{(s_1,\ysi' x,\yal_1\bot)}{\star}{\yB}{(s_2,\yeps,\yal_3\bot)}$.
	Picking $s'_2=s_2$ and recalling that $t'$ is not $t$, we have the desired result.
Lastly, let $t''=t$.

We can rewrite the first $n-1$ moves in Eq. (\ref{eq:vpadet3}) thus
	\begin{align}\label{eq:vpadet4}
		\ypdatrtf{(s_1,\ysi',\yal_1\bot)}{n-2}{\yA}{(p,\yeps,\yal_3\bot)}\ypdatrtf{}{1}{\yA}{(q,\yeps,\yal_3\bot)}=(s_3,\yeps,\yal_3\bot),
	\end{align}
	so that $q=s_3$.
	Inductively, we get 
	$\ypdatrtf{(s_1,\ysi',\yal_1\bot)}{\star}{\yB}{(p,\yeps,\yal_3\bot)}$, so that we also have $\ypdatrtf{(s_1,\ysi' x,\yal_1\bot)}{\star}{\yB}{(p,x,\yal_3\bot)}$.
	We now have $t'=(s_3,x,Z,s_2)=(q,x,Z,s_2)$ in $\yA$.
	By item (c) of the construction, we must have $(p,x,Z,s_2)$ in $\yB$. 
Since $\ypdatrtf{(s_3,x,\yal_3\bot)}{1}{\yA}{(s_2,\yeps,\yal_2\bot)}$ using $t'$, we also get
$\ypdatrtf{(p,x,\yal_3\bot)}{1}{\yB}{(s_2,\yeps,\yal_2\bot)}$.
	Composing, we obtain $\ypdatrtf{(s_1,\ysi' x,\yal_1\bot)}{\star}{\yB}{(s_2,\yeps,\yal_2\bot)}$.
	Picking $s'_2=s_2$ and remembering that we assumed $t'\neq t$, we have the result again.
\end{description}
The first half of language equivalence now follows. 
\begin{description}
\item[\sf Claim 8:] $L(\yB)\ysse L(\yA)$.\\
{\sc Proof.}
	Let  $\ypdatrtf{(s,\ysi,\bot)}{\star}{\yB}{(f,\yeps,\yal\bot)}$ with
	$s\in Q_{in}$ and $f\in E$.
Claim 6 gives 
	\begin{align}\label{eq:vpadet2}
		\ypdatrtf{(s,\ysi,\bot)}{\star}{\yA}{(f,\yeps,\yal\bot)}.
	\end{align}
	If $f\in F$, let $f'=f$ and we get $\ypdatrtf{(s,\ysi,\bot)}{\star}{\yA}{(f',\yeps,\yal\bot)}$ with $f'\in F$.
	If $f\not\in F$, item (d) of the construction says that $f=p$ and $q\in F$. Picking $f'=q$ we have $f'\in F$. Using $t$ again, we have
	$(f,\yeps,\yal\bot)=\ypdatrtf{(p,\yeps,\yal\bot)}{1}{\yA}{(q,\yeps,\yal\bot)}=(f',\yeps,\yal\bot)$ with $f'\in F$.
	So, we can assume that $f\in F$ in Eq. (\ref{eq:vpadet2}).
	If $s\in S_{in}$, then Eq. (\ref{eq:vpadet2}) says that $\ysi\in L(\yA)$, as desired.
	Now assume $s\not\in S_{in}$.
	Since $s\in Q_{in}$, item (d) of the construction implies that $s=q$, $p\in S_{in}$.
	Using transition $t$ again, we now have 
	$\ypdatrtf{(p,\ysi,\bot)}{1}{\yA}{(q,\ysi,\bot)}=(s,\ysi,\bot)$.
	Composing with  Eq. (\ref{eq:vpadet2}) we get $\ypdatrtf{(p,\ysi,\bot)}{\star}{\yA}{(f,\yeps,\yal\bot)}$.
	Because $p\in S_{in}$ and $f\in F$, we see that $\ysi\in L(\yA)$.
\end{description}	
%%% Claim 9
The next claim completes the proof.
\begin{description}
\item[\sf Claim 9:] $L(\yA)\ysse L(\yB)$.\\
{\sc Proof.}  
Let $s\in S_{in}$, $f\in F$, $n\geq 0$, and 		$\ypdatrtf{(s,\ysi,\bot)}{n}{\yA}{(f,\yeps,\yal\bot)}$.
Assume that $s\not\in Q_{in}$.
Since $s\in S_{in}$, then item (d) of the construction gives $Q_{in}=\{q\}$, $p\in S_{in}$ and $p\not\in F$.
Since $\yA$ is deterministic, we get $S_{in}=\{p\}=\{s\}$, so that $p=s$.
Hence, $s\not\in F$ and then $s\neq f$.
Thus, $n\geq 1$.
Since $p=s$ and $\yA$ is deterministic, we see that $t$ is the first transition in the run $\ypdatrtf{(s,\ysi,\bot)}{n}{\yA}{(f,\yeps,\yal\bot)}$.
Thus, we can write $(s,\ysi,\bot)=\ypdatrtf{(p,\ysi,\bot)}{1}{\yA}{(q,\ysi,\bot)}\ypdatrtf{}{n-1}{\yA}{(f,\yeps,\yal\bot)}$, with $q\in Q_{in}$ and $n-1\geq 0$.
We can, thus, take $s\in Q_{in}$ in the run  	$\ypdatrtf{(s,\ysi,\bot)}{n}{\yA}{(f,\yeps,\yal\bot)}$.

If $n=0$ we get $\ysi=\yeps$, $\yal=\yeps$ and $s=f$.
We can then  write 	$(s,\ysi,\bot)=\ypdatrtf{(s,\yeps,\bot)}{0}{\yB}{(f,\yeps,\bot)}$.
Because $F\ysse E$ we get $f\in E$. 
Since $s\in Q_{in}$ we get $\ysi=\yeps\in L(\yB)$.

We proceed now with $s\in Q_{in}$, $f\in E$, $n\geq 1$ and 		$\ypdatrtf{(s,\ysi,\bot)}{n}{\yA}{(f,\yeps,\yal\bot)}$.
Use Claim 7 to write 
%\begin{align}\label{eq:vpadet5}
		$\ypdatrtf{(s,\ysi,\bot)}{\star}{\yB}{(f',\yeps,\yal\bot)}$,
%\end{align}
where $f'=p$ if $t$ was the last transition  in the run over $\yA$, otherwise we have $f'=f$.
If $f'=f$, we get $f'\in E$ and then $\ysi\in L(\yB)$.
Assume now that $f'=p$ and $t$ was the last transition  in the run over $\yA$.
This gives $q=f$, and then item (d) of the construction says that $p\in E$.
Thus, $f'\in E$ and we get again $\ysi\in L(\yB)$.
\end{description}	
\end{proof}

\subsection{Proof of Proposition~\ref{prop:product-behavior}}\label{app:prop:product-behavior}
\begin{proof}
Let $\ypdatrtf{((s,q),\ysi,\yga_0)}{n}{\yS\times \yQ}{((p,r),\yeps,\yga_n)}$, for some $n\geq 0$, and where
$\yga_0=(X_1,Y_1)\ldots (X_i,Y_i)\bot$ and $\yga_n=(Z_1,W_1)\ldots (Z_k,W_k)\bot$. 
When $n=0$ the result is immediate.
Proceeding inductively, let $n\geq 1$, $\ysi=\yde a$ with $a\in A\cup \{\yeps\}$, and 
\begin{align}\label{prop2.28a}
\ypdatrtf{((s,q),\yde a,\yga_0)}{n-1}{\yS\times \yQ}{((u,v),a,\yga_{n-1})}\ypdatrtf{}{1}{\yS\times \yQ}{((p,r),\yeps,\yga_n)},
\end{align}
where $\yga_{n-1}=(U_1,V_1)\ldots (U_j,V_j)\bot$, for some $j\geq 0$. 
In order to ease the notation, let $\yal_0=X_1 \ldots X_i\bot$, $\yal_{n-1}=U_1 \ldots U_j\bot$, $\yal_n=Z_1\ldots Z_k\bot$, $\ybe_0=Y_1\ldots Y_i\bot$, $\ybe_{n-1}=V_1 \ldots V_j\bot$,  and $\ybe_n=W_1\ldots W_k\bot$. 
Then, using the induction hypothesis we get 
\begin{align}\label{prop2.28b}
\ypdatrtf{(s,\yde,\yal_0)}{\star}{\yS}{(u,\yeps,\yal_{n-1})},\quad  \ypdatrtf{(q,\yde,\ybe_0)}{\star}{\yQ}{(v,\yeps,\ybe_{n-1})}. 
\end{align}
Let $((u,v),a,Z,(p,r))$ be the transition used in the last $\yS\times \yQ$ move.
From Definition~\ref{def:productVPA}, there are two possibilities: 
\begin{description}
\item[\sc Case 1:] 
$a\neq \yeps$, $(u,a,X,p)\in\rho$, $(v,a,Y,r)\in\mu$, and either $Z=(X,Y)$, or $Z=X=Y\in\{\bot,\yait\}$. 

If $a\in A_c$, then from Eq. (\ref{prop2.28a}) we get $Z\not\in\{\bot,\yait\}$, so that we must have $Z=(X,Y)$.
With $a\in A_c$, Eq (\ref{prop2.28a}) gives $\yga_n=(X,Y)\yga_{n-1}=(X,Y)(U_1,V_1)\cdots (U_j,V_j)\bot$.
But now, with $(u,a,X,p)\in\rho$ and Eq. (\ref{prop2.28b}) we also get 
$\ypdatrtf{(u,a,\yal_{n-1})}{1}{\yS}{(p,\yeps,X\yal_{n-1} )}$.
Using Eq. (\ref{prop2.28b}) and composing, we get  $\ypdatrtf{(s,\ysi,\yal_0)}{\star}{\yS}{(p,\yeps,X\yal_{n-1})}$, where $X\yal_{n-1}=XU_1\cdots U_j\bot$.
Likewise, we obtain $\ypdatrtf{(q,\ysi,\ybe_0)}{\star}{\yQ}{(r,\yeps,Y\ybe_{n-1})}$, with 
$Y\ybe_{n-1}=YV_1\cdots V_j\bot$, as needed.

When $a\in A_r$,  given that  $((u,v),a,Z,(p,r))$ was the last transition used Eq. (\ref{prop2.28a}) with
 $\yga_{n-1}=(U_1,V_1)\ldots (U_j,V_j)\bot$, we get either (i) $j\geq 1$, $Z=(U_1,V_1)$ and $\yga_n=(U_2,V_2)\cdots (U_j,V_j)\bot$, or (ii) $j=0$ and $Z=\bot$ and $\yga_n=\bot$.
Since $Z=(X,Y)$, the first case gives  $X=U_1$ and $Y=V_1$.
We can now repeat the argument above when $a\in A_c$ and reach the desired result.
In the second case, $Z=\bot$ forces $Z=X=Y=\bot$.
Also, $\yal_{n-1}=U_1\cdots U_j\bot$ reduces to $\yal_{n-1}=\bot$.
Since $a\in A_r$ and $(u,a,X,p)\in\rho$, we can write $\ypdatrtf{(u,a,\yal_{n-1})}{1}{\yS}{(p,\yeps,
\bot )}$.
Using Eq. (\ref{prop2.28b}) and composing, we get $\ypdatrtf{(s,\ysi,\yal_0)}{\star}{\yS}{(p,\yeps,\bot)}$.
Since $\yga_n=\bot$ we have the desired result for $\yS$.
A similar reasoning also gives  $\ypdatrtf{(q,\ysi,\ybe_0)}{\star}{\yQ}{(r,\yeps,\bot)}$, as needed.

If $a\in A_i$, since    $((u,v),a,Z,(p,r))$ was the last transition used Eq. (\ref{prop2.28a}), we get $Z=\yait$ and $\yga_n=\yga_{n-1}=(U_1,V_1)\cdots (U_j,V_j)\bot$.
Since $Z\in\{\bot,\yait\}$, we must also have $Z=X=Y=\yait$.
Thus, $(u,a,\yait,p)\in\rho$.
Using Eq. (\ref{prop2.28b}) and composing, we get $\ypdatrtf{(s,\ysi,\yal_0)}{\star}{\yS}{(p,\yeps,\yal_{n-1})}$.
Since $\yal_{n-1}=U_1 \ldots U_j\bot$, we get the desired result for $\yS$.
By a similar reasoning we also get $\ypdatrtf{(q,\ysi,\ybe_0)}{\star}{\yQ}{(r,\yeps,\ybe_{n-1})}$, completing this case.

\item[\sc Case 2:] 
$a=\yeps$, $Z=\yait$ with either $u=p$ and $(v,\yeps,\yait,r)\in\mu$, or $v=r$ and $(u,\yeps,\yait,p)\in\rho$.
We look at the first case, the other being entirely similar.
 Since $((u,v),\yeps,\yait,(u,r))$ was the transition used in Eq. (\ref{prop2.28a}) we get $\yga_{n-1}=\yga_n=(U_1,V_1)\cdots (U_j,V_j)\bot$.
From Eq. (\ref{prop2.28b}) and $u=p$, we know that $\ypdatrtf{(s,\ysi,\yal_0)}{\star}{\yS}{(p,\yeps,\yal_{n-1})}$,
and because $\yal_{n-1}=U_1 \ldots U_j\bot$ we get the desired result for $\yS$.
Since $(v,\yeps,\yait,r)\in\mu$, from Eq. (\ref{prop2.28b}) we obtain
$\ypdatrtf{(q,\ysi,\ybe_0)}{\star}{\yQ}{(r,\yeps,\ybe_{n-1})}$, and we have the result also for $\yQ$ because $\ybe_{n-1}=V_1 \ldots V_j\bot$.
This case is now complete.
\end{description}

Next, we look at the converse.
Let $\ypdatrtf{(s,\ysi,\yal_0)}{n}{\yS}{(p,\yeps,\yal_n)}$  and 
$\ypdatrtf{(q,\ysi,\ybe_0)}{m}{\yQ}{(r,\yeps,\ybe_m)}$, with $n,m\geq 0$, where $\yal_0=X_1\ldots X_i\bot$, $\ybe_0=Y_1\ldots Y_i\bot$, $\yal_n=Z_1\ldots Z_k\bot$ and  $\ybe_m=W_1\ldots W_k\bot$, for some $i,k\geq 0$.
Write $\yga_0=(X_1,Y_1)\ldots(X_i,Y_i)\bot$ and $\yga=(Z_1,W_1)\ldots(Z_k,W_k)\bot$.

When $n+m=0$ we get $m=0$ and $n=0$, so that  the result is immediate.

With no loss, let $n\geq 1$.
Then, $\ysi=a\yde$ and  a transition $(s,a,Z,t)\in\rho$ used in the first step in $\yS$. We now have 
\begin{equation}\label{eq:1.24a}
\ypdatrtf{(s,a\yde,\yal_0)}{1}{\yS}{(t,\yde,\yal_1)}\ypdatrtf{}{n-1}{\yS}{(p,\yeps,\yal_n)},\quad \ypdatrtf{(q,\ysi,\ybe_0)}{m}{\yQ}{(q,\yeps,\ybe_m)}.
\end{equation}
We first look at the case when $a=\yeps$.  
Then, $\ysi=\yde$ , $\yal_0=\yal_1$ and  Definition~\ref{def:fsa-move} implies $Z=\yait$.
Because $(s,\yeps,\yait,t)\in \rho$, Definition~\ref{def:productVPA} item (2) says $((s,q),\yeps,\yait,(t,q))$ is in $\yS\times \yQ$.
Thus,
\begin{align}\label{prop2.28c} \ypdatrtf{((s,q),\ysi,\yga_0)}{1}{\yS\times\yQ}{((t,q),\yde,\yga_0)}.
\end{align}
Now we have $\ypdatrtf{(t,\yde,\yal_0)}{n-1}{\yS}{(p,\yeps,\yal_n)}$ and $\ypdatrtf{(q,\yde,\ybe_0)}{m}{\yQ}{(r,\yeps,\ybe_m)}$.
Since $n-1+m<n+m$, inductively, we may write  
$\ypdatrtf{((t,q),\yde,\yga_0)}{\star}{\yS\times\yQ}{((p,r),\yeps,\yga)}$.
From Eq. (\ref{prop2.28c}) we get   $\ypdatrtf{((s,q),\ysi,\yga_0)}{\star}{\yS\times\yQ}{((p,r),\yeps,\yga)}$.
Recalling the definitions of $\yal_0$, $\ybe_0$, $\yal_n$, $\ybe_n$, and of $\yga_0$, $\yga$, we see that the result holds in this case.

We now turn to the case $a\neq \yeps$.
If $m=0$, from $\ypdatrtf{(q,\ysi,\ybe_0)}{m}{\yQ}{(r,\yeps,\ybe_m)}$ we obtain $\ysi=\yeps$. 
But then $\ysi=a\yde$ gives $a=\yeps$, a contradiction.
Thus, $m\geq 1$. 
This gives 
\begin{equation}\label{eq:1.24b}
\ypdatrtf{(q,a\yde,\ybe_0)}{1}{\yQ}{(u,\yde,\ybe_1)}\ypdatrtf{}{m-1}{\yQ}{(r,\yeps,\ybe_m)},
\end{equation}
and we must have a transition $(q,a,W,u)\in\mu$ which was used in the first step. 
There are three cases.
\begin{description}
\item[\sc Case 3:] $a\in A_i$. 
Then, Eqs.~(\ref{eq:1.24a}) and~(\ref{eq:1.24b}) imply  $(s,a,\yait,t)\in\rho$, $\yal_0=\yal_1$ and $(q,a,\yait,u)\in\mu$, $\ybe_0=\ybe_1$.
%The first moves of $\yS$ and $\yQ$ reduce to, respectively, $\ypdatrtf{(s,a\yde,\yal_0)}{1}{\yS}{(t,\yde,\yal_0)}$ and $\ypdatrtf{(q,a\yde,\ybe_0)}{1}{\yQ}{(u,\yde,\ybe_0)}$.
From the same equations, we now obtain 
$\ypdatrtf{(t,\yde,\yal_0)}{n-1}{\yS}{(p,\yeps,\yal_n)}$ and $\ypdatrtf{(u,\yde,\ybe_0)}{n-1}{\yQ}{(r,\yeps,\ybe_m)}$.
The induction hypothesis now implies $\ypdatrtf{((t,u),\yde,\yga_0)}{\star}{\yS\times\yQ}{((p,r),\yeps,\yga)}$.
Since $(s,a,\yait,t)\in\rho$ and $(q,a,\yait,u)\in\mu$ and $a\neq \yeps$, Definition~\ref{def:productVPA} item (1)  
gives $((s,q),a,\yait,(t,u))$ in $\yS\times \yQ$, and we may write $\ypdatrtf{((s,q),a\yde,\yga_0)}{1}{\yS\times\yQ}{((t,q),\yde,\yga_0)}$.
Composing, we get the result.

\item[\sc Case 4:] $a\in A_c$.
From  Eq.~(\ref{eq:1.24a}) we get $(s,a,X,t)\in\rho$, $\ypdatrtf{(s,a\yde,\yal_0)}{1}{\yS}{(t,\yde,X\yal_0)}$ for some $X\in\yGa$.
And from Eq. (\ref{eq:1.24b}) we get $(q,a,Y,u)\in\mu$, $\ypdatrtf{(q,a\yde,\ybe_0)}{1}{\yQ}{(u,\yde,Y\ybe_0)}$for some  $Y\in\yDe$.

Definition~\ref{def:productVPA} item (1) says that $((s,q),a,(X,Y),(t,u))$ is in $\yS\times \yQ$, and we may write 
$\ypdatrtf{((s,q),a\yde,\yga_0)}{1}{\yS\times\yQ}{((t,u),\yde,(X,Y)\yga_0)}$.
Inductively, Eqs.~(\ref{eq:1.24a}) and~(\ref{eq:1.24b}) also imply $\ypdatrtf{((t,u),\yde,(X,Y)\yga_0)}{\star}{\yS\times\yQ}{((p,r),\yeps,\yga)}$.
Composing again, the result follows.

\item[\sc Case 5:] $a\in A_r$, and recall that $\yal_0=X_1\ldots X_i\bot$, $\ybe_0=Y_1\ldots Y_i\bot$, for some $i\geq 0$.
When $i\geq 1$, the reasoning is very similar to Case 4.

Let $i=0$. 
From Eq. (\ref{eq:1.24a}) we see that $\yal_0=\yal_1=\bot$, $(s,a,\bot,t)\in\rho$ and $\ypdatrtf{(t,\yde,\bot)}{n-1}{\yS}{(p,\yeps,\yal_n)}$.
Likewise, From Eq. (\ref{eq:1.24b}) gives $\ybe_0=\ybe_1=\bot$, $(q,a,\bot,u)\in\mu$ and $\ypdatrtf{(u,\yde,\bot)}{m-1}{\yQ}{(r,\yeps,\ybe_m)}$.
Inductively, we have
$\ypdatrtf{((t,u),\yde,\bot)}{\star}{\yS\times\yQ}{((p,r),\yeps,\yga)}$.
Because $(s,a,\bot,t)\in\rho$, $(q,a,\bot,u)\in\mu$,
Definition~\ref{def:productVPA} item (1) says that $((s,q),a,\bot,(t,u))$
is in $\yS\times \yQ$.
Since $\ysi=a\yde$, composing we get $\ypdatrtf{((s,q),\ysi,\bot)}{\star}{\yS\times\yQ}{((p,r),\yeps,\yga)}$, which is the desired result. 
\end{description}
The proof is now complete.
\end{proof}

\subsection{Proof of Proposition~\ref{prop:stack-size}}\label{app:prop:stack-size}
\begin{proof}
	A simple induction on $n=\vert\ysi\vert$. 
	When $n=0$ we get $\ysi=\yeps$.
	According to Definition~\ref{def:fsa-move},  $\yeps$-moves do not change the stack, and so we get $\yal_1=\yal_2$ and $\ybe_1=\ybe_2$, and the result follows.
	
	Assume $n\geq 1$ and $\ysi=\yde a$, where $a\in A $ and $\vert\yde\vert=n-1$.
	Then computations can be separated thus
	\begin{align*}
		&\ypdatrtf{(s_1,\yde a,\yal_1\bot)}{\star}{\yA}{(s_3,a,\yal_3\bot)}\ypdatrtf{}{1}{\yA}{(s_4,\yeps,\yal_4\bot)}
		\ypdatrtf{}{\star}{\yA}{(s_2,\yeps,\yal_2\bot)}\\ 
		&\ypdatrtf{(q_1,\yde a,\ybe_1\bot)}{\star}{\yB}{(q_3,a,\ybe_3\bot)}\ypdatrtf{}{1}{\yB}{(q_4,\yeps,\ybe_4\bot)}
		\ypdatrtf{}{\star}{\yB}{(q_2,\yeps,\ybe_2\bot)}.
	\end{align*}
	Clearly, we get $\ypdatrtf{(s_1,\yde,\yal_1\bot)}{\star}{\yA}{(s_3,\yeps,\yal_3\bot)}$ and $\ypdatrtf{(q_1,\yde,\ybe_1\bot)}{\star}{\yB}{(q_3,\yeps,\ybe_3\bot)}$.
	The induction hypothesis implies $\vert\yal_3\vert=\vert\ybe_3\vert$.
	The proof will be complete if we can show that $\vert\yal_4\vert=\vert\ybe_4\vert$, because from these configurations onward we have only $\yeps$-moves, which would imply that $\vert \yal_2\vert = \vert\ybe_2\vert$.
	
	Let $(s_3,a,Z,s_4)$ and $(q_3,a,W,q_4)$ be the $\yA$ and $\yB$ transitions, respectively, used in the one-step 
	computations, as indicated above.
	From Definition~\ref{def:fsa-move} we have three simple cases: (i) when $a\in A_i\cup\{\yeps\}$ we get $\yal_3=\yal_4$ and $\ybe_3=\ybe_4$; 
	(ii) when $a\in A_c$ we get $\yal_4=Z\yal_3$ and $\ybe_4=W\ybe_3$; 
	and (iii) when $a\in A_r$, since $\vert \yal_3\vert=\vert\ybe_3\vert$, we get either $\yal_3\neq \yeps\neq\ybe_3$ and then $\yal_3=Z\yal_4$ and $\ybe_3=W\ybe_4$, or 
	$\yal_3= \yeps=\ybe_3$ and then $\yal_4= \yeps=\ybe_4$.
	%$\yal_3= \yeps=\ybe_3$ and we  get $\yal_4= \yeps=\ybe_4$.
	In any case we see that $\vert\yal_4\vert=\vert\ybe_4\vert$, because we already have $\vert\yal_3\vert=\vert\ybe_3\vert$.
\end{proof}

\subsection{Proof of Proposition~\ref{prop:cup-vpa}}\label{app:prop:cup-vpa}
\begin{proof}
Let $\yS=\yvpaA$ and $\yQ=\yvpaB$.
Using Proposition~\ref{prop:non-blocking} we can assume that $\yS$ and $\yQ$ are non-blocking VPAs with $n+1$ and $m+1$ states, respectively.

Let $\yP$ be the product of $\yS$ and $\yQ$ as in Definition~\ref{def:productVPA}, except that we redefine the final states of $\yP$ as $(F\times Q)\cup (S\times G)$.
Clearly, $\yP$ has $(n+1)(m+1)$ states.

We now argue that $\yP$ is also a non-blocking VPA.
Let $\ysi\in A^\star$, $(s,q)\in S\times Q$, and let $\yga=(Z_1,W_1)\ldots(Z_k,W_k)\in (\yGa\times\yDe)^\star$.
Since $\yS$ is a non-blocking VPA, we get $n\geq 0$, $p\in S$, $\yal\in\yGa^\star$ such that  
$\ypdatrtf{(s,\ysi,Z_1\ldots Z_k\bot)}{n}{\yS}{(p,\yeps,\yal\bot)}$. 
Likewise, $\ypdatrtf{(q,\ysi,W_1\ldots W_k\bot)}{m}{\yQ}{(r,\yeps,\ybe\bot)}$, for some
$m\geq 0$, $r\in Q$, $\ybe\in\yDe^\star$.
Applying Proposition~\ref{prop:stack-size} we get $\vert \yal\vert=\vert\ybe\vert$, and then using  Proposition~\ref{prop:product-behavior} we have 
$\ypdatrtf{((s,q),\ysi,\yga\bot)}{\star}{\yP}{((p,r),\yeps,\yde\bot)}$ where $\yga=(Z_1,W_1)\ldots (Z_k,W_k)$ and  $\yde\in (\yGa\times\yDe)^\star$.
This shows that $\yP$ is a non-blocking VPA.

Now suppose that $\ysi\in L(\yP)$, that is $\ypdatrtf{((s_0,q_0),\ysi,\bot)}{\star}{\yP}{((p,r),\yeps,\yal\bot)}$, where $(s_0,q_0)\in S_{in}\times Q_{in}$,  $(p,r)\in (F\times Q)\cup(S\times G)$, and $\yal\in(\yGa\times\yDe)^\star$.
Take the case when $(p,r)\in (F\times Q)$.
We get $p\in F$ and $s_0\in S_{in}$. 
Using Proposition~\ref{prop:product-behavior} we can also write $\ypdatrtf{(s_0,\ysi,\bot)}{\star}{\yS}{(p,\yeps,\ybe\bot)}$, for some $\ybe\in \yGa^\star$.
This shows that $\ysi\in L(\yS)$.
By a similar reasoning, when $(p,r)\in (S\times G)$ we get $\ysi\in L(\yQ)$.
Thus, $L(\yP)\ysse L(\yS)\cup L(\yQ)$. 

Now let $\ysi\in  L(\yS)\cup L(\yQ)$.
Take the case $\ysi\in L(\yS)$, the case $\ysi\in L(\yQ)$ being similar.
Then we must  have $\ypdatrtf{(s_0,\ysi,\bot)}{n}{\yS}{(p,\yeps,\yal\bot)}$ for some $n\geq 0$, some $\yal\in \yGa^\star$, and some $p\in F$.
Pick any $q_0\in Q_{in}$.
Since $\yQ$ is a non-blocking VPA, Definition~\ref{def:forward} gives some $r\in Q$ and some $\ybe\in \yDe^\star$ such that $\ypdatrtf{(q_0,\ysi,\bot)}{m}{\yQ}{(r,\yeps,\ybe\bot)}$ for some $m\geq 0$, some $r\in Q$ and some $\ybe\in\yDe^\star$. 
Using Proposition~\ref{prop:stack-size} we conclude that $\vert \yal\vert=\vert\ybe\vert$.
The only-if part of Proposition~\ref{prop:product-behavior} now yields  
$\ypdatrtf{((s_0,q_0),\ysi,\bot)}{\star}{\yP}{((p,r),\yeps,\yga\bot)}$, where $\yga\in (\yGa\times\yDe)^\star$.
Clearly, $(s_0,q_0)\in S_{in}\times Q_{in}$ is an initial state of $\yP$ and $(p,r)\in  (F\times Q)\cup (S\times G)$ is a final state of $\yP$.
Hence $\ysi\in L(\yP)$. 
Thus $ L(\yS)\cup L(\yQ)\ysse L(\yP)$, and so $ L(\yS)\cup L(\yQ)=L(\yP)$.

Applying Proposition~\ref{prop:epsilon-deterministic} (2) we see that when $\yS$ and $\yQ$ are deterministic, then $\yP$ is also deterministic and has no $\yeps$-moves.

The proof is now complete.
\end{proof}

\subsection{Proof of Proposition~\ref{prop:compl-vpa}}\label{app:prop:compl-vpa}
\begin{proof}
Applying Propositions~\ref{prop:no-eps-determ} and~\ref{prop:non-blocking}, we can assume that $\yS$ is a non-blocking and deterministic VPA with $n+1$ states and no $\yeps$-moves.

Let $\yQ=\yvpa{S}{S_{in}}{A}{\yGa}{\rho}{S-F}$, that is, we switch the final states of $\yS$.
Clearly, since $\yS$ and $\yQ$ have the same set of initial states and the same transition relation, 
we see that $\yQ$ is also a non-blocking and deterministic VPA with $n+1$ states and no $\yeps$-moves.

Pick any $\ysi\in L(\yQ)$, and let $n=\vert \ysi\vert$.
Since $\yQ$ has no $\yeps$-moves, we must have $\ypdatrtf{(q_0,\ysi,\bot)}{n}{\yQ}{(r,\yeps,\yal\bot)}$ for some $q_0\in S_{in}$, $\yal\in \yGa^\star$, and some $r\in S-F$.
For the sake of contradiction, assume that $\ysi\in L(\yS)$.
Then, because $\yS$ has no $\yeps$-moves, we must have $\ypdatrtf{(s_0,\ysi,\bot)}{n}{\yS}{(p,\yeps,\ybe\bot)}$ for some $s_0\in S_{in}$, $\ybe\in \yGa^\star$, and some $p\in F$.
Since $\yS$ is deterministic we have $\vert S_{in}\vert=1$, so that $q_0=s_0$.
Since $\yS$ and $\yQ$ have exactly the same set of transitions, we can now write
$\ypdatrtf{(s_0,\ysi,\bot)}{n}{\yS}{(r,\yeps,\yal\bot)}$.
Together with $\ypdatrtf{(s_0,\ysi,\bot)}{n}{\yS}{(p,\yeps,\ybe\bot)}$ and Proposition~\ref{prop:vpa-determ}, we conclude that $p=r$, which is a contradiction since $r\in S-F$ and $p\in F$.
Thus, $L(\yQ)\ysse \ycomp{L(\yS)}$.

Now let $\ysi\not\in L(\yS)$.
Since $\yS$ is non-blocking, we have $\ypdatrtf{(s_0,\ysi,\bot)}{\star}{\yS}{(p,\yeps,\yal\bot)}$ for some $p\in S$ and $\yal\in \yGa^\star$.
Because $\ysi\not\in L(\yS)$ we need $p\in S-F$.
Since $\yS$ and $\yQ$ have the same set of transitions, we can write $\ypdatrtf{(s_0,\ysi,\bot)}{n}{\yQ}{(p,\yeps,\yal\bot)}$.
Thus, $\ysi\in L(\yQ)$ because $p\in S-F$. 
Hence, $ \ycomp{L(\yS)}\ysse L(\yQ)$, and the proof is complete. %\yfim
\end{proof}

\subsection{Proof of Proposition~\ref{prop:conct-vpa}}\label{app:prop:conct-vpa}
\begin{proof}
We have $\yS=\yvpaA$ and $B\ysse A$.
		From Proposition~\ref{prop:non-blocking} we can assume that $\yS$ has $n+1$ states, is non-blocking and has no $\yeps$-moves. 
	Define $Q=S \cup \hat{S}$, where $\hat{S}=\{\hat{s}\yst s\in S\}$  is a set of new, distinct states.
	These new states are the only final states of $\yQ$, that is, we let $\yQ=\yvpa{Q}{S_{in}}{A}{\yGa}{\mu}{\hat{S}\,}$.
We start with $\mu=\rho$ and apply the following steps:
	\begin{enumerate}
		\item {\sf Replace some original transitions:} For all $s\in F$ and all $b\in B$:
		if  $(s,b,Z,t)\in \rho$, then add $(s,b,Z,\hat{t}\,)$ to $\mu$, and remove $(s,b,Z,t)$ from $\mu$; 
		\item {\sf Create new transitions:} For all $(t,a,Z,r)\in\rho$, if
		\begin{enumerate}
			\item $a\in B$ and $t\in F$: we add $(\hat{t},a,Z,\hat{r})$ to $\mu$; 
			\item $a\not\in B$ or $t\not\in F$: we add $(\hat{t},a,Z,r)$ to $\mu$. 
		\end{enumerate}
	\end{enumerate}
	It is clear that $\yQ$ has $2n+2$ states.
	Since no $\yeps$-moves are added to $\mu$, we see that $\yQ$ has no $\yeps$-moves.
	
	In order to show that $L(\yQ)= L(\yS)B$ we need to relate the computations of $\yS$ and $\yQ$. 
	To ease the notation we let $\tilde{r}$ stand for either $r$ or $\hat{r}$, that is, $\tilde{r}\in\{r,\hat{r}\}$.
	\begin{description}
		\item[\sf Claim 1:] If $\ypdatrtf{(s,\ysi,\yal\bot)}{\star}{\yS}{(t,\yeps,\ybe\bot)}$, then for all $\tilde{s}\in\{s,\hat{s}\}$ we have some $\tilde{t}\in\{t,\hat{t}\}$ such that $\ypdatrtf{(\tilde{s},\ysi,\yal\bot)}{\star}{\yQ}{(\tilde{t},\yeps,\ybe\bot)}$.
		
		\emph{Proof.} We induct on $\vert \ysi\vert=n\geq 0$.
		When $n=0$, since $\yS$ has no $\yeps$-moves, we get $(s,\ysi,\yal\bot)=(t,\yeps,\ybe\bot)$ and the result follows easily.

		Now, let $\ysi=\yde a$ where $a\in A$, and
			$\ypdatrtf{(s,\yde a,\yal\bot)}{\star}{\yS}{(r,a,\yga\bot)}\ypdatrtf{}{1}{\yS}{(t,\yeps,\ybe\bot)}$,
		where $(r,a,Z,t)\in\rho$ is the transition used in the last step.
		The induction gives $\tilde{r}\in\{r,\hat{r}\}$ and $\ypdatrtf{(\tilde{s},\yde,\yal\bot)}{\star}{\yQ}{(\tilde{r},\yeps,\yga\bot)}$.
		
		Assume that $\tilde{r}=r$. Since  $(r,a,Z,t)\in\rho$, either $(r,a,Z,t)$ remains in $\mu$, or  step (1) of the construction gives $(r,a,Z,\hat{t}\,)\in\mu$.
		In any case, there is some $\tilde{t}\in\{t,\hat{t}\,\}$ such that $(r,a,Z,\tilde{t})$ is in $\mu$.
This gives  $\ypdatrtf{(r,a,\yga\bot)}{1}{\yQ}{(\tilde{t},\yeps,\ybe\bot)}$.
		Composing, we get 
			$\ypdatrtf{(\tilde{s},\yde a,\yal\bot)}{\star}{\yQ}{(r,a,\yga\bot)}\ypdatrtf{}{1}{\yQ}{(\tilde{t},\yeps,\ybe\bot)}$.
		Now assume $\tilde{r}=\hat{r}$. 
Since  $(r,a,Z,t)\in\rho$, step (2) of the construction implies $(\hat{r},a,Z,\tilde{t}\,)\in\mu$, for some $\tilde{t}\in\{t,\hat{t}\,\}$.
Repeating the argument we  get again  $\ypdatrtf{(\tilde{s},\ysi,\yal\bot)}{\star}{\yQ}{(\tilde{t},\yeps,\ybe\bot)}$, proving the claim. 
	\end{description}

\noindent Now let $\ysi\in L(\yS)$ and $b\in B$.
Then, we get $s\in F$ and $s_0\in S_{in}$ such that $\ypdatrtf{(s_0,\ysi ,\bot)}{\star}{\yS}{(s,\yeps,\yal\bot)}$.
Using Claim 1 we get $\tilde{s}\in\{s,\hat{s}\}$ such that $\ypdatrtf{(s_0,\ysi ,\bot)}{\star}{\yQ}{(\tilde{s},\yeps,\yal\bot)}$.
Since $\yS$ is non-blocking, we get $t\in S$ and $\ybe\in \yGa^\star$ such that
		$\ypdatrtf{(s,b,\yal\bot)}{1}{\yS}{(t,\yeps,\ybe\bot)}$.
Let $(s,b,Z,t)\in\rho$ be the transition used in this step.
First, assume that $\tilde{s}=s$.
Since $s\in F$ and $b\in B$, step (1) of the construction gives $(s,b,Z,\hat{t}\,)\in\mu$.
Thus, $(\tilde{s},b,\yal\bot)=\ypdatrtf{(s,b,\yal\bot)}{1}{\yQ}{(\hat{t},\yeps,\ybe\bot)}$.
Alternatively, if $\tilde{s}=\hat{s}$, from step (2) and with $s\in F$ and $b\in B$ we get $(\hat{s},b,Z,\hat{t}\,)\in\mu$.
Thus, $(\tilde{s},b,\yal\bot)=\ypdatrtf{(\hat{s},b,\yal\bot)}{1}{\yQ}{(\hat{t},\yeps,\ybe\bot)}$.
So, in any case, we can compose and obtain 
$\ypdatrtf{(s_0,\ysi b,\bot)}{\star}{\yQ}{(\tilde{s},b,\yal\bot)}\ypdatrtf{}{1}{\yQ}{(\hat{t},\yeps,\ybe\bot)}$.
Since $\hat{t}$ is final in $\yQ$ we see that $\ysi b\in L(\yQ)$.
Thus, $L(\yS)B\ysse L(\yQ)$.
	
	For the other direction, we need to relate computations in $\yQ$ to computations in $\yS$.
	\begin{description}
		\item[\sf Claim 2:] If $\ypdatrtf{(s,\ysi,\yal\bot)}{\star}{\yQ}{(\tilde{t},\yeps,\ybe\bot)}$ where $\tilde{t}\in\{t,\hat{t}\,\}$, then either:\\
		\begin{enumerate}
			\item $\tilde{t}=t$ and $\ypdatrtf{(s,\ysi,\yal\bot)}{\star}{\yS}{(t,\yeps,\yal\bot)}$
			\item $\tilde{t}=\hat{t}$, $\ysi=\yde b$ with $b\in B$, and $\ypdatrtf{(s,\yde b,\yal\bot)}{\star}{\yS}{(r,b,\yga\bot)}\ypdatrtf{}{1}{\yS}{(t,\yeps,\ybe\bot)}$ with $r\in F$, $\yga\in\yGa^\star$.
		\end{enumerate}
		\emph{Proof.}
		When $\ysi=\yeps$, since $\yQ$ has no $\yeps$-moves, we get $s=t$ and $\yal=\ybe$.
		The result follows easily.
		
		Now let $\ysi=\yde a$ with $a\in A$.
		Since $\yQ$ has no $\yeps$-moves, there are states $\tilde{r}$ and $\tilde{t}$ such that 
	$\ypdatrtf{(s,\yde a,\yal\bot)}{\star}{\yQ}{(\tilde{r},a,\yga\bot)}\ypdatrtf{}{1}{\yQ}{(\tilde{t},\yeps,\ybe\bot)}$,
		where $\tilde{r}\in\{r,\hat{r}\}$ and $\tilde{t}\in\{t,\hat{t}\,\}$.
		
		Assume first that $\tilde{r}=r$.
		Then, inductively, item (1) gives $\ypdatrtf{(s,\yde a,\yal\bot)}{\star}{\yS}{(r,a,\yga\bot)}$.
		Let $(r,a,Z,\tilde{t})$ be the transition in $\mu$ used in the last step $\ypdatrtf{(r,a,\yga\bot)}{1}{\yQ}{(\tilde{t},\yeps,\ybe\bot)}$.
		If $\tilde{t}=t$ it follows that $(r,a,Z,t)$ is also in $\rho$, and then we also have $\ypdatrtf{(r,a,\yga\bot)}{1}{\yS}{(t,\yeps,\ybe\bot)}$.
		Composing we get $\ypdatrtf{(s,\ysi,\yal\bot)}{\star}{\yS}{(t,\yeps,\ybe\bot)}$, and item (1) of the Claim holds.
		If $\tilde{t}=\hat{t}$ we get $(r,a,Z,\hat{t}\,)\in\mu$.
		According to item (1) of the construction, we get $r\in F$, $a\in B$ and $(r,a,Z,t)\in\rho$.
		Again, we may write  $\ypdatrtf{(r,a,\yga\bot)}{1}{\yS}{(t,\yeps,\ybe\bot)}$.
		Composing, we see that now item (2) of the Claim holds.
		
		Alternatively, let $\tilde{r}=\hat{r}$.
		Now, from $\ypdatrtf{(s,\yde,\yal\bot)}{\star}{\yQ}{(\tilde{r},\yeps,\yga\bot)}=(\hat{r},\yeps,\yga\bot)$, the induction and item (2) of the Claim give $\yde=\eta b$, $b\in B$ and 
		$\ypdatrtf{(s,\eta b, \yal\bot)}{\star}{\yS}{(q,b,\yte\bot)}\ypdatrtf{}{1}{\yS}{(r,\yeps,\yga\bot)}$ for some $q\in F$ and $\yte\in\yGa^\star$.
		Let $(\hat{r},a,Z,\tilde{t}\,)\in\mu$ be the transition  used in the last move of $\yQ$, namely $(\tilde{r},a,\yga\bot)=\ypdatrtf{(\hat{r},a,\yga\bot)}{1}{\yQ}{(\tilde{t},\yeps,\ybe\bot)}$.
		Item (2) of the construction gives $(r,a,Z,t)\in\rho$.
		Then we may also write $\ypdatrtf{(r,a,\yga\bot)}{1}{\yS}{(t,\yeps,\ybe\bot)}$.
		If $\tilde{t}=t$, recalling that $\ysi=\yde a=\eta ba$, we compose to get 
		$\ypdatrtf{(s,\eta ba, \yal\bot)}{\star}{\yS}{(q,ba,\yte\bot)}\ypdatrtf{}{1}{\yS}{(r,a,\yga\bot)}\ypdatrtf{}{1}{\yS}{(t,\yeps,\ybe\bot)}$, 
		thus showing that item (1) of the Claim holds. 
		Finally, when $\tilde{t}=\hat{t}$ we get $(\hat{r},a,Z,\hat{t}\,)\in\mu$ and item (2a) of the construction gives $a\in B$ and $r\in F$. 
		Putting it together, we have $\ysi=\eta b a$, $a\in B$, $r\in F$ and now
		$\ypdatrtf{(s,\eta b a, \yal\bot)}{\star}{\yS}{(r,a,\yga\bot)}\ypdatrtf{}{1}{\yS}{(t,\yeps,\ybe\bot)}$.
		We see that item (2) of the Claim holds, completing the proof.
	\end{description}
	Now assume that $\ysi\in L(Q)$.
	Since $S_{in}$ is the set of initial states and $\hat{S}$ is the set of final states of $\yQ$,
	we must have $\ypdatrtf{(s_0,\ysi, \bot)}{\star}{\yQ}{(\hat{t}\,,\yeps,\yal\bot)}$ for some $s_0\in S_{in}$, $\hat{t}\,\in\hat{S}$, and $\yal\in\yGa^\star$.
	From Claim 2, item 2, we get $\ysi=\yde b$, $b\in B$ and $\ypdatrtf{(s_0,\yde,\yal\bot)}{\star}{\yS}{(r,\yeps,\yga\bot)}$ with $r\in F$.
	Clearly, $\yde\in L(\yS)$, so that $\ysi\in L(\yS)B$.
	We now have $L(\yQ)\ysse L(\yS)B$.
	
	Since we already had $L(\yS)B\ysse L(\yQ)$, we conclude that $L(\yQ)= L(\yS)B$, as expected.
\end{proof}

% !TeX spellcheck=en_US
%\subsection{Detailing VPTS models}\label{app:subsec:vpts}

\subsection{Proof of Proposition~\ref{prop:contracted-vpts}}\label{app:prop:contracted-vpts}
\begin{proof}
We have $\yS=\yvpts{S}{S_{in}}{L}{\yGa}{T}$.
	We first construct a context-free grammar (CFG)~\cite{sipser,hopcu-introduction-1979} $G$ whose leftmost
derivations will simulate traces of $\yS$, and vice-versa.

The terminals of $G$ are the transitions in $T$. 
The non-terminals are of the form $[s,Z,p]$ where $s,p\in S$ are states of $\yS$ and $Z\in\yGa_\bot$ is a stack symbol. 
If state $p$ is not important, we may write $[s,Z,-]$.
For  the main idea, let  $t_i=[s_i,a_i,Z_i,p_i]$, $1\leq i\leq n$ be transitions of $\yS$ and let $\ysi=a_1a_2\cdots a_n$.
Then, if in $G$ we have a leftmost derivation
$$\ycfgtrtfl{[s_0,\bot,-]}{\star}{}{t_1\cdots t_n[r_1,W_1,r_2][r_2,W_2,r_3]\cdots[r_m,W_m,r_{m+1}][r_{m+1},\bot,-]}$$
it must be the case that $\yS$ starting at the initial configuration $(s_0,\bot)$ can move along the transitions $t_1, \ldots, t_n$, in that order, to reach the configuration $(r_1,W_1W_2\cdots W_m\bot)$,
that is, $\ytrtf{(s_0,\bot)}{\ysi}{\yS}{(r_1,W_1W_2\cdots W_m\bot)}$ and vice-versa.
%Further, state $r_1$ will be the same as the target state $p_n$ of te 
We are also guessing that, when $\yS$ removes  some $W_i$ from the stack  --- if it eventually does, --- then it will enter state $r_i$, $i=1,\ldots, m$.

Formally, let $G=(V, T,P,I)$ where $V$ is the set of non-terminals, $T$ is the set of terminals $P$ is the set of productions and $I$ is the initial non-terminal of $G$. 
%In addition, we will use an auxiliary set $NV$ of non-terminals.
The set of terminals is the same set $T$ of transition of $\yS$.
%The initial symbol  is $I\in V$.
The sets $V$ and $P$ are constructed as follows, and where $NV$ is an auxiliary set. 
Start with $V=\{I\}$, 
$NV=\{[s_0,\bot,-]:\,\,\text{for all } s_0\in S_{in}\}$, and 
$P=\{\ycfgtrtf{I}{}{}{[s_0,\bot,-]}:\,\,[s_0,\bot,-]\in NV\}$.
Next, we apply the following simple algorithm:

\noindent\makebox[\textwidth]{\rule[-4pt]{.4pt}{4pt}\hrulefill\rule[-4pt]{.4pt}{4pt}}\\
\texttt{\fontsize{8pt}{9pt}\selectfont\newline
\noindent While $NV\neq\yemp$:
\begin{enumerate} 
	\item Let $[s,Z,p]\in NV$. Remove $[s,Z,p]$ from $NV$ and add it to $V$.
	\item
	For all $t\in  T$: 
	\begin{enumerate} 
		\item If $t=(s,a,W,q)\in T_c$, add 
		$\ycfgtrtf{[s,Z,p]}{}{}{t[q,W,r][r,Z,p]}$ to $P$, for all $r\in S$. 
		If $[q,W,r] \notin V \cup NV$, add $[q,W,r]$ to $NV$, and if $[r,Z,p] \notin V \cup NV$ add $[r,Z,p]$ to $NV$; 
		\item If $t=(s,a,\yait,q)\in T_i$,  add $\ycfgtrtf{[s,Z,p]}{}{}{t[q,Z,p]}$ to $P$. 
		If $[q,Z,p] \notin V \cup NV$, add $[q,Z,p]$ to $NV$; 
		\item If $t=(s,a,Z,q)\in T_r$, then 
		\begin{enumerate}
			\item if $Z\neq \bot$ and $p=q$ add $\ycfgtrtf{[s,Z,p]}{}{}{t}$ to $P$; and 
			\item if $Z=\bot$ and $p=-$, add $\ycfgtrtf{[s,Z,p]}{}{}{t[q,\bot,-]}$ to $P$, and if
			$[q,\bot,-] \notin V \cup NV$ add $[q,\bot,-]$ to $NV$
		\end{enumerate}
	\end{enumerate}
\end{enumerate}
}
\vspace*{-3ex}\noindent\makebox[\textwidth]{\rule{.4pt}{4pt}\hrulefill\rule{.4pt}{4pt}}

We indicate by $\hookrightarrow$ the leftmost derivation relation induced by $G$
over $(V\cup  T)^\star$.

%%%CLAIM 1
The next claim says that leftmost derivations of $G$ faithfully simulate traces of $\yS$.
\begin{description}
	\item[{\sf Claim 1:}] Let $t_i=(p_i,a_i,Z_i,q_i)$, $1\leq i\leq n$, $n\geq 0$, 
	and $\ysi=a_1a_2\cdots a_n$.
Assume that 
	\begin{equation}\label{prop2.37c1a}
		\ycfgtrtfl{[s,\bot,-]}{n}{G}{t_1t_2\cdots t_n[u_0,W_1,u_1][u_1,W_2,u_2]\cdots [u_{m-1},W_m,u_m]}.
	\end{equation}
	Then we must have:  
	\begin{align}
		&m\geq 1,  u_m=-, W_m=\bot,\,\text{and}\,\, W_i\neq \bot, 1\leq i<m \label{prop2.37c1i2}\\
		& \ytrtf{(s,\bot)}{\ysi}{\yS}{(u_0,W_1\cdots W_m)} \label{prop2.37c1i}\\
		& \text{either (i) $n=0$ with $s=u_0$; or (ii) $n\geq 1$ with $s=p_1$, $q_n=u_0$}\label{prop2.37c1i3}.
	\end{align}
	\item[{\sf Proof:}] 
	If $n=0$ then $\ysi=\yeps$, $[s,\bot,-]=[u_0,W_1,u_1]$ and $m=1$.
	Hence, $W_m=W_1=\bot$, $u_m=u_1=-$ and $s=u_0$.
	Since we can write $\ytrtf{(s,\bot)}{\ysi}{\yS}{(u_0,W_1)}$,
	the result follows.
	
	Proceeding inductively, fix some $n\geq 1$ and assume the assertive is true for $n-1$.
Now suppose that Eq.~(\ref{prop2.37c1a}) holds.
	Since derivations in $G$ are leftmost, and each production in $G$ has exactly one terminal symbol as the leftmost symbol in the right-hand side, using the induction hypothesis we can write
	\begin{align}
		\ycfgtrtfl{[s,\bot,-]&}{n-1}{G}{t_1\cdots t_{n-1}[u_0,W_1,u_1][u_1,W_2,u_2]\cdots [u_{m-1},W_m,u_m]}\notag\\
		&\ycfgtrtfl{}{1}{G}{t_1\cdots t_{n-1}t_n\ybe[u_1,W_2,u_2]\cdots [u_{m-1},W_m,u_m]}\label{prop2.37c1e},
	\end{align}
	where $\ycfgtrtf{[u_0,W_1,u_1]}{}{}{t_n\ybe}$ was the production used in the last step. 
	We also get 
	\begin{align}
		&m\geq 1,  u_m=-, W_m=\bot,\,\text{and}\,\, W_i\neq \bot, 1\leq i<m \label{prop2.37c1k2}\\
		& \ytrtf{(s,\bot)}{\ysi_1}{\yS}{(u_0,W_1\cdots W_m)},\,\, \ysi_1=a_1a_2\cdots a_{n-1}\label{prop2.37c1d}\\
		& \text{either (i) $n-1=0$, $s=u_0$; or (ii) $n-1\geq 1$ with $s=p_1$, $q_{n-1}=u_0$}\label{prop2.37c1g}.
	\end{align}
	
	By the construction of $G$ there are three cases.
	
	As a first alternative, assume $a_n\in L_c$, so that $t_n=[p_n,a_n,Z_n,q_n]\in T_c$.
	Then by item (2a) of the construction of $G$ we must have 
	$\ycfgtrtf{[u_0,W_1,u_1]}{}{}{t_n[q_n,Z_n,r][r,W_1,u_1]}$, where $r\in S$, and $p_n=u_0$.
	Thus, $\ybe=[q_n,Z_n,r][r,W_1,u_1]$.
	Together with Eq. (\ref{prop2.37c1e}) we get 
	\begin{align}\label{prop2.37c1f}
		\ycfgtrtfl{[s,\bot,-]&}{n}{G}{t_1\cdots t_{n-1}t_n[q_n,Z_n,r][r,W_1,u_1][u_1,W_2,u_2]\cdots [u_{m-1},W_m,u_m]}.
	\end{align}
	Now define $v_0=q_n$, $X_1=Z_n$, $v_1=r$.
	Also, let $v_{i+1}=u_i$ and $X_{i+1}=W_i$, $1\leq i\leq m$.
	We get
	$$\ycfgtrtfl{[s,\bot,-]}{n}{G}{t_1\cdots t_n[v_0,X_1,v_1][v_1,X_2,v_2]\cdots [v_{m},X_{m+1},v_{m+1}]}.$$
%	and condition (\ref{prop2.37c1a}) is satisfied.
	Clearly, using condition (\ref{prop2.37c1k2}) we get $m+1\geq 1$, $X_{m+1}=W_m=\bot$ and $v_{m+1}=u_m=-$.
	Also $X_1=Z_n$ and since $t_n\in T_c$ we get $Z_n\neq \bot$, so that $X_1\neq \bot$.
	Using condition (\ref{prop2.37c1k2}) we get $X_{i+1}=W_{i}\neq \bot$, $1\leq i\leq m$, and we conclude that condition (\ref{prop2.37c1i2}) holds.
	Next, we examine condition (\ref{prop2.37c1i3}).
	We already have $n\geq 1$, $p_n=u_0$ and $v_0=q_n$. 
	If condition (\ref{prop2.37c1g}.i) holds, then $n=1$  and $s=u_0$, so that $s=p_n=p_1$, and condition (\ref{prop2.37c1i3}) holds.
	On the other hand, if condition (\ref{prop2.37c1g}.ii) holds, we immediately get $s=p_1$ and, because $v_0=q_n$, we conclude 
	condition (\ref{prop2.37c1i3}) holds again.
	Finally, since  $t_n=[p_n,a_n,Z_n,q_n]\in T_c$ and $p_n=u_0$, together with condition (\ref{prop2.37c1d}), and because $\ysi=\ysi_1a_n$, we can write 
	\begin{align*}
		\ytrtf{(s,\bot)&}{\ysi_1}{\yS}{(u_0,W_1\cdots W_m)=(p_n,W_1\cdots W_m)}\\
		\ytrtf{&}{a_n}{\yS}{(q_n,Z_nW_1\cdots W_m)=(v_0,X_1\cdots X_{m+1})}.
	\end{align*}
Hence, condition	(\ref{prop2.37c1i}) is verified, and we conclude that the claim holds in this case.
	
	For the second alternative, let $a_n\in L_i\cup\{\ytau\}$, so that $t_n=[p_n,a_n,Z_n,q_n]\in T_i$. 
	The reasoning is entirely similar to the preceding case.
	
	As a final alternative, let  $a_n\in L_r$, so that $t_n=[p_n,a_n,Z_n,q_n]\in T_r$.
	Looking at item (2c) of the construction of $G$, and recalling that $\ycfgtrtf{[u_0,W_1,u_1]}{}{}{t_n\ybe}$ is the production used to get Eq. (\ref{prop2.37c1e}), we need $u_0=p_n$ and $Z_n=W_1$.
	
	The first sub-case is when we followed step (2c.i).
We must then have $W_1\neq \bot$, $u_1=q_n$ and $\ybe=\yeps$.
	From Eq. (\ref{prop2.37c1e}) we get
	$\ycfgtrtfl{[s,\bot,-]}{n}{G}{t_1\cdots t_{n-1}t_n[u_1,W_2,u_2]\cdots [u_{m-1},W_m,u_m]}$.
	Since $W_m=\bot$ and $W_1\neq \bot$ we must have $m\geq 2$.
	Define
	$v_{i-1}=u_{i}$, $X_{i}=W_{i+1}$, $1\leq i\leq m-1$.
	Now we have 
	\begin{align}\label{prop2.37c1j}
		\ycfgtrtfl{[s,\bot,-]}{n}{G}{t_1\cdots t_n[v_0,X_1,v_1][v_1,X_2,v_2]\cdots [v_{m-2},X_{m-1},v_{m-1}]}.
	\end{align}
	Note that $m\geq 2$ implies $m-1\geq 1$, and $v_{m-1}=u_m=-$ and $X_{m-1}=W_m=\bot$. 
	Together with condition (\ref{prop2.37c1k2}) we see that
	$X_i=W_{i+1}\neq\bot$, $1\leq i\leq m-2$, so that condition (\ref{prop2.37c1i2}) holds.
	Since $m\geq 2$,  $u_0=p_n$, $W_1=Z_n$, $u_1=q_n$, and $[p_n,a_n,Z_n,q_n]\in T_r$ we get 
	$$\ytrtf{(u_0,W_1W_2\cdots W_m)}{a_n}{\yS}{(u_1,W_2\cdots W_m)=(v_0,X_1X_2\cdots X_{m-1})}.$$
	Because $\ysi=\ysi_1a_n$, we can compose with Eq. (\ref{prop2.37c1d}) and get
	$$\ytrtf{(s,\bot)}{\ysi}{\yS}{(u_1,W_2\cdots W_m)=(v_0,X_1X_2\cdots X_{m-1})},$$
	so that the condition (\ref{prop2.37c1i}) is satisfied.
	We already have $v_0=u_1$ and $u_1=q_n$, so that $v_0=q_n$.
	It is clear that $n\geq 1$, so in order to verify condition (\ref{prop2.37c1i3}) we need $s=p_1$.
	If $n=1$, condition (\ref{prop2.37c1g}) gives $s=u_0$, and since we already have $u_0=p_n=p_1$, we get $s=p_1$. 
	If $n\geq 2$ we have $n-1\geq 1$ and condition (\ref{prop2.37c1g}) immediately gives $s=p_1$, 
	concluding this sub-case.
	
	For the last sub-case, assume we followed step (2c.ii).
	In this case we get $u_0=p_n$, $W_1=Z_n=\bot$, $u_1=-$, and $\ybe=[q_n,\bot,-]$.
	Since $W_1=\bot$, condition (\ref{prop2.37c1k2}) says that $m=1$.
	We see that condition (\ref{prop2.37c1i2}) is immediately satisfied.
	From Eq. (\ref{prop2.37c1e}) we get
	$\ycfgtrtfl{[s,\bot,-]}{n}{G}{t_1\cdots t_{n-1}t_n[q_n,\bot,-]}$.
	Let $v_0=q_n$, $v_1=-$, and $X_1=\bot$.
	Thus, 
%condition (\ref{prop2.37c1a}) reads 
$\ycfgtrtfl{[s,\bot,-]}{n}{G}{t_1\cdots t_n[v_0,X_1,v_1]}$.
	Since $m=1$, condition (\ref{prop2.37c1d}) reduces to 
	$\ytrtf{(s,\bot)}{\ysi_1}{\yS}{(u_0,W_1)}$.
Since 	$u_0=p_n$, $W_1=Z_n=\bot$, and $[p_n,a_n,Z_n,q_n]\in  T_r$ we get
	$$\ytrtf{(s,\bot)}{\ysi_1}{\yS}{(u_0,W_1)=(p_n,\bot)}\ytrtf{}{a_n}{\yS}{(q_n,\bot)=(v_0,\bot)},
	$$
	and we see that condition (\ref{prop2.37c1i}) holds.
	As a final step, we verify that condition (\ref{prop2.37c1i3}) also holds.
	Clearly, $n\geq 1$, and we already have $v_0=q_n$.
	Again, if $n=1$, condition (\ref{prop2.37c1g}) gives $s=u_0$, and since we already have $u_0=p_n=p_1$, we get $s=p_1$. 
	If $n\geq 2$ we have $n-1\geq 1$ and condition (\ref{prop2.37c1g}) immediately gives $s=p_1$, 
	concluding the argument for this sub-case.
	This completes the argument for the last alternative.
	
	Since we examined all three alternatives in case (2c), we conclude that the claim holds. \yfim
\end{description}

%%%%%%%%%%% CLAIM 2
For the converse, we show that any trace of $\yS$ can be simulated by a leftmost derivation of $G$.
\begin{description}
	\item[{\sf Claim 2:}] Let $\ytrtf{(s_0,\bot)}{\ysi}{\yS}{(u_0,W_1\cdots W_m\bot)}$ with $s_0\in S_{in}$, $m\geq 0$, 
	$\ysi=a_1a_2\cdots a_n\in L_\ytau^\star$, and $n\geq 0$. 
	Assume that the transitions used in this trace were, in order, $t_i=(p_i,a_i,Z_i,q_i)\in  T$, $1\leq i\leq n$.
	Then for all $u_i\in S$, $1\leq i\leq m$, we have
	\begin{align}
		&\ycfgtrtfl{[s_0,\bot,-]}{n}{G}{t_1t_2\cdots t_n[u_0,W_1,u_1][u_1,W_2,u_2]\cdots [u_{m-1},W_m,u_m][u_m,\bot,-]}\label{prop2.37-claim2a}\\
		%&\text{If $n=0$ then $m=1$, $s_0=p$, $W_m=\bot$, $u_m=-$}\label{prop2.37-claim2b}\\
		&\text{If $n\geq 1$ then  $s_0=p_1$, $u_0=q_n$.} \label{prop2.37-claim2c}
	\end{align}
	\item[{\sf Proof:}]
	Assume first that $n=0$, so that $\ysi=\yeps$.
Then $\ytrtf{(s_0,\bot)}{\ysi}{\yS}{(u_0,W_1\cdots W_m\bot)}$  implies $s_0=u_0$, $m=0$.
Since we can write $s_0\in S_{in}$, we get that $[s_0,\bot,-]$ is a non-terminal of $G$.
	Also, 
	Since $\ycfgtrtfl{[s_0,\bot,-]}{0}{G}{[u_0,\bot,-]}$, condition (\ref{prop2.37-claim2a}) holds with $m=0$.
	Condition (\ref{prop2.37-claim2c}) holds vacuously.
	
	Now assume $n\geq 1$. 
	Since $t_n$ was the transition used in the last step, we can write 
	\begin{align}
		&\ytrtf{(s_0,\bot)}{\ysi_1}{\yS}{(p_n,X_1\cdots X_k\bot)}\ytrtf{}{a_n}{\yS}{(q_n,\ybe \bot)},\label{prop2.37-claim2d}
	\end{align}
	where $\ysi_1=a_1\cdots a_{n-1}$, $k\geq 0$, and $u_0=q_n$.
	From the induction hypothesis, condition (\ref{prop2.37-claim2a}), for all $u_i\in S$, $1\leq i\leq k$, we get 
	\begin{align}
		\ycfgtrtfl{[s_0,\bot,-]}{n-1}{G}{t_1\cdots t_{n-1}[p_n,X_1,u_1]\cdots [u_{k-1},X_k,u_k][u_k,\bot,-]}. \label{prop2.37-claim2e}
	\end{align}
	If $n=1$, so that $n-1=0$, Eq. (\ref{prop2.37-claim2e}) says that 
	$s_0=p_n=p_1$.
	If $n-1>0$, the induction hypothesis in Eq. (\ref{prop2.37-claim2e})
	says that $s_0=p_1$.
	In any case, $s_0=p_1$. 
	Since we already have $u_0=q_n$, we conclude that condition (\ref{prop2.37-claim2c}) always holds.
	
	Next, we argue that either $k>0$ and $[p_n,X_1,u_1]$ was added to the set $NV$, or $k=0$ and $[p_n,\bot,-]$ was added to $NV$  during the construction of $G$.
	If $n-1=0$ then $[p_n,X_1,u_1]=[s_0,\bot,-]$. 
	Since $s_0\in S_{in}$ we know that $[p_n,\bot,-]$ was added to the initial $NV$ set. 
	If $n-1>1$ then  $[p_n,X_1,u_1]$ was on the right-hand side of a production of $G$ and so it was also added to $NV$ during the construction of $G$.
	In any case, we can assume that $[p_n,X_1,u_1]$ was added to the set $NV$. 
	
	Following step (2) in the construction of $G$, we break the argument in there cases:
	\begin{description}
		\item[$a_n\in L_c$:] 
		in this case we have $t_n=(p_n,a_n,Z_n,q_n)\in T_c$.
		Pick any $v\in S$.
		Following item (2a) in the construction of $G$,  we must have  
		$\ycfgtrtf{[p_n,X_1,u_1]}{}{}{t_n[q_n,Z_n,v][v,X_1,u_1]}$ as a production of $G$.
		Together with Eq. (\ref{prop2.37-claim2e}) we can now write 	
		\begin{align}
			\ycfgtrtfl{[s_0,\bot,-]&}{n}{G}{t_1\cdots t_{n-1}t_n[q_n,Z_n,v][v,X_1,u_1]\cdots [u_{k-1},X_k,u_k][u_k,\bot,-]
			},\notag
		\end{align}
for all $v\in S$, and all $u_i\in S$ $i=1,\cdots, k$.
		Because $t_n$ was the last transition used in Eq. (\ref{prop2.37-claim2d}), we must have $\ybe=Z_nX_1\cdots X_k$.
Hence, we can use Eq. (\ref{prop2.37-claim2d}) and write 
		$\ytrtf{(s_0,\bot)}{\ysi}{\yS}{(q_n,Z_nX_1\cdots X_k\bot)}$, where $\vert \ysi \vert=n$.
		We conclude that condition (\ref{prop2.37-claim2a}) holds.
		\item[$a_n\in L_i\cup\{\ytau\}$:]  we have $t_n=(p_n,a_n,Z_n,q_n)\in T_i$ and we can reason as in the preceding case.
		
		\item[$a_n\in L_r$:]	
		now we have $t_n=(p_n,a_n,Z_n,q_n)\in T_r$.
		
		First assume that $k=0$ in Eq. (\ref{prop2.37-claim2d}).
		Since $t_n$ was the last transition used in that equation, we need
		$Z_n=\bot$ and $\ybe=\yeps$.
		We now have $\ytrtf{(s_0,\bot)}{\ysi}{\yS}{(q_n,\bot)}$.
		We already have $u_0=q_n$ and, with $k=0$, Eq. (\ref{prop2.37-claim2e}) reduces to 
		$\ycfgtrtfl{[s_0,\bot,-]}{n-1}{G}{t_1\cdots t_{n-1}[p_n,\bot,-]}$.
		We now have $[p_n,\bot,-]$ as a non-terminal of $G$, and $t_n=(p_n,a_n,\bot,q_n)\in T_r$.
		By item (2c.ii) of the construction of $G$, it follows that
		$\ycfgtrtf{[p_n,\bot,-]}{}{}{t_n[q_n,\bot,-]}$ is a production of $G$.
		Composing, we get   $\ycfgtrtfl{[s_0,\bot,-]}{n}{G}{t_1\cdots t_{n}[q_n,\bot,-]}$, and condition (\ref{prop2.37-claim2a}) holds with $m=0$.
		
		Lastly, assume $k\geq 1$ in Eq. (\ref{prop2.37-claim2d}).
		From Definition~\ref{def:simplemove}, it is clear that $X_1\neq \bot$.
		Again, because $t_n$ was the last transition used in Eq. (\ref{prop2.37-claim2d}), we must have $Z_n=X_1\neq \bot$ and $\ybe=X_2\cdots X_k$, so that we now have $\ytrtf{(s_0,\bot)}{\ysi}{\yS}{(q_n,X_2\cdots X_k\bot)}$.
		Recall that $[p_n,X_1,u_1]$ is a non-terminal of $G$, and Eq. (\ref{prop2.37-claim2e}) holds for all $u_i\in S$, $1\leq i\leq k$.
		Choose $u_1=q_n$, so that we have $[p_n,X_1,u_1]$ as a non-terminal of $G$, and $t_n=(p_n,a_n,X_1,u_1)$.
		Now,  using item (2c.i) of the construction of $G$ we see that
		$\ycfgtrtf{[p_n,X_1,u_1]}{}{}{t_n}$ is a production of $G$.
		Composing with Eq. (\ref{prop2.37-claim2e}) we obtain 
		$$\ycfgtrtfl{[s_0,\bot,-]}{n}{G}{t_1\cdots t_{n}[q_n,X_2,u_2]\cdots [u_{k-1},X_k,u_k][u_k,\bot,-]},$$ 
		and condition (\ref{prop2.37-claim2a}) holds with 
		$m=k-1\geq 0$, concluding this case.
	\end{description} 
\end{description}

Now we can extract a contracted VPTS $\yQ=\yvpts{Q}{Q_{in}}{A}{\yGa}{ R}$ from the original VPTS $\yS$.
First, we determine the set $LN$ of all non-terminals of $G$ that
appear in the leftmost position in a derivation of $G$, that is,
\begin{align}
	LN=\big\{[s,Z,p]\in NV\yst \ycfgtrtfl{I}{\star}{G}{t_1t_2\cdots t_n[s,Z,p]\yal, \yal \in V^\star}\big\}.\label{prop2.37n}
\end{align}
This can be accomplished by a backward search on the productions of $G$ to find all non-terminals $[s,Z,p]$ of $G$ that generate at least one string
of terminals, that is, $\ycfgtrtfl{[s,Z,p]}{\star}{G}{t_1t_2\cdots t_n}$ for some $t_i\in  R$, $1\leq i\leq n$.
Next, a forward search on the productions of $G$ collects all non-terminals in $LN$.

In a second step, we collect the transitions in $R$ as follows.

\noindent\makebox[\textwidth]{\rule[-4pt]{.4pt}{4pt}\hrulefill\rule[-4pt]{.4pt}{4pt}}\\
\texttt{\fontsize{8pt}{9pt}\selectfont\newline
Start with $ R=\yemp$. \\
Next, for all $[s,Z,p]\in LN$ and all 
$t=(s,a,W,q)\in  T$, add $t$ to $ R$ if:
\begin{enumerate}
	\item[3.] $a\in L_c\cup L_i\cup\{\ytau\}$, or
	\item[4.] $a\in L_r$ and either (i) $Z=W\neq \bot$ and $p=q$; or (ii) $Z=W=\bot$ and $p=-$.   
\end{enumerate}
}
\vspace*{-3ex}\noindent\makebox[\textwidth]{\rule{.4pt}{4pt}\hrulefill\rule{.4pt}{4pt}}

In the resulting directed graph formed by all productions in $ R$, remove any state that is not reachable from an initial state in $S_{in}$, and name $Q$ the set of  remaining states.
Finally, let 
%$F_\yQ=Q\cap F_\yS$ and 
$Q_{in}=S_{in}$.  

%%%%%%%%%%%%%%%%% CLAIM 3
We can now show that $\yQ$ is contracted and equivalent to $\yS$, as needed.
\begin{description}
	\item[{\sf Claim 3:}] $\yQ$ is a contracted VPTS.
	\item[{\sf Proof:}]
	Let $t=(s,a,W,q)\in R_r$, so that $a\in\ L_r$.
	By step (4) above, we need $[s,W,p]$ leftmost in $G$, and either 
	(i) $W\neq \bot$ and $p=q$, or (ii) $W=\bot$ and $p=-$.
	
	Since $[s,W,p]$ is leftmost in $G$, by the form of the productions in $G$, we need $t_i=(p_i,a_i,Z_i,q_i)$, $1\leq i\leq n$, $n\geq 0$, $[u_{j-1},W_j,u_j]$, $1\leq j\leq m$, $m\geq 1$, such that
	$$\ycfgtrtfl{[s_0,\bot,-]}{n}{G}{t_1t_2\cdots t_n[u_0,W_1,u_1][u_1,W_2,u_2]\cdots [u_{m-1},W_m,u_m]},$$
	with $s_0\in Q_{in}$ and $[s,W,p]=[u_0,W_1,u_1]$.
Hence, $s=u_0$ and $W_1=W$.
	
	Using Claim 1 we can write  $\ytrtf{(s_0,\bot)}{\ysi}{\yS}{(u_0,W_1W_2\cdots W_m)}
	=(s,WW_2\cdots W_m)$, and where $\ysi=a_1a_2\cdots a_n$.
	Let $\mu=h_{\ytau}(\ysi)$.
	We now have $\ytrut{(s_0,\bot)}{\mu}{\yS}{(s,WW_2\cdots W_m)}$.
	If $m>1$, and remembering that $t=(s,a,W,q)$, we see that condition (i) of Definition~\ref{def:vpts-reduced} is immediately satisfied.
	When $m=1$, condition \ref{prop2.37c1i2} in Claim 1 says that $W_m=\bot$, and we now have 
	$W_m=W_1=W=\bot$ and $\ytrut{(s_0,\bot)}{\mu}{\yS}{(s,\bot)}$.
Since $t=(s,a,\bot,q)$, we see that condition (ii) of Definition~\ref{def:vpts-reduced} can also be satisfied.
\end{description}

%%%%%%%%%%%%%%%%% CLAIM 4
\begin{description}
	\item[{\sf Claim 4:}] $tr(\yQ)=tr(\yS)$ and $otr(\yQ)=otr(\yS)$. 
	\item[{\sf Proof:}]
	We trivially obtain that  $tr(\yQ)\ysse tr(\yS)$ since $ R \subseteq  T$, $Q_{in}=S_{in}$, and $Q\ysse S$. 
	
	Now assume $\ysi\in tr(\yS)$ with  $\ysi=a_1\cdots a_n$, $n\geq 0$.
	Then we have 
	\begin{equation}\label{prop2.37c}
		\ytrtf{(s_0,\bot)}{\ysi}{\yS}{(f,W_1\cdots W_m\bot)}
	\end{equation}
	with $m\geq 0$, and $s_0\in S_{in}$. 
	Let $t_i=(p_i,a_i,Z_i,q_i)\in  T$, $1\leq i\leq n$, be the transitions used in this trace of $\yS$, and in this order.
	Using Claim 2, we get $r_i\in S$, $1\leq i\leq m$,  
	such that 
	\begin{equation}\label{prop2.37b}
		\ycfgtrtfl{[s_0,\bot,-]}{n}{G}{t_1t_2\cdots t_n[f,W_1,r_1][r_1,W_2,r_2]\cdots [r_{m-1},W_m,r_m][r_m,\bot,-]}.
	\end{equation}
	We want to show that $t_k$ is also a transition of $\yQ$, $1\leq k\leq n$.
That is, we want to show that $t_k$ was added to $R$ according to rules (3) and (4) above.
	Fix some $k$, $1\leq k\leq n$, with $t_k=(p_k,a_k,Z_k,q_k)$.
	Since the derivation in Eq.~(\ref{prop2.37b}) is leftmost, and $G$ has exactly one terminal in the right-hand side of any production, we must have
	$$\ycfgtrtfl{[s_0,\bot,-]}{k-1}{G}{t_1\cdots t_{k-1}[u_1,X_1,u_2][u_2,X_2,u_3]\cdots [u_\ell,X_\ell,-]},$$
and the next production used in Eq.~(\ref{prop2.37b}) was $\ycfgtrtf{[u_1,X_1,u_2]}{}{}{t_k\ybe}$, for some $\ybe\in V^\star$.
From the construction of $G$ we always have $u_1=p_k$, so that  $[u_1,X_1,u_2]=[p_k,X_1,u_2]$ is leftmost in $G$.
From Eq. (\ref{prop2.37n}) we get $[p_k,X_1,u_2]\in LN$.
We now look at rules (3) and (4).
If $a_k\in L_c\cup L_i\cup\{\ytau\}$, rule (3) gives $t_k\in  R$.
Now $a_k\in L_r$ and recall that $\ycfgtrtf{[u_1,X_1,u_2]}{}{}{t_k\ybe}$ is a production of $G$.
By rule (2c) we  have either: 
	\begin{itemize}
		\item[(i)]   $u_2=q_k$, $X_1=Z_k\neq \bot$ using rule (2c.i).
Then, $[u_1,X_1,u_2]=[p_k,Z_k,q_k]\in LN$.
Since $t_k=(p_k,a_k,Z_k,q_k)$ and  $a_k\in L_r$, rule (4.i) says that $t_k\in  R$. 
		\item[(ii)] $u_2=-$, $X_1=Z_k=\bot$ using rule (2c.ii).
	Now $[u_1,X_1,u_2]=[p_k,Z_k,-]\in LN$.
Again, $t_k=(p_k,a_k,Z_k,q_k)$ and		$a_k\in L_r$ give $t_k\in R$ using rule (4.ii).
	\end{itemize}
	Now, we have  $t_k\in R$ for all $1\leq k\leq n$, $s_0\in S_{in}$ and $Q_{in}=S_{in}$.
	Then, as in Eq.~(\ref{prop2.37c}) we can now write
	$\ytrtf{(s_0,\bot)}{\ysi}{\yQ}{(f,W_1\cdots W_m\bot)}$.
	Thus $\ysi\in tr(\yQ)$, and we now have $tr(\yS)\ysse tr(\yQ)$. 
\end{description}

Now, if $tr(\yS)=tr(\yQ)$ then $otr(\yS)=otr(\yQ)$ follows easily from 
Definitions~\ref{def:path} and \ref{def:trace}.

Finally, assume that $\yS$ is deterministic. 
Since $\yQ$ is constructed by removing transitions from $\yS$, it is clear from Definition~\ref{def:vpts-determinism} that $\yQ$ is also deterministic.
This completes the proof.
\end{proof}

% !TeX spellcheck=en_US

\subsection{Proof of Lemma~\ref{lemm:ioco-complete}}\label{app:lemm:ioco-complete}
\begin{proof}
	We will construct a fault model $\yT$.
According to Definition~\ref{def:passes}, in order for $\yT$ to be $\yiocolike$ complete for $\yB$ we need that, for all implementations $\yI$, it holds that $\yI$ passes $\yT$ if and only if $\yI \yiocolike \yB$.
From Corollary~\ref{coro:ioco-charac} we know that $\yI \yiocolike \yB$ if and only if 
$otr(\yI)\cap T=\yemp$, where $T=\ycomp{otr}(\yB)\cap [otr(\yB) L_U]$.
That is, we need $\yT$ such that, for all implementations $\yI$ we have $\yI$ passes $\yT$ if and only if $otr(\yI)\cap T=\yemp$.

Let $\yB=\yiovpts{S_\yB}{S_{in}}{L_I}{L_U}{\yDe_\yB}{T_\yB}$ be the given deterministic specification, $L=L_I\cup L_U$ and $n=\vert S_\yB\vert$.
Using Proposition~\ref{prop:vpts-deterministic} we know that $\yB$ is a deterministic VPTS with no $\ytau$-moves%.
\footnote{To keep the notation unclutered, we will refer to an IOVPTS and to its associated VPTS by the same symbol.}.  
The desired fault model $\yT=\yiovpts{S_\yT}{T_{in}}{L_U}{L_I}{\yDe_\yT}{T_\yT}$ is constructed as follows. 
Let $T_{in}=S_{in}$, $\yDe_\yT=\yDe_\yB$, and  extend the state set $S_\yB$ and the transition set $T_\yB$ as follows.
%we  add a new $\yfail$ state and add a transition $(s,\ell,X,\yfail)$ whenever state 
%$s$ has no outgoing transition labeled  $\ell$ in $T_\yB$, where $\ell\in L_U $.
Define $S_\yT=S_\yB\cup \{\yfail\}$ where $\yfail\not\in S_\yB$.
Fix some symbol $Z\in\yDe_\yB$, and let
\begin{align}
T_\yT = & T_\yB\notag\\
 &\cup \big\{(s,\ell,Z,\yfail)\yst \ell\in L_U\cap L_c \text{ and } (s,\ell,W,p)\not\in T_\yB \text{ for any  $p\in S_\yB$, any $W\in\yDe_\yB$}\big\}\label{eq:testpurpose1:Lc}\\
&\cup \big\{(s,\ell,W,\yfail)\yst \ell\in L_U\cap L_r \,\, \text{and  $(s,\ell,W,p)\not\in T_\yB$	for any  $p\in S_\yB$} \big\}\label{eq:testpurpose1:Lr}\\
&\cup \big\{(s,\ell,\yait,\yfail)\yst \ell\in L_U\cap L_i \text{ and } (s,\ell,\yait,p)\not\in T_\yB \text{ for any } p\in S_\yB\big\}\label{eq:testpurpose1:Li}
\end{align}

It is clear that $\yT$ has a single \yfail\ state, and has $n+1$ states.
\begin{description}
\item[\sf Claim 1.]  $\yT$ is deterministic.

\textsf{Proof.} 
By construction $\yT_{in}=S_{in}$.
Since $\yB$ is deterministic we get $\vert S_{in}\vert\leq 1$.
Thus, $\yT$ has at most one initial state. 

For the sake of contradiction, assume that $\yT$ does not satisfy Definition~\ref{def:vpts-determinism}.
Then, we must have $\ysi\in L^\star$, $p, q\in S_\yT$ and $\yal,\ybe\in \yDe_\yT$ such that 
\begin{align}
\ytrtf{(t_0,\bot)}{\ysi}{\yT}{(p,\yal\bot)}\quad\text{and}\quad\ytrtf{(t_0,\bot)}{\ysi}{\yT}{(q,\ybe\bot)}\label{lem5.4a}
\end{align}
with $p\neq q$ or $\yal\neq \ybe$, and $t_0$ initial in $\yT$.
If all transitions used in the runs in Eq. \ref{lem5.4a} are in $T_\yB$ we get an immediate contradiction to the determinism of $\yB$.

Assume now that not all transitions used in Eq. \ref{lem5.4a} are in $T_\yB$.
Since $\yfail$ is a sink state, we must have either $p=\yfail$ or $q=\yfail$.
Assume first $p=\yfail$ and $q\neq \yfail$.
Then, $\ysi=\ysi_1x$, with $x\in L$ and write the runs in Eq. \ref{lem5.4a} as
\begin{align}
\ytrtf{(t_0,\bot)}{\ysi_1}{\yT}{(p_1,\yal_1\bot)}\ytrtf{}{x}{\yT}{(\yfail,\yal\bot)}\quad\text{and}\quad\ytrtf{(t_0,\bot)}{\ysi_1}{\yT}{(q_1,\ybe_1\bot)}\ytrtf{}{x}{\yT}{(q,\ybe\bot)}\label{lem5.4b}
\end{align}
where $p_1\neq \yfail$, $q_1\neq \yfail$.
Thus, all transitions over $\ysi_1$ are in $\yB$, and since $\yB$ is deterministic, we get $p_1=q_1$ and $\yal_1=\ybe_1$.
Hence, we have transitions $(p_1,x,X,\yfail)$, $(p_1,x,Y,q)$ in $T_\yT$.
Now, we follow the construction of $\yT$ and distinguish three cases.
When $x\in L_U\cap L_c$, condition (\ref{eq:testpurpose1:Lc}) gives $X=Z$.
Since $q\neq \yfail$ we have $(p_1,x,Y,q)$ in $T_\yB$, contradicting 
condition (\ref{eq:testpurpose1:Lc}) of the construction.
When $x\in L_U\cap L_i$ the reasoning is the same, now using 
condition (\ref{eq:testpurpose1:Li}).
When $x\in L_U\cap L_r$, the first run in (\ref{lem5.4b}) gives $\yal_1=X\yal$, and the second run gives $\ybe_1=Y\ybe$.
Since $\yal_1=\ybe_1$ we get $X=Y$.
Now we have $(p_1,x,X,\yfail)$, $(p_1,x,X,q)$ in $T_\yT$.
Again, $q\neq \yfail$ implies $(p_1,x,X,q)$ in $T_\yB$, contradicting
condition  (\ref{eq:testpurpose1:Lr}).
The case $p\neq\yfail$, $q=\yfail$ is similar and also leads to a contradiction.
This shows that we must have $p=\yfail=q$ in (\ref{lem5.4a}).
So, we can write
\begin{align}
\ytrtf{(t_0,\bot)&}{\ysi_1}{\yT}{(p_1,\yal_1\bot)}\ytrtf{}{x}{\yT}{(\yfail,\yal\bot)}\label{lem5.4c}\\
\ytrtf{(t_0,\bot)&}{\ysi_1}{\yT}{(q_1,\ybe_1\bot)}\ytrtf{}{x}{\yT}{(\yfail,\ybe\bot)}\label{lem5.4d}
\end{align}
where $p_1\neq \yfail$, $q_1\neq \yfail$.
We recall that we need either $p\neq q$ or $\yal\neq\ybe$ in (\ref{lem5.4a}).
We are then left with  $\yal\neq \ybe$ in (\ref{lem5.4c}) and (\ref{lem5.4d}).

The determinism of $\yB$ forces $p_1=q_1$ and $\yal_1=\ybe_1$ because both runs on $\ysi_1$ in (\ref{lem5.4c}) and (\ref{lem5.4d}) use only transitions in $T_\yB$.
We have transitions $(p_1,x,X,\yfail)$ in (\ref{lem5.4c}) and 
$(p_1,x,Y,\yfail)$ in (\ref{lem5.4d}).
We now argue that we always reach $\yal=\ybe$, a contradiction.
If $x\in L_U\cap L_c$, condition (\ref{eq:testpurpose1:Lc}) says that $X=Z=Y$, and then $\yal=Z \yal_1$ and $\ybe=Z\ybe_1$, and we get $\yal=\ybe$.
If $x\in L_U\cap L_i$, condition (\ref{eq:testpurpose1:Li}) immediately gives  $\yal=\yal_1=\ybe_1=\ybe$.
When $x\in L_U\cap L_r$ we need $\yal_1=X\yal$ and $\ybe_1=Y\ybe$, so that $\yal=\ybe$ again.
We conclude that $\yT$ is, indeed,
deterministic.
\end{description}

The next claim shows that any $\ysi\in\ycomp{otr}(\yB)\cap\big[otr(\yB)L_U\big]$ leads $\yT$ to the $\yfail$ state.
\begin{description}
\item[\sf Claim 2.] Let  $\ysi\in\ycomp{otr}(\yB)\cap\big[otr(\yB)L_U\big]$.
Then, $\ytrut{(s_0,\bot)}{\ysi}{\yT}{(\yfail,\yal\bot)}$ for some $s_0\in T_{in}$, $\yal\in (\yDe_\yT)^\star$.

\textsf{Proof.} 
Let  $\ysi=\mu\ell$, $\ysi\not\in otr(\yB)$, $\ell\in L_U$ and $\mu\in otr(\yB)$.
Because $otr(\yB)=tr(\yB)$, we get $\ytrtf{(q_0,\bot)}{\mu}{\yB}{(p,\yal\bot)}$, where $q_0\in S_{in}=T_{in}$ and $\yal\in(\yDe_\yB)^\star$.
By construction, all transitions in $\yB$ are also transitions of $\yT$, so that  $\ytrtf{(q_0,\bot)}{\mu}{\yT}{(p,\yal\bot)}$.
Now we argue that $\ytrtf{(p,\yal\bot)}{\ell}{\yT}{(\yfail,\ybe\bot)}$ for some $\ybe\in(\yDe_\yT)^\star$, so that composing we get  $\ytrtf{(q_0,\bot)}{\mu\ell}{\yT}{(\yfail,\ybe\bot)}$.
We note that we cannot have $\ytrtf{(p,\yal\bot)}{\ell}{\yB}{(z,\yga\bot)}$ for any $z\in S_\yB$, $\yga\in (\yDe_\yB)^\star$, because then we would get 
$\ytrtf{(q_0,\bot)}{\mu\ell}{\yB}{(z,\yga\bot)}$, and then $\ysi=\mu\ell\in tr(\yB)=otr(\yB)$, a contradiction.
There are three simple cases.
If $\ell\in L_U\cap L_c$, then $(p,\ell,W,z)\not\in T_\yB$ for any $z\in S_B$ and any $W\in\yDe_\yB$.
Then, Eq. (\ref{eq:testpurpose1:Lc}) gives $(p,\ell,Z,\yfail)\in T_\yT$, as needed.
If $\ell\in L_U\cap L_i$, the reasoning is the same,  using  Eq. (\ref{eq:testpurpose1:Li}).
Now let $\ell\in L_U\cap L_r$.
Since $\ytrtf{(p,\yal\bot)}{\ell}{\yB}{(z,\yga\bot)}$ is not allowed, 
we cannot have  $(p,\ell,W,z)$ in $T_\yB$ for any $z\in S_\yB$, where $W\in \yDe_\yB$ is the first symbol in $\yal\bot$.
Now,  Eq. (\ref{eq:testpurpose1:Lr}) gives $(p,\ell,W,\yfail)\in T_\yT$.
Since $\ytrtf{(q_0,\bot)}{\mu\ell}{\yT}{(\yfail,\ybe\bot)}$ then, from Proposition~\ref{prop:vpts-deterministic} we also have $\ytrut{(q_0,\bot)}{\mu\ell}{\yT}{(\yfail,\ybe\bot)}$.
\end{description}
The next claim deals with the converse. 
\begin{description}
\item[\sf Claim 3.]  Let $\mu\in L^\star$, $\ell\in L$ and  $\ytrut{(t_0,\bot)}{\mu}{\yT}{(p,\yal\bot)}\ytrut{}{\ell}{\yT}{(\yfail,\ybe\bot)}$ where $t_0\in T_{in}$, $p\in S_\yT$, $\yal$, $\ybe\in (\yDe_\yT)^\star$.
Then we must have $\mu\ell\in\ycomp{otr}(\yB)\cap\big[otr(\yB)L_U\big]$.

\textsf{Proof.} 
By construction $\yT$ has no $\ytau$-moves, since $\yB$ has no $\ytau$-moves.
Thus, we must  have $\ytrtf{(t_0,\bot)}{\mu}{\yT}{(p,\yal\bot)}\ytrtf{}{\ell}{\yT}{(\yfail,\ybe\bot)}$.

Since  $\yfail$ is a sink state we get $p\neq \yfail$, and we know that all transitions in  $\ytrtf{(t_0,\bot)}{\mu}{\yT}{(p,\yal\bot)}$ are in $\yB$, so that 
$\ytrtf{(t_0,\bot)}{\mu}{\yB}{(p,\yal\bot)}$.
We must also have a transition $(p,\ell,X,\yfail)$ in $\yT$.
By the construction, it can only be inserted in $T_\yT$ by force of Eqs. (\ref{eq:testpurpose1:Lc}),(\ref{eq:testpurpose1:Lr}), or (\ref{eq:testpurpose1:Lr}).
In any case, we get $\ell\in L_U$.
Hence $\mu\ell\in otr(\yB)L_U$.
We now argue that $\mu\ell\not \in otr(\yB)$. 
This will give $\mu\ell\in \ycomp{otr}(\yB)\cap otr(\yB)L_U$, completing the proof. 

For the sake of contradiction, assume $\mu\ell\in otr(\yB)$, so that 
$\ytrtf{(t_0,\bot)}{\mu}{\yB}({p',\yal'\bot)}\ytrtf{}{\ell}{\yB}({r,\yga\bot)}$, with $t_0\in S_{in}$, $p',r\in S_\yB$ and $\yga\in (\yDe_\yB)^\star$.
Recall that $\yB$ is deterministic and has no $\ytau$-move. 
Hence we can write $\ytrut{(t_0,\bot)}{\mu}{\yB}{(p,\yal\bot)}$ and $\ytrut{(t_0,\bot)}{\mu}{\yB}({p',\yal'\bot)}$. 
Now from Definition~\ref{def:vpts-determinism} we get $p=p'$ and  $\yal=\yal'$.
Also recall that  $(p,\ell,X,\yfail)$ is a transition of $\yT$.
Together with $\ytrtf{(p,\yal\bot)}{\ell}{\yB}{(r,\yga\bot)}$,
there are three cases.

If $\ell \in L_c$ then $(p,\ell,W,r)$ is a transition of $\yB$ for some $W\in \yDe_\yB$.
In this case we must have used Eq. (\ref{eq:testpurpose1:Lc}) to insert $(p,\ell,X,\yfail)$ in $T_\yT$.
But then we need $X=Z$ and $(p,\ell,Y,q)\not\in T_\yB$ for any  $q\in S_\yB$,  $Y\in\yDe_\yB$, and we get a contradiction.
If $\ell \in L_i$ then $W=\yait$, $(p,\ell,\yait,r)$ is a transition of $\yB$ and we must have used Eq. (\ref{eq:testpurpose1:Li}) to insert $(p,\ell,\yait,\yfail)$ in $T_\yT$.
But this requires $(p,\ell,\yait,q)\not\in T_\yB$ for any $q\in S_\yB$, and we reach a contradiction again.
If $\ell\in L_r$ then $(p,\ell,W,r)$ is a transition of $\yB$ where $W$ is the first symbol in $\yal\bot$.
But, according to used Eq. (\ref{eq:testpurpose1:Lr}) we now need 
$(p,\ell,W,q)\not\in T_\yB$ for any $q\in S_\yB$.
This final contradiction completes the proof.
\end{description}
Let $\yI=\yiovpts{S_\yI}{I_{in}}{L_I}{L_U}{\yDe_\yI}{T_\yI}$ be an arbitrary IUT.
As argued above, we need $\yI$ does not pass $\yT$ if and only if $otr(\yI)\cap T \neq\yemp$, where $T=\ycomp{otr}(\yB)\cap\big[otr(\yB)L_U\big]$.
According to Definition~\ref{def:passes}, $\yI$ does not pass $\yT$ if and only if for some $\ysi\in L^\star$ we get  $\ytrut{((t_0,q_0),\bot)}{\ysi}{\yT\times \yI}{((\yfail,q),\yal\bot)}$ 
where $((t_0,q_0),\bot)$ is an initial configuration of $\yT\times \yI$,
$q\in S_\yI$ and $\yal=(X_1,Y_1)\cdots(X_n,Y_n)\in (\yDe_\yT\times \yDe_\yI)^\star$, $n\geq 0$.
From the construction, it is clear that $\yfail$ is a sink state of $\yT$, and transitions into $\yfail$ in $\yT$ 
are over symbols in $L_U$. 
Hence, we can say that $\yI$ does not pass $\yT$ if and only if 
$$\ytrut{((t_0,q_0),\bot)}{\mu}{\yT\times \yI}{((p,r),\ybe\bot)}\ytrut{}{\ell}{\yT\times \yI}{((\yfail,q),\yal\bot)},$$
for some $\mu\in L^\star$, $\ell\in L_U$, $p\in S_\yT$, $p\neq\yfail$, $r\in S_\yI$, $\ybe =(W_1,Z_1)\cdots (W_m,Z_m)\in(\yDe_\yT\times \yDe_\yI)^\star$, $m\geq 0$. 

From Definition~\ref{def:cross}, we know that $\yT\times \yI$ is the IOVPTS associated to  $\yA_\yT\times \yA_\yI$, where  $\yA_\yT$ and $\yA_\yI$ are the VPAs associated with $\yT$ and $\yI$, respectively.
Using Proposition~\ref{prop:lang-vpa-vlpts} we can say that 
$\yI$ does not pass $\yT$ if and only if  $$\ypdatrtf{((t_0,q_0),\mu\ell,\bot)}{i}{\yA_\yT\times\yA_\yI}{((p,r),\ell,\ybe\bot)}\ypdatrtf{}{k}{\yA_\yT\times\yA_\yI}{((\yfail,q),\yeps,\yal\bot)},$$
where $i\geq\vert\mu\vert$, $k\geq 1$.
Using  Proposition~\ref{prop:product-behavior} we get $\yI$ does not pass $\yT$ if and only if 
\begin{align}
&\ypdatrtf{(t_0,\mu\ell\bot)}{\star}{\yA_\yT}{(p,\ell,W_1\cdots W_m\bot)}\ypdatrtf{}{\star}{\yA_\yT}{(\yfail,\yeps,X_1\cdots X_n\bot)}\label{lem5.4e}\\
&\ypdatrtf{(q_0,\mu\ell\bot)}{\star}{\yA_\yI}{(r,\ell,Z_1\cdots Z_m\bot)}\ypdatrtf{}{\star}{\yA_\yI}{(q,\yeps,Y_1\cdots Y_n\bot)}\label{lem5.4f}.
\end{align}

Assume that $\yI$ does not pass $\yT$. Then, from Eq. (\ref{lem5.4f})
and using Proposition~\ref{prop:plts-pda} we obtain $\mu\ell\in L(\yA_\yI)=otr(\yI)$. 
From Eq. (\ref{lem5.4e}) and Proposition~\ref{prop:lang-vpa-vlpts}
we get 
$$\ytrut{(t_0,\bot)}{\mu}{\yT}{(p,W_1\cdots W_m\bot)}\ytrut{}{\ell}{\yT}{(\yfail,X_1\cdots X_n\bot)}.$$
Using Claim 3 and recalling that $p\neq \yfail$, we get $\mu\ell\in T= \ycomp{otr}(\yB)\cap\big[otr(\yB)L_U\big]$.
Thus $otr(\yI)\cap T\neq \yemp$.

Now assume $otr(\yI)\cap T\neq \yemp$, so that we have $\mu\ell\in otr(\yI)$, 
$\mu\ell\in otr(\yB)L_U$ and $\mu\ell\not\in otr(\yB)$, where $\mu\in L^\star$ and $\ell\in L_U$.
From  Proposition~\ref{prop:lang-vpa-vlpts} we get $\ypdatrtf{(q_0,\mu,\bot)}{\star}{\yA_\yI}{(r,\ell,Z_1\cdots Z_k\bot)}\ypdatrtf{}{\star}{\yA_\yI}{(q,\yeps,Y_1\cdots Y_n\bot)}$, where $q_0\in I_{in}$, $r,q\in S_\yI$, $Z_1\cdots Z_k, Y_1\cdots Y_n\in (\yDe_\yI)^\star$, and $k,n\geq 0$.
Using Claim 2 and Proposition~\ref{prop:lang-vpa-vlpts} we also get a run over $\mu\ell$ in $\yA_\yT$ as  $\ypdatrtf{(t_0,\mu,\bot)}{\star}{\yA_\yT}{(p,\ell,W_1\cdots W_j\bot)}\ypdatrtf{}{\star}{\yA_\yT}{(\yfail,\yeps,X_1\cdots X_m\bot)}$, with $t_0\in T_{in}$, $p\in S_\yT$, $W_1\cdots W_j,X_1\cdots X_m\in (\yDe_\yT)^\star$, $j,m\geq 0$.
Use Proposition~\ref{prop:stack-size} to conclude that $k=j$ and $n=m$.
We now have 
$$\ypdatrtf{(q_0,\mu\ell,\bot)}{\star}{\yA_\yI}{(q,\yeps,Y_1\cdots Y_n\bot)},\quad\ypdatrtf{(t_0,\mu\ell,\bot)}{\star}{\yA_\yT}{(\yfail,\yeps,X_1\cdots X_n\bot)}.$$
From Proposition~\ref{prop:product-behavior} it follows that
$\ypdatrtf{((t_0,q_0),\mu\ell,\bot)}{\star}{\yA_\yT\times\yA_\yI}{((\yfail,q),\yeps,\yal\bot)}$, and using Proposition~\ref{prop:lang-vpa-vlpts} we now have
$\ytrut{((t_0,q_0),\bot)}{\mu\ell}{\yT\times \yI}{((\yfail,q),\yal\bot)}$.
This shows that $\yI$ does not pass $\yT$

Now we have that $\yI$ does not pass $\yT$ if and only if $otr(\yI)\cap T\neq \yemp$, as needed.
\end{proof}

\subsection{Proof of Theorem~\ref{teo:alg1}}\label{app:teo:alg1}
\begin{proof}
A VTPS $\yP=\yvpts{Q}{Q_{in}}{L}{\yGa}{\rho}$ with no transitions of the form $(p,a,\bot,q)$ in $\rho$, and two states $s_i$, $s_e$ with $s_i\neq s_e$, are input to Algorithm~\ref{alg1}.

At lines 1 and 2 we start with $V=null$ and $R[p,q]=0$ for all pairs $(p,q)$. 
Inspecting lines 6, 10, 14, 16 and 19, we see that a pair $(p,q)$ is added to $V$ only when we currently have $R[p,q]=0$ and, upon entering $V$, we immediately set $R[p,q]\neq 0$.
Further, at no other point in the code we reset $R[p,q]$ to zero again.
Hence, a pair $(p,q)$ can enter $V$ at most once and, therefore, the main loop at line 11 must terminate. 
So,  Algorithm~\ref{alg1} always stops.

Next we claim that, at any point during the execution of the algorithm, if $R[p,q]\neq 0$ then it codes for a string that induces a balanced run from $p$ to $q$.
This is immediate from the initialization lines 6 and 10. 
Proceeding inductively, assume that this property holds after a number of executions of the main loop. 
At line 14, we have removed $(p,q)$ from $V$, and so we now have  $R[p,q]\neq 0$ because $(p,q)$ entered $V$ in a previous iteration.
At that moment we made $R[p,q]\neq 0$ and then, inductively, it codes for a string $\ysi$ such that $\ytrtf{(p,\bot)}{\ysi}{}{(q,\bot)}$.
Now, at line 14 we require $R[s,p]\neq 0$ so that, inductively, it also codes for a string $\mu$ such that     $\ytrtf{(s,\bot)}{\mu}{}{(p,\bot)}$.
Composing, we get $\ytrtf{(s,\bot)}{\mu\ysi}{}{(q,\bot)}$, and so, we extended the induction in this case when we now make $R[s,q]\neq 0$ coding for the string $\mu\ysi$.
The reasoning at line 16, is very similar.
We now look at line 19.
At that point we have a push transition $(s,a,Z,p)$, a pop transition
$(q,b,Z,t)$, and $\ytrtf{(p,\bot)}{\ysi}{}{(q,\bot)}$ for some $\ysi\in L_\ytau^\star$.
Recall that the algorithm assumes that the given VPTS $\yP$ has no pop transitions on the empty stack.
With this hypothesis, we claim the following general property of $\yP$:
\begin{quote}
If $\ytrtf{(p,\yal_1\bot)}{\ysi}{}{(q,\yal_2\bot)}$ for some $p$ and $q\in Q$ and $\yal_1,\yal_2\in\yGa^\star$, then for all $\ybe_1,\ybe_2\in\yGa^\star$ we also have $\ytrtf{(p,\yal_1\ybe_1\bot)}{\ysi}{}{(q,\yal_2\ybe_2\bot)}$. 

A simple proof can be obtained by induction on $\vert\ysi\vert\geq 0$.
\end{quote}
With $\yal_1=\yal_2=\yeps$, $\ybe_1=\ybe_2=Z$, from  $\ytrtf{(p,\bot)}{\ysi}{}{(q,\bot)}$ we get $\ytrtf{(p,Z\bot)}{\ysi}{}{(q,Z\bot)}$.
Now we have 
$\ytrtf{(s,\bot)}{a}{}{(p,Z\bot)}\ytrtf{}{\ysi}{}{(q,Z\bot)}\ytrtf{}{b}{}{(t,\bot)}$, so that making $R[s,t]$ code for the string $a\ysi b$ also extends the induction after line 19 is passed.
Since we have completed one more iteration of the main loop, we see that upon termination of the main loop, if we have $R[s_i,s_e]\neq 0$, then we do have a string  that induces a balanced run from $s_i$ to $s_e$.
Moreover, it easy to see that the simple recursive call \texttt{getstring($s_i,s_e$)} at line 21 does correctly extract one such string.

Next, we argue in the other direction.
Suppose that the main loop terminates with $V=null$. Then we claim that for all pairs $(p,q)$, with $p\neq q$, if $R[p,q]=0$ then there is no string capable of inducing a balanced run from $p$ to $q$.
For the sake of contradiction, assume  that the main loop terminates with $V=null$, and we have $p\neq q$, $R[p,q]=0$, and a string $\ysi$ such that $\ytrtf{(p,\bot,)}{\ysi}{}{(q,\bot)}$.
Among all such pairs, choose one for which $\vert\ysi\vert$ is minimum.
Since $p\neq q$, we need $\vert\ysi\vert\geq 1$.
If $\vert\ysi\vert=1$, then we need a transition $(p,\ysi,\yait,q)$ in $\rho$.
But then, at line 6, we make $R[p,q]=[p,\ysi,q]$ and it is never rest to $0$ again. 
This is a contradiction, and we can assume $\vert\ysi\vert\geq 2$.

Now there are two cases, depending on the internal part of the run $\ytrtf{(p,\bot)}{\ysi}{}{(q,\bot)}$.
\begin{description}
\item[\sf A configuration $(r,\bot)$ occurs in the run $\ytrtf{(p,\bot)}{\ysi}{}{(q,\bot)}$.] Write
$\ytrtf{(p,\bot)}{\ysi_1}{}{(r,\bot)}\ytrtf{}{\ysi_2}{}{(q,\bot)}$, with $\ysi=\ysi_1\ysi_2$.
We know that $\ysi_1\neq\yeps\neq \ysi_2$ because $\yP$ has no transitions on the empty stack.
If $p=r$ we get $\ytrtf{(p,\bot)}{\ysi_2}{}{(q,\bot)}$.
Since $\vert\ysi_2\vert<\vert\ysi\vert$, the minimality of  $\vert\ysi\vert$ forces $R[p,q]\neq 0$, a contradiction.
Similarly, $q=r$ also leads to a contradiction.

Now, assume $p\neq r\neq q$.
Since $\vert\ysi_1\vert<\vert\ysi\vert$ and $\vert\ysi_2\vert<\vert\ysi\vert$, when the main loop terminates with $V=null$ we must have $R[p,r]\neq 0$ and $R[r,q]\neq 0$.
Moreover, for this to happen, both $(p,r)$ and $(r,q)$ were added to $V$.
Suppose that $(p,r)$ is removed from $V$ before $(r,q)$.
Then, at the iteration of the main loop when $(r,q)$ is removed from $V$ we have $R[p,r]\neq 0$ and $p\neq q$.
Hence, at line 14, since $R[p,q]=0$ we make $R[p,q]=[p,r,q]$ and, since it is never reset to $0$ again, we have a contradiction.
If $(r,q)$ is removed first from $V$, the reasoning is the same using line 16.
So, this case can not happen.

\item[\sf A configuration $(r,\bot)$ does not occur in the run $\ytrtf{(p,\bot)}{\ysi}{}{(q,\bot)}$.]  
Then, we must have a push transition $(p,x,Z,s)$ with $\ysi=x\ysi_1$, and
we are left with $\ytrtf{(p,\bot)}{x}{}{(s,Z\bot)}\ytrtf{}{\ysi_1}{}{(q,\bot)}$.
If $\vert\ysi_1\vert=1$, we need a pop transition $(s,y,Z,q)$ and now line 10 makes $R[p,q]=[x,q,q,y]$, a contradiction.
Hence $\vert\ysi_1\vert\geq 2$.
Since no configuration of the form $(u,\bot)$ occurs in the run over $\ysi$, we must have a pop transition $(t,y,Z,q)$ and
$\ytrtf{(p,\bot)}{x}{}{(s,Z\bot)}\ytrtf{}{\mu}{}{(t,Z\bot)}\ytrtf{}{y}{}{(q,\bot)}$,
with $\ysi=x\mu y$.

Next we claim that in any VPTS $\yS=\yvptsS$ if a run does not shorten initial stack, then that string can be substituted for any other.
More precisely,
\begin{quote}
Let $p$, $q\in S$, $\ysi\in L_\ytau^\star$ and $\yal\in\yGa^\star$ with $\ytrtf{(p,\yal\bot)}{\ysi}{}{(q,\yal\bot)}$.
Assume that a configuration $(u,\yga)$, with $\vert\yga\vert< \vert\yal\vert$, does not occur in that run over $\ysi$.
Then, for any $\ybe\in \yGa^\star$ we also have  $\ytrtf{(p,\ybe\bot)}{\ysi}{}{(q,\ybe\bot)}$.

An easy induction over $\vert\ysi\vert\geq 0$ gives the result.
\end{quote}
Recall that we already have $\ytrtf{(s,Z\bot)}{\mu}{}{(t,Z\bot)}$ and a configuration $(u,\bot)$ dos not occur on the run over $\mu$ since it can not occur on a run over $\ysi$.
Using the claim we get $\ytrtf{(s,\bot)}{\mu}{}{(t,\bot)}$. 
Now, since $\vert\mu\vert<\vert\ysi\vert$, the minimality of $\vert\ysi\vert$ says that when the main loop terminates with $V=null$, we must have $R[s,t]\neq 0$.
But then, at some moment $(s,t)$ was added to $V$.
Since the main loop terminates with $V=null$, at some iteration we have removed $(s,t)$ from $V$.
Note that we have a push transition $(p,x,Z,s)$ and a pop transition
$(t,y,Z,q)$. 
Hence, line 19 says that we will set $R[p,q]=[x,s,t,y]$ and, since $R[p,q]$ is never reset, we see that the main loop terminates with $R[p,q]\neq 0$, which is a contradiction.
\end{description}
Since both cases lead to contradictions, we conclude that when the main loop terminates with $V=null$ and $R[s_i,s_e]=0$, then there is no string capable of inducing a balanced run from $s_i$ to $s_e$.
Hence, line 20 correctly reports the inexistence of any such strings.

Thus, in any case, lines 20--21 always report as expected, and Algorithm~\ref{alg1} is correct.
\end{proof}

\end{document}